\DeclareMathOperator*{\argmax}{arg\,max}
\newtheorem{theorem}{Theorem}
\newtheorem{proposition}[theorem]{Proposition}
\newtheorem{lemma}[theorem]{Lemma}
\newtheorem*{lemma*}{Lemma}
\newtheorem*{assumption*}{Assumption}
\newcounter{assumptionc}
\newtheorem{assumption}[assumptionc]{Assumption}
\newtheorem{corollary}[theorem]{Corollary}
\newtheorem{remark}[theorem]{Remark}
\numberwithin{equation}{section}
\numberwithin{theorem}{section}
\newenvironment{assumption+}
 {\ifnum\value{subassumption}=0 \stepcounter{assumptionc}\fi\subassumption}
 {\endsubassumption}
\def\sign{\mbox{sign}}
\newcommand\footnoteref[1]{\protected@xdef\@thefnmark{\ref{#1}}\@footnotemark}
\def\text#1{\hbox{#1}}
\def\build #1_#2{\mathrel{\mathop{\kern 0pt #1}\limits_\zs{#2}}}
\newcommand{\zs}[1]{{\mathchoice{#1}{#1}{\lower.25ex\hbox{$\scriptstyle#1$}}
{\lower0.25ex\hbox{$\scriptscriptstyle#1$}}}}
\numberwithin{equation}{section}
\newtheorem*{example*}{Example}
\newcommand{\hochkomma}{$^{,}$}
\newcommand{\PP}{\mathbb{P}}
\newcommand{\QQ}{\mathbb{Q}}
\newcommand{\EX}{\mathbb{E}}
\newcommand{\Real}{\mathbb{R}}
\newcommand{\1}{\mathbbm{1}}
\newcommand{\diff}{\mathrm{d}}
\newenvironment{myproof}[1][\proofname]{%
  \par\pushQED{\qed}\normalfont%
  \topsep6\p@\@plus6\p@\relax
  \trivlist\item[\hskip\labelsep\bfseries#1\@addpunct{.}]%
  \ignorespaces
}{%
  \popQED\endtrivlist\@endpefalse
}
\newcommand{\myitem}[1]{%
	\item[#1]\protected@edef\@currentlabel{#1}%
}
\def\dfrac{\displaystyle\frac}
\def\tfrac{\textstyle\frac}
\def\FF{\mathcal{F}}
\begin{document}

\title{Optimal Capital Structure for Life Insurance Companies Offering Surplus Participation\footnote{Declarations of interest: none}
}
\author{Felix Fie{\ss}inger\footnote{Ulm University, Institute of Insurance Science and Institute of Mathematical Finance, Faculty of Mathematics and Economics, Ulm, Germany. Email: felix.fiessinger@uni-ulm.de} \, and Mitja Stadje\footnote{Ulm University, Institute of Insurance Science and Institute of Mathematical Finance, Faculty of Mathematics and Economics, Ulm, Germany. Email: mitja.stadje@uni-ulm.de} \hochkomma \footnote{Corresponding author}}
\date{\today}
\maketitle
\begin{abstract}
	We adapt Leland’s dynamic capital structure model to the context of a life insurance company offering participating contracts. These contracts provide both a guaranteed payout and surplus participation to policyholders, and our framework offers a theoretical explanation for their existence. Our analysis shows that the optimal participation rate is highly sensitive to both the contract duration and the prevailing tax rate. Furthermore, we find that incorporating surplus participation reduces the asset substitution effect—the tendency of equity holders to favor riskier investment strategies—thus aligning shareholder and policyholder interests more closely.
\end{abstract}

\noindent\textbf{Keywords:} portfolio insurance, Leland's model, participating life insurance contracts, surplus participation\\

\noindent\textbf{JEL:} C61, C68, G11, G22, G33\\


\noindent\textbf{MSC:} 90B50, 91B06, 91B50, 91G05, 91G10, 91G50


\section{Introduction}

This paper analyzes the surplus participation and the guaranteed payment in life insurance contracts from the insurance company's capital structure perspective. These contracts are classified as participating life insurance contracts and encompass all contract types in which the policyholder receives some form of surplus participation together with a guaranteed payment stream. We consider contracts with proportional surplus participation above a pre-determined level. Additionally, we allow the contract to provide both a guarantee payment and a final lump sum payment. We model the insurance company using an extension of Leland's model, originally developed to determine the optimal debt structure of a company. To the best of our knowledge, this is the first dynamic capital structure model which includes surplus participation for policyholders.

The aim of this paper is to analyze the reasons for the extensive use of participating insurance contracts in the life insurance sector from a capital structure perspective, and to show how the optimal participation and guaranteed payment can be computed. Our findings indicate that tax benefits are crucial for incentivizing the insurance company to offer participating contracts to the policyholders. We apply these results to a basic setting, discuss underlying assumptions, and conduct sensitivity analysis on various parameters. Moreover, we show that the asset substitution effect, which describes the tendency of equity holders, in the presence of debtors, to increase the riskiness of a company's investment decisions beyond the level they normally would, is less pronounced when adding surplus participation. 
Our basic framework adapts Leland's model from the insurance company's perspective, incorporating surplus participation into the original model. Leland's model, introduced in 1994 \cite{leland1994corporate}, was designed to determine the optimal leverage of companies by deriving the optimal capital structure between equity and debt. Initially simple, this model has closed-form solutions. A key feature of this setting is that equity holders are not assumed to be tied to their investments; rather they can liquidate the business if the company's asset value falls too low and obligations to debt holders become unsustainable. In this strand of literature, equity holders shut down the company if the value of the equity becomes negative. Leland then derives a bankruptcy-triggering value, such that for all asset values higher than this threshold, equity remains non-negative. By construction, the bankruptcy-triggering value has to be determined endogenously in the analysis. Various generalizations and adaptions of this model exist. Initially, Leland and Toft \cite{leland1996optimal} adapted the original framework to finite-time debt. In this paper, we build upon this adaptation, as we also consider contracts with finite durations. Further generalizations include Leland \cite{leland1998agency}, who incorporated capital restructuring, Goldstein et al. \cite{goldstein2001ebit}, who allowed the company to increase their debt level, and Manso et al. \cite{manso2010performance}, who introduced performance-dependent coupon levels. He and Milbradt \cite{he2016dynamic} examined the debt structure within a dynamic framework, where the firm has the flexibility to adjust its debt maturity structure in response to evolving market conditions. Hilpert et al. \cite{hilpert2024information} extended the model by incorporating asymmetric information between the firm and debt holders. They also introduced learning dynamics in the market over time and considered performance-sensitive debt. Other significant contributions continue to build on Leland’s model in various directions, including those by Ju et al. \cite{ju2005optimal}, Liu et al. \cite{liu2006personal}, Hennessy and Tserlukevich \cite{hennessy2008taxation}, Agarwal et al. \cite{agarwal2011term}, Elkamhi et al. \cite{elkamhi2012cost}, Glasserman and Nouri \cite{glasserman2012contingent}, Hugonnier et al. \cite{hugonnier2015credit}, Chen et al. \cite{chen2017contingent}, Ambrose et al. \cite{ambrose2019capital}, Della Seta et al. \cite{della2020short}, Carey and Gordy \cite{carey2021bank}, and many others.

In the life sector of insurance, policies offering some profit participation are widespread. Profit participation is typically paid during favorable economic conditions, with a proportional participation, as in this paper, being the common example. In the life sector of the insurance market, which also includes pension and health insurance, according to the European Insurance Overview 2023 \cite{EIOPAreport}, published by the European Insurance and Occupational Pensions Authority (EIOPA), approximately a quarter of the total gross premium in Europe are spent on contracts with some form of profit participation. In countries like Croatia, Italy, or Belgium, this proportion exceeds 50 \%. Research in this area is also ongoing, with many contributions, see, for instance, Bryis and de Varenne \cite{briys1997risk}, Bacinello and Persson \cite{rita2002design}, Gatzert and Kling \cite{gatzert2007analysis}, Schmeiser and Wagner \cite{schmeiser2015proposal}, Lin et al. \cite{lin2017optimal}, Chen et al. \cite{chen2018optimal}, Mirza and Wagner \cite{mirza2018policy}, Nguyen and Stadje \cite{nguyen2020nonconcave}, He et al. \cite{he2020weighted}, Dong et al. \cite{dong2020optimal}, or Fie{\ss}inger and Stadje \cite{fiessinger2024mean}. To the best of our knowledge, however, we are the first to combine Leland's model and surplus participation providing a possible capital-based explanation for the peculiar structure of the life insurance market, where guaranteed interest is often combined with surplus participation. According to, e.g., Kling et al. \cite{kling2007impact}, the combination of these two obligations to the policyholders are typical in the design of insurance products with surplus participation, whereby this combination is mostly studied focusing on managing the risk of the insurer, see, for instance, Kling et al. \cite{kling2007impact,kling2007interaction}, Hieber et al. \cite{hieber2015analyzing} or Schmeiser and Wagner \cite{schmeiser2015proposal}. Starting in the early 2000s, several publications have analyzed surplus participation products in the context of potential insolvency of life insurance companies, beginning with Grosen and J{\o}rgensen \cite{grosen2002life}. Subsequent works, such as those by Bernard et al. \cite{bernard2005market}, Ballotta et al. \cite{ballotta2006guarantees}, and Cheng and Li \cite{cheng2018early}, further explored this topic. However, these studies did not address the determination of the optimal bankruptcy-triggering value or the optimal capital structure, which are the pillars of dynamic capital models.

In the insurance market, there are several products offered with a surplus participation on the financial market result. These product constructions are sometimes called ``Zero+Call'' which also includes, e.g., equity-indexed annuities. In a ``Zero+Call'' typed product, the insurer combines guarantees (``Zero''), such as a guaranteed interest rate or a premium refund guarantee, with surplus participation (``Call'') in assets like an index or a special portfolio. In the US insurance market, the ``Principal Protected Notes'' offered by JPMorgan Chase exemplifies such a structure. Moreover, there are several products in the market available where the insurance company provides access for investments in special markets, typically unavailable to small investors, such as infrastructure, sustainability, or private equity. These products are also incorporated into this paper's model, especially when additional guarantees are provided. Examples of such products are, for instance, ``Allianz Index Advantage'' in the US, ``AXA TwinStar'' in France, and ``Allianz InvestFlex Green'' and ``Swiss Life Champion'' in Germany. The ``Principal Protected Notes'' offered by JPMorgan Chase (mainly in the US) can also be combined with funds focused on investment in such specialized markets. Further discussion on such products in the insurance market can be found in Chen et al. \cite{chen2019constrained}. Similar products with comparable features exist in private pension schemes, such as the ``Prudential Premier Retirement'' in the US, ``Manulife UL'' in Canada, or ``Prudential With-Profits Pension Annuity'' in the UK.

Another application of such a model is given in the context of occupational pension schemes, based on defined contribution (DC) plans. In these schemes, the employer commits to contributing a specified amount to a funds or a similar investment vehicle, transferring the investment risk entirely to the employee (in contrast to defined benefit (DB) plans). Now, in some countries, a hybrid model exists that adds a guarantee to a DC plan. Due to these guarantees, employees, on the other hand, do not fully benefit from the returns of the fund, resulting in a product that offers both a guaranteed interest rate and a share in the fund's performance. Such a combination of guaranteed interest and a surplus participation can be chosen, e.g., in the ``Allianz Advantage Pensioen'' offered by Allianz Nederland Levensverzekering. 

We show that the bankruptcy-triggering value is uniquely determined through a non-linear equation and is monotonically decreasing in the tax rates, but monotonically increasing in the surplus participation and the guarantee rate. We give sufficient conditions for the participation and the guarantee rate to be strictly positive. In particular, if the tax rate is sufficiently high, it is always beneficial for the insurance company, from a capital structure perspective, to offer a surplus participation. This provides a possible explanation for the peculiar structure of the life insurance market, where guarantees which match the policyholders' preference for safety, go typically hand in hand with surplus participation. In the numerical analysis, we demonstrate that these conditions are typically satisfied. However, if other payout obligations, like the lump sum payment, are too large, or the tax rates (and therefore the tax benefits) are too low, then and only then it is advantageous for the insurance company not to offer a positive surplus participation. A sensitivity analysis indicates that the surplus participation is mainly exposed to changes in the dividend payout of the insurance company, the contract duration, and the tax rate.
Finally, we explore the so-called asset substitution effect, which is a type of agency costs. This effect describes the tendency of equity holders to increase the riskiness of a company's investment decisions, leading to a transfer of value from liabilities to equity. This phenomenon was first identified by Black and Scholes \cite{black1973pricing} and Jensen and Meckling \cite{meckling1976theory}. Subsequently, Merton \cite{merton1974pricing} and Barnea et al. \cite{barnea1980rationale} expanded upon this issue, identifying the core issue as the treatment of equity as a call option. However, when additional features such as guaranteed payments, taxes, and bankruptcy costs are incorporated, equity is no longer a classical call option, and the asset substitution effect weakens, particularly for shorter contract durations. Barnea et al. \cite{barnea1980rationale} already proposed that shorter durations diminish shareholders' incentives to increase investment risk. In our framework, we demonstrate that surplus participation further mitigates the asset substitution effect, rendering it a negligible factor. Specifically, we observe that for reasonable contract maturities, up to $50$ years, the asset-substitution effect disappears entirely when surplus participation is incorporated. Therefore, agency costs associated with asset substitution are effectively eliminated when such contracts are offered, as in our framework.

The structure of the paper is as follows: Section \ref{chapter: basic and model setup} introduces participating life insurance contracts, outlines the basic model, and discusses the modeling of participation rates using option theory. In Section \ref{chapter: analysis}, we present the liability structure and the insurance company's value, followed by the derivation of the bankruptcy-triggering value. Section \ref{chapter: optimal rates} derives formulas for the optimal participation rate and the guarantee rate. In Section \ref{chapter: numerical}, we perform numerical analysis based on the result from the preceding section. Section \ref{chapter: conclusion} concludes the paper. All proofs are included in Appendix \ref{proofs}.

\section{Basic Concepts and Model Setup} \label{chapter: basic and model setup}

\subsection{Participating life insurance contracts} \label{subsection: participating life insurance}

In the life insurance sector, contracts with surplus participation, or more broadly, contracts with profit participation, are commonplace. As illustrated in Figure \ref{fig: diagram}, which presents the market share of various lines of business in the life sector in 2022, approximately one-quarter of the total gross premiums are allocated to insurance contracts with a participation component. Furthermore, in countries such as Croatia, Belgium, and Italy, more than half of all gross premiums are invested in contracts featuring some form of participation.

\begin{figure}[!htb]
	\centering
	\includegraphics[trim= 22mm 98mm 28mm 28mm, clip,width=\textwidth]{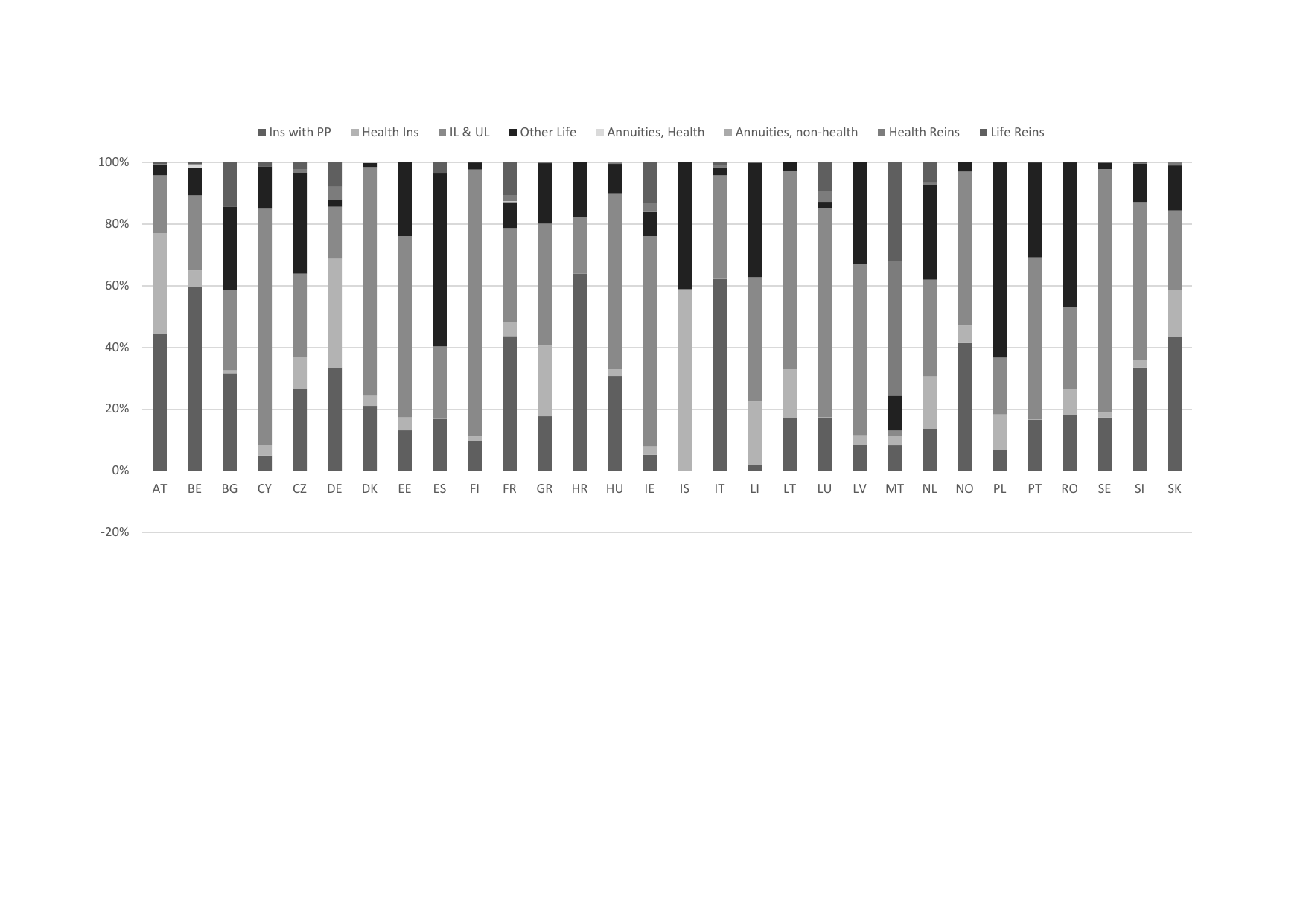}
	\caption{Market share in 2022 of the gross premium separated by the line of business in the life sector. Data source: European Insurance Overview from the EIOPA \cite{EIOPAreport}}
	\label{fig: diagram}
\end{figure}	

In countries like Germany, for instance, profit participation is legally mandated, as stipulated in the so-called ``Mindestzuf{\"u}hrungsverordnung'' (minimum allocation decree), requiring insurance companies to distribute cost-, risk-, and investment-surpluses to policyholder (if any) for all life insurance policies. While such legal requirements are less stringent in many other countries, many contracts still include profit participation, as can be seen from Figure \ref{fig: diagram}. 


In this paper, we focus on an insurance company that offers a single type of insurance contract. This contract includes a constant, guarantee rate $g \geq 0$, a deterministic lump sum payment at maturity, and an additional participation on the surplus exceeding a pre-determined threshold $k \geq 0$, with a participation rate $\alpha \in [0,1]$. The modeling of this participation presents some mathematical complexities, which we address in Subsection \ref{subsection: barrier} following the setup of the basic model.

\subsection{Model Setup} \label{subsection: model setup}

Let $(\Omega,\FF,(\FF)_{t \in [0,T]},\PP)$ be a filtered probability space with time horizon $T>0$, where the Brownian Motion $W$ generates the filtration, satisfying the usual conditions. Additionally, let $\QQ$ denote the pricing measure. We consider a market with a default-free, risk-free asset $B$ offering an interest rate $r > 0$. Following the approach of Leland and Toft \cite{leland1996optimal} (and its various generalizations, such as Ju et al. \cite{ju2005optimal}, Liu et al. \cite{liu2006personal}, or Hennessy and Tserlukevich \cite{hennessy2008taxation}), we model the asset value of the insurance company by the stochastic differential equation:
\begin{align*}
	\diff V_t = (\mu_t (V_t) - \nu) V_t \diff t + \sigma V_t \diff W_t,
\end{align*}
where $\mu_t$ represents the insurance company's total expected rate of return, $\nu>0$ is the constant fraction paid out (to equity holders and policyholders together), and $\sigma>0$ is the (constant) volatility. Moreover, we consider a constant default-triggering value $V_B \in [0,V_0]$, which determines the threshold at which the insurance company decides to default, if the asset value $V$ falls below $V_B$. 
We denote by $f^{V_0}$ the density and by $F^{V_0}$ the cumulative distribution function of the first passage time of the asset value $V$ to the bankruptcy-triggering value $V_B$ under risk-neutral valuation, i.e., under $\QQ$, where the drift rate of $V$ is given by $r-\nu$. For clarity, we explicitly state the dependence of $f$ and $F$ on the initial value $V_0$ as an upper index.

Furthermore, we let $\tau_1$ (resp. $\tau_2$) represent the tax rate of the insurance company on the guaranteed payment (resp. the surplus participation), and $\rho$ denote the fraction of the asset value that is lost in the event of bankruptcy. Throughout the paper, we ignore that, apart from policyholders, there might be additional debt holders. Note that if only the company's profits are taxed, then it holds that $\tau_1 = \tau_2$ and both are equal to the corporate tax rate. The total value of the insurance company is denoted by $v$, which is partitioned into the equity value $E$ and the liability value $L$. The liability value $L$ encompasses the value of the payments to policyholders, including guaranteed rates, lump sum payments and surplus participation. It is important to note that the total value does not equate to the asset value of the insurance company due to tax benefits and bankruptcy costs. In event of bankruptcy, the remaining asset value, $(1-\rho)V_B$, will be distributed solely among the policyholders.

\subsection{Modeling participation rates with barrier options} \label{subsection: barrier}

By the construction of the participation part in this contract, as described in Subsection \ref{subsection: participating life insurance}, the surplus participation at maturity is given by $\alpha (V_T-k)_+$, where $(x)_+ := \max\{x,0\}$. This surplus participation is solely paid if the insurance company remains solvent, i.e., if $V_t \geq V_B$ for all $t \in [0,T]$, or equivalently $\min_{t \in [0,T]} V_t \geq V_B$. First, we observe that the payout of the surplus participation, without considering the bankruptcy condition, is equivalent to the payout of a call option. For now assume that the bankruptcy-triggering value $V_B$ is constant (which will be shown in Subsection \ref{subsection: derivation of vb}). Then the value of the surplus participation can be modeled as a so-called Down-and-Out-Call option. Specifically, the barrier is set as $V_B$, the strike price as $k$, and the asset value as $V$. Consequently, the value of the Down-and-Out Call option, corresponding to the surplus participation, with maturity $T$ and dividend rate $\nu$, is given by:
\begin{align*}
	c_{do} (V_0,k,V_B,T) = e^{-rT} \EX^{\QQ} [(V_T-k)_+ \1_{\{\min_{s \in [0,T]} V_s \geq V_B\}}].
\end{align*}
For further details on barrier options, we refer to Appendix \ref{subsection: barrier appendix} or to Hull \cite{hull2017options}.

\section{Insurance company setup and determination of the bankruptcy-triggering value} \label{chapter: analysis}

For the following analysis, we assume that the insurance company continuously issues contracts with identical characteristics, so that the portfolio remains stationary over time. Specifically, as long as the insurer remains solvent, the value of maturing contracts at any given time equals the value of newly issued contracts. We further assume that contract maturities are uniformly distributed over each interval $[s, s+T]$, and---without loss of generality---set $s = 0$ in the formulas. These assumptions follow Leland's framework for finite-maturity debt\footnote{$T < \infty$}, as in \cite{leland1996optimal}, \cite{ju2005optimal}, \cite{perrakis2012structural}, and \cite{palmowski2020leland}. In practice, perpetual bonds are now rare, so it is common for bond maturities to differ from the planning horizon. In insurance, maturities $T$ of ten, twenty, or---less commonly---thirty years are typical.

In the cited literature, the total liability value at time zero equals the accumulated payments from $0$ to $T$, since at that time the liability consists solely of bonds maturing at or before $T$. Consequently, although these bonds are continuously retired and refinanced to maintain a stationary portfolio, no liability valuation beyond $T$ is required as all bonds in the portfolio mature within the next $T$ time periods. By contrast, the tax benefit persists under perpetual refinancing throughout the entire planning horizon, which requires integration from $0$ to $\infty$.



Let $G$ denote the total amount of guaranteed payments per year, and $P>0$ the total amount of lump sum payments at maturity. Based on the assumptions above, both $G$ and $P$ are time-independent, in contrast to the surplus participation component. The constant guarantee rate is given by $g = \frac{G}{T}$, the constant yearly lump sum payment rate is $p = \frac{P}{T}$, and the insurance company pays out a total of $G+\frac{P}{T}$ along with the random surplus participation per year. In practice, the model remains valid even if the insurer pays out the rates at later times (e.g., monthly or yearly) and invests the money in the risk-free asset during the intermediate time period.

In the analysis, we restrict the parameters to those that reflect reasonable market conditions for equity. By ``reasonable market conditions'', we refer to a market in which limited liability holds, i.e., the equity value, $E(V)$, cannot be negative. Although without restrictions on the parameters, excessively high surplus participation promises could lead to such a scenario, an insurance company offering such products would face challenges in attracting investors. Thus, for the remainder of this paper, we assume that equity is non-decreasing in the asset value and non-negative.

Additionally, we assume that the liability value of the guaranteed payment exceeds the tax benefit associated with this payment\footnote{\label{footnote: global local}From a mathematical perspective, this excess is not required to hold globally. It suffices for this condition to be valid when the insurance company's asset value is close to the bankruptcy-triggering value.}. This assumption is also necessary in the absence of surplus participation, such as in the basic model of Leland and Toft \cite{leland1996optimal}, where it is implicitly assumed, even though not explicitly stated.\footnote{In the case of no surplus participation, we later demonstrate (see \ref{eq: vb alpha=0 main text}) that the optimal bankruptcy-triggering value, $V_B$, is affine-linear in the guaranteed payment $G$. Thus, this assumption ensures that $V_B$ is non-decreasing and, consequently, non-negative.} If this assumption is not satisfied, we find that $V_B$ is decreasing in $G$, leading to the paradoxical situation that it would be advantageous for the policyholders, the insurance company, and the equity holders to increase the guaranteed payments to infinity due to the high tax benefit. 
Therefore, we exclude this possibility for the remainder of the paper. Numerical analysis indicates that this assumption may not hold if the tax rate is close to $100 \%$ and the lump sum payment is small. However, if the contract duration, $T$, approaches infinity\footnote{Life insurance companies often hold long-duration bonds with durations between 10 to 30 years to match long time liabilites like life policies or annuities.}, the present value of the liability associated with the guaranteed payment will exceed the tax benefit derived from it, regardless of the parameter choices.

We also make a similar assumption regarding surplus participation: namely, that the liability value of the surplus participation exceeds the tax benefit associated with it\footnoteref{footnote: global local}. If this assumption does not hold, we again encounter paradoxical situations where increasing surplus participation to infinity would be advantageous for the policyholders, the insurance company, and the equity holders due to the high tax benefit. Our analysis indicates that this assumption may not hold true if, for instance, the tax rate is excessively high. However, as long as the contract duration is finite, the assumption is always valid. As such, we exclude this possibility from further analysis in this paper.

Finally, we assume that \(k \geq V_0\); that is, the threshold at which surplus 
participation begins is greater than or equal to the initial portfolio value. 
This assumption is typically satisfied in surplus participation contracts, as 
the baseline level for surplus determination is often chosen to be the initial 
value augmented by suitable safety margins.

For simplicity, from this point forward, we will replace the initial value $V_0$ with $V$ in the notation.

\subsection{Liability Structure} \label{subsection: liability}

In this subsection, we derive formulas for the liability associated with a fixed bankruptcy-triggering level $V_B$. Additionally, we restate the results from Leland and Toft \cite{leland1996optimal} concerning the non-participation component of the liabilities. That $V_B$ is in the optimum chosen constantly follows from the stationarity assumption and will be demonstrated later\footnote{See in particular Subsection \ref{subsection: derivation of vb}.}. Before defining the liabilities for the entire portfolio, we begin by considering the liability stemming from a portfolio with maturity $t$, denoted by $l$. We let $\tau$ represent the stopping time at which the asset value of the insurance company $V_t$ hits the bankruptcy triggering value $V_B$, i.e., $\tau := \inf_{s \geq 0} \{ V_s \leq V_B \}$:
\begin{align}
	l(V;V_B,t) =&\, \EX^\QQ \left[\int_0^t e^{-rs} g \1_{\{\min_{r \in [0,s]} V_r \geq V_B\}} \diff s \right] + \EX^\QQ \left[ e^{-rt} p \1_{\{\min_{s \in [0,t]} V_s \geq V_B\}} \right] \notag \\
	&+ \EX^\QQ \left[e^{-r \tau} (1-\rho) V_B \1_{\tau \leq t} \right] \notag\\
	&+ \EX^\QQ \left[ e^{-rt} \alpha (V_t-k)_+ \1_{\{\min_{s \in [0,t]} V_s \geq V_B\}} \right] \label{eq: def liability for small t}\\
	=&\, \int_0^t e^{-rs} g (1-F^V(s)) \diff s + e^{-rt} p (1-F^V(t)) + \int_0^t e^{-rs} (1-\rho)V_B f^V(s) \diff s \notag\\
	&+ \alpha c_{do} (V,k,V_B,t) \notag\\
	=&\, \frac{g}{r} + e^{-rt} \left( p - \frac{g}{r} \right) (1-F^V(t)) + \left( (1-\rho)V_B - \frac{g}{r} \right) G^V(t) + \alpha c_{do} (V,k,V_B,t), \notag
\end{align}
where $G^V(t) := \int_0^t e^{-rs} f^V(s) \diff s$. The last step follows from integration by parts of the first term. In this equation, the first term corresponds to the guaranteed payment, the second term represents the final lump sum payment, the third term reflects the remaining asset value in event of bankruptcy, and the fourth term accounts for the surplus participation. 

Now, we proceed with the liability of the entire portfolio with maturity $T$, i.e., the total liability value $L$, which is given by:
\begin{align}
	L(V;V_B,T) =&\, \int_0^T l(V;V_B,t) \diff t \label{eq: liability value L}\\
	=&\, \tfrac{G}{r} + ( P - \tfrac{G}{r} ) ( \tfrac{1-e^{-rT}}{rT} - I_1^V(T) ) + \left( (1-\rho)V_B - \frac{G}{r} \right) I_2^V(T) + \alpha \int_0^T c_{do} (V,k,V_B,t) \diff t, \notag
\end{align}
where $I_1^V (T) := \tfrac{1}{T} \int_0^T e^{-rt} F^V(t) \diff t$ and $I_2^V(T) := \tfrac{1}{T} \int_0^T G^V(t) \diff t$. The functions $F^V$, $G^V$, $I_1^V$, and $I_2^V$ admit explicit formulas, which are provided in Appendix \ref{app liability}.

\subsection{Total value and equity value}

As outlined in the model setup in Subsection \ref{subsection: model setup}, the total value of the insurance company, $v$, is the sum of the actual value and the tax benefit $TB$, minus the lost value in the event of bankruptcy, denoted by $BC$. Specifically, we have:
\begin{align*}
	v = V + TB - BC.
\end{align*}
By definition of the model, the tax benefit $TB$ can be decomposed into two components: $TB = TB_1 + TB_2$, where $TB_1$ represents the tax benefit arising from the guaranteed payment, and $TB_2$ corresponds to the tax benefit from the participation component.
The tax benefit $TB$ and the (potential) bankruptcy costs $BC$ persist under perpetual refinancing throughout the entire planning horizon, which requires integration from $0$ to $\infty$.
In this context, $TB_1$ is given by $\tau_1 G$, where $G$ is the total guaranteed payment, as long as the insurance company remains solvent, i.e., as long as $\min_{s \in [0,t]} V_s \geq V_B$. A similar interpretation applies to $TB_2$, with $\tau_2$ replacing $\tau_1$ and the value of the participation component substituting the value of the guarantee. Finally, $BC$ represents the value of the bankruptcy costs. Thus, we get with $S := \inf\{t>0: V_t \leq V_B\}$:
\begin{align}
	TB_1 &= \tau_1 \EX^\QQ \left[ \int_0^\infty e^{-rt} G \1_{V_s \geq V_B \, \forall s \in [0,t]} \diff t \right], \label{eq: def TB1}\\
	TB_2 &= \tau_2 \EX^\QQ \left[ \int_0^\infty e^{-rt} \alpha(V_t-k)_+ \1_{V_s \geq V_B \, \forall s \in [0,t]} \diff t \right], \label{eq: def TB2} \\
	BC &= \rho \EX^\QQ \left[ e^{-rS}  V_B \1_{S < \infty} \right]. \label{eq: def BC}
\end{align}

For the non-participation terms $TB_1$ and $BC$, and applying Tonelli's theorem for $TB_2$, one obtains:
\begin{align} \label{eq: firm value v}
	v(V;V_B) &= V + TB_1 + TB_2 - BC \notag \\
	&=V + \tau_1 \tfrac{G}{r} (1-(\tfrac{V_B}{V})^{\lambda_2+\lambda_3}) + \tau_2 \EX^\QQ \left[ \int_0^\infty e^{-rt} \alpha(V_t-k)_+ \1_{V_s \geq V_B \, \forall s \in [0,t]} \diff t \right] - \rho V_B (\tfrac{V_B}{V})^{\lambda_2+\lambda_3} \notag \\
	&= V + \tau_1 \tfrac{G}{r} (1-(\tfrac{V_B}{V})^{\lambda_2+\lambda_3}) + \tau_2 \alpha \int_0^\infty c_{do} (V,k,V_B,t) \diff t - \rho V_B (\tfrac{V_B}{V})^{\lambda_2+\lambda_3}.
\end{align}
The value of the equity $E$ is simply given by:
\begin{align}
	\label{equity}
	E(V;V_B,T) = v(V;V_B) - L(V;V_B,T).
\end{align}

\subsection{Derivation of the bankruptcy-triggering value $V_B$} \label{subsection: derivation of vb}

To choose the equilibrium bankruptcy-triggering value $V_B$ which maximizes equity, we employ the smooth-pasting condition, also known as low-contact rule 
which is given by the smallest solution of
\begin{align} \label{eq: smooth pasting condition}
	\dfrac{\partial E(V;V_B,T)}{\partial V} \Big|_{V=V_B} = 0.
\end{align}
This condition maximizes both the equity and the insurance company's value with respect to $V_B$, under the condition of the limited liability for equity holders (i.e., equity holders can always walk away), ensuring that $E(V) \geq 0$ for all $V \geq V_B$. Furthermore, it holds that $E(V)=0$ for all $V \leq V_B$. 
Using the smooth-pasting condition, $V_B$ is chosen endogenously ex post via a maximization. $E(V) \geq 0$ for all $V \geq V_B$ guarantees that $\tfrac{\partial^2}{\partial V^2} E(V;V_B,T) \big|_{V=V_B} \geq 0$ and that $\tfrac{\partial E(V;V_B,T)}{\partial V_B}=0$ for any level of $V$. Note that the solution to equation \eqref{eq: smooth pasting condition} is independent of time, $t$, i.e., $V_B$ is constant in the analysis. If multiple solutions exist, we select the smallest solution, as this is the only one consistent with the limited liability of equity\footnote{The choice of the smallest solution stands in contrast to typical results in related optimization problems. However, we later show that there exists at most a single solution at which \(V \mapsto E(V;V_B)\) attains a minimum at \(V = V_B\), and that in total there are at most two solutions resp. solution intervals. The second solution, if it exists, corresponds to a local maximum of \(V \mapsto E(V;V_B)\) at \(V = V_B\) (as minima and maxima alternate), which is not admissible due to the requirements of non-negative equity and \(E(V_B;V_B) = 0\). Hence, there is at most one admissible solution.}. This implies that the equity value $E$ is increasing in the insurance company's asset value $V$ for all $V \geq V_B$. 

For a more detailed derivation of the smooth pasting condition, and its equivalence to $\tfrac{\partial E(V;V_B,T)}{\partial V_B}=0$, we refer to Merton \cite{merton1973theory}, Dixit \cite{dixit1991simplified}, Dumas \cite{dumas1991super}, Leland \cite{leland1994corporate}, and He and Milbradt \cite{he2016dynamic}. 
The main idea is to note first that by stationarity, $V_B=V_B^*$ if and only if for $V(0)=V_B^*$ the continuation value is negative. Hence, $V_B^*$ must be independent of time $t$ and of the asset value $V$. 

Setting $V_B=V_B^*$ in the sequel, we derive \eqref{eq: smooth pasting condition} by observing that $h(V) := E(V,V) = 0$ for all $V \geq 0$. Hence
\[
\frac{\partial E(V;V_B,T)}{\partial V} \Big|_{V=V_B}
\;+\;
\frac{\partial E(V;V_B,T)}{\partial V_B} \Big|_{V=V_B}
= \frac{\partial}{\partial V} h(V) \Big|_{V=V_B}
= 0,
\]
where the smoothness of $E$ holds by Lemma \ref{lemma: cdo c1} and Lemma \ref{lemma: new} in the Appendix.
Next, let us provide a brief discussion of the intuition behind the smooth-pasting condition \eqref{eq: smooth pasting condition}: The insurance company would set $V_B$ as low as possible in order to maximize its value and prefers to avoid bankruptcy (because of the bankruptcy costs). Conversely, equity holders want to ensure that the equity value is always non-negative. Due to their limited liability, they will liquidate the insurance company (i.e., stop payments) if the equity becomes negative. The equity holders determine the level of $V_B$ after the insurance company has finalized its liability structure. It is important to note that if $V_B$ is (in theory) set too low, the equity value would become negative if the insurance company's assets are low, as the guaranteed payment would become too costly. This also explains the underlying minimization problem in the smooth-pasting condition, as the equity holders minimize $V_B$ to the lowest possible value such that the equity capital remains non-negative. 
Additionally, due to the absolute priority rule, the value of equity is (theoretically) $0$ for every $V \leq V_B$. The term ``low-contact rule'' refers to the boundary condition that for equity $E$, seen as a function of $V$ and $V_B$, the set where $V=V_B$ determines a boundary where the function $E$ is defined.

From this point forward, we will assume, as previously discussed, that the liability value of the guaranteed payment and of the surplus participation exceeds the associated tax benefit. Consequently, based on the framework outlined in the preceding subsections, the following inequalities are assumed to hold throughout the rest of this paper:
\begin{align}
	\int_0^T \EX^\QQ \left[\int_0^t e^{-rs} g \1_{\{\min_{r \in [0,s]} V_r \geq V_B\}} \diff s \right] \diff t &\geq TB_1, \label{eq: ass guarantee excess} \\
	\int_0^T c_{do} (V,k,V_B,t) \diff t &\geq \tau_2 \int_0^\infty c_{do} (V,k,V_B,t) \diff t. \label{eq: ass surplus excess}
\end{align}

In the following theorem, we apply this smooth-pasting condition and provide a formula where the solution yields the bankruptcy-triggering value $V_B$. Before proceeding, we introduce the following shorthand notations with $\lambda_1$ as defined in \eqref{eq: def d12 lambda1} and $\lambda_2,\lambda_3$ as defined in \eqref{eq: def lambda23}:
\begin{align}
	A_1 &:= \tfrac{\lambda_2-\lambda_3}{2} + \lambda_3 \Phi (\lambda_3 \sigma \sqrt{T}) - \lambda_2 e^{-rT} \Phi(\lambda_2 \sigma \sqrt{T})>0, \label{eq: a1} \\
	A_2 &:= \tfrac{\lambda_2-\lambda_3}{2} - \tfrac{1}{2 \lambda_3 \sigma^2 T} + (\lambda_3+\tfrac{1}{\lambda_3 \sigma^2 T} ) \Phi (\lambda_3 \sigma \sqrt{T}) + \tfrac{\varphi (\lambda_3\sigma\sqrt{T})}{\sigma \sqrt{T}}>0, \label{eq: a2} \\
	A_3 &:= \tfrac{\lambda_1}{\nu} + \tfrac{1}{\sigma \nu} \sqrt{\lambda_1^2 \sigma^2 + 2\nu}>0, \label{eq: a3} \\
	A_4 &:= \tfrac{\lambda_1}{\nu} (1- 2e^{-\nu T} \Phi(\lambda_1\sigma\sqrt{T})) + \tfrac{1}{\sigma \nu} \sqrt{\lambda_1^2 \sigma^2 + 2\nu} (2 \Phi(\sqrt{\lambda_1^2\sigma^2+2\nu}\sqrt{T})-1)>0, \label{eq: a4} \\
	\bar{\alpha} &:= \begin{cases}
		\nu (1-\tau_2-e^{-\nu T})^{-1}, & \text{if } 1-\tau_2-e^{-\nu T} >0, \\
		\infty, & \text{else,}
	\end{cases} \label{eq: baralpha} \\
	\tilde{\alpha} &:= \begin{cases}
		\tfrac{1+\rho(\lambda_2+\lambda_3)+2(1-\rho)A_2}{A_4-\tau_2A_3}, & \text{if } A_4-\tau_2A_3 >0, \\
		\infty, & \text{else.}
	\end{cases} \label{eq: tildealpha}
\end{align}

The constants $A_1$, $A_2$, $A_3$, and $A_4$ are indeed non-negative, as shown in Lemma \ref{lemma: a1,a2,a3,a4>0}. Additionally, we adopt the convention that $[0,\bar{\alpha}] = [0,\infty)$ when $\bar{\alpha}=\infty$.

\begin{theorem} \label{th: VB determination}
	The bankruptcy-triggering value $V_B$ is determined as the minimum of $V_0$ and the smallest solution of the following formula:
	\begin{align}
		0 =&\, 1 + \rho(\lambda_2+\lambda_3)+2(1-\rho)A_2 - \tfrac{1}{V_B} \Big( \tfrac{2(P-\frac{G}{r})A_1}{rT} + 2 \tfrac{G}{r} A_2 - \tau_1 \frac{G}{r}(\lambda_2+\lambda_3) \Big) \notag \\
		&+ \tau_2 \alpha \int_0^\infty \tfrac{\partial  c_{do} (V,k,V_B,t) }{\partial V} \big|_{V=V_B} \diff t - \alpha \displaystyle\int_0^T \tfrac{\partial c_{do} (V,k,V_B,t)}{\partial V} \big|_{V=V_B} \diff t, \label{eq: formula for vb}
	\end{align}
	where $\lambda_2,\lambda_3$ are as in \eqref{eq: def lambda23},
	\begin{align}
		\tfrac{\partial}{\partial V} c_{do} (V,k,V_B,t) \big|_{V=V_B} =&\, 2e^{-\nu t} \Big(  \lambda_1 \Phi(d_1(\min\{\tfrac{V_B}{k},1\},t))+ \frac{\varphi(d_1(\min\{\frac{V_B}{k},1\},t))}{\sigma \sqrt{t}} \Big) \notag \\
		&- \tfrac{2ke^{-rt}}{V_B} \Big(\lambda_2  \Phi(d_2(\min\{\tfrac{V_B}{k},1\},t)) + \frac{ \varphi(d_2(\min\{\frac{V_B}{k},1\},t))}{\sigma \sqrt{t}} \Big), \label{eq: dv cdo v=vb}
	\end{align}
	and $d_1$ as in \eqref{eq: def d12 lambda1}. In particular, this formula is well-defined and it holds that $V_B > 0$. 
\end{theorem}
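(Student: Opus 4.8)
The plan is to apply the smooth-pasting condition \eqref{eq: smooth pasting condition} directly to the explicit expression for equity $E(V;V_B,T) = v(V;V_B) - L(V;V_B,T)$ obtained in \eqref{eq: firm value v} and \eqref{eq: liability value L}, differentiate term by term with respect to $V$, evaluate at $V = V_B$, and then rearrange into the form \eqref{eq: formula for vb}. First I would compute $\tfrac{\partial}{\partial V} v(V;V_B)\big|_{V=V_B}$: the term $V$ contributes $1$, the tax-benefit term $\tau_1 \tfrac{G}{r}(1-(V_B/V)^{\lambda_2+\lambda_3})$ contributes $\tau_1 \tfrac{G}{r}\tfrac{\lambda_2+\lambda_3}{V_B}$ after differentiating the power and setting $V=V_B$, the bankruptcy-cost term $-\rho V_B (V_B/V)^{\lambda_2+\lambda_3}$ contributes $\rho(\lambda_2+\lambda_3)$, and the integrated down-and-out term contributes $\tau_2\alpha\int_0^\infty \tfrac{\partial c_{do}}{\partial V}\big|_{V=V_B}\,\diff t$. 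Next I would compute $\tfrac{\partial}{\partial V} L(V;V_B,T)\big|_{V=V_B}$: here the only $V$-dependence sits in $F^V$ (hence $I_1^V$), in $G^V$ (hence $I_2^V$), and in the down-and-out call; using the closed-form expressions for $F^V$, $G^V$, $I_1^V$, $I_2^V$ from Appendix \ref{app liability} and their derivatives evaluated at $V=V_B$, these derivatives produce precisely the constants $A_1$ and $A_2$ (this is where $A_1,A_2$ enter), giving the terms $\tfrac{2(P-G/r)A_1}{rT}\cdot\tfrac{1}{V_B}$, $2\tfrac{G}{r}A_2\cdot\tfrac{1}{V_B}$, and $\alpha\int_0^T \tfrac{\partial c_{do}}{\partial V}\big|_{V=V_B}\,\diff t$; the constant term $G/r$ is $V$-independent and drops out. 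Subtracting and setting the result to zero yields \eqref{eq: formula for vb} after collecting the $\tfrac{1}{V_B}$ coefficients.

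The key computational subroutine is the derivative formula \eqref{eq: dv cdo v=vb} for the down-and-out call at $V=V_B$. I would obtain this from the closed-form barrier-option price (stated in Appendix \ref{subsection: barrier appendix}): the down-and-out call with strike $k \geq V_0 \geq V_B$ and barrier $V_B$ has the standard representation as a vanilla call minus its "reflected" counterpart through the barrier, i.e.\ a difference of Black–Scholes-type terms in $V$ and in $V_B^2/V$ with power weights $(V_B/V)^{2\lambda}$ for the appropriate $\lambda$. Differentiating in $V$, evaluating at $V=V_B$, and using the identity that the ordinary and reflected terms coincide there (which is exactly why the barrier price vanishes at $V=V_B$) causes the two "price-level" pieces to combine; what survives is twice the derivative contribution, producing the factor $2$ and the two groups of terms in \eqref{eq: dv cdo v=vb}, with the $\Phi$-terms coming from differentiating the cumulative-normal factors and the $\varphi/(\sigma\sqrt t)$-terms from differentiating the arguments $d_1,d_2$. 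The case distinction $\min\{V_B/k,1\}$ simply reflects whether $k \geq V_B$ (always true here since $k \geq V_0 \geq V_B$), so in fact $\min\{V_B/k,1\} = V_B/k$ throughout; I would keep the $\min$ for notational uniformity with the appendix. Smoothness needed to justify differentiating under the integral sign and the smooth-pasting manipulation itself is supplied by Lemma \ref{lemma: cdo c1} and Lemma \ref{lemma: new}.

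It remains to argue well-definedness and $V_B>0$. Well-definedness means the integrals $\int_0^\infty \tfrac{\partial c_{do}}{\partial V}\big|_{V=V_B}\,\diff t$ and $\int_0^T \tfrac{\partial c_{do}}{\partial V}\big|_{V=V_B}\,\diff t$ converge: for the finite-horizon one this is immediate from continuity of the integrand on $[0,T]$ (and the $t\to 0^+$ behavior, where the $\varphi/(\sigma\sqrt t)$ terms are integrable since $\varphi$ decays super-polynomially as its argument blows up like $1/\sqrt t$); for the infinite-horizon one, the exponential factors $e^{-\nu t}$ and $e^{-rt}$ in \eqref{eq: dv cdo v=vb} together with $r,\nu>0$ give integrability at $t\to\infty$. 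For $V_B>0$: rewriting \eqref{eq: formula for vb} by multiplying through by $V_B>0$, one sees $V_B$ solves $c_1 V_B = c_2 - \alpha V_B\big(\int_0^T - \tau_2\int_0^\infty\big)\tfrac{\partial c_{do}}{\partial V}\,\diff t$ where $c_1 := 1+\rho(\lambda_2+\lambda_3)+2(1-\rho)A_2 > 0$ (positivity of $A_2$ from Lemma \ref{lemma: a1,a2,a3,a4>0}) and $c_2 := \tfrac{2(P-G/r)A_1}{rT} + 2\tfrac{G}{r}A_2 - \tau_1\tfrac{G}{r}(\lambda_2+\lambda_3)$. The assumption that the liability value of the guaranteed payment exceeds its tax benefit (near $V_B$), inequality \eqref{eq: ass guarantee excess}, is exactly what forces the relevant combination to be positive so that $c_2 > 0$, hence the left-hand side at $V_B = 0$ would be negative while the right-hand side is positive, and a continuity/monotonicity argument on $(0,V_0]$ locates a strictly positive solution; the analogous bound \eqref{eq: ass surplus excess} controls the sign of the surplus-participation contribution so it cannot overturn this. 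The main obstacle I anticipate is the careful bookkeeping in deriving \eqref{eq: dv cdo v=vb}—matching the reflected-term cancellations at $V=V_B$ and tracking which $\lambda$ (namely $\lambda_1$ versus $\lambda_2$) multiplies which normal-density and normal-cdf factor—rather than any conceptual difficulty; everything else is term-by-term differentiation of already-explicit functions.
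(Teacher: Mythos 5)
The derivation of equation \eqref{eq: formula for vb} and of \eqref{eq: dv cdo v=vb} in your proposal follows essentially the same route as the paper: term-by-term differentiation of $v - L$ at $V = V_B$, with $A_1$ and $A_2$ arising from the derivatives of $I_1^V$ and $I_2^V$ (Lemma \ref{lemma: i1 i2 derivative}), the factor $2$ in \eqref{eq: dv cdo v=vb} coming from the coincidence of the direct and reflected barrier terms at $V=V_B$, and integrability of the two $c_{do}$-integrals from the exponential prefactors. That part of your plan is sound and matches the paper.

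The genuine gap is that you never justify the actual content of the theorem's first sentence: that $V_B$ is the \emph{minimum of $V_0$ and the smallest solution} of \eqref{eq: formula for vb}. The smooth-pasting equation is only a first-order condition, and its right-hand side, viewed as a function $h_2(V_B)$, is \emph{not} monotone in $V_B$ in general; your ``continuity/monotonicity argument on $(0,V_0]$'' presupposes a monotonicity that fails unless $\alpha$ is small enough (this is exactly why the paper needs the separate threshold $\tilde{\alpha}$ in Corollary \ref{cor: unique solution of vb determination}). The paper's proof devotes most of its length to this point: Lemma \ref{lemma: uniqueness preparation lemma} shows, via a careful sign-change analysis, that \eqref{eq: formula for vb} has at most two zero roots; one then argues that an admissible root must correspond to a local minimum of $V \mapsto E(V;V_B)$ at $V=V_B$ (because $E(V_B;V_B)=0$ and non-negativity of equity is required), which excludes the second root (a local maximum) and forces the selection of the smallest one; and the cases where no root exists or the root exceeds $V_0$ are resolved by setting $V_B=V_0$ (immediate bankruptcy). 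Without this analysis, existence, the selection rule, and the truncation at $V_0$ are all unproven. A secondary, smaller issue: your remark that $\min\{V_B/k,1\}=V_B/k$ ``throughout'' holds only for the optimal $V_B \le V_0 \le k$; when \eqref{eq: formula for vb} is analyzed as an equation in a free variable $V_B$ (in particular its limit as $V_B\to\infty$, which enters the root-counting and the definition of $\tilde{\alpha}$), the branch $V_B > k$ is genuinely needed.
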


Note that the two (lengthy) terms in the first line of equation \eqref{eq: formula for vb} are positive, as demonstrated in Lemma \ref{lemma: long term >0}. This theorem ensures that a solution for the bankruptcy-triggering value always exists. However, if the solution of \eqref{eq: formula for vb} is larger than $V_0$ (or no solution exists), we can set $V_B = V_0$, because if $V_B \geq V_0$, the company declares bankruptcy immediately. Furthermore, the theorem reveals that the bankruptcy-triggering value depends on the chosen contract maturity, which aligns with the basic model of Leland and Toft \cite{leland1996optimal}. However, in models with flow-based bankruptcy or a positive net worth covenant, the bankruptcy-triggering value is independent of the maturity, as shown in works by, e.g., Kim et al. \cite{kim1993does}, Longstaff and Schwartz \cite{longstaff1995simple}, or Ross \cite{ross2005capital}. 

In the following figure, Figure \ref{fig: rvb}, we provide a graphical illustration of the solution of formula \eqref{eq: formula for vb}. In the left plot, we depict the right-hand side of this formula, where the intersection of the graph with the horizontal axis at $0$ corresponds to $V_B$. In the right plot, we show the first line and the negative of the second line from \eqref{eq: formula for vb}. The intersection of these two lines represents $V_B$. This plot illustrates the most typical scenario, where a unique solution to \eqref{eq: formula for vb} exists.

\begin{figure}[!htb]
	\centering
	\begin{minipage}[t]{0.48\textwidth}
		\includegraphics[trim= 1mm 6mm 10mm 8mm, clip,width=\textwidth]{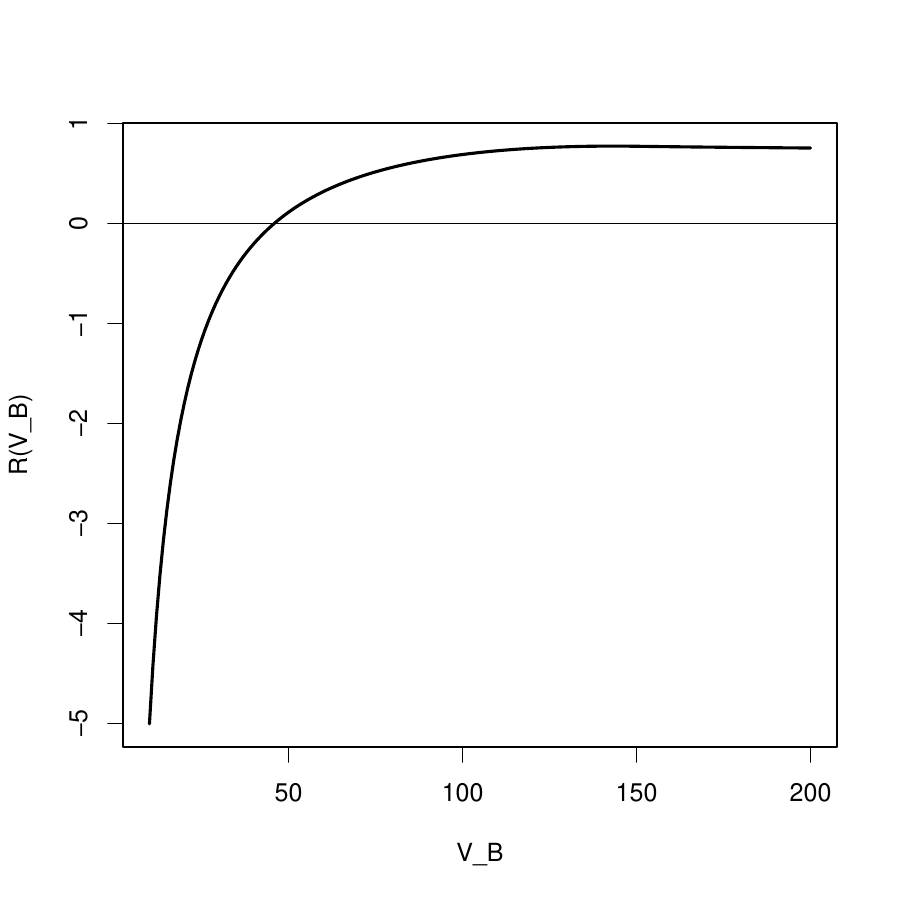}
	\end{minipage}
	\quad
	\begin{minipage}[t]{0.48\textwidth}
		\includegraphics[trim= 1mm 6mm 10mm 8mm, clip,width=\textwidth]{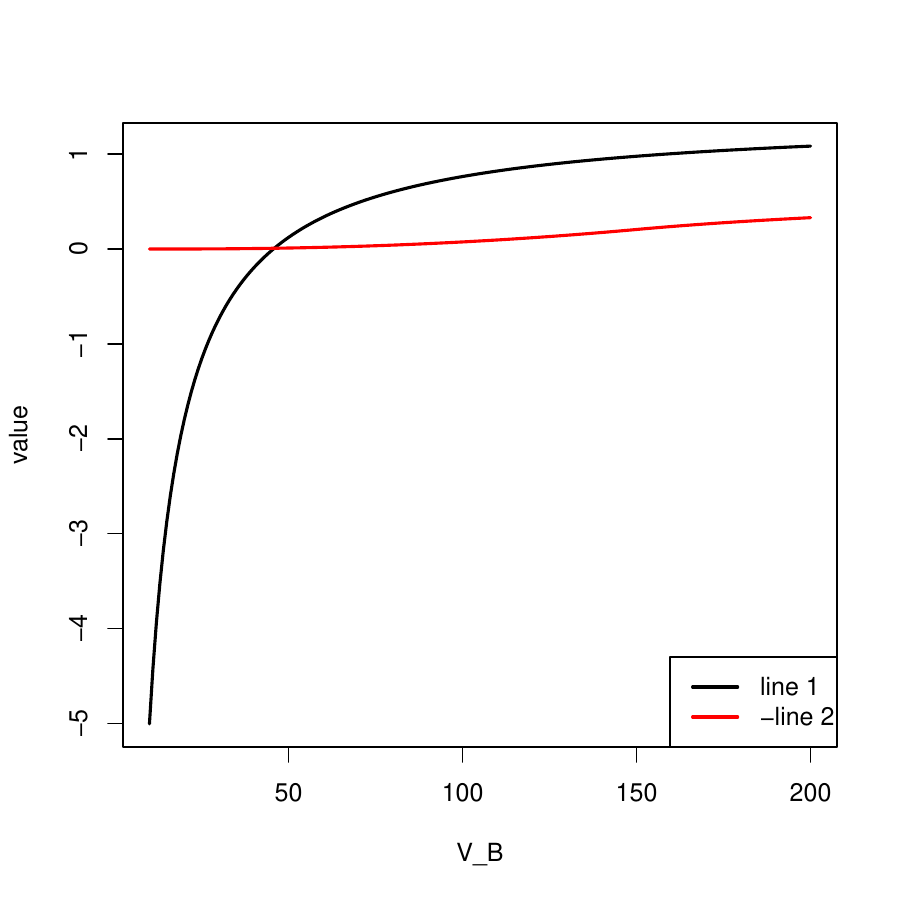}
	\end{minipage}
	\caption{Plot of the right hand side of \eqref{eq: formula for vb} (left) and of its first line and the negative of its second line (right). Note that the intersection with $0$ (left) resp. of the two lines (right) is $V_B$.}
	\label{fig: rvb}
\end{figure}

In the following propositions, we explore the influence of several parameters on the optimal bankruptcy-triggering value and provide a sufficient mathematical condition for the assumption that $V \to E(V)$ is non-negative.

\begin{proposition} \label{prop: vb monoton tau T}
	The optimal bankruptcy-triggering value $V_B$ is monotonically decreasing in the tax rates $\tau_1$ and $\tau_2$. Moreover, if $P-\tfrac{G}{r} \leq 0$, we find that $V_B$ is monotonically increasing in the contract maturity $T$.
\end{proposition}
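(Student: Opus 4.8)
Throughout write $\Psi(w;\tau_1,\tau_2,T)$ for the right-hand side of \eqref{eq: formula for vb} regarded as a function of the barrier level $w>0$, so that by Theorem~\ref{th: VB determination} we have $V_B=\min\{V_0,w^\star\}$, where $w^\star$ is the smallest positive zero of $w\mapsto\Psi(w;\tau_1,\tau_2,T)$. On the parameter region where $V_B=V_0$ all three claims are trivial, because $V_0$ does not depend on $\tau_1,\tau_2$ or $T$; so I may assume $V_B=w^\star$ and work with the implicit equation $\Psi(w^\star;\tau_1,\tau_2,T)=0$. The plan is a comparative-statics argument: first pin down the sign of $\partial_w\Psi$ at $w^\star$, then sign $\partial_{\tau_1}\Psi$, $\partial_{\tau_2}\Psi$ and $\partial_T\Psi$, and conclude via the implicit function theorem, with the global statements following from a single-crossing argument.

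\textbf{Sign at the crossing.} The auxiliary fact I would use is that $\Psi(w;\cdot)\to-\infty$ as $w\to0^{+}$: the summand $-\tfrac1w\big(\tfrac{2(P-G/r)A_1}{rT}+2\tfrac Gr A_2-\tau_1\tfrac Gr(\lambda_2+\lambda_3)\big)$ diverges to $-\infty$, since the bracket is strictly positive by Lemma~\ref{lemma: long term >0}, while the constant $1+\rho(\lambda_2+\lambda_3)+2(1-\rho)A_2$ is finite and, by \eqref{eq: dv cdo v=vb} together with $k\ge V_0>w$ (so $d_1,d_2\to-\infty$ and $\tfrac1w\Phi(d_2)\to0$), the remaining $c_{do}$-terms stay bounded as $w\to0^{+}$. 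As $w^\star$ is the \emph{smallest} positive zero and $\Psi$ is continuous, $\Psi(\cdot;\cdot)<0$ on $(0,w^\star)$, so $\Psi$ crosses zero from below at $w^\star$ and $\partial_w\Psi\big|_{w=w^\star}\ge0$; strictness holds because $w^\star$ is the admissible root, the one at which $V\mapsto E(V;w^\star,T)$ has a strict local minimum at $V=w^\star$ (equivalently: writing $\Psi=\phi\cdot g$ with $\phi>0$ and $g(w)=\partial_VE(V;w,T)\big|_{V=w}$, one gets $\partial_w\Psi\big|_{w^\star}=\phi(w^\star)\,g'(w^\star)$, and differentiating the identity $\partial_{V_B}E(V;V_B,T)\big|_{V_B=w^\star}=0$, valid for all $V$ as noted after \eqref{eq: smooth pasting condition}, together with the smoothness of Lemmas~\ref{lemma: cdo c1} and~\ref{lemma: new}, yields $g'(w^\star)=\partial_V^2E(V;w^\star,T)\big|_{V=w^\star}>0$). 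Hence $\partial_w\Psi\big|_{w=w^\star}>0$, so by the implicit function theorem $\partial V_B/\partial\theta$ has the sign opposite to that of $\partial_\theta\Psi\big|_{w=w^\star}$ for $\theta\in\{\tau_1,\tau_2,T\}$; and since raising $\theta$ shifts the graph $w\mapsto\Psi(w;\theta)$ upward when $\partial_\theta\Psi\ge0$ and downward when $\partial_\theta\Psi\le0$, the divergence $\Psi(0^{+})=-\infty$ forces the smallest zero to move monotonically in $\theta$, which delivers the global statements.

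\textbf{The tax rates.} In \eqref{eq: formula for vb}, $\tau_1$ occurs only in $+\tfrac{\tau_1G(\lambda_2+\lambda_3)}{rV_B}$, so $\partial_{\tau_1}\Psi=\tfrac{G(\lambda_2+\lambda_3)}{rV_B}\ge0$ (and $V_B$ is independent of $\tau_1$ altogether when $G=0$), giving $\partial V_B/\partial\tau_1\le0$. Similarly $\tau_2$ occurs only in $+\tau_2\alpha\int_0^{\infty}\partial_Vc_{do}(V,k,V_B,t)\big|_{V=V_B}\,\diff t$; for each $t$ the map $V\mapsto c_{do}(V,k,V_B,t)$ is nonnegative and vanishes at the barrier $V=V_B$ (the option is knocked out there), hence attains its minimum at $V_B$, and being $C^1$ by Lemma~\ref{lemma: cdo c1} its derivative there is $\ge0$, which is also what \eqref{eq: dv cdo v=vb} shows. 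Thus $\partial_{\tau_2}\Psi\ge0$ and $\partial V_B/\partial\tau_2\le0$.

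\textbf{The contract maturity.} This is the substantive part, and reduces to proving $\partial_T\Psi\big|_{w=w^\star}\le0$ when $P-\tfrac Gr\le0$. Here $T$ enters $\Psi$ in four places: through $A_1=A_1(T)$ and $A_2=A_2(T)$ (see \eqref{eq: a1}--\eqref{eq: a2}), through the explicit prefactor $\tfrac1{rT}$ on the $A_1$-term, and through the upper limit of $-\alpha\int_0^T\partial_Vc_{do}\big|_{V=V_B}\,\diff t$. I would argue term by term: (i) the contribution $-\alpha\,\partial_Vc_{do}(V,k,V_B,T)\big|_{V=V_B}\le0$ coming from the moving upper limit, by the barrier argument above; (ii) substituting $\Psi(w^\star;\cdot)=0$, the net $A_2$-dependence can be recast with a factor $\tfrac Gr-(1-\rho)V_B$, which is $\ge0$ because $P\le\tfrac Gr$ forces $V_B\le\tfrac G{r(1-\rho)}$ (already visible from the closed form \eqref{eq: vb alpha=0 main text} in the reference case $\alpha=0$); (iii) the $A_1$-contribution is handled using $P-\tfrac Gr\le0$. \emph{The main obstacle} is step (iii) and the assembly of (i)--(iii): one needs two auxiliary monotonicity statements, namely that $T\mapsto A_1(T)/T$ and $T\mapsto A_2(T)$ are non-increasing (obtained by differentiating \eqref{eq: a1}--\eqref{eq: a2}, using standard bounds on $\Phi$ and $\varphi$ and the positivity $A_1,A_2\ge0$ from Lemma~\ref{lemma: a1,a2,a3,a4>0}), and then a bookkeeping check that the four contributions sum to a nonpositive quantity precisely when $P\le\tfrac Gr$. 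With $\partial_T\Psi\le0$ in hand, the implicit function theorem and the single-crossing argument give $\partial V_B/\partial T\ge0$.
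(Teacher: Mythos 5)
Your overall framework---regard the right--hand side of \eqref{eq: formula for vb} as a function $\Psi(w;\theta)$ of the barrier level, note $\Psi\to-\infty$ as $w\to0^{+}$ so the smallest zero is an up--crossing, and track how the graph shifts with the parameter---is exactly the paper's argument, and your treatment of the tax rates is correct and complete: $\partial_{\tau_1}\Psi=\tfrac{G(\lambda_2+\lambda_3)}{rw}\ge0$ and $\partial_{\tau_2}\Psi=\alpha\int_0^\infty\partial_Vc_{do}|_{V=V_B}\,\diff t\ge0$ (the knocked--out option is nonnegative and vanishes at the barrier), so the smallest root moves left. That part matches the paper.

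The maturity part contains a genuine gap, and it is not the mere ``bookkeeping'' you defer. Grouping the two $A_2$--proportional terms of $\Psi$ as you propose gives $2(1-\rho)A_2-\tfrac{2GA_2}{rV_B}=-\tfrac{2A_2}{V_B}\bigl(\tfrac Gr-(1-\rho)V_B\bigr)$, whose $T$--derivative at fixed $V_B$ is $-\tfrac{2A_2'(T)}{V_B}\bigl(\tfrac Gr-(1-\rho)V_B\bigr)$. Your two auxiliary claims---that $T\mapsto A_2(T)$ is non--increasing (which is true: with $x=\lambda_3\sigma\sqrt T$ one has $A_2=\tfrac{\lambda_2-\lambda_3}{2}+\lambda_3u(x)$ with $u'(x)=-2(\Phi(x)-\tfrac12)/x^{3}<0$) and that $\tfrac Gr-(1-\rho)V_B\ge0$---therefore make this contribution \emph{nonnegative}, i.e.\ it pushes $\partial_T\Psi$ in the wrong direction. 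So the four contributions do not all carry the required sign, and closing the argument demands that the nonpositive pieces (the moving upper limit of the $c_{do}$--integral and the $(P-\tfrac Gr)\tfrac{A_1}{rT}$ term) quantitatively dominate this positive one; that this is not a sign check is seen at $\alpha=0$, $P=\tfrac Gr$, where both nonpositive pieces vanish while the $A_2$--piece does not. The paper does not perform your decomposition: it keeps the bracket $\tfrac{2(P-G/r)A_1}{rT}+2\tfrac GrA_2-\tau_1\tfrac Gr(\lambda_2+\lambda_3)$ intact (it enters with the factor $-1/V_B$), asserts it is increasing in $T$ under $P\le\tfrac Gr$, and pairs this with the participation integral being decreasing in $T$. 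A second, smaller issue: your bound $V_B\le\tfrac G{r(1-\rho)}$ is extracted from the closed form \eqref{eq: vb alpha=0 main text}, which holds only for $\alpha=0$; since $V_B$ is increasing in $\alpha$ by Proposition \ref{prop: vb monoton alpha g}, the bound does not transfer to $\alpha>0$ without further argument.
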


This result seems reasonable, as larger tax rates increase the equity value through a larger tax benefit, which leads to a lower bankruptcy-triggering value. On the other hand, a longer contract duration results in higher guarantee payments, which increases the liability value and consequently raises the bankruptcy-triggering value.

\begin{proposition} \label{prop: vb monoton alpha g}
	The optimal bankruptcy-triggering value $V_B$ is monotonically increasing and left-continuous in both the surplus participation rate $\alpha$ and the guaranteed payment $G$. Furthermore, the right-limits exist.
\end{proposition}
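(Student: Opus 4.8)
The plan is to argue directly from the defining relation in Theorem~\ref{th: VB determination}. Write $V_B(\alpha,G)=\min\{V_0,v^\ast(\alpha,G)\}$, where $v^\ast(\alpha,G)$ denotes the smallest positive root of the right‑hand side of \eqref{eq: formula for vb} (with $v^\ast=+\infty$ if there is none), and abbreviate that right‑hand side as
\[
\Psi(v;\alpha,G)\;:=\;c_1-\frac{M(G)}{v}+\alpha\,D(v),
\]
with $c_1:=1+\rho(\lambda_2+\lambda_3)+2(1-\rho)A_2>0$, $M(G):=\tfrac{2(P-G/r)A_1}{rT}+2\tfrac{G}{r}A_2-\tau_1\tfrac{G}{r}(\lambda_2+\lambda_3)$ affine in $G$, and $D(v):=\tau_2\int_0^\infty\tfrac{\partial c_{do}(V,k,v,t)}{\partial V}\big|_{V=v}\diff t-\int_0^T\tfrac{\partial c_{do}(V,k,v,t)}{\partial V}\big|_{V=v}\diff t$. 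Note that the $G$‑dependence of $\Psi$ sits entirely in $M(G)$, since $c_{do}$ does not involve $G$. I would first record three facts. (a) By Lemma~\ref{lemma: cdo c1}, $\Psi$ is jointly continuous in $(v,\alpha,G)$ on $(0,\infty)\times[0,\bar\alpha]\times[0,\infty)$, and differentiation under the integral sign in $D$ as well as the passage of a limit through the integrals in $D$ are legitimate (using the exponential decay in \eqref{eq: dv cdo v=vb} to dominate the $\int_0^\infty$ term uniformly for $v$ bounded away from $0$). (b) $M(G)>0$ by Lemma~\ref{lemma: long term >0}, and $D(v)\le 0$: indeed $D(v)$ is the derivative at $V=v$, taken from $V>v$, of $\phi(V):=\tau_2\int_0^\infty c_{do}(V,k,v,t)\diff t-\int_0^T c_{do}(V,k,v,t)\diff t$, which vanishes at $V=v$ (the option is knocked out) and is $\le 0$ just above $V=v$ by \eqref{eq: ass surplus excess}; consequently $\Psi(v;\alpha,G)\le c_1-M(G)/v\to-\infty$ as $v\downarrow0$, so $\Psi(\cdot;\alpha,G)<0$ on all of $(0,v^\ast(\alpha,G))$. (c) $M$ is non-decreasing in $G$: when $\alpha=0$ one has $v^\ast=M(G)/c_1$ by \eqref{eq: vb alpha=0 main text}, and the standing assumption is precisely what makes this affine function non-decreasing in $G$.

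Monotonicity then follows by comparing smallest roots. Fix $G$ and take $0\le\alpha_1<\alpha_2$. By (b), $\Psi(v;\alpha_2,G)-\Psi(v;\alpha_1,G)=(\alpha_2-\alpha_1)D(v)\le0$ for every $v>0$; combined with (a)–(b) this gives $\Psi(\cdot;\alpha_2,G)<0$ on $(0,v^\ast(\alpha_1,G))$, so $\Psi(\cdot;\alpha_2,G)$ has no root there and $v^\ast(\alpha_2,G)\ge v^\ast(\alpha_1,G)$. Since $\min(V_0,\cdot)$ is non-decreasing, $\alpha\mapsto V_B(\alpha,G)$ is non-decreasing. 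The argument in $G$ is word for word the same, now using that $v\mapsto-M(G)/v$ is non-increasing in $G$ by (c). Because a monotone function has one-sided limits at every point, this already establishes the existence of the right-limits asserted in the Proposition.

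For left-continuity in $\alpha$ (the claim for $G$ is proven identically), fix $G$ and let $\alpha_n\uparrow\alpha$. By the previous step $V_B(\alpha_n,G)$ is non-decreasing and bounded above by $V_0$, hence converges to some $\ell$ with $0<V_B(\alpha_1,G)\le\ell\le V_B(\alpha,G)$ (the strict lower bound by Theorem~\ref{th: VB determination}). If $\ell=V_0$, then $V_0\le V_B(\alpha,G)\le V_0$, so $V_B(\alpha,G)=V_0=\ell$. If $\ell<V_0$, then $V_B(\alpha_n,G)<V_0$ for $n$ large, which forces $V_B(\alpha_n,G)=v^\ast(\alpha_n,G)$ and hence $\Psi\big(V_B(\alpha_n,G);\alpha_n,G\big)=0$; letting $n\to\infty$ and using the joint continuity of $\Psi$ from (a) yields $\Psi(\ell;\alpha,G)=0$. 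Thus $\ell$ is a root of $\Psi(\cdot;\alpha,G)$, so $v^\ast(\alpha,G)\le\ell<V_0$ and therefore $V_B(\alpha,G)=v^\ast(\alpha,G)\le\ell$; together with $\ell\le V_B(\alpha,G)$ this gives $V_B(\alpha,G)=\ell=\lim_n V_B(\alpha_n,G)$.

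The main obstacle I anticipate is not any single estimate but the bookkeeping around the \emph{smallest}-root selection: one must verify fact (b) so that raising $\alpha$ or $G$ lowers $\Psi$ pointwise and therefore cannot spawn a new root below the current one, and one must accept that the selected root can jump \emph{upward} — this happens exactly when a transversal zero of $\Psi$ degenerates into a tangency and disappears, and it is the reason the Proposition claims only left-continuity plus existence of right-limits, not full continuity (so an implicit-function-theorem argument alone would not suffice). A secondary technical point is that the sign $D(v)\le0$ is needed not merely at one equilibrium barrier but along the entire range of barrier levels traversed as the parameters vary; this is where the standing assumptions \eqref{eq: ass guarantee excess}–\eqref{eq: ass surplus excess} enter in their pointwise-in-$V_B$ reading, and, together with the routine measure-theoretic steps (differentiating $D$ under the integral and the dominated-convergence passage in the limit argument), it is delivered by Lemma~\ref{lemma: cdo c1}.
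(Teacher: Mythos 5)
Your proof is correct, and it rests on the same two pillars as the paper's: the sign fact $D(v)\le 0$, which you derive --- exactly as the paper does in Lemma \ref{lemma: ass reformulation derivative} --- by differentiating the standing inequality \eqref{eq: ass surplus excess} at the barrier where both sides vanish, and the pointwise monotonicity in $G$ of the affine coefficient $M(G)$. Where you genuinely diverge is in the bookkeeping around the smallest-root selection. The paper's monotonicity argument invokes its earlier structural result that \eqref{eq: formula for vb} has ``at most one relevant zero root'' (Lemma \ref{lemma: uniqueness preparation lemma}) and then reasons about how that root must move; your argument avoids this entirely by observing that $\Psi<0$ on $(0,v^\ast(\alpha_1,G))$ and that raising $\alpha$ or $G$ only pushes $\Psi$ down, so no root can appear below the old smallest one --- a more elementary and self-contained comparison that yields the same (non-strict) monotonicity the Proposition asserts. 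Likewise, your left-continuity proof is a clean sequential argument (monotone limit $\ell$, pass to the limit in $\Psi(V_B(\alpha_n);\alpha_n,G)=0$, conclude $v^\ast(\alpha,G)\le\ell$), whereas the paper argues more informally that a jump can only occur when the root disappears and that ``at the critical value a solution still exists.'' Your version is arguably tighter; the paper's version makes more explicit \emph{why} only left-continuity can be expected (the tangency/disappearance mechanism you also flag at the end). One small citation slip: the joint continuity of $\Psi$ and the interchange of limits and integrals in $D$ are not delivered by Lemma \ref{lemma: cdo c1} (which concerns $C^1$-dependence of $c_{do}$ on $V$ for fixed barrier) but by Lemmas \ref{lemma: cdo unif bounded L1} and \ref{lemma: dv2 cdo bounded l1}, which give the uniform exponential domination and the smoothness of $v\mapsto\partial_V c_{do}\big|_{V=v}$; the mathematics you need is there, just under different labels.
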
 

As both a larger surplus participation and a larger guaranteed payment increase the payment obligations to the policyholders, equity holders will opt for a larger bankruptcy-triggering value to offset the increased liabilities.

The following proposition demonstrates that $V \to E(V)$ is actually non-negative (as assumed) as long as the participation rate is not unreasonably high.

\begin{proposition} \label{prop: alpha bar competitive market}
	A sufficient condition for our assumption that $V \to E(V)$ being non-negative is that $\alpha<\bar{\alpha}$, where $\bar{\alpha}$ is as defined in \eqref{eq: baralpha}.
\end{proposition}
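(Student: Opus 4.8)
The plan is to prove the stronger statement that, under $\alpha<\bar\alpha$, the function $V\mapsto E(V;V_B,T)$ is non-decreasing on $[V_B,\infty)$; since $E(V_B;V_B,T)=0$ this yields non-negativity (and also recovers the monotonicity half of the standing assumption). I would start from what the construction of $V_B$ already provides: $E(V_B;V_B,T)=0$ (because $h(V):=E(V,V)\equiv 0$), the smooth-pasting identity $\partial_V E(V;V_B,T)\big|_{V=V_B}=0$ from \eqref{eq: smooth pasting condition}, and --- since $V_B$ is chosen as the \emph{smallest} root of \eqref{eq: formula for vb} --- the fact that $V_B$ is a local minimum, not a local maximum, of $V\mapsto E(V;V_B,T)$ (cf.\ the discussion following \eqref{eq: smooth pasting condition}). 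The degenerate case $V_B=V_0$ is trivial, so assume $V_B<V_0$.

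The quantitative step is the growth of $E$ at infinity. Every term in $v(V;V_B)$ and $L(V;V_B,T)$ except the two barrier-option integrals stays bounded as $V\to\infty$: the factors $(V_B/V)^{\lambda_2+\lambda_3}$ vanish, and $I_1^V(T),I_2^V(T)\to 0$ since the first-passage probability from $V$ down to the fixed level $V_B$ tends to $0$ (write the running infimum as $V$ times that of the unit-initialised geometric Brownian motion $M$, which is a.s.\ positive, then use dominated convergence). For the barrier integrals I would use $\tfrac1V c_{do}(V,k,V_B,t)=e^{-rt}\EX^{\QQ}\big[(M_t-\tfrac kV)_+\1_{\{\min_{s\le t}M_s\ge V_B/V\}}\big]$, which converges pointwise to $e^{-rt}M_t$ and is dominated by $M_t$ with $e^{-rt}\EX^{\QQ}[M_t]=e^{-\nu t}\in L^1([0,\infty))$; dominated convergence then gives $\tfrac1V\int_0^\infty c_{do}\,\diff t\to\tfrac1\nu$ and $\tfrac1V\int_0^T c_{do}\,\diff t\to\tfrac{1-e^{-\nu T}}{\nu}$. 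Collecting the linear coefficients,
\[
\lim_{V\to\infty}\frac{E(V;V_B,T)}{V}=1+\frac{\tau_2\alpha}{\nu}-\frac{\alpha(1-e^{-\nu T})}{\nu}=1-\frac{\alpha(1-\tau_2-e^{-\nu T})}{\nu},
\]
and by the definition \eqref{eq: baralpha} of $\bar\alpha$ this limit is strictly positive exactly when $\alpha<\bar\alpha$ (if $1-\tau_2-e^{-\nu T}\le 0$ it is $\ge 1$ for every $\alpha\ge 0$, matching $\bar\alpha=\infty$). Hence $\alpha<\bar\alpha$ forces $E(V;V_B,T)\to+\infty$.

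To conclude, I would use the structural fact --- established in the appendix and summarised after \eqref{eq: smooth pasting condition} --- that $V\mapsto E(V;V_B,T)$ has at most two critical points (resp.\ solution intervals) and that consecutive critical points alternate between minima and maxima. As $V_B$ is a minimum, any critical point in $(V_B,\infty)$ would be a maximum, past which $E$ would eventually decrease, contradicting $E(V;V_B,T)\to+\infty$. Therefore $E$ has no critical point on $(V_B,\infty)$; being $C^1$ there (Lemma \ref{lemma: cdo c1}, Lemma \ref{lemma: new}) with a minimum at $V_B$, it satisfies $\partial_V E(V;V_B,T)>0$ for all $V>V_B$, so it increases on $[V_B,\infty)$ and $E(V;V_B,T)\ge E(V_B;V_B,T)=0$.

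The main obstacle is the critical-point structure invoked in the last step. Establishing it requires the closed forms of $F^V,G^V,I_1^V,I_2^V$ (Appendix) together with the barrier delta from \eqref{eq: dv cdo v=vb} (and its version at general $V$), and then showing that $\partial_V E(V;V_B,T)=0$ --- equivalently $V^{\lambda_2+\lambda_3+1}\partial_V E(V;V_B,T)=0$ --- admits at most two solutions with the required alternation; this delicate sign/monotonicity analysis is the technical heart, which I would either carry out in the appendix or cite from the lemma referenced there.
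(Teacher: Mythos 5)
Your proposal follows essentially the same route as the paper's proof: both reduce the claim to the behaviour of $E(\cdot;V_B)$ as $V\to\infty$ and both land on exactly the threshold $\alpha<\bar{\alpha}$ via the linear growth coefficient $1+\tfrac{\tau_2\alpha}{\nu}-\tfrac{\alpha(1-e^{-\nu T})}{\nu}$. The paper obtains the limit by replacing $c_{do}$ with the plain call (bankruptcy probability vanishing as $V\to\infty$) and using the closed-form expressions together with $\Phi(d_{1/2}(\tfrac{V}{k},t))\to 1$ and dominated convergence; your scaling argument with the unit-initialised geometric Brownian motion is an equivalent, somewhat more probabilistic derivation of the same limits $\tfrac{1}{\nu}$ and $\tfrac{1-e^{-\nu T}}{\nu}$, and it is correct. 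The one place where you should be careful is the concluding step. The paper simply asserts, citing the structure set up in the proof of Theorem \ref{th: VB determination}, that $\lim_{V\to\infty}E(V;V_B)\ge 0$ for fixed $V_B$ suffices for non-negativity; you instead try to upgrade this to monotonicity by invoking an ``at most two critical points, alternating minima and maxima'' property of $V\mapsto E(V;V_B)$ for \emph{fixed} $V_B$. But Lemma \ref{lemma: uniqueness preparation lemma} and the footnote following \eqref{eq: smooth pasting condition} establish that property for the diagonal map $x\mapsto \tfrac{\partial E(V;x,T)}{\partial V}\big|_{V=x}$ (the smooth-pasting function $h_2$), not for $\tfrac{\partial E(V;V_B,T)}{\partial V}$ with $V_B$ held fixed; these are different functions that agree only at $V=V_B$. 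So the structural fact you lean on is not available off the shelf and would, as you yourself flag, have to be proved separately --- though in fairness the paper's own proof relies on the same implication from the limit condition to non-negativity without supplying more detail, so your attempt is, if anything, more explicit about where the remaining work lies.
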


Proposition \ref{prop: alpha bar competitive market} also implies that the tax benefit associated with the tax rate $\tau_2$ significantly influences $\bar{\alpha}$. In particular, we observe that the limit $\bar{\alpha}$ increases with $\tau_2$ as long as $1-\tau_2-e^{-\nu T} > 0$, and is infinity beyond this point. This is intuitive, as a higher tax benefit makes the participation structure more advantageous, and it ensures that the equity remains increasing in the asset value (since the tax benefit is itself increasing in asset value). On the other hand, if $1-\tau_2-e^{-\nu T} \leq 0$, the increase in the value of the tax benefit surpasses the decrease in the value of the liabilities for the surplus participation component, even for arbitrarily high bankruptcy-triggering values. This seems unnatural and a numerical analysis reveals that this scenario does not occur for reasonable parameter values. However, our results cover this situation as well.

The following corollary demonstrates a condition on the surplus participation such that the formula \eqref{eq: formula for vb} has a unique solution:

\begin{corollary} \label{cor: unique solution of vb determination}
	A sufficient condition to have a unique solution of \eqref{eq: formula for vb} is that $\alpha<\tilde{\alpha}$, where $\tilde{\alpha}$ is as defined in \eqref{eq: tildealpha}.
\end{corollary}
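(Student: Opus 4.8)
The plan is to show that, once $\alpha<\tilde\alpha$, the right-hand side of \eqref{eq: formula for vb}, viewed as a function of $V_B$, is \emph{strictly increasing}; since Theorem \ref{th: VB determination} already provides a solution, strict monotonicity makes it unique. Denote by $\Psi(V_B)$ the right-hand side of \eqref{eq: formula for vb} on $(0,\infty)$. Because of the term $\tfrac1{V_B}(\cdots)$ it is cleaner to work with $\tilde\Psi(V_B):=V_B\,\Psi(V_B)$, which has exactly the same zeros on $(0,\infty)$, is $C^1$ by Lemmas \ref{lemma: cdo c1} and \ref{lemma: new}, and equals
\begin{align*}
\tilde\Psi(V_B)=C_0\,V_B-B_0+\alpha\Big(\tau_2\int_0^\infty V_B\,\tfrac{\partial c_{do}(V,k,V_B,t)}{\partial V}\Big|_{V=V_B}\diff t-\int_0^T V_B\,\tfrac{\partial c_{do}(V,k,V_B,t)}{\partial V}\Big|_{V=V_B}\diff t\Big),
\end{align*}
with $C_0:=1+\rho(\lambda_2+\lambda_3)+2(1-\rho)A_2>0$ (the numerator of $\tilde\alpha$ in \eqref{eq: tildealpha}) and $B_0:=\tfrac{2(P-G/r)A_1}{rT}+2\tfrac{G}{r}A_2-\tau_1\tfrac{G}{r}(\lambda_2+\lambda_3)>0$ by Lemma \ref{lemma: long term >0}. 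The key structural observation is that $B_0$ is a \emph{constant}, so it drops out upon differentiation; this is exactly why the sufficient bound $\alpha<\tilde\alpha$ does not involve $G$ or $P$.

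Differentiating $\tilde\Psi$ and interchanging $\tfrac{\diff}{\diff V_B}$ with the integrals — legitimate by the smoothness from Lemmas \ref{lemma: cdo c1}, \ref{lemma: new} together with a $V_B$-uniform integrable majorant (care is needed near $t=0$ and $t\to\infty$ because of the Gaussian factors $\varphi(d_i)/(\sigma\sqrt t)$ in \eqref{eq: dv cdo v=vb}) — gives
\begin{align*}
\tilde\Psi'(V_B)=C_0+\alpha\,\frac{\diff}{\diff V_B}\Big(\tau_2\int_0^\infty V_B\,\tfrac{\partial c_{do}}{\partial V}\Big|_{V=V_B}\diff t-\int_0^T V_B\,\tfrac{\partial c_{do}}{\partial V}\Big|_{V=V_B}\diff t\Big).
\end{align*}
The core of the argument is the estimate, valid for every $V_B>0$,
\begin{align*}
\frac{\diff}{\diff V_B}\Big(\tau_2\int_0^\infty V_B\,\tfrac{\partial c_{do}(V,k,V_B,t)}{\partial V}\Big|_{V=V_B}\diff t-\int_0^T V_B\,\tfrac{\partial c_{do}(V,k,V_B,t)}{\partial V}\Big|_{V=V_B}\diff t\Big)\ \ge\ \tau_2A_3-A_4,
\end{align*}
where $A_3,A_4$ are as in \eqref{eq: a3} and \eqref{eq: a4}. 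Granting it, $\tilde\Psi'(V_B)\ge C_0+\alpha(\tau_2A_3-A_4)=C_0-\alpha(A_4-\tau_2A_3)$. If $A_4-\tau_2A_3\le0$ this is $\ge C_0>0$ with no restriction on $\alpha$ (precisely the case $\tilde\alpha=\infty$); if $A_4-\tau_2A_3>0$, then $\alpha<\tilde\alpha=C_0/(A_4-\tau_2A_3)$ yields $\alpha(A_4-\tau_2A_3)<C_0$ and hence $\tilde\Psi'(V_B)>0$. In either case $\tilde\Psi$ is strictly increasing on $(0,\infty)$, so it has at most one zero; with existence from Theorem \ref{th: VB determination} this gives uniqueness.

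It remains to prove the displayed estimate, where the actual work lies. For $V_B\ge k$ one has $\min\{V_B/k,1\}=1$ in \eqref{eq: dv cdo v=vb}, so $V_B\,\tfrac{\partial c_{do}}{\partial V}|_{V=V_B}$ is affine in $V_B$ and the left-hand side is exactly $\tau_2A_3-A_4$; thus it suffices to treat $V_B\in(0,k]$. There, using \eqref{eq: dv cdo v=vb} with $\min\{V_B/k,1\}=V_B/k$ and the identity $\varphi'(x)=-x\varphi(x)$, one simplifies $\tfrac{\partial}{\partial V_B}\big[V_B\,\tfrac{\partial c_{do}}{\partial V}|_{V=V_B}\big]$ to a form in which the sign contributions of $\ln(V_B/k)\le0$ are isolated, and then shows that the combined expression $\tau_2\int_0^\infty(\cdots)-\int_0^T(\cdots)$ attains its infimum over $(0,k]$ at the boundary $V_B=k$ (where $\ln(V_B/k)=0$). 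At that boundary the two integrals are computed explicitly via integration by parts and the Gaussian identity $\int_0^\infty t^{-1/2}e^{-ct}\diff t=\sqrt{\pi/c}$: $\int_0^\infty(\cdots)$ evaluates to $A_3$ and $\int_0^T(\cdots)$ to $A_4$, so the infimum is $\tau_2A_3-A_4$. I expect the principal obstacle to be this second step — controlling the sign of the $V_B$-derivative of the unwieldy combined barrier-delta integrand so that the comparison with the $V_B=k$ value goes in the right direction — with a secondary technical point being the $V_B$-uniform justification of differentiating under the improper integral $\int_0^\infty$.
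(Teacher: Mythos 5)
Your overall architecture (multiply the right-hand side of \eqref{eq: formula for vb} by $V_B$ and show the result is strictly increasing) is a genuinely different route from the paper, which instead invokes Lemma \ref{lemma: uniqueness preparation lemma} (``the derivative changes sign at most once, from $+$ to $-$'') together with the limit \eqref{eq: limit h2 inf}, so that $\alpha<\tilde{\alpha}$ is exactly the positivity of $\lim_{V_B\to\infty}h_2(V_B)$. However, your proof has a genuine gap at its core estimate. Your computation of $\tfrac{\partial}{\partial V_B}\bigl[V_B\,\tfrac{\partial c_{do}}{\partial V}\big|_{V=V_B}\bigr]$ is correct (the $d_2$-terms from \eqref{eq: dv cdo v=vb} and \eqref{eq: dv2 cdo<} cancel, leaving $f(t,V_B):=2e^{-\nu t}\bigl(\lambda_1\Phi(d_1(\min\{\tfrac{V_B}{k},1\},t))+\tfrac{\varphi(d_1(\cdot,t))}{\sigma\sqrt t}\bigr)\ge 0$), so $D(V_B):=\tau_2\int_0^\infty f\,\diff t-\int_0^T f\,\diff t$ equals $\tau_2A_3-A_4$ for $V_B\ge k$ and tends to $0$ as $V_B\to 0$ (dominated convergence, since $d_1\to-\infty$). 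Consequently, whenever $\tau_2A_3-A_4>0$ — i.e.\ precisely the regime where $\tilde{\alpha}=\infty$ and you claim ``no restriction on $\alpha$'' — the asserted bound $D(V_B)\ge\tau_2A_3-A_4$ is \emph{false} near $V_B=0$. The conclusion can likely still be rescued there via the weaker bound $\inf D=0$, but that requires exactly the monotonicity of the ratio $V_B\mapsto\int_0^T f\,\diff t\big/\int_T^\infty f\,\diff t$, which you do not establish.

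In the complementary regime $\tau_2A_3\le A_4$ the estimate reduces (after writing $D=\tau_2\int_T^\infty f-(1-\tau_2)\int_0^T f$) to showing that the infimum of $D$ over $(0,k]$ is attained at $V_B=k$, which again is equivalent to the same ratio-monotonicity statement. This is precisely the hard analytic content of the corollary: the paper proves the analogous monotonicity for the integrand $\tfrac{e^{-rt}}{t^{3/2}}\varphi(d_2(\tfrac{V}{k},t))$ in Lemma \ref{lemma: uniqueness preparation lemma} via a symmetrized double-integral representation of $B_1'B_2-B_1B_2'$ and the monotonicity of $t\mapsto d_2(\tfrac{V}{k},t)/\sqrt t$; your integrand $f$ is different (it mixes $\Phi(d_1)$ and $\varphi(d_1)$ with weight $e^{-\nu t}$), so that argument does not transfer verbatim and would need to be redone. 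You explicitly flag this step as ``the principal obstacle'' without resolving it, so the proposal is incomplete where it matters, and the stated form of the key inequality is incorrect in one of the two parameter regimes.
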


In Figure \ref{fig: EValphas}, we illustrate how the equity value evolves as a function of the asset value under two conditions: when $V \to E(V)$ is non-negative (left) and when it is not (right), i.e., when the participating rate is set too high. 
If there is no surplus participation, i.e., $\alpha=0$, $V \to E(V)$ is, of course, increasing.

\begin{figure}[!htb]
	\centering
	\begin{minipage}[t]{0.48\textwidth}
		\includegraphics[trim= 1mm 6mm 10mm 20mm, clip,width=\textwidth]{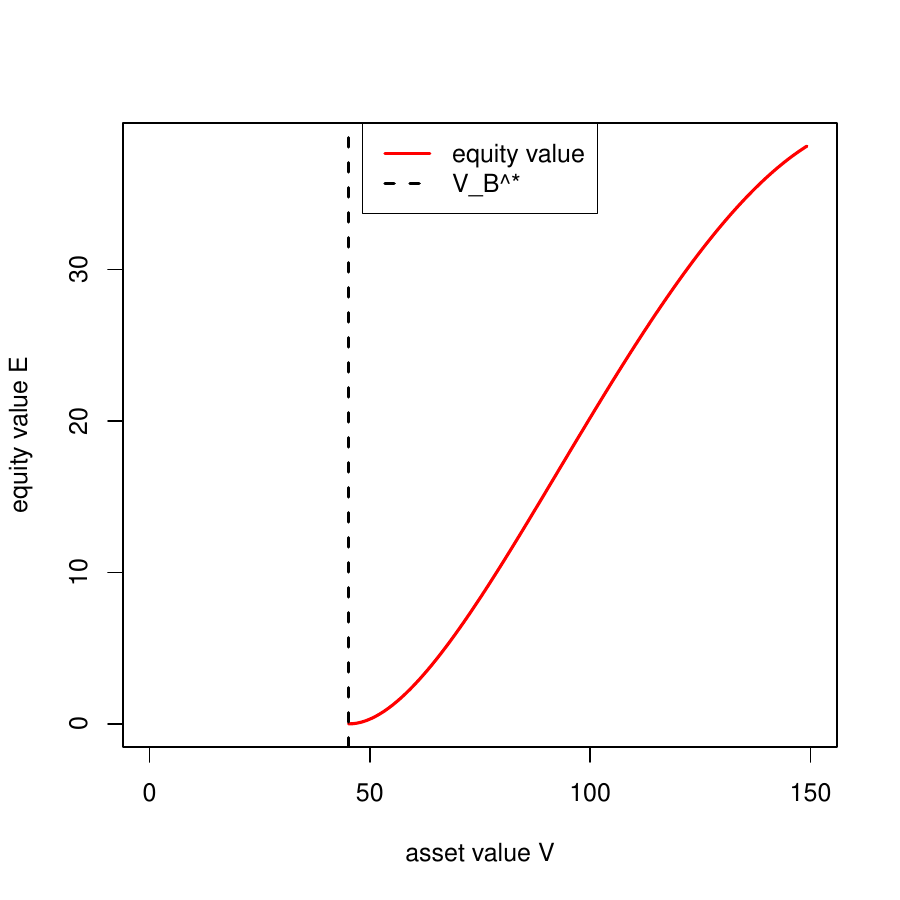}
	\end{minipage}
	\quad
	\begin{minipage}[t]{0.48\textwidth}
		\includegraphics[trim= 1mm 6mm 10mm 20mm, clip,width=\textwidth]{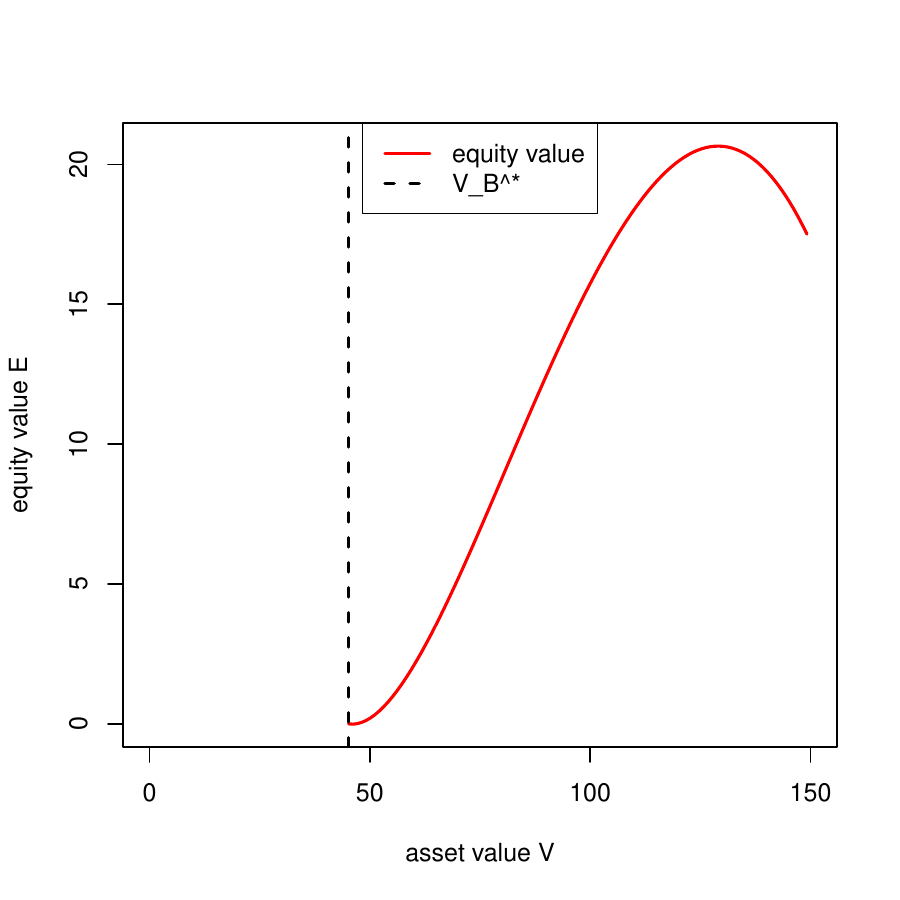}
	\end{minipage}
	\caption{Equity value as a function of the asset value with an $\alpha$ such that $V \to E(V)$ is non-decreasing (left) and if it does not hold (right). In both plots is $V_B$ chosen for the left case.}
	\label{fig: EValphas}
\end{figure}

In most cases, an analytical solution for the bankruptcy-triggering value $V_B$ as defined in Theorem \ref{th: VB determination} is not available. However, if there is no surplus participation, the proof of Theorem \ref{th: VB determination} (or Leland and Toft \cite{leland1996optimal}) yields the following formula:
\begin{align} \label{eq: vb alpha=0 main text}
	V_B^{*} = \frac{\tfrac{2(P-\frac{G}{r})A_1}{rT} + 2 \tfrac{G}{r} A_2 - \tau_1 \frac{G}{r}(\lambda_2+\lambda_3)}{1 + \rho(\lambda_2+\lambda_3)+2(1-\rho)A_2},
\end{align}
where $\lambda_2,\lambda_3$ are as defined in \eqref{eq: def lambda23}.
Moreover, if the portfolio parameters or the market situation are such that the bankruptcy-triggering value lies above the threshold for surplus participation, an analytical solution is provided by the following corollary:

\begin{corollary} \label{cor: vb determination}
	Define
	\begin{align}
		\hat{V}_B = \frac{\tfrac{2(P-\frac{G}{r})A_1}{rT} + 2 \tfrac{G}{r} A_2 - \tau_1 \frac{G}{r}(\lambda_2+\lambda_3) + \tau_2\alpha kA_5-\alpha kA_6}{1 + \rho(\lambda_2+\lambda_3)+2(1-\rho)A_2 + \tau_2\alpha A_3 - \alpha A_4}, \label{eq: vb geq k}
	\end{align}
	where $A_1$, $A_2$, $A_3$, and $A_4$ are defined as in \eqref{eq: a1}, \eqref{eq: a2}, \eqref{eq: a3}, and \eqref{eq: a4}, $\lambda_2,\lambda_3$ are as in \eqref{eq: def lambda23}, and 
	\begin{align*}
		A_5 &:= \tfrac{\lambda_2}{r} + \tfrac{1}{\sigma r} \sqrt{\lambda_2^2 \sigma^2 + 2r},\\
		A_6 &:= \tfrac{\lambda_2}{r}(1- 2e^{-r T} \Phi(\lambda_2\sigma\sqrt{T})) + \tfrac{1}{\sigma r} \sqrt{\lambda_2^2 \sigma^2 + 2r}(2\Phi(\sqrt{\lambda_2^2 \sigma^2 + 2r}\sqrt{T})-1).
	\end{align*}
	If $\hat{V}_B \geq k$ and $\alpha<\tilde{\alpha}$, then $\hat{V}_B$ is the unique solution of \eqref{eq: smooth pasting condition} and therefore $V_B = \hat{V}_B$.
\end{corollary}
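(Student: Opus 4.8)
The plan is to specialize Theorem \ref{th: VB determination} to the regime $V_B \ge k$. In that regime the boundary derivative of the Down-and-Out call loses all its $V_B$-dependence inside the Gaussian factors, so the smooth-pasting equation \eqref{eq: formula for vb} collapses to an affine-linear equation in $V_B$, which I would then solve explicitly and identify with $\hat V_B$; well-definedness and uniqueness follow from the hypotheses $\hat V_B\ge k$ and $\alpha<\tilde\alpha$.

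First, if $V_B\ge k$ then $\min\{V_B/k,1\}=1$, so \eqref{eq: dv cdo v=vb} simplifies to
\[
\tfrac{\partial}{\partial V}c_{do}(V,k,V_B,t)\big|_{V=V_B} = 2e^{-\nu t}\Bigl(\lambda_1\Phi(d_1(1,t)) + \tfrac{\varphi(d_1(1,t))}{\sigma\sqrt{t}}\Bigr) - \tfrac{2ke^{-rt}}{V_B}\Bigl(\lambda_2\Phi(d_2(1,t)) + \tfrac{\varphi(d_2(1,t))}{\sigma\sqrt{t}}\Bigr),
\]
where $d_1(1,\cdot)$ and $d_2(1,\cdot)$ are functions of $t$ alone and the only surviving dependence on $V_B$ is the explicit prefactor $k/V_B$. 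Next I would evaluate the two time integrals entering \eqref{eq: formula for vb}. Splitting off the $e^{-\nu t}$- and $e^{-rt}$-pieces and using the same Gaussian / Laplace-transform identities that underlie the proof of Theorem \ref{th: VB determination} and Appendix \ref{app liability} — essentially integrals of the form $\int e^{-at}\Phi\bigl((b+ct)/(\sigma\sqrt t)\bigr)\diff t$ and $\int e^{-at}\varphi\bigl((b+ct)/(\sigma\sqrt t)\bigr)\,t^{-1/2}\diff t$, whose antiderivatives generate exactly the $\Phi(\lambda_i\sigma\sqrt T)$ and $\Phi(\sqrt{\lambda_i^2\sigma^2+2a}\sqrt T)$ terms — one obtains $\int_0^\infty 2e^{-\nu t}(\lambda_1\Phi(d_1(1,t))+\varphi(d_1(1,t))/(\sigma\sqrt t))\diff t = A_3$, and the same integrand with upper limit $T$ gives $A_4$; likewise $\int_0^\infty 2e^{-rt}(\lambda_2\Phi(d_2(1,t))+\varphi(d_2(1,t))/(\sigma\sqrt t))\diff t = A_5$, and with limit $T$ one gets $A_6$. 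Hence $\int_0^\infty \tfrac{\partial}{\partial V}c_{do}|_{V=V_B}\diff t = A_3-\tfrac{k}{V_B}A_5$ and $\int_0^T \tfrac{\partial}{\partial V}c_{do}|_{V=V_B}\diff t = A_4-\tfrac{k}{V_B}A_6$.

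Substituting these into \eqref{eq: formula for vb} and multiplying through by $V_B$ produces the affine-linear identity $V_B\bigl(1+\rho(\lambda_2+\lambda_3)+2(1-\rho)A_2+\tau_2\alpha A_3-\alpha A_4\bigr) = \tfrac{2(P-G/r)A_1}{rT} + 2\tfrac{G}{r}A_2 - \tau_1\tfrac{G}{r}(\lambda_2+\lambda_3) + \tau_2\alpha kA_5 - \alpha kA_6$, whose solution is exactly $\hat V_B$ of \eqref{eq: vb geq k}. The coefficient of $V_B$ equals $1+\rho(\lambda_2+\lambda_3)+2(1-\rho)A_2-\alpha(A_4-\tau_2A_3)$, which is strictly positive precisely when $\alpha<\tilde\alpha$: if $A_4-\tau_2A_3>0$ this is the defining inequality of $\tilde\alpha$ in \eqref{eq: tildealpha}, and if $A_4-\tau_2A_3\le0$ it holds automatically since $1+\rho(\lambda_2+\lambda_3)+2(1-\rho)A_2>0$ (by Lemma \ref{lemma: long term >0} and Lemma \ref{lemma: a1,a2,a3,a4>0}). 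So $\hat V_B$ is well-defined; and if moreover $\hat V_B\ge k$, the case reduction of the first step was legitimate, i.e.\ $\hat V_B$ genuinely satisfies \eqref{eq: formula for vb} and hence \eqref{eq: smooth pasting condition}. Since $\alpha<\tilde\alpha$, Corollary \ref{cor: unique solution of vb determination} ensures \eqref{eq: formula for vb} has no other solution, so $\hat V_B$ is the bankruptcy-triggering value, $V_B=\hat V_B$.

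The main obstacle is the second step: carrying out the four definite integrals in closed form and matching them term-by-term with $A_3,\dots,A_6$. Everything else — the case reduction $\min\{V_B/k,1\}=1$, the substitution, the multiplication by $V_B$, the sign check against \eqref{eq: tildealpha}, and the appeal to the uniqueness corollary — is routine bookkeeping.
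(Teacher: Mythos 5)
Your proposal is correct and follows essentially the same route as the paper: the case reduction $\min\{V_B/k,1\}=1$, the evaluation of the four time integrals via the closed-form identities of Lemma \ref{lemma: explicit calculation integral} (applied once as stated and once with $\lambda_1,\nu$ replaced by $\lambda_2,r$) to obtain $A_3-\tfrac{k}{V_B}A_5$ and $A_4-\tfrac{k}{V_B}A_6$, the resulting affine-linear equation solved for $\hat V_B$, and the appeal to Corollary \ref{cor: unique solution of vb determination} for uniqueness. The integral computations you flag as the "main obstacle" are exactly what Lemma \ref{lemma: explicit calculation integral} supplies, so no gap remains.
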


\section{Optimal rates} \label{chapter: optimal rates}

In this section, we derive formulas for the optimal participation rate and the optimal guarantee rate. Providing existence results, we begin by fixing one of the two parameters and then conclude with the derivation of the joint optimal values. 
Based on the definition of $\bar{\alpha}$ in \eqref{eq: baralpha}, it follows that $\bar{\alpha} \geq 0$. For the existence results, we need the following assumption stipulating that small changes in the participation rate resp. the guarantee rate are expected to result in small changes in the bankruptcy-triggering value:
\begin{assumption} \label{ass: continuity}
	The bankruptcy-triggering value $V_B$, derived from \eqref{eq: formula for vb}, is right-continuous in $g \in [0,\infty)$, and right-continuous in $\alpha \in [0,\bar{\alpha})$ if equation \eqref{eq: formula for vb} does not admit a solution for $\alpha=\bar{\alpha}$, or right-continuous in $\alpha \in [0,\bar{\alpha}]$ if equation \eqref{eq: formula for vb} admits a solution for $\alpha=\bar{\alpha}$ (no joint continuity is required).\footnote{Because \eqref{eq: formula for vb} admits at most one suitable solution (i.e., a solution which corresponds to a local minimum), this assumption is only needed in the boundary case where a solution appears or disappears.}
\end{assumption}
Note that Proposition \ref{prop: vb monoton alpha g} already guarantees the left-continuity of $V_B$ with respect to both $\alpha$ and $g$. Assumption \ref{ass: continuity} is in particular satisfied if \eqref{eq: formula for vb} admits a unique solution, as the right hand side of this equation is smooth in $\alpha$ and $g$. A unique solution is guaranteed, for instance, if $\alpha=0$ (indicating no surplus participation) based on the analytical solution for $V_B$ (see \eqref{eq: vb alpha=0 main text}). Furthermore, Assumption \ref{ass: continuity} holds if $\tau_2 \in [0,1)$ is large enough (for a detailed condition, see Proposition \ref{prop: assumption fulfilled}), which corresponds to a sufficiently large tax benefit, leading to an increasing benefit from additional capital. We note that Assumption \ref{ass: continuity} holds true in all numerical examples we conducted. 


\subsection{Derivation of the optimal participation rate with a pre-determined guarantee rate} \label{subsection: optimal alpha}

In this subsection, we derive the optimal participation rate $\alpha^*$ when the guarantee rate $g$ is fixed in advance, such that the total insurance company value $v$ in \eqref{eq: firm value v} is maximized. Therefore, we consider $V_B$ as the smallest solution of \eqref{eq: formula for vb} and as a function of $\alpha$. Note that, in general, $V_B(\alpha)$ does not have an explicit form. Only when $\alpha=0$, we obtain an explicit form $V_B(0)= V_B^*$ as in \eqref{eq: vb alpha=0 main text}. Hence, (with a slight abuse of notation) our optimization problem is formulated as follows: We seek the optimal participation rate $\alpha^*$ given by:
\begin{align} \label{eq: optimization problem alpha}
	\alpha^* = \argmax_{\alpha \in [0,1]} v(V;V_B(\alpha)),
\end{align}
where $v$ is defined as in \eqref{eq: firm value v}.

\begin{proposition} \label{prop: alphastar exists}
	There exists an optimal participation rate $\alpha^* \in [0,1]$.
\end{proposition}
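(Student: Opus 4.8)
The plan is to establish existence of a maximizer by a standard compactness–continuity argument: the feasible set $[0,1]$ is compact, so it suffices to show that $\alpha \mapsto v(V;V_B(\alpha))$ is upper semicontinuous on $[0,1]$, and in fact I expect to prove it is continuous except possibly at the boundary point where a solution of \eqref{eq: formula for vb} appears or disappears (i.e.\ near $\bar\alpha$ if $\bar\alpha \le 1$), where one-sided continuity together with the direction of monotonicity still yields an attained supremum. The first step is to analyze the map $\alpha \mapsto V_B(\alpha)$ on $[0,1]$. By Proposition \ref{prop: vb monoton alpha g}, $V_B(\alpha)$ is monotonically increasing and left-continuous, with right-limits existing everywhere; combined with Assumption \ref{ass: continuity}, which supplies right-continuity on $[0,\bar\alpha)$ (or $[0,\bar\alpha]$), this shows $V_B(\cdot)$ is continuous on $[0,\min\{1,\bar\alpha\})$ and has at worst a single jump at $\alpha=\bar\alpha$ (should $\bar\alpha\le 1$). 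Note also $V_B(\alpha)>0$ by Theorem \ref{th: VB determination}, and $V_B(\alpha)\le V_0$ by construction, so $V_B$ stays in a compact subinterval of $(0,V_0]$ bounded away from $0$.

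The second step is to show that $v(V;\cdot,\cdot)$, viewed as a function of the pair $(\alpha, V_B)$ through formula \eqref{eq: firm value v}, is jointly continuous on $[0,1]\times(0,V_0]$. Writing
\[
v(V;V_B) = V + \tau_1 \tfrac{G}{r}\bigl(1-(\tfrac{V_B}{V})^{\lambda_2+\lambda_3}\bigr) + \tau_2 \alpha \int_0^\infty c_{do}(V,k,V_B,t)\,\diff t - \rho V_B (\tfrac{V_B}{V})^{\lambda_2+\lambda_3},
\]
the first, second and fourth terms are manifestly continuous in $V_B$ (polynomial/power functions) and do not involve $\alpha$. For the third term, $\alpha$ enters linearly, and I need continuity of $V_B \mapsto \int_0^\infty c_{do}(V,k,V_B,t)\,\diff t$: continuity of $c_{do}(V,k,\cdot,t)$ in the barrier for each fixed $t$ follows from the explicit barrier-option formula (cf.\ Appendix \ref{subsection: barrier appendix} and Lemma \ref{lemma: cdo c1}), and since $V_B$ ranges over a compact set bounded away from $0$, a dominated-convergence argument (the integrand decays exponentially in $t$ uniformly in $V_B$ on such a set, because $k\ge V_0\ge V_B$ forces the down-and-out call value to be bounded by the plain call value $e^{-rt}\EX^{\QQ}[(V_t-k)_+]$, which is integrable over $[0,\infty)$) upgrades pointwise continuity to continuity of the integral. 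Composing with the continuity of $\alpha\mapsto V_B(\alpha)$ from Step 1 gives continuity of $\alpha\mapsto v(V;V_B(\alpha))$ on $[0,\min\{1,\bar\alpha\})$.

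The third step handles the boundary. If $\bar\alpha>1$, Step 2 already gives continuity on all of $[0,1]$, and since $[0,1]$ is compact a maximizer exists by the extreme value theorem. If $\bar\alpha\le 1$, then on $[0,\bar\alpha)$ the function is continuous, and at $\alpha=\bar\alpha$ we use the left-continuity of $V_B$ (Proposition \ref{prop: vb monoton alpha g}) to get that $\alpha\mapsto v(V;V_B(\alpha))$ is left-continuous at $\bar\alpha$; moreover for $\alpha\ge\bar\alpha$ the relevant regime (where \eqref{eq: formula for vb} admits no smaller-than-$V_0$ solution, giving $V_B=V_0$) makes $v(V;V_0)$ constant in that range or handled by Assumption \ref{ass: continuity}'s right-continuity clause, so the supremum over $[0,1]$ is attained either on the compact set $[0,\bar\alpha]$ (by left-continuity plus the extreme value theorem on $[0,\bar\alpha]$) or at the constant right piece. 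Either way the supremum is achieved. The main obstacle I anticipate is the boundary bookkeeping at $\alpha=\bar\alpha$: one must carefully match the two cases in Assumption \ref{ass: continuity} (solution of \eqref{eq: formula for vb} persists vs.\ disappears at $\bar\alpha$) with the definition of $V_B$ as ``the smallest solution, or $V_0$'', and confirm that in the disappearing case $v(V;V_B(\alpha))$ does not jump downward as $\alpha\uparrow\bar\alpha$ in a way that would leave the supremum unattained — this is exactly where left-continuity of $V_B$ and monotonicity of $v$ in $V_B$ (via limited liability, $v$ decreasing in $V_B$ near $V_B$ small) must be invoked together. The interior continuity (Steps 1–2) is routine given the cited lemmas.
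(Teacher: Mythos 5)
Your overall strategy---compactness of the feasible interval, continuity of $\alpha\mapsto V_B(\alpha)$ away from $\bar\alpha$ together with left-continuity at $\bar\alpha$, and continuity of $v$ in $(\alpha,V_B)$ via a dominated-convergence bound on $\int_0^\infty c_{do}\,\diff t$---is essentially the paper's, and your Steps 1--2 are sound. The genuine problem is your treatment of $\alpha\in(\bar\alpha,1]$ when $\bar\alpha\le 1$. You assert that in this range equation \eqref{eq: formula for vb} admits no solution below $V_0$, so that $V_B=V_0$ and $v$ is constant there. That does not follow: $\bar\alpha$ in \eqref{eq: baralpha} is determined by the sign of $1+\tau_2\tfrac{\alpha}{\nu}-\tfrac{\alpha}{\nu}(1-e^{-\nu T})$, i.e.\ by whether $\lim_{V\to\infty}E(V;V_B)$ is $+\infty$ or $-\infty$ (see the proof of Proposition \ref{prop: alpha bar competitive market}), not by the solvability of the smooth-pasting equation. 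For $\alpha$ slightly above $\bar\alpha$, \eqref{eq: formula for vb} may well retain a root below $V_0$, so neither $V_B(\alpha)=V_0$ nor constancy of $v$ is justified on that range.

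The paper disposes of $(\bar\alpha,1]$ differently and more simply: for $\alpha>\bar\alpha$ one has $\lim_{V\to\infty}E(V;V_B)=-\infty$ for every fixed $V_B$, which violates the standing assumption that $V\mapsto E(V)$ is non-negative, so these $\alpha$ are excluded from the feasible set outright and the maximization reduces to the compact interval $[0,\min\{1,\bar\alpha\}]$, on which your continuity and left-continuity argument applies verbatim. A related point you leave implicit: one should also justify that a root of \eqref{eq: formula for vb} actually \emph{exists} for every $\alpha<\bar\alpha$ (the paper again argues this from $\lim_{V\to\infty}E(V;V_B)=+\infty$), since otherwise $V_B(\alpha)$ could switch between the continuously varying root and the fallback value $V_0$ inside $[0,\bar\alpha)$, undermining the continuity you need there. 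With the exclusion of $(\bar\alpha,1]$ and this existence remark in place, your argument closes.
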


The previous proposition asserts that there is an optimal participation rate $\alpha^*$. The next natural question is under which conditions $\alpha^*>0$, i.e., when is it advantageous for the insurance company to offer contracts with surplus participation? The following theorem provides an answer to this question. 

%

\begin{theorem} \label{th: alpha*}
	There exist $\bar{\tau}, \bar{\bar{\tau}} \in (0,1)$ with $\bar{\bar{\tau}}\leq \bar{\tau}$ such that it is optimal to choose $\alpha^*>0$ if $\tau_2 \in (\bar{\tau},1]$, and it is optimal to choose $\alpha^*=0$ if $\tau_2 \in [0,\bar{\bar{\tau}})$. 
	
	Finally, if the following equation \eqref{eq: solution alpha*} admits a solution in $\alpha$, then this solution is equal to the optimal participation rate $\alpha^*$:
	\begin{align}
		0 =&\, - \tfrac{\tau_1 \frac{G}{r} (\lambda_2+\lambda_3)}{V} (\tfrac{V_B(\alpha)}{V})^{\lambda_2+\lambda_3-1} V_B'(\alpha) - \rho (\lambda_2+\lambda_3+1) (\tfrac{V_B(\alpha)}{V})^{\lambda_2+\lambda_3} V_B'(\alpha) \notag \\
		&+ \tau_2 \int_0^\infty c_{do} (V,k,V_B(\alpha),t) \diff t + \alpha \tau_2 \int_0^\infty \tfrac{\partial c_{do} (V,k,V_B(\alpha),T)}{\partial \alpha} \diff t, \label{eq: solution alpha*}
	\end{align}
	where $\tfrac{\partial c_{do} (V,k,V_B(\alpha),T)}{\partial \alpha}$ is given in \eqref{eq: dalpha cdo<} resp. \eqref{eq: dalpha cdo>}. For an explicit formula of $V_B'(\alpha)$ see \eqref{eq: formula dvb dalpha}.
	%
\end{theorem}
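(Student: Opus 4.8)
The plan is to split the statement into three parts: (i) the existence of a threshold $\bar\tau$ above which $\alpha^*>0$; (ii) the existence of a threshold $\bar{\bar\tau}$ below which $\alpha^*=0$; and (iii) the first-order characterization of an interior optimum via \eqref{eq: solution alpha*}. The unifying tool is the sign of the derivative $\tfrac{\partial}{\partial \alpha} v(V;V_B(\alpha))$ evaluated at $\alpha=0$, combined with the fact from Proposition \ref{prop: alphastar exists} that a maximizer over the compact set $[0,1]$ exists. For (i), I would compute $\tfrac{\partial}{\partial\alpha} v(V;V_B(\alpha))\big|_{\alpha=0}$ by differentiating \eqref{eq: firm value v}; the term $\tau_2\alpha\int_0^\infty c_{do}\,\diff t$ contributes $\tau_2\int_0^\infty c_{do}(V,k,V_B(0),t)\,\diff t>0$ directly (since $k\ge V_0=V$ but $c_{do}>0$ for positive maturities), whereas the $V_B'(0)$-terms are $O(1)$ in $\tau_2$ but bounded, using that $V_B'(0)$ has the explicit form \eqref{eq: formula dvb dalpha} and Assumption \ref{ass: continuity}/Proposition \ref{prop: vb monoton alpha g} ensure the relevant derivative exists. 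Hence for $\tau_2$ close enough to $1$, the positive $\tau_2$-proportional term dominates, making the right-derivative at $0$ strictly positive, so $\alpha^*>0$; taking $\bar\tau$ to be the infimum of such $\tau_2$ (or any value past which the derivative is positive for all larger $\tau_2$) works, and $\bar\tau<1$ because the inequality is strict and continuous in $\tau_2$.

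For (ii), the opposite bound, I would argue that when $\tau_2=0$ the participation component contributes no tax benefit but only increases the liability $L$ and raises $V_B$, so by the assumed inequalities \eqref{eq: ass surplus excess} and the monotonicity in Proposition \ref{prop: vb monoton alpha g} the firm value $v$ is strictly decreasing in $\alpha$ near $0$ — formally $\tfrac{\partial}{\partial\alpha}v(V;V_B(\alpha))\big|_{\alpha=0}<0$ when $\tau_2=0$, since only the $\rho$ and $\tau_1$ terms through $V_B'(0)>0$ remain and both push $v$ down. By continuity of this derivative expression in $\tau_2$ (again using the explicit formulas and Assumption \ref{ass: continuity}), the derivative stays negative for all sufficiently small $\tau_2$, and then a compactness/concavity-free argument — or just the observation that $v(V;V_B(0))$ is a strict local max combined with monitoring the derivative's sign on $(0,1]$ — forces $\alpha^*=0$; set $\bar{\bar\tau}$ to be the supremum of such $\tau_2$. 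The ordering $\bar{\bar\tau}\le\bar\tau$ is automatic since the two defining conditions are mutually exclusive.

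For (iii), the first-order condition: at an interior optimum $\alpha^*\in(0,1)$ we have $\tfrac{\d}{\d\alpha}v(V;V_B(\alpha))=0$, and differentiating \eqref{eq: firm value v} with the chain rule through $V_B(\alpha)$ gives exactly \eqref{eq: solution alpha*}, where $\tfrac{\partial}{\partial V}(\tfrac{V_B}{V})^{\lambda_2+\lambda_3}$-type terms produce the first line and the explicit $V_B'(\alpha)$ comes from implicitly differentiating \eqref{eq: formula for vb}. I would note that differentiability of $c_{do}$ in its arguments (Lemma \ref{lemma: cdo c1}) and of $V_B(\alpha)$ (from the implicit function theorem applied to \eqref{eq: formula for vb}, valid where the solution is unique, e.g. under Corollary \ref{cor: unique solution of vb determination}) justify the manipulation. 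The claim ``if \eqref{eq: solution alpha*} admits a solution then it equals $\alpha^*$'' should be read as: any critical point is the maximizer, which I would get either from showing $v(V;V_B(\alpha))$ is unimodal in $\alpha$ or — more robustly — by checking that the only candidates for $\argmax$ are $0$, $1$, and interior critical points, and that the statement is conditional anyway.

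The main obstacle is controlling $V_B'(\alpha)$ near $\alpha=0$: one must verify that this derivative exists and is bounded uniformly as $\tau_2$ varies, so that the $\tau_2$-linear tax-benefit term genuinely dominates in (i) and the derivative sign is stable in (ii). This rests on the implicit differentiation of \eqref{eq: formula for vb} being legitimate — i.e. the denominator $\tfrac{\partial}{\partial V_B}(\text{RHS of }\eqref{eq: formula for vb})$ not vanishing at the relevant solution — which is exactly what the ``smallest solution = local minimum'' discussion and Corollary \ref{cor: unique solution of vb determination} are designed to supply, but stitching these together cleanly (especially handling the boundary case $\alpha=\bar\alpha$ flagged in Assumption \ref{ass: continuity}) will require care. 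A secondary subtlety is ensuring the integrals $\int_0^\infty c_{do}\,\diff t$ and $\int_0^\infty \partial_\alpha c_{do}\,\diff t$ converge and may be differentiated under the integral sign, which follows from the exponential decay $e^{-rt}$ and the bounds on the barrier-option Greeks.
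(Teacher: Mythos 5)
Your overall strategy --- existence of a maximizer from Proposition \ref{prop: alphastar exists}, sign analysis of $\tfrac{\partial}{\partial\alpha}v(V;V_B(\alpha))$ at and away from $\alpha=0$, and the first-order condition obtained by differentiating \eqref{eq: firm value v} through $V_B(\alpha)$ --- is the same as the paper's, and parts (ii) and (iii) essentially match its proof. (For (ii), note the paper shows that at $\tau_2=0$ only the strictly negative term $-V_B'(\alpha)(\cdot)$ survives for \emph{every} $\alpha$, and then uses the uniform bounds of Lemmas \ref{lem: int cdo < infty} and \ref{lemma: dcdo dalpha} to keep the derivative negative on the whole range for all small $\tau_2$; your appeal to continuity in $\tau_2$ at $\alpha=0$ alone does not rule out a global maximum at an interior $\alpha$, so the uniformity in $\alpha$ must be made explicit.)

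The genuine gap is in part (i). You argue that $\tau_2\int_0^\infty c_{do}(V,k,V_B(0),t)\,\diff t>0$ ``dominates'' the $V_B'(0)$-terms for $\tau_2$ close to $1$ because the latter are bounded. But both quantities are order-one constants: $\int_0^\infty c_{do}\,\diff t$ is a fixed finite number and $V_B(0)$ does not depend on $\tau_2$, so multiplying it by $\tau_2\le 1$ never makes it large, while boundedness of the $V_B'(0)$-term does not make it small; if that term were negative with magnitude exceeding $\int_0^\infty c_{do}\,\diff t$, your inequality would fail even at $\tau_2=1$. The step you are missing is the sign flip of $V_B'(0)$: at $\tau_2=1$ the numerator of $V_B'(0)$ in \eqref{eq: partial vb0} equals $\int_0^T\tfrac{\partial c_{do}}{\partial V}\big|_{V=V_B(0)}\,\diff t-\int_0^\infty\tfrac{\partial c_{do}}{\partial V}\big|_{V=V_B(0)}\,\diff t=-\int_T^\infty\tfrac{\partial c_{do}}{\partial V}\big|_{V=V_B(0)}\,\diff t<0$ (finiteness of $T$ is essential here), and the denominator is positive by Lemma \ref{lemma: long term >0}, so $V_B'(0)<0$ and the first summand $-V_B'(0)(\tfrac{V_B(0)}{V})^{\lambda_2+\lambda_3}(\cdots)$ of \eqref{eq: dv dalpha alpha0} is itself positive at $\tau_2=1$. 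With both summands positive at $\tau_2=1$, continuity of \eqref{eq: dv dalpha alpha0} in $\tau_2$ yields $\bar\tau<1$. Without this observation your part (i) does not go through.
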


The existence of a threshold value $\bar{\tau}$, as stated in the theorem, is plausible, as offering surplus participation becomes more attractive when the tax benefit associated with it is higher. The proof of the theorem further provides an equation for determining $\bar{\tau}$ (by finding the zero root of \eqref{eq: dv dalpha alpha0}). Interestingly for $\tau_2 < \bar{\tau}$, it is more advantageous for the insurance company to refrain from offering surplus participation rather than offering a small rate which might explain why many contracts do not have any participation element. 

From our numerical analysis, we observe that, in general, $\bar{\bar{\tau}}=\bar{\tau}$, $\tau_2>\bar{\tau}$ (so that $\alpha^*>0$) and that $\alpha^*$ increases with $\tau_2$. However, if the final lump sum payment $P$ or the guaranteed payment $G$ are too high, $\bar{\tau}$ may actually exceed $\tau_2$ and no participation is offered to policyholders (i.e., $\alpha^*=0$). The reason is that higher lump sum payments or a larger guarantee rate result in more costly liabilities, while a lower tax rate reduces the tax benefit and thus decreases equity. Consequently, the contract's liabilities must not be too expensive compared to the equity to ensure that a positive participation rate remains optimal. However, typically when optimizing $G$, the resulting liabilities do not impose an excessive cost on equity, which reinforces that $\tau_2>\bar{\tau}$ usually holds.

\subsection{Derivation of the optimal guarantee rate with a pre-determined participation rate} \label{subsection: optimal g}

In this subsection, we derive the optimal guarantee rate $g^*$ when the participation rate $\alpha$ is fixed in advance, such that the total insurance company value $v$ (as defined in \eqref{eq: firm value v}) is maximized. Therefore, we consider $V_B$ as the smallest solution of \eqref{eq: formula for vb} and as a function of $g$. Note that, in general, $V_B(g)$ does not have an explicit form. Also, recall that $g = \frac{G}{T}$. Hence, our optimization problem (with a slight abuse of notation) is as follows: We seek the optimal guarantee rate $g^*$ given by:
\begin{align} \label{eq: optimization problem g}
	g^* = \argmax_{g \in [0,\infty)} v(V;V_B(g)),
\end{align}
where $v$ is defined in \eqref{eq: firm value v}.

\begin{proposition} \label{prop: gstar exists}
	There exists an optimal guarantee rate $g^* \in [0,\infty)$.
\end{proposition}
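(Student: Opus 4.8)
The plan is to show that $\Psi(g):=v(V;V_B(g))$ is continuous on $[0,\infty)$ and \emph{constant} for all sufficiently large $g$, and then to conclude by compactness (so the argument parallels the proof of Proposition~\ref{prop: alphastar exists}, the only new issue being the non-compactness of $[0,\infty)$). For continuity: since $g=G/T$, the map $g\mapsto V_B(g)$ is left-continuous by Proposition~\ref{prop: vb monoton alpha g} and right-continuous by Assumption~\ref{ass: continuity}, hence continuous, with values in $(0,V_0]$ by Theorem~\ref{th: VB determination}; and $v(V;V_B)$ from \eqref{eq: firm value v} is jointly continuous in $(G,V_B)$ on $[0,\infty)\times(0,V_0]$, because the $\tau_1$- and $\rho$-terms are elementary and $V_B\mapsto\int_0^\infty c_{do}(V,k,V_B,t)\,\diff t$ is continuous by dominated convergence, using $0\le c_{do}(V,k,V_B,t)\le V e^{-\nu t}$ together with the continuity of $c_{do}$ in $V_B$ (Lemma~\ref{lemma: cdo c1}). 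Composing gives continuity of $\Psi$.

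The substantive step is eventual constancy. Multiplying \eqref{eq: formula for vb} by $V_B>0$ rewrites it as $\phi(V_B)=B(g)$, where
\[
\phi(V_B):=V_B\Big(1+\rho(\lambda_2+\lambda_3)+2(1-\rho)A_2+\tau_2\alpha\!\int_0^\infty\!\tfrac{\partial c_{do}(V,k,V_B,t)}{\partial V}\Big|_{V=V_B}\!\diff t-\alpha\!\int_0^T\!\tfrac{\partial c_{do}(V,k,V_B,t)}{\partial V}\Big|_{V=V_B}\!\diff t\Big)
\]
is independent of $g$, while $B(g):=\tfrac{2(P-G/r)A_1}{rT}+2\tfrac{G}{r}A_2-\tau_1\tfrac{G}{r}(\lambda_2+\lambda_3)$ is affine and (strictly) increasing in $g$ by \eqref{eq: ass guarantee excess} — cf.\ the affine-increasing dependence of $V_B$ on $G$ recorded in \eqref{eq: vb alpha=0 main text} and the footnote there — so $B(g)\to\infty$. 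Using the explicit formula \eqref{eq: dv cdo v=vb}, the outer factor $V_B$ exactly cancels the only singular contributions (the $2k/V_B$ terms), leaving integrands dominated, uniformly in $V_B\in(0,V_0]$, by the $L^1(0,\infty)$ functions $2V_0 e^{-\nu t}\big(|\lambda_1|+\tfrac{\varphi(0)}{\sigma\sqrt t}\big)$ and $2k e^{-rt}\big(|\lambda_2|+\tfrac{\varphi(0)}{\sigma\sqrt t}\big)$; hence $\Phi_{\max}:=\sup_{V_B\in(0,V_0]}\phi(V_B)<\infty$. Consequently, once $B(g)>\Phi_{\max}$, equation \eqref{eq: formula for vb} has no solution in $(0,V_0]$, so $V_B(g)=V_0$ by Theorem~\ref{th: VB determination}. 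Substituting $V_B=V_0$ ($=V$ in the current notation) into \eqref{eq: firm value v}, the factor $1-(\tfrac{V_B}{V})^{\lambda_2+\lambda_3}$ vanishes and $c_{do}(V,k,V_B,t)\big|_{V_B=V_0}=0$ (the asset process, a geometric Brownian motion under $\QQ$ started at the barrier $V_0$, is knocked out instantly), leaving $\Psi(g)=V-\rho V_0=(1-\rho)V$ for all such $g$.

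To conclude, fix $M$ with $\Psi(g)=(1-\rho)V$ for all $g\ge M$. By the first paragraph, $\Psi$ attains its maximum over the compact interval $[0,M]$ at some $g^*\in[0,M]$, and $\Psi(g^*)\ge\Psi(M)=(1-\rho)V$; hence $\Psi(g)\le\Psi(g^*)$ for $g\in[0,M]$ and $\Psi(g)=(1-\rho)V\le\Psi(g^*)$ for $g>M$, so $g^*$ maximizes $\Psi$ over $[0,\infty)$, which is the claim. I expect the main obstacle to be the uniform bound $\sup_{V_B\in(0,V_0]}\phi(V_B)<\infty$: one must verify that the $1/V_B$ blow-up of $\partial_V c_{do}|_{V=V_B}$ in \eqref{eq: dv cdo v=vb} is absorbed by the outer factor $V_B$ and that the resulting $t$-integrand admits a $V_B$-independent integrable majorant, and one must ensure the slope of $B$ is genuinely positive (i.e.\ \eqref{eq: ass guarantee excess} is used in its strict form), since otherwise $V_B(\cdot)$ could stay bounded below $V_0$, $\Psi$ could diverge, and no maximizer would exist.
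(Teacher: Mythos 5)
Your proposal is correct and follows essentially the same route as the paper's proof: establish continuity of $g \mapsto v(V;V_B(g))$, use the uniform boundedness of the $c_{do}$-derivative terms (the paper's Lemma \ref{lemma: cdo unif bounded L1}, which you re-derive via the cancellation of the $1/V_B$ singularities) to show the solution of \eqref{eq: formula for vb} exceeds $V_0$ for large $g$, so that $v \equiv (1-\rho)V$ there, and then maximize over a compact interval. Your closing caveat about needing the strict inequality $-\tfrac{2A_1}{rT}+2A_2-\tau_1(\lambda_2+\lambda_3)>0$ matches the paper's own restriction to the case where this quantity is nonzero.
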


Next, we address the question of which conditions ensure that $g^*>0$, i.e., when contracts with a guarantee rate are better for the insurance company than those without such guarantees? This question is answered in the upcoming theorem, but first, we state an assumption made (solely) for this subsection:

\begin{assumption} \label{ass: g*>0}
	Let 
	\begin{align} \label{eq: assumption optimal g}
		&\tfrac{2PA_1}{(V_B(0))^2rT} + \tau_2 \alpha \displaystyle\int_0^\infty \tfrac{\partial}{\partial V_B}[\tfrac{\partial  c_{do} (V,k,V_B,t) }{\partial V} \big|_{V=V_B}] \big|_{V_B=V_B(0)} \diff t \notag\\
		&\hspace{3cm}- \alpha \displaystyle\int_0^T \tfrac{\partial}{\partial V_B}[\tfrac{\partial c_{do} (V,k,V_B,t)}{\partial V} \big|_{V=V_B}] \big|_{V_B=V_B(0)} \diff t \neq 0, \notag \\
		&\tfrac{\tau_1 T}{r} (1-(\tfrac{V_B(0)}{V})^{\lambda_2+\lambda_3}) - \rho (\lambda_2+\lambda_3+1) (\tfrac{V_B(0)}{V})^{\lambda_2+\lambda_3} V_B'(0)	+ \alpha \tau_2 \int_0^\infty \tfrac{\partial c_{do} (V,k,V_B,t)}{\partial g} \big|_{g=0} \diff t >0,
	\end{align}
	where more explicit formulas are given in \eqref{eq: def lambda23}, \eqref{eq: vb0 g}, \eqref{eq: partial vb0 g}, \eqref{eq: dv2 cdo<} resp. \eqref{eq: dv2 cdo>}, and \eqref{eq: dg cdo<} resp. \eqref{eq: dg cdo>} with $g=G=0$.
\end{assumption}

A numerical analysis shows that this assumption already holds for small values of $\tau_1$ (greater than $0.1 \%$ in our basic setting).

\begin{theorem} \label{th: g*}
	It is optimal to choose $g^*>0$, i.e., it is optimal to provide a contract with a positive guarantee rate. 
	
	Moreover, if the following equation \eqref{eq: solution g*} admits a solution for $g$, then that solution is the optimal guarantee rate $g^*$:
	\begin{align}
		0 =&\, \tfrac{\tau_1 T}{r} (1-(\tfrac{V_B}{V})^{\lambda_2+\lambda_3})  - \tfrac{\tau_1 g T (\lambda_2+\lambda_3)}{Vr} (\tfrac{V_B}{V})^{\lambda_2+\lambda_3-1} V_B'(g) - \rho (\lambda_2+\lambda_3+1) (\tfrac{V_B}{V})^{\lambda_2+\lambda_3} V_B'(g) \notag \\
		&+ \alpha \tau_2 \int_0^\infty \tfrac{\partial c_{do} (V,k,V_B,T)}{\partial g} \diff t. \label{eq: solution g*}
	\end{align}
	where $\tfrac{\partial c_{do} (V,k,V_B(g),T)}{\partial g}$ is as in \eqref{eq: dg cdo<} resp. \eqref{eq: dg cdo>}. For an explicit formula of $V_B'(g)$ see \eqref{eq: formula dvb dg}.
\end{theorem}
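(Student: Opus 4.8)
**Proof proposal for Theorem \ref{th: g*}.**

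The plan is to prove the two assertions separately: first that $g^* > 0$, and then that any zero of \eqref{eq: solution g*} is the maximizer.

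For the positivity claim $g^* > 0$, I would compute the right derivative $\frac{\d}{\d g} v(V; V_B(g))\big|_{g=0^+}$ and show, under Assumption \ref{ass: g*>0}, that it is strictly positive; since Proposition \ref{prop: gstar exists} already gives existence of a maximizer in $[0,\infty)$ and the boundary point $g=0$ is not a local maximum when the right derivative there is positive, it follows that $g^* > 0$. The computation uses the total-value formula \eqref{eq: firm value v}, differentiating the three $g$-dependent pieces: the guarantee tax benefit $\tau_1 \frac{G}{r}(1 - (V_B/V)^{\lambda_2+\lambda_3})$ with $G = gT$, the bankruptcy cost $\rho V_B (V_B/V)^{\lambda_2+\lambda_3}$, and the participation tax benefit $\tau_2 \alpha \int_0^\infty c_{do}(V,k,V_B,t)\,\d t$ (which depends on $g$ only through $V_B(g)$). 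One then invokes the envelope-type simplification coming from the smooth-pasting condition \eqref{eq: smooth pasting condition}: because $V_B$ is chosen to satisfy $\partial E/\partial V_B = 0$, and $v = E + L$, the terms involving $V_B'(g)$ that act through the ``$V_B$-only'' channel partially cancel, leaving exactly the second line of Assumption \ref{ass: g*>0} as the sign of the right derivative. The first line of Assumption \ref{ass: g*>0} is what guarantees that $V_B'(0)$ exists (via the implicit function theorem applied to \eqref{eq: formula for vb}, whose $V_B$-derivative is the nonvanishing quantity displayed there) and hence that the above differentiation is legitimate.

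For the characterization, I would argue that on the region $g > 0$ the map $g \mapsto v(V; V_B(g))$ is differentiable (again by Lemma \ref{lemma: cdo c1} / Lemma \ref{lemma: new} for smoothness of $c_{do}$, together with the implicit-function differentiability of $V_B(g)$), so any interior maximizer must be a stationary point. Differentiating \eqref{eq: firm value v} with respect to $g$ and using the chain rule through $V_B(g)$, together once more with the smooth-pasting identity $\partial E/\partial V_B = 0$ (equivalently $\partial v/\partial V_B = \partial L/\partial V_B$ at $V_B = V_B^*$) to eliminate the pure-$V_B$ derivative terms, collapses the first-order condition exactly to \eqref{eq: solution g*}. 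The remaining task is to show that a solution of \eqref{eq: solution g*}, when it exists, is the \emph{global} maximizer rather than merely a critical point; here I would use that $v(V;V_B(g)) \to $ a finite limit as $g \to \infty$ controlled by Assumption \eqref{eq: ass guarantee excess} (the guarantee liability eventually dominates its tax benefit), ruling out escape of the maximum to the boundary at infinity, combined with the observation that the stationarity equation has the structure of a single sign-change in $g$ on the relevant range (the tax-benefit term is concave-ish and the bankruptcy/$V_B'$ terms are monotone via Proposition \ref{prop: vb monoton alpha g}), so at most one stationary point can be a maximum.

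The main obstacle I anticipate is the bookkeeping around $V_B'(g)$: since $V_B(g)$ has no closed form, every derivative of $v$ passes through the implicit relation \eqref{eq: formula for vb}, and one must be careful that (i) $V_B(\cdot)$ is genuinely differentiable at the points in question (the first display of Assumption \ref{ass: g*>0} handles $g=0$, but interior differentiability needs the analogous nonvanishing of the $V_B$-partial of \eqref{eq: formula for vb}, which should follow from the single-solution analysis underlying Corollary \ref{cor: unique solution of vb determination}), and (ii) the cancellations attributed to smooth pasting are applied at $V = V_B$ consistently — in particular the identity $\partial E/\partial V_B\big|_{V=V_B^*} = 0$ holds for all $V$, which is exactly what lets the $V_B'(g)$-through-$V_B$ terms drop out of the first-order condition for $v$. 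Verifying \eqref{eq: formula dvb dg} explicitly (by implicit differentiation of \eqref{eq: formula for vb}) is routine but tedious, and I would relegate it, along with the explicit forms \eqref{eq: dg cdo<}–\eqref{eq: dg cdo>} of $\partial c_{do}/\partial g$, to the appendix.
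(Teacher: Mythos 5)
Your proposal follows essentially the same route as the paper: differentiate $g \mapsto v(V;V_B(g))$ using the implicit-function differentiability of $V_B(g)$ established in Lemma \ref{lemma: dcdo dg}, set the derivative to zero to obtain \eqref{eq: solution g*}, and evaluate at $g=0$ so that positivity of the derivative there (the second display of Assumption \ref{ass: g*>0}) together with Proposition \ref{prop: gstar exists} yields $g^*>0$. One caution on your framing: the paper's computation \eqref{eq: dv dg} is the plain total derivative of \eqref{eq: firm value v} with respect to $g$ and invokes no envelope or smooth-pasting cancellation at all --- the chain-rule terms through $V_B(g)$ (the $\rho(\lambda_2+\lambda_3+1)(V_B/V)^{\lambda_2+\lambda_3}V_B'(g)$ term, the $\tau_1 g T(\cdots)V_B'(g)$ term, and the $\alpha\tau_2\int\partial_g c_{do}$ term) all survive into \eqref{eq: solution g*}; the only term that drops at $g=0$ does so because it carries an explicit factor of $g$. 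Smooth pasting gives $\partial E/\partial V_B=0$, not $\partial v/\partial V_B=0$, so if you applied an envelope argument to $v$ itself you would erroneously delete the surviving $V_B'$ terms and arrive at a different (wrong) condition; your statement that the result equals the second line of Assumption \ref{ass: g*>0} is correct, but the claimed mechanism is not the one doing the work. Your additional concern about whether a root of \eqref{eq: solution g*} is the global maximizer goes beyond what the paper proves --- its proof stops at deriving the first-order condition and delegates existence of a root to Proposition \ref{prop: solution g* existence} --- so your sketch there (a single sign change, finiteness at $g\to\infty$) is not required to match the paper, though as written it is not yet a proof.
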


Under some technical conditions, equation \eqref{eq: solution g*} always admits a solution (see Proposition \ref{prop: solution g* existence}). 
To conclude this subsection, we discuss the numerical observations regarding Assumption \ref{ass: g*>0} in more detail:

As in the previous Subsection \ref{subsection: optimal alpha}, our numerical analysis also shows that Assumption \eqref{eq: assumption optimal g} is typically fulfilled when optimizing $G$. However, if the lump sum payment $P$ is too high, or if the tax rate $\tau_1$ is too low, the assumption might not hold. This suggests that when the liabilities become too costly in comparison to equity, it becomes difficult to provide additional promises to policyholders. Additionally, while an arbitrary high participation rate $\alpha$ could theoretically cause \eqref{eq: assumption optimal g} to fail, the restriction $\alpha \in [0,\bar{\alpha}]$ ensures that in most situations, this range is not sufficient for the assumption to be violated.

\subsection{Derivation of the optimal guarantee rate and participation rate}

In this subsection, we derive the optimal participation rate $\alpha^*$ and the optimal guarantee rate $g^*$ simultaneously, such that the total insurance company value $v$ (as defined in \eqref{eq: firm value v}) is maximized. Therefore, we consider $V_B$ as the smallest solution of \eqref{eq: formula for vb}, and make its dependence on $(\alpha,g)$ explicit. Note that, in general, $V_B(\alpha,g)$ does not have an explicit form. Hence, our optimization problem is then (with a slight abuse of notation) formulated as follows: We seek the optimal rate vector $(\alpha^*,g^*)$ given by:
\begin{align} \label{eq: optimization problem alpha g}
	(\alpha^*,g^*) = \argmax_{(\alpha,g) \in [0,1] \times [0,\infty)} v(V;V_B(\alpha,g)),
\end{align}
where $v$ is defined in \eqref{eq: firm value v}.

\begin{proposition} \label{prop: alphastar and gstar exists}
	Under Assumption \ref{ass: continuity}, there exists an $(\alpha^*,g^*) \in [0,1]\times[0,\infty)$ which is the optimal pair of rates for $(\alpha,g) \in [0,1]\times[0,\infty)$.
\end{proposition}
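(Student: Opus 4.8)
The plan is to establish existence of the maximizer by a standard compactness-plus-continuity argument on the two-dimensional parameter domain. First I would fix the admissible set: by Proposition \ref{prop: alpha bar competitive market} the requirement that $V\mapsto E(V)$ be non-negative holds whenever $\alpha<\bar\alpha$, so the relevant domain is effectively $[0,\min\{1,\bar\alpha\}]\times[0,\infty)$ (with the convention $[0,\bar\alpha]=[0,\infty)$ when $\bar\alpha=\infty$). The $\alpha$-coordinate lies in a compact interval already; the real work is to show that the supremum over $g\in[0,\infty)$ is attained, i.e.\ that one may restrict $g$ to a compact interval without loss. For this I would argue, exactly as in the proof of Proposition \ref{prop: gstar exists} (which handles the one-parameter case), that $v(V;V_B(\alpha,g))\to-\infty$ (or at least drops below $v(V;V_B(\alpha,0))$) as $g\to\infty$: the guaranteed-payment liability grows linearly in $G=gT$ while, under the standing assumption \eqref{eq: ass guarantee excess}, the associated tax benefit $TB_1=\tau_1\frac{G}{r}(1-(V_B/V)^{\lambda_2+\lambda_3})$ is dominated by it, and $V_B(\alpha,g)$ is increasing in $g$ (Proposition \ref{prop: vb monoton alpha g}) so the bankruptcy-cost term $-\rho V_B(V_B/V)^{\lambda_2+\lambda_3}$ only makes things worse. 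Hence there is $\bar g<\infty$, uniform in $\alpha\in[0,\min\{1,\bar\alpha\}]$ (uniformity follows since the dominating estimate is monotone in $\alpha$), such that the optimum over $[0,\infty)$ in $g$ is the same as the optimum over $[0,\bar g]$.

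Next I would address continuity of the objective $(\alpha,g)\mapsto v(V;V_B(\alpha,g))$ on the compact rectangle $K:=[0,\min\{1,\bar\alpha\}]\times[0,\bar g]$. The map $V_B\mapsto v(V;V_B)$ is manifestly continuous (indeed smooth) by the closed-form expression \eqref{eq: firm value v} together with Lemma \ref{lemma: cdo c1}, which gives smoothness of $\int_0^\infty c_{do}(V,k,V_B,t)\,dt$ in $V_B$. So it suffices that $(\alpha,g)\mapsto V_B(\alpha,g)$ is continuous on $K$, and this is precisely where Assumption \ref{ass: continuity} enters: Proposition \ref{prop: vb monoton alpha g} already delivers left-continuity and monotonicity (hence existence of one-sided limits) in each variable separately, and Assumption \ref{ass: continuity} supplies the missing right-continuity in $\alpha$ and in $g$; combined with monotonicity, separate continuity in each variable of a function that is also monotone in each variable upgrades — via the usual squeeze argument using both one-sided limits — to joint continuity on $K$. (Alternatively, since $v$ is monotone-free in $V_B$ only through continuous functions, one can run the $\argmax$ directly on the separately-continuous, coordinatewise-monotone $V_B$ and invoke that such functions on a rectangle are jointly continuous.) With joint continuity of $v(V;V_B(\alpha,g))$ on the compact set $K$ in hand, the Weierstrass extremal value theorem yields a maximizer $(\alpha^*,g^*)\in K$, and by the reduction in the first paragraph this is also a maximizer over the full domain $[0,1]\times[0,\infty)$, proving the proposition.

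The main obstacle is the passage from separate continuity to joint continuity of $V_B(\alpha,g)$ — in general separate continuity does not imply joint continuity, and here we only know right-continuity (from Assumption \ref{ass: continuity}) plus left-continuity and coordinatewise monotonicity (from Proposition \ref{prop: vb monoton alpha g}). I would exploit the monotonicity decisively: a function on a product of intervals that is monotone in each variable and continuous in each variable separately is automatically jointly continuous at every point (this is a classical fact; one proves it by sandwiching $V_B(\alpha,g)$ between $V_B(\alpha\mp\delta, g\mp\delta)$ and letting $\delta\to0$ using the one-sided continuity in each slot). A minor secondary point to be careful about is the boundary behaviour at $\alpha=\bar\alpha$ when $\bar\alpha<\infty$ and a solution of \eqref{eq: formula for vb} exists exactly there: Assumption \ref{ass: continuity} is stated so as to cover precisely this case, so one includes the endpoint in $K$ and the argument goes through unchanged; if no solution exists at $\bar\alpha$ one simply works on the half-open interval and notes the supremum is still attained because of the uniform bound $\bar g$ and the fact that $v$ stays bounded as $\alpha\uparrow\bar\alpha$.
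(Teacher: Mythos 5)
Your proof is correct and follows essentially the same route as the paper, whose own argument is a one-liner combining the compactness reductions from Propositions \ref{prop: alphastar exists} and \ref{prop: gstar exists} (noting that the bound $\bar g$ is uniform in $\alpha \leq 1$ and that the $\alpha$-argument is independent of $g$). The only place you go beyond the paper is in explicitly upgrading separate continuity plus coordinatewise monotonicity of $(\alpha,g)\mapsto V_B(\alpha,g)$ to joint continuity via the classical sandwich argument; the paper leaves this step implicit, and your treatment closes it correctly (your aside that $v\to-\infty$ as $g\to\infty$ is not quite right, since $v$ plateaus at $(1-\rho)V$ once immediate bankruptcy is triggered, but your hedged alternative that it drops below the value at $g=0$ is what is actually needed and is true).
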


\begin{proposition} \label{prop: alpha* and g*}
	If the non-linear equation system consisting of equations \eqref{eq: solution alpha*} and \eqref{eq: solution g*} admits a solution, then the solution is given by $(\alpha^*,g^*)$.
\end{proposition}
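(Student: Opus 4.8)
**Proof proposal for Proposition \ref{prop: alpha* and g*}.**

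The plan is to obtain the claim as a routine first-order-condition characterization, leaning entirely on the two single-variable results already established, namely Theorem \ref{th: alpha*} and Theorem \ref{th: g*}, together with the existence statement of Proposition \ref{prop: alphastar and gstar exists}. The underlying observation is that if $(\alpha^*,g^*)$ solves the joint maximization problem \eqref{eq: optimization problem alpha g}, then in particular $\alpha^*$ maximizes $\alpha \mapsto v(V;V_B(\alpha,g^*))$ over $[0,1]$ with $g^*$ held fixed, and symmetrically $g^*$ maximizes $g \mapsto v(V;V_B(\alpha^*,g))$ over $[0,\infty)$ with $\alpha^*$ held fixed. Hence the partial optimality conditions derived in the two preceding subsections must both hold at $(\alpha^*,g^*)$ simultaneously.

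Concretely, I would first fix $g = g^*$ and apply (the argument behind) Theorem \ref{th: alpha*}: since $\alpha^*$ is an interior-or-boundary maximizer of the $\alpha$-section of $v$, and since the first-order analysis there shows that whenever equation \eqref{eq: solution alpha*} admits a solution that solution coincides with the maximizer, we conclude that $\alpha^*$ satisfies \eqref{eq: solution alpha*} with $V_B = V_B(\alpha^*,g^*)$ and the corresponding $V_B'(\alpha)$ being the partial derivative $\partial_\alpha V_B(\alpha,g^*)\big|_{\alpha=\alpha^*}$. Then, symmetrically, I would fix $\alpha = \alpha^*$ and invoke Theorem \ref{th: g*}: $g^*$ maximizes the $g$-section of $v$, so it satisfies \eqref{eq: solution g*} with $V_B = V_B(\alpha^*,g^*)$ and $V_B'(g) = \partial_g V_B(\alpha,g)\big|_{g=g^*}$. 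Conversely, the converse direction of the statement — that any solution of the coupled system \eqref{eq: solution alpha*}--\eqref{eq: solution g*} is the optimal pair — follows because the existence of a maximizer (Proposition \ref{prop: alphastar and gstar exists}) together with the fact that the two single-variable problems each have at most the one critical point identified by \eqref{eq: solution alpha*} resp. \eqref{eq: solution g*} forces any such solution of the system to be that maximizer.

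The main subtlety, and the step I expect to require the most care, is the compatibility of the notion of derivative of $V_B$. In the single-variable theorems $V_B'(\alpha)$ and $V_B'(g)$ denote total derivatives along the respective one-parameter families, obtained by implicit differentiation of the defining equation \eqref{eq: formula for vb}; in the joint problem these must be re-read as the partial derivatives $\partial_\alpha V_B(\alpha,g)$ and $\partial_g V_B(\alpha,g)$ at the point $(\alpha^*,g^*)$. One therefore has to check that the implicit-function-theorem computation underlying \eqref{eq: formula dvb dalpha} (resp. \eqref{eq: formula dvb dg}) goes through verbatim when the other variable is merely frozen at its optimal value rather than at a generic value — which it does, since \eqref{eq: formula for vb} is jointly smooth in $(\alpha,g,V_B)$ and the non-degeneracy of $\partial_{V_B}$ of the defining relation (used already to get a locally unique smallest solution) is the same condition in both settings. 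A second minor point is to record that at a boundary optimum (e.g. $\alpha^*=0$ or $\alpha^*=1$, or $g^*=0$) the equations \eqref{eq: solution alpha*}--\eqref{eq: solution g*} need not be satisfied; this is precisely why the proposition is phrased conditionally ("\emph{if} the system admits a solution"), so no separate treatment of the boundary is needed — one simply notes that in that conditional case the maximizer is interior in each coordinate and the stationarity equations apply.
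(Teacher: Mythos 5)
Your proposal takes essentially the same route as the paper, whose entire justification is the single remark that the proposition ``directly follows from Theorems \ref{th: alpha*} and \ref{th: g*}'': a solution of the coupled system satisfies each single-variable first-order condition with the other variable frozen, so each coordinate is identified by the corresponding theorem. Your additional care about reading $V_B'(\alpha)$ and $V_B'(g)$ as partial derivatives of $V_B(\alpha,g)$ and about the conditional phrasing at boundary optima goes beyond what the paper writes down, but does not change the argument.
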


This proposition directly follows from Theorems \ref{th: alpha*} and \ref{th: g*}. From these theorems, we also obtain sufficient conditions for ensuring that $\alpha^*>0$ and $g^*>0$.


\section{Numerical Results} \label{chapter: numerical}

In this section, we provide a sensitivity analysis on the assumptions for Theorems \ref{th: alpha*} and \ref{th: g*}, as well as an examination of the optimal participation rate $\alpha^*$ and the optimal guarantee rate $g^*$. Additionally, we present a plot showing the equity $E$ and liability $L$ values as functions of the asset value $V$, and discuss how the asset substitution effect changes when adding participation.

For this analysis, we use a basic setting for both the financial market and the contract conditions. Unless stated otherwise, the parameters take the following values: For the financial market, we use a risk-free interest rate $r=1\%$, a dividend rate $\nu = 5\%$ and a volatility of $\sigma=20\%$. The contract length is $T=30$ with lump sum payment $P=95$, guarantee rate $\tfrac{G}{P}=2\%$, initial asset value $V_0=100$, and a surplus participation starting at $k=150$ with participation rate $\alpha=5\%$, i.e., we have the liability rate $\tfrac{P}{V_0} = 95 \%$ and the surplus initiation rate $\tfrac{k}{V_0} = 150 \%$. The tax rates are $\tau_1=35\%=\tau_2$, and we use a loss fraction at bankruptcy of $\rho=50\%$. These values for the lump sum payment and the dividend are typical for large insurance companies. The high liability capital is also not uncommon for life insurance companies, see, for instance the balance sheets of \cite[p.150]{Allianzreport} and \cite[p.30]{AllianzLifereport}. The tax rates and the loss fraction are taken from Leland and Toft \cite{leland1996optimal}, whereas Chen et al. \cite{chen2019constrained} utilized slightly lower values ($20 \%$ or $25 \%$). It is worth noting that, as discussed, e.g., by Kling et al. \cite{kling4744873intertemporal,kling2024intertemporal}, insurance companies commonly employ return smoothing mechanisms in their payments to policyholders. This can be modeled by reducing the volatility $\sigma$ to $50-75 \%$ of its original value. As shown in the sensitivity analysis in Figure \ref{fig: sensi}, this would significantly increase the surplus participation rate. However, we maintain $\sigma=20\%$, as this effect is not typically considered in the majority of the related literature.

We have checked the pre-conditions from Theorems \ref{th: VB determination}, \ref{th: alpha*}, and \ref{th: g*} for all cases presented, and they are satisfied with $\bar{\alpha} \approx  0.12$ and $\bar{\tau}=\bar{\bar{\tau}}$ in the basic setting. However, $\bar{\tau}$ exceeds $\tau_2$, i.e., $\alpha^*=0$, if, ceteris paribus, the lump sum payment value or the guarantee rate becomes too high ($\tfrac{P}{V_0}\geq 150 \%$ resp. $\tfrac{G}{P} \geq 7 \%$), or if the tax rate on the participation is too low ($\tau_2 \leq 8\%$). Similarly, equation \eqref{eq: assumption optimal g} from Assumption \ref{ass: g*>0} does not hold, i.e., $g^*=0$, if, ceteris paribus, the lump sum payment value gets too high ($\tfrac{P}{V_0}\geq 260 \%$) or if the tax rate on the guaranteed payment is too low ($\tau_1 \leq 0.1\%$). These results align with the discussions at the end of Subsections \ref{subsection: optimal alpha} and \ref{subsection: optimal g}. Changes in other parameters, however, generally maintain the two conditions, ensuring that $\alpha^*>0$ resp. $g^*>0$. For these parameters, we provide an overview in Figure \ref{fig: test}, where we plot the results when varying the parameters individually. The left (resp. middle) plot shows the region such that $\alpha^*>0$ (resp. $g^*>0$) holds, depending on the liability ratio $\tfrac{P}{V_0}$ and the the tax benefit $\tau=\tau_1=\tau_2$. The right plot shows the region such that $\alpha^*>0$, depending on the liability ratio $\tfrac{P}{V_0}$ and the guarantee rate $\tfrac{G}{P}$. Since we vary $G$ in the right plot, it is not meaningful to impose the condition $g^*>0$ in this context. In all plots, a white square indicates that the respective optimal rate is positive, whereas a black square indicates that it is zero in the optimal case. All other parameters, including the respective other optimal rate, are fixed as in the basic parametrization. From the plots, we confirm the statements made in the discussions at the end of Subsections \ref{subsection: optimal alpha} and \ref{subsection: optimal g}, which suggest that the optimal rates remain positive as long as the liabilities are not excessively costly relative to the equity value. This indicates that it becomes challenging to offer additional promises to policyholders when the original promises, like the lump sum payment, are already too costly compared to the receiving tax benefit.


\begin{figure}[!htb]
	\centering
	\begin{minipage}[t]{0.3\textwidth}
		\includegraphics[trim= 1mm 6mm 10mm 8mm, clip,width=\textwidth]{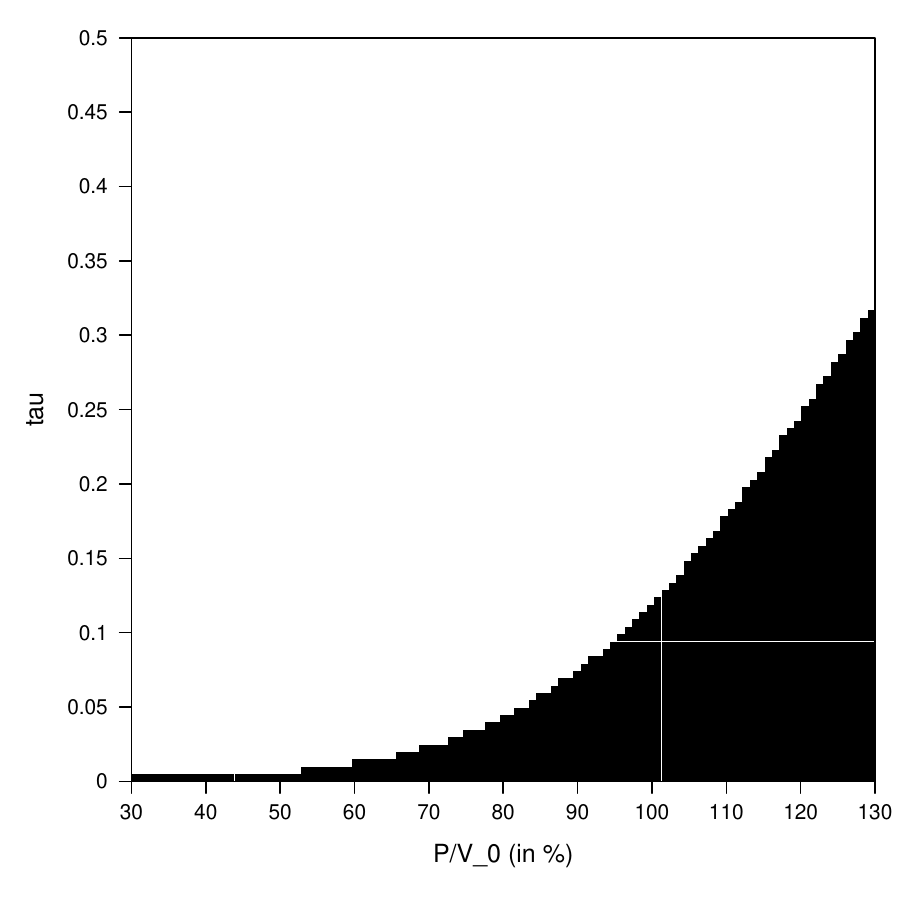}
	\end{minipage}
	\quad
	\begin{minipage}[t]{0.3\textwidth}
		\includegraphics[trim= 1mm 6mm 10mm 8mm, clip,width=\textwidth]{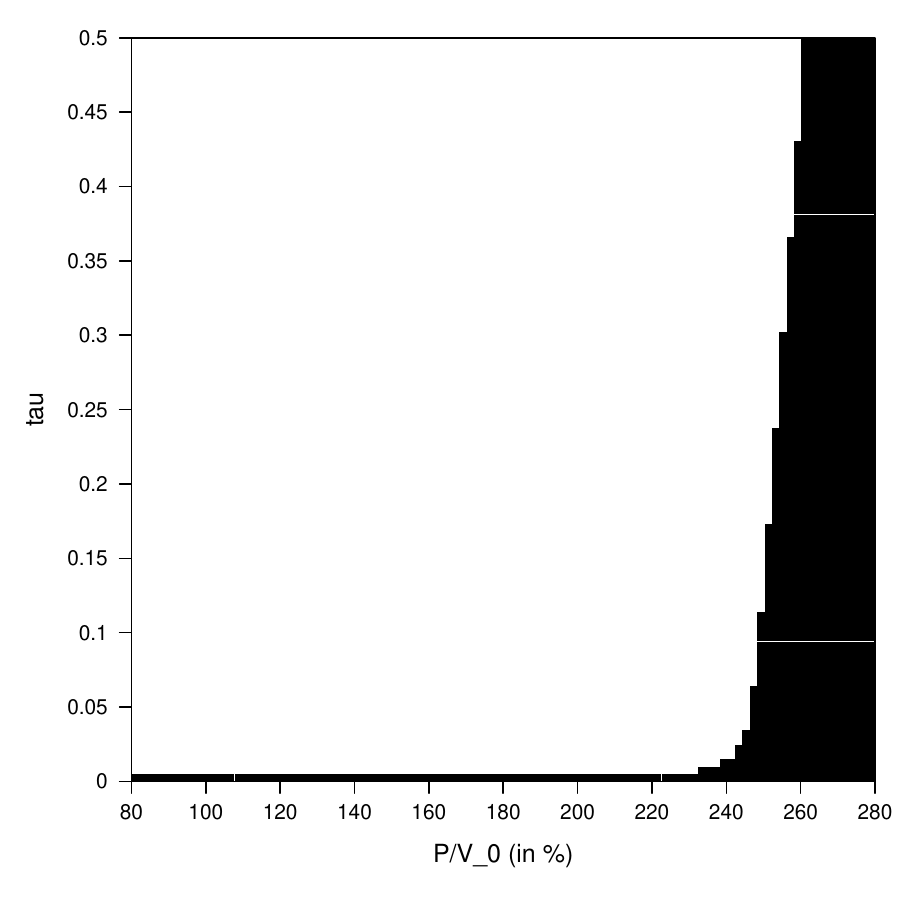}
	\end{minipage}
	\quad
	\begin{minipage}[t]{0.3\textwidth}
		\includegraphics[trim= 1mm 6mm 10mm 8mm, clip,width=\textwidth]{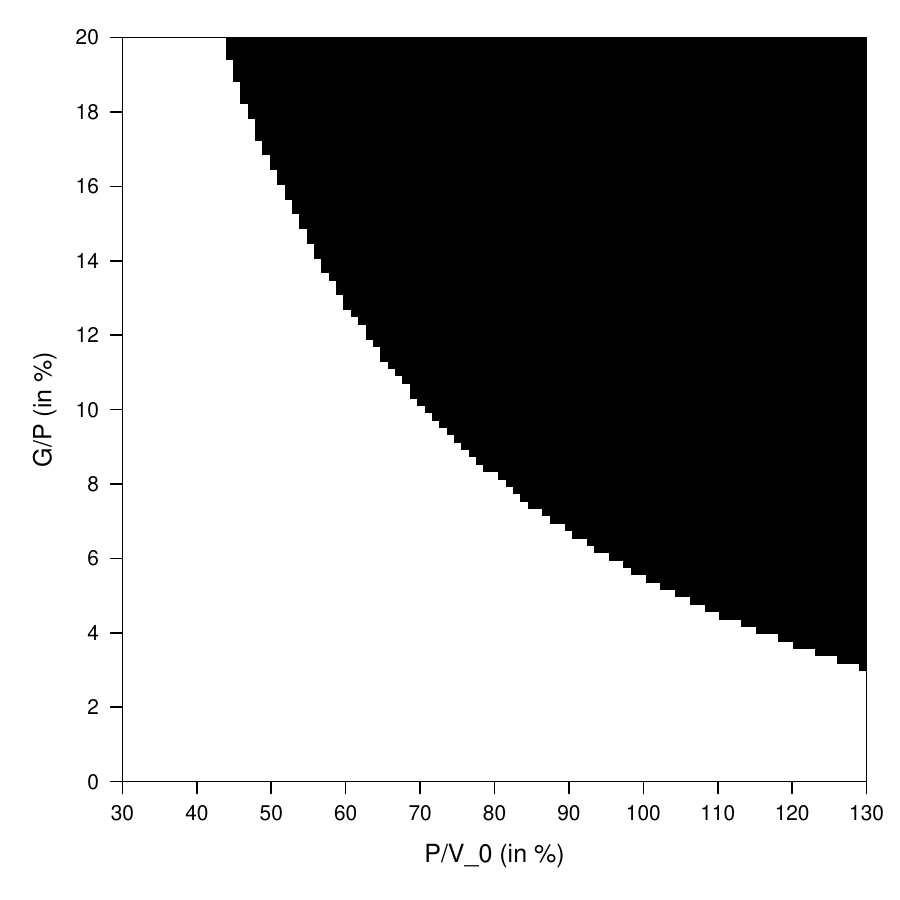}
	\end{minipage}
	\caption{Scatterplot if the optimal rates are positive as functions of the liability ratio $\tfrac{P}{V_0}$ (in \%) and the tax benefit $\tau=\tau_1=\tau_2$ (left) resp. the guarantee rate $\tfrac{G}{P}$ (in \%) (right). A white square indicates that $\alpha^*>0$ in the left and the right plot resp. that $g^*>0$ holds in the middle plot.}
	\label{fig: test}
\end{figure}

In Figure \ref{fig: VB variation}, we illustrate how the bankruptcy-triggering value $V_B$ varies as a function of the participation rate $\alpha$ (left plot) and the guarantee rate $\tfrac{G}{P}$ (right plot). For the left plot, we restrict the participation rate $\alpha<\bar{\alpha}$, ensuring that a solution to equation \eqref{eq: formula for vb} exists, i.e., the bankruptcy-triggering value is not set to $V_0$ which leads to immediate bankruptcy. In both plots, we observe that an increase in either the participation rate or the guarantee rate results in a higher bankruptcy-triggering value $V_B$. The reason is that both higher participation rates and higher guarantee rates to policyholders increase the total liabilities, which, in turn, brings the equity holders to default earlier in order not to have negative equity. Additionally, we notice that the bankruptcy-triggering value is more sensitive to changes in the guarantee rate than to changes in the participation rate. This can be explained by the fact that the guarantee rate is a fixed obligation, meaning it must be paid regardless of asset performance. In contrast, the participation rate only affects payments when the asset value exceeds a certain threshold (e.g., in the basic parametrization starting at $\tfrac{k}{V_0}=150 \%$). Given that the probability of surpassing this threshold is rather low when asset values are close to $V_B$, the participation rate, in this case, has less impact on the bankruptcy-triggering value. From the right plot, we can also observe that when the accumulated guarantee rate $\tfrac{G}{P}$ reaches approximately $11.1 \%$, the bankruptcy-triggering value $V_B$ exceeds the initial value $V_0 = 100$. In this case, equity holders would be forced to declare bankruptcy immediately, indicating that the guarantee rate has been set unfeasibly high.

\begin{figure}[!htb]
	\centering
	\begin{minipage}{0.48\textwidth}
		\includegraphics[trim= 1mm 6mm 7mm 8mm, clip,width=\textwidth]{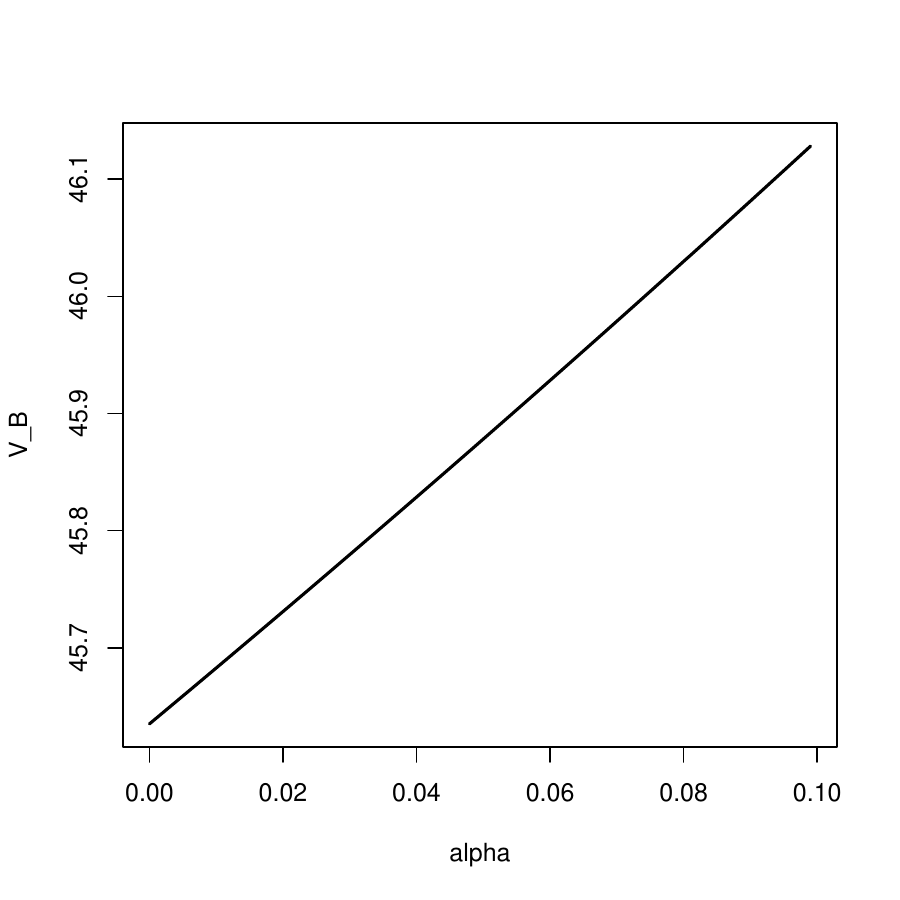}
	\end{minipage}
	\quad
	\begin{minipage}{0.48\textwidth}
		\includegraphics[trim= 1mm 6mm 7mm 8mm, clip,width=\textwidth]{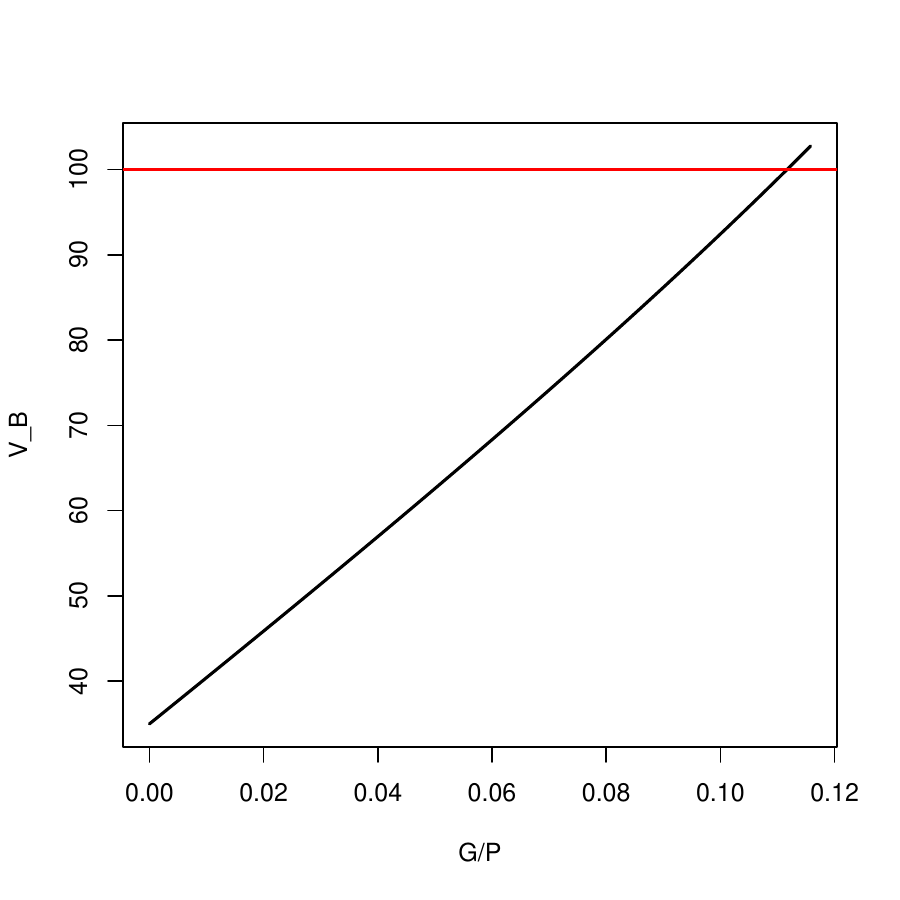}
	\end{minipage}
	\caption{Variation of the bankruptcy-triggering value $V_B$ for different values of the participation rate $\alpha$ (with $\alpha<\bar{\alpha}$) and the guaranteed payment rate $\tfrac{G}{P}$.}
	\label{fig: VB variation}
\end{figure}

In Figure \ref{fig: G}, we show the value of the insurance company $v$, the equity value $E$, and the liability value $L$ as functions of the guaranteed payment $G$. From the plot, we see that in the basic parametrization, the optimal guarantee rate is approximately $\tfrac{G^*}{P} \approx 1.91\%$ when optimizing for the insurance company value $v$. However, this optimal value does not correspond to the optimal guarantee rate from either the policyholder's or the equity holder's perspectives. This finding aligns with the result in the case of no surplus participation, as discussed by Lando \cite{lando1998cox}.

In Figure \ref{fig: vED}, we present the insurance company value $v$, the equity value $E$, and the liability value $L$ as functions of the asset value $V$, for the optimal values $\alpha^*$, $G^*$, and $V_B^*$ in the basic parametrization. 
We find an optimal participation rate of $\alpha^* \approx 0.099$, an optimal guarantee rate $\tfrac{G^*}{P} \approx 1.91\%$, and the corresponding bankruptcy-triggering value $V_B^* \approx 45.36$. From the plots in Figure \ref{fig: vED}, we observe that all three values, the insurance company value $v$, the equity value $E$ and the liability value $L$, increase with the asset value $V$, which is expected. However, it is important to note that the increase in the insurance company value is not linear. As the asset value $V$ increases, the growth in the company value starts to decrease slightly. Additionally, we see that the equity value $E$ becomes zero at $V_B^*$, which is consistent with the smooth-pasting condition. When comparing these results to Lando \cite{lando1998cox}, who analyzed a model without participation, we observe that the equity value $E$ is no longer convex with respect to the asset value. This change is due to the participation costs becoming more significant at higher values of $V$. As a result, the rate of increase of $E$ in $V$ decreases, and $E$ adopts a concave form as $V$ grows. Moreover, in the absence of convexity, the "option-like" nature of equity diminishes. This is crucial for mitigating the asset substitution effect, which will be discussed in greater detail later.

\begin{figure}[!htb]
	\centering
	\begin{minipage}[t]{0.48\textwidth}
		\includegraphics[trim= 1mm 6mm 10mm 8mm, clip,width=\textwidth]{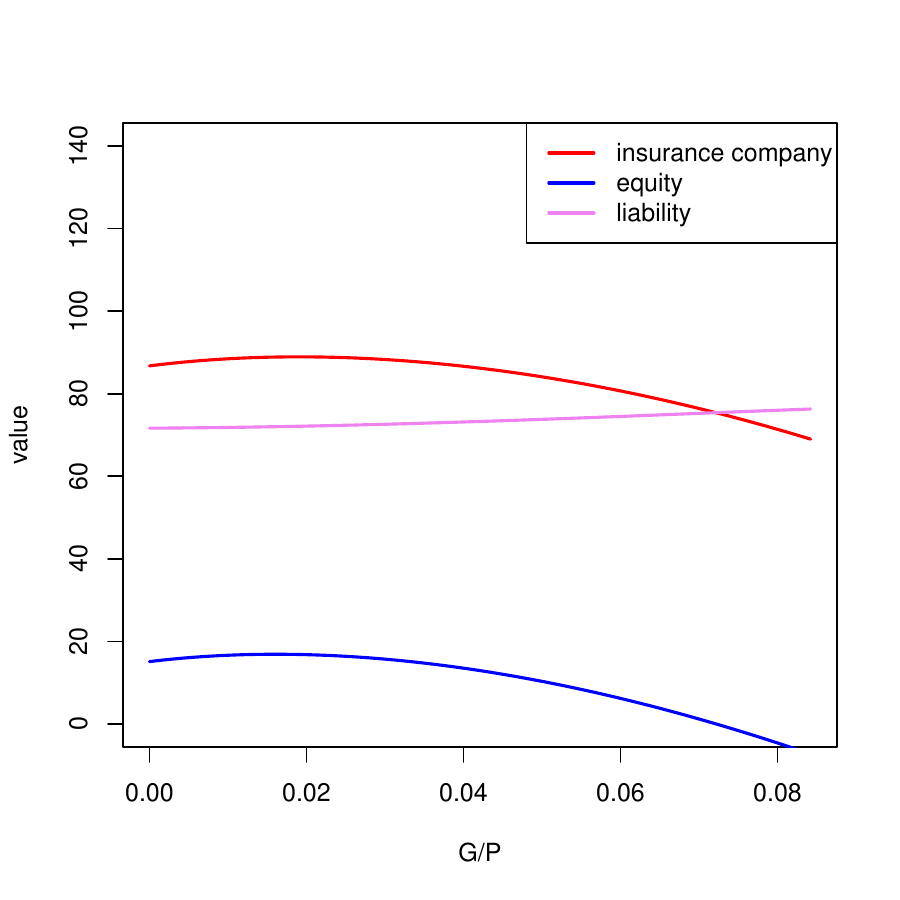}
		\caption{Insurance company value $v$, liability value $L$, and equity value $E$ as a function of the guarantee rate $\tfrac{G}{P}$ in the basic parametrization.}
		\label{fig: G}
	\end{minipage}
	\quad
	\begin{minipage}[t]{0.48\textwidth}
		\includegraphics[trim= 1mm 6mm 10mm 8mm, clip,width=\textwidth]{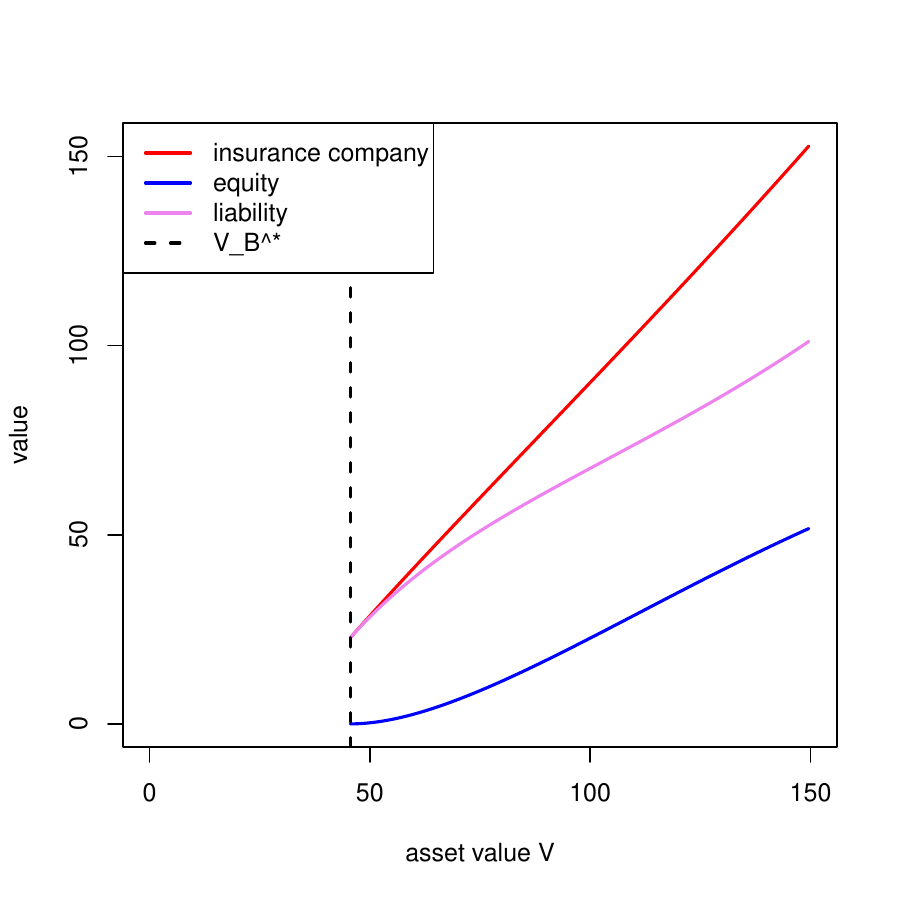}
		\caption{Value of the insurance company, equity and liability as a function of the asset value $V$. Here, we used the optimal participation rate $\alpha^* \approx 9.9 \%$ and the optimal guarantee rate $\tfrac{G^*}{P} \approx 1.91 \%$ to determine $\tfrac{V_B^*}{V_0} \approx 45.36 \%$.}
		\label{fig: vED}
	\end{minipage}
\end{figure}

In Figure \ref{fig: sensi}, we present some results of a sensitivity analysis on the optimal participation rate $\alpha^*$. The plots display the effects of variations in the dividend rate $\nu$ on the left, the contract duration $T$ in the middle, and the tax rate $\tau_2$ on the right. We find that the optimal participation rate is strongly influenced by all three parameters. As the dividend rate increases, the optimal participation rate also increases. This is economically reasonable, as a higher dividend rate offered by the insurance company diminishes the long-term performance of the company's asset process, thereby making it more cost-effective for the insurer to offer a higher participation rate. On the other hand, a longer contract duration decreases the optimal participation rate. This occurs because the longer duration increases the likelihood of experiencing a high surplus participation, assuming positive expected returns over time. Lastly, an increase in the tax rate $\tau_2$ results in higher optimal participation rates. This is intuitive, as higher tax rates enhance the value of equity, thus making participation more attractive. Conversely, the effect of the tax rate $\tau_1$ on the optimal participation rate is minimal.

\begin{figure}[!htb]
	\centering
	\begin{minipage}{0.3\textwidth}
		\includegraphics[trim= 1mm 6mm 10mm 18mm, clip,width=\textwidth]{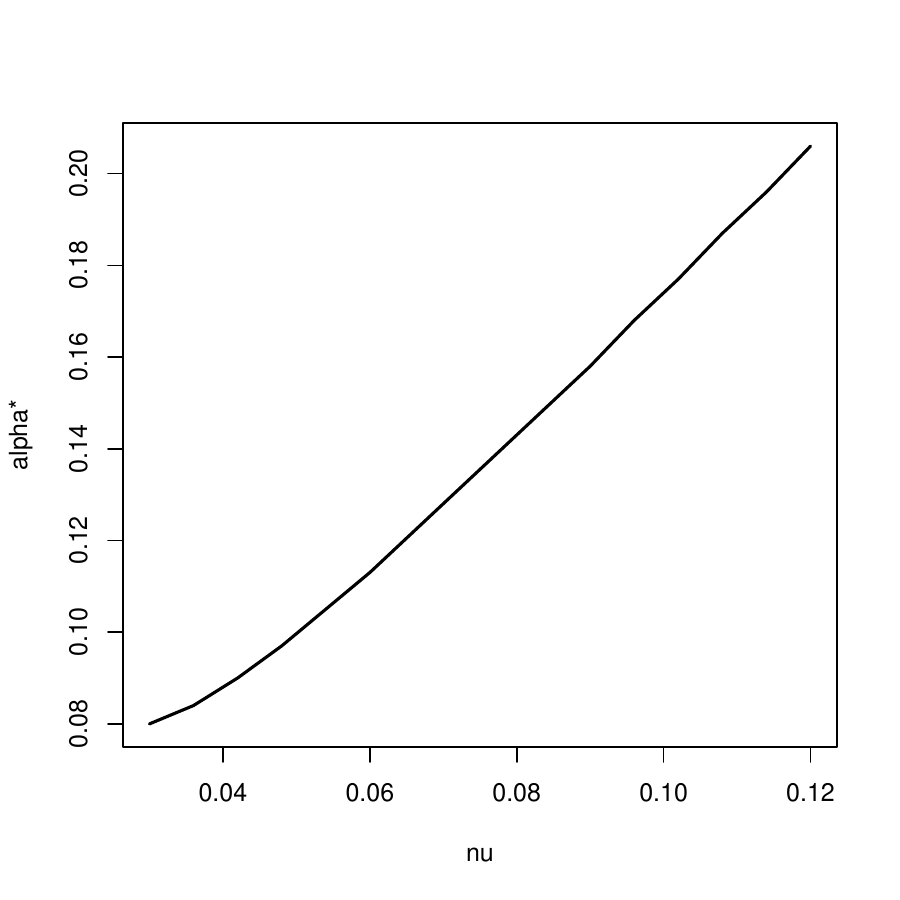}
	\end{minipage}
	\quad
	\begin{minipage}{0.3\textwidth}
		\includegraphics[trim= 1mm 6mm 10mm 18mm, clip,width=\textwidth]{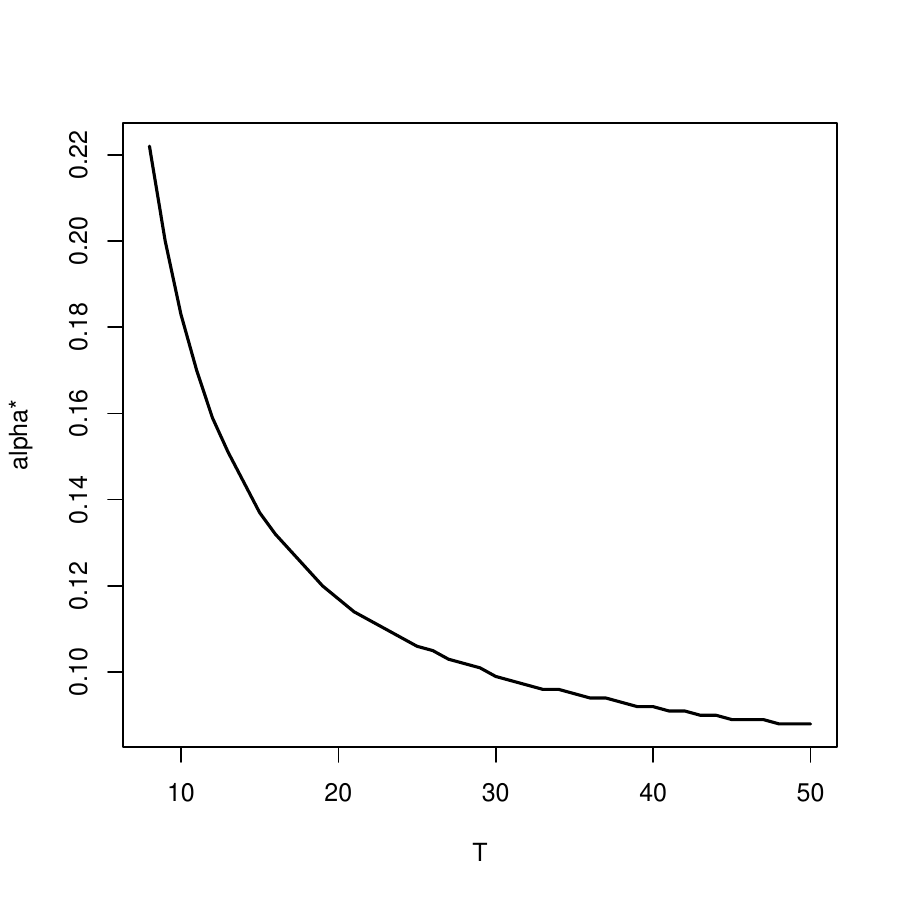}
	\end{minipage}
	\quad
	\begin{minipage}{0.3\textwidth}
		\includegraphics[trim= 1mm 6mm 10mm 18mm, clip,width=\textwidth]{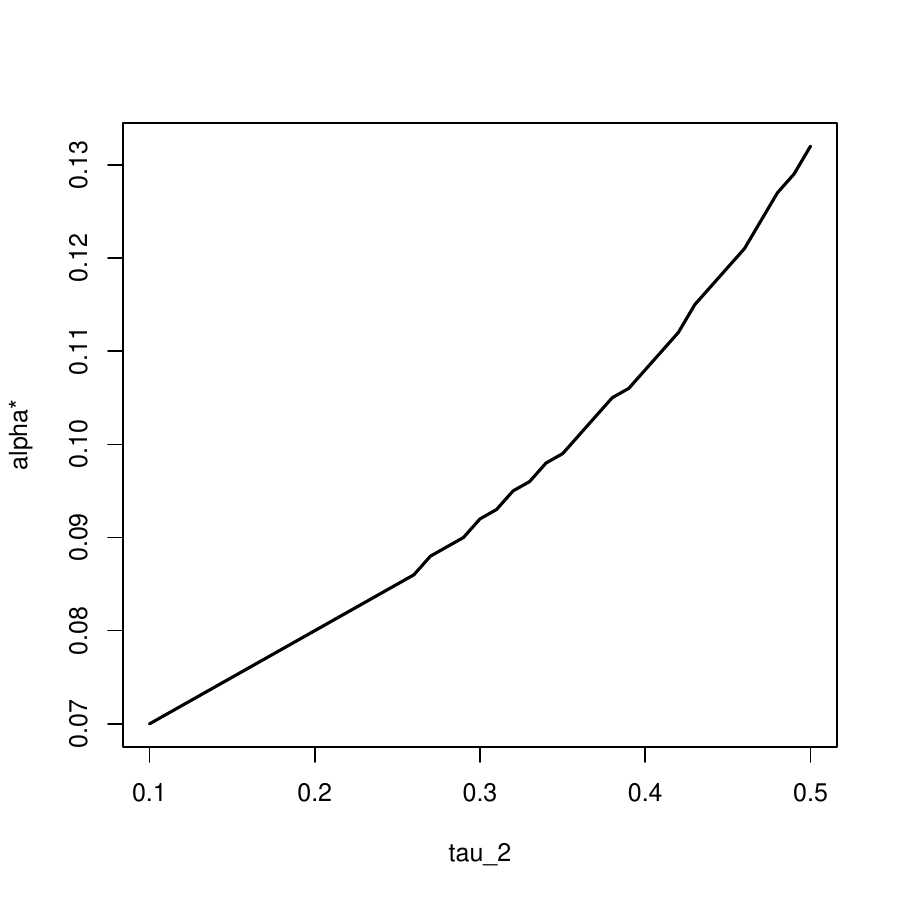}
	\end{minipage}
	\caption{Sensitivity analysis of the optimal participation rate $\alpha^*$ for the dividend rate $\nu$ (left), the contract duration $T$ (middle), and the tax rate $\tau_2$ (right).}
	\label{fig: sensi}
\end{figure}

In the concluding paragraph of this numerical analysis, we examine the asset substitution effect, which describes the tendency of equity holders to increase the riskiness of a company's investment decisions, leading to a transfer of value from liabilities to equity. Figure \ref{fig: assetsub} illustrates the partial derivatives of equity and liability with respect to asset volatility, across different surplus participation rates (top row) and contract durations (bottom row). The asset substitution effect appears in regions where $\tfrac{\partial}{\partial \sigma} L <0$ and $\tfrac{\partial}{\partial \sigma} E >0$, meaning equity holders seek to increase risk, while policyholders seek to reduce it. In the absence of participation, i.e., $\alpha = 0 \%$, we confirm previous findings (see the references in the introduction) that there is an asset substitution effect in a large region (starting at $75$ \% of the initial asset value for our parametrization). When surplus participation is introduced, transferring some of the incentives for risk-taking to policyholders, the asset substitution effect vanishes for a reasonable contract duration. However, as the contract duration increases (particularly beyond the lifespan of multiple generations), the influence of surplus participation diminishes, and the asset substitution effect reverts to the case without surplus participation, where the asset substitution effect is present. This last result is plausible, as a longer contract duration delays surplus payments to policyholders, and in the limit ($T = \infty$), no surplus payment occurs at finite time points. Furthermore, Leland and Toft \cite{leland1996optimal} observe that even in the absence of surplus participation, longer maturities exacerbate the asset substitution effect. They contend that, although the option analogy (presented in the introduction) may not be entirely accurate, the adverse incentives linked to longer maturities are indeed magnified. The impacts of parameter changes align with the case of no participation. 

\begin{figure}[!htb]
	\centering
	\begin{minipage}{0.3\textwidth}
		\includegraphics[trim= 1mm 6mm 10mm 8mm, clip,width=\textwidth]{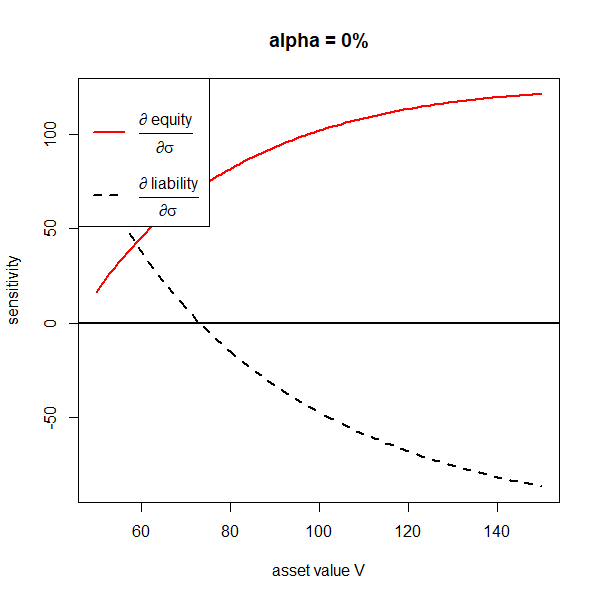}\\[1ex]	
			
		\includegraphics[trim= 1mm 6mm 10mm 8mm, clip,width=\textwidth]{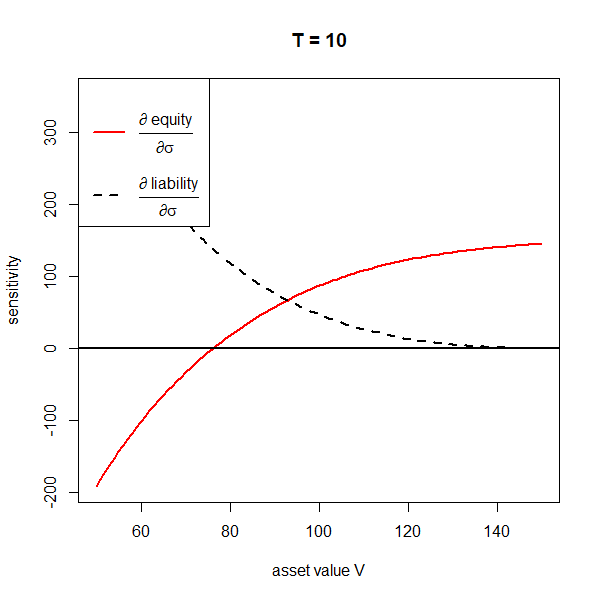}
	\end{minipage}
	\quad
	\begin{minipage}{0.3\textwidth}
		\includegraphics[trim= 1mm 6mm 10mm 8mm, clip,width=\textwidth]{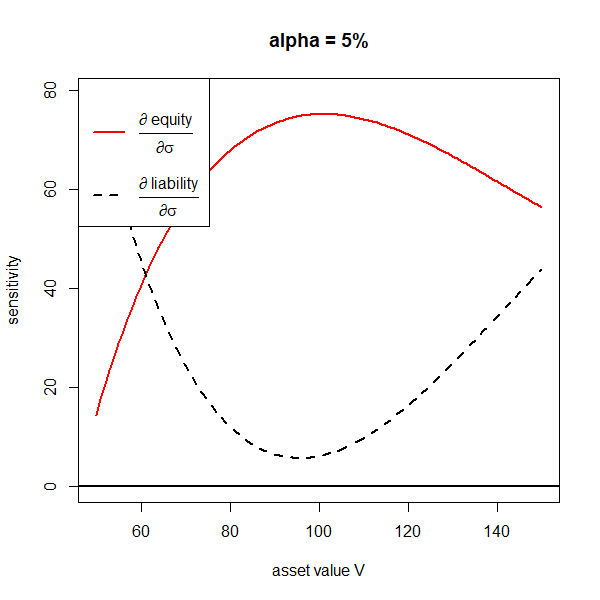}\\[1ex]
		
		\includegraphics[trim= 1mm 6mm 10mm 8mm, clip,width=\textwidth]{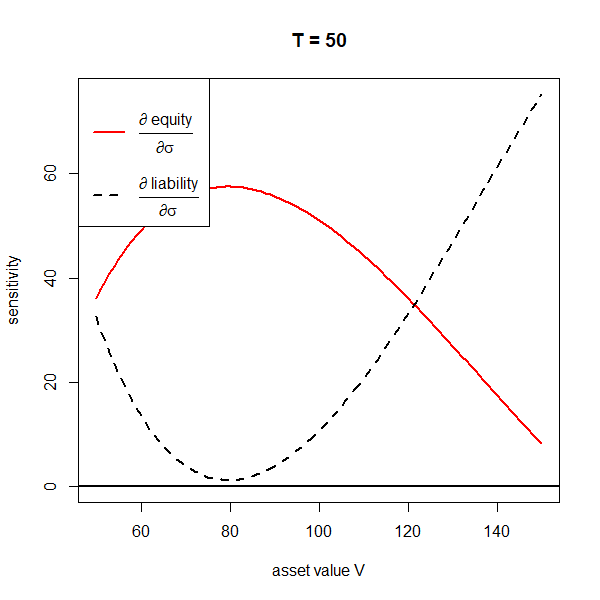}
	\end{minipage}
	\quad
	\begin{minipage}{0.3\textwidth}
		\includegraphics[trim= 1mm 6mm 10mm 8mm, clip,width=\textwidth]{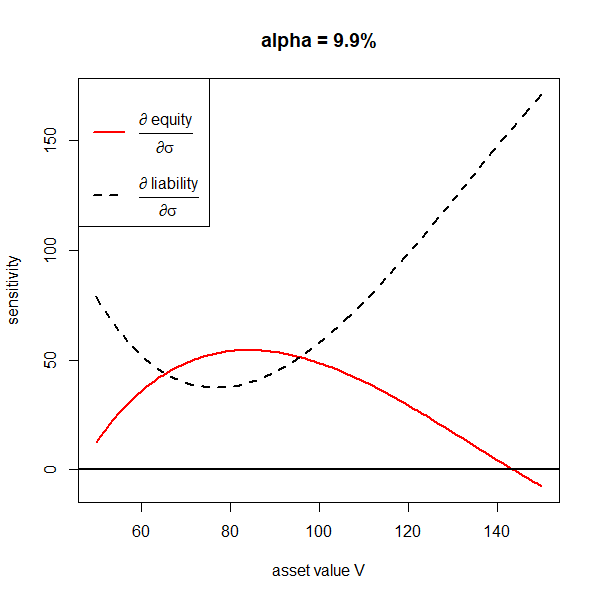}\\[1ex]
		
		\includegraphics[trim= 1mm 6mm 10mm 8mm, clip,width=\textwidth]{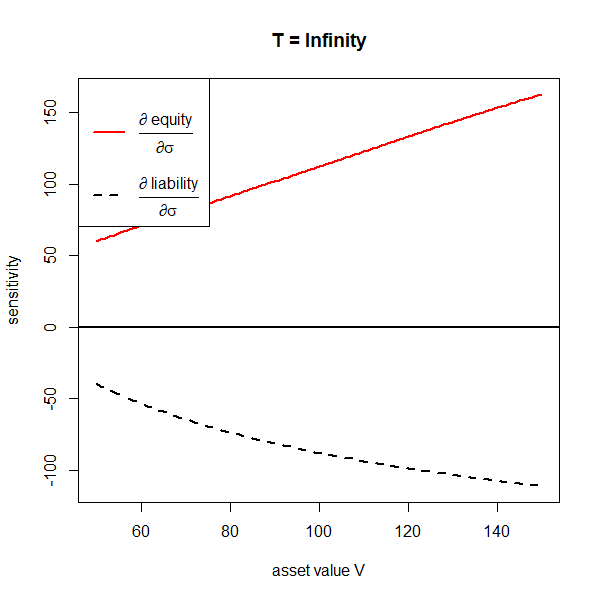}
	\end{minipage}
	\caption{Effect of an increase in the volatility on equity and liabilities when varying the contract duration (top row) and when varying the surplus participation rate (bottom row). The lines show the partial derivative with respect to the volatility.}
	\label{fig: assetsub}
\end{figure}

\section{Conclusion} \label{chapter: conclusion}

In this paper, we explained the capital structure of life insurance companies and the existence of hybrid contracts that combine participation and guarantee elements generalizing Leland's model to incorporate surplus participation. To this end, we derived formulas for the optimal bankruptcy-triggering value, the optimal participation rate, and the optimal guarantee rate. The numerical analysis demonstrated that the required assumptions are generally satisfied in most cases and that the optimal participation rate is particularly sensitive to changes in the the contract duration and the associated tax rate. Moreover, we showed that the asset substitution effect decreases when adding surplus participation.


\section*{Funding}

This research did not receive any specific grant from funding agencies in the public, commercial, or not-for-profit sectors.

%

\setcounter{section}{0}
\renewcommand{\thesection}{\Alph{section}}
\appendix
\section*{Appendix}

\section{Barrier options and mathematical details on the liability structure and the optimal rates} \label{app general infos}

In this section, we offer deeper insights into barrier options, and we provide  more details on the construction and mathematical foundations used to determine the liability structure and the optimal rates. 

\subsection{Barrier options} \label{subsection: barrier appendix}

The surplus participation component constructed in Subsection \ref{subsection: participating life insurance} is modeled as a so-called Down-and-Out-Call Option. Unlike traditional European options, barrier options are path-dependent, meaning their value depends on whether the underlying asset’s path reaches a pre-defined barrier, which in this case is the bankruptcy-triggering value, $V_B$. Barrier options are classified into two types: ``knock-in'' and ``knock-out'' options, see Hull \cite{hull2017options}. A ``knock-in'' option only pays out if the barrier is breached, while a ``knock-out'' option only pays if the barrier is not hit. Additionally, barrier options are further categorized as ``up'' or ``down'' depending on whether the barrier is above or below the initial asset value. In our framework, the value of the surplus participation is equivalent to a Down-and-Out Call option with a barrier at $V_B$ and a strike price of $k$, where the bankruptcy triggering value $V_B$ is lower than the initial insurance company value $V_0$. If the asset value hits the barrier $V_B$, bankruptcy is triggered, and all contracts terminate, meaning no further surplus participation will be paid. Notably, a Down-and-Out Call option is always cheaper than a standard Call option. The pricing formula for barrier options depends on whether the strike price is larger or smaller than the barrier. However, when the strike price equals the barrier, the pricing formulas for both cases coincide.

Now, returning to our setting: Let the barrier be represented by $V_B$ and the strike price by $k$. By Hull \cite{hull2017options}, for the asset value $V$, the values of a classical call option $c$, of a Down-and-Out Call option $c_{do}^{V_B \leq k}$, when the barrier is below the strike, and of a Down-and-Out Call option $c_{do}^{V_B \geq k}$, when the barrier is above the strike, all with maturity $T$ and dividend rate $\nu$, are given by the following formulas:
\begin{align}
	c(V_0,k,T) =&\, e^{-rT}\EX^\QQ [(V_T-k)_+] \notag \\
	=&\, V_0 e^{-\nu T} \Phi (d_1 (\textstyle\frac{V_0}{k},T)) - k e^{-rT} \Phi(d_2 (\textstyle\frac{V_0}{k},T)), \notag \\
	c_{do}^{V_B \leq k} (V_0,k,V_B,T) :=&\, e^{-rT}\EX^\QQ [(V_T-k)_+ \1_{\{\min_{s \in [0,T]} V_s \geq V_B\}}] \label{eq: cd0 vb<k formula} \\
	=&\, c(V_0,k,T) - V_0 e^{-\nu T} (\textstyle\frac{V_B}{V_0})^{2\lambda_1} \Phi (d_1 (\textstyle\frac{V_B^2}{V_0 k},T))+ke^{-rT} (\textstyle\frac{V_B}{V_0})^{2\lambda_1-2} \Phi (d_2 (\textstyle\frac{V_B^2}{V_0 k},T)), \notag \\
	c_{do}^{V_B \geq k} (V_0,k,V_B,T) :=&\, e^{-rT}\EX^\QQ [(V_T-k)_+ \1_{\{\min_{s \in [0,T]} V_s \geq V_B\}}] \label{eq: cd0 vb>k formula}\\
	=&\, V_0 \Phi (d_1 (\textstyle\frac{V_0}{V_B},T)) e^{-\nu T} - k e^{-rT} \Phi(d_2 (\textstyle\frac{V_0}{V_B},T)) - V_0 e^{-\nu T}(\textstyle\frac{V_B}{V_0})^{2\lambda_1} \Phi (d_1 (\textstyle\frac{V_B}{V_0},T)) \notag \\
	&+ke^{-rT} (\textstyle\frac{V_B}{V_0})^{2\lambda_1-2} \Phi (d_2 (\textstyle\frac{V_B}{V_0},T)), \notag
\end{align}
where $\Phi$ denotes the cumulative distribution function of a standard normal distribution and
\begin{align} \label{eq: def d12 lambda1}
	d_{1/2} (x,t) &= \dfrac{\ln x + (r - \nu \pm \frac{\sigma^2}{2})t}{\sigma \sqrt{t}}, &
	\lambda_1 &= \dfrac{r-\nu +\frac{\sigma^2}{2}}{\sigma^2}.
\end{align}
By substituting $V_B=k$, we find that $c_{do}^{V_B \leq k}$ and $c_{do}^{V_B \geq k}$ yield the same value when $V_B=k$. Thus, we can express this as:
\begin{align} \label{eq: def cdo}
	c_{do} (V_0,k,V_B,T) := \begin{cases}
		c_{do}^{V_B \leq k} (V_0,k,V_B,T) & \text{if $V_B \leq k$}, \\
		c_{do}^{V_B \geq k} (V_0,k,V_B,T) & \text{if $V_B > k$},
	\end{cases}
\end{align}
which is a continuous function in $V_B$. In particular, note that 
\begin{align*}
	c_{do} (V_0,k,V_B,T) = \EX^{\QQ} [(V_T-k)_+ \1_{\{\min_{s \in [0,T]} V_s \geq V_B\}}].
\end{align*}


\subsection{Mathematical details on the liability structure} \label{app liability}

In this brief paragraph, we provide explicit formulas for the functions $F^V$, $G^V$, $I_1^V$, and $I_2^V$ defined in Subsection \ref{subsection: liability} (for the proofs, we refer to Harrison \cite{harrison1990brownian}, Rubenstein and Reiner \cite{rubinstein1991breaking}, and Leland and Toft \cite{leland1996optimal}):
\begin{align}
	F^V(t) =&\, \Phi(-d_3(\tfrac{V}{V_B},t)) + (\tfrac{V_B}{V})^{2\lambda_2} \Phi(-d_4(\tfrac{V}{V_B},t)), \label{eq: def of F}\\
	G^V(t) =&\, (\tfrac{V_B}{V})^{\lambda_2-\lambda_3} \Phi (-d_5(\tfrac{V}{V_B},t)) + (\tfrac{V_B}{V})^{\lambda_2+\lambda_3} \Phi (-d_6(\tfrac{V}{V_B},t)), \label{eq: def of G} \\
	I_1^V (T)=&\, \tfrac{1}{rT} (G^V(T)-e^{-rT}F^V(T)), \label{eq: def of i1}\\
	I_2^V(T) =&\, \tfrac{1}{\lambda_3 \sigma \sqrt{T}} \left( (\tfrac{V_B}{V})^{\lambda_2-\lambda_3} \Phi(-d_5(\tfrac{V}{V_B},T))d_5(\tfrac{V}{V_B},T) - (\tfrac{V_B}{V})^{\lambda_2+\lambda_3} \Phi (-d_6(\tfrac{V}{V_B},T)) d_6(\tfrac{V}{V_B},T) \right), \label{eq: def of i2}
\end{align}
where 
\begin{align}
	d_{3/4} (x,t) &= \dfrac{\ln x \pm \lambda_2 \sigma^2 t}{\sigma \sqrt{t}}, &d_{5/6} (x,t) &= \dfrac{\ln x \pm \lambda_3 \sigma^2 t}{\sigma \sqrt{t}}, \notag \\
	\lambda_2 &= \dfrac{r-\nu-\tfrac{\sigma^2}{2}}{\sigma^2} (=\lambda_1-1), &\lambda_3 &= \dfrac{\sqrt{(\lambda_2 \sigma^2)^2+2r\sigma^2}}{\sigma^2}. \label{eq: def lambda23}
\end{align}

\subsection{Formulas for determining the optimal rates}

In this subsection, we provide detailed formulas for terms presented in the results of Chapter \ref{chapter: optimal rates}. The correctness of these formulas is demonstrated in the proofs of Theorem \ref{th: alpha*} and \ref{th: g*}. 

For the terms stated in Theorem \ref{th: alpha*}, where $V_B$ is expressed as a function of $\alpha$, we obtain:
\begin{align}
	V_B'(\alpha) =& \frac{\int_0^T \frac{\partial c_{do} (V,k,V_B(\alpha),t)}{\partial V} \big|_{V=V_B(\alpha)} \diff t - \tau_2 \int_0^\infty \frac{\partial  c_{do} (V,k,V_B(\alpha),t) }{\partial V} \big|_{V=V_B(\alpha)} \diff t}{\frac{1}{V_B^2(\alpha)} \Big( \frac{2(P-\frac{G}{r})A_1}{rT} + 2 \frac{G}{r} A_2 - \tau_1 \frac{G}{r}(\lambda_2+\lambda_3) \Big) + \tau_2 \alpha \int_0^\infty \partial c_{do}(t) \diff t - \alpha \int_0^T \partial c_{do}(t) \diff t}, \label{eq: formula dvb dalpha}\\
	\partial c_{do}(t) :=&\, \tfrac{\partial}{\partial V_B(\alpha)}[\tfrac{\partial  c_{do} (V,k,V_B(\alpha),t) }{\partial V} \big|_{V=V_B(\alpha)}] \notag\\
	=&\, \tfrac{2ke^{-rt}}{V_B^2(\alpha)} \Big(\lambda_2  \Phi(d_2(\min\{\frac{V_B(\alpha)}{k},1\},t)) + \frac{ \varphi(d_2(\min\{\frac{V_B(\alpha)}{k},1\},t))}{\sigma \sqrt{t}} \Big), \label{eq: partial cdo}
\end{align}
with the special case
\begin{align}
	V_B'(0) = (V_B(0))^2\frac{\scaleobj{1.333}{\int_{\scaleobj{0.75}{0}}^{\scaleobj{0.75}{T}}} \frac{\partial c_{do} (V,k,V_B(0),t)}{\partial V} \big|_{V=V_B(0)} \diff t-\tau_2 \scaleobj{1.333}{\int_{\scaleobj{0.75}{0}}^{\scaleobj{0.75}{\infty}}} \frac{\partial  c_{do} (V,k,V_B(0),t) }{\partial V} \big|_{V=V_B(0)} \diff t}{\Big( \frac{2(P-\frac{G}{r})A_1}{rT} + 2 \frac{G}{r} A_2 - \tau_1 \frac{G}{r}(\lambda_2+\lambda_3) \Big)}. \label{eq: partial vb0}
\end{align}

For the terms stated in Theorem \ref{th: g*}, with $V_B$ as a function of $g$, we find that $V_B(0)$ is the smallest solution of
\begin{align}
	0 =&\, 1 + \rho(\lambda_2+\lambda_3)+2(1-\rho)A_2 - \tfrac{2PA_1}{V_B(0) rT} + \tau_2 \alpha \int_0^\infty \tfrac{\partial  c_{do} (V,k,V_B(0),t) }{\partial V} \big|_{V=V_B(0)} \diff t \notag \\
	& - \alpha \textstyle\int_0^T \tfrac{\partial c_{do} (V,k,V_B(0),t)}{\partial V} \big|_{V=V_B(0)} \diff t.  \label{eq: vb0 g}
\end{align}
Furthermore, we obtain:
\begin{align}
	V_B'(g) =& \frac{\tfrac{T}{V_B(g) r} ( -\tfrac{2A_1}{rT} + 2A_2 - \tau_1(\lambda_2+\lambda_3))}{\frac{1}{V_B^2(g)} ( \frac{2(P-\frac{G}{r})A_1}{rT} + 2 \frac{G}{r} A_2 - \tau_1 \frac{G}{r}(\lambda_2+\lambda_3) ) + \tau_2 \alpha \int_0^\infty \partial c_{do}(t) \diff t - \alpha \int_0^T \partial c_{do}(t) \diff t}, \label{eq: formula dvb dg}\\
	\partial c_{do}(t) :=&\, \tfrac{\partial}{\partial V_B(g)}[\tfrac{\partial  c_{do} (V,k,V_B(g),t) }{\partial V} \big|_{V=V_B(g)}] \notag\\
	=&\, \tfrac{2ke^{-rt}}{(V_B(g))^2} \Big(\lambda_2  \Phi(d_2(\min\{\frac{V_B(g)}{k},1\},t)) + \frac{ \varphi(d_2(\min\{\frac{V_B(g)}{k},1\},t))}{\sigma \sqrt{t}} \Big), \label{eq: partial cdo g}
\end{align}
with the special case
\begin{align}
	V_B'(0) =& \frac{\tfrac{T}{V_B(0) r} ( -\tfrac{2A_1}{rT} + 2A_2 - \tau_1(\lambda_2+\lambda_3) )}{\tfrac{2PA_1}{(V_B(0))^2rT} + \alpha (\tau_2 \int_0^\infty \partial_0 c_{do}(t)  \diff t - \int_0^T \partial_0 c_{do}(t) \diff t)}, \label{eq: partial vb0 g} \\
	\partial_0 c_{do}(t) :=& \tfrac{\partial}{\partial V_B}[\tfrac{\partial  c_{do} (V,k,V_B,t) }{\partial V} \big|_{V=V_B}] \big|_{g=0} \notag \\
	=&\, \tfrac{2ke^{-rt}}{(V_B(0))^2} \Big(\lambda_2  \Phi(d_2(\min\{\tfrac{V_B(0)}{k},1\},t)) + \frac{ \varphi(d_2(\min\{\frac{V_B(0)}{k},1\},t))}{\sigma \sqrt{t}}\Big). \label{eq: partial cdo g0}
\end{align}

\section{Technical Lemmas} \label{lemmas}

In this section, we present and prove several technical lemmas that are used in the proofs of the theorems and propositions discussed in the main text. The proofs of these theorems and propositions can be found in Appendix \ref{proofs}.

\begin{lemma} \label{lem: int cdo < infty}
	It holds that $|\int_0^\infty c_{do} (V,k,V_B,t) \diff t| < \infty$ for all $V>0$, $k\geq0$, and $V_B\geq0$. 
\end{lemma}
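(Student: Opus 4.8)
The plan is to show that $c_{do}(V,k,V_B,t)$ is bounded above by an integrable function of $t$ and is nonnegative, so that the integral converges absolutely. The nonnegativity is immediate from the representation $c_{do}(V,k,V_B,t) = e^{-rt}\EX^{\QQ}[(V_t-k)_+ \1_{\{\min_{s\in[0,t]} V_s \geq V_B\}}] \geq 0$, which also shows the integrand equals its absolute value, so it suffices to bound $\int_0^\infty c_{do}(V,k,V_B,t)\diff t$ from above.

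First I would use the elementary domination $c_{do}(V,k,V_B,t) \leq e^{-rt}\EX^{\QQ}[(V_t-k)_+] \leq e^{-rt}\EX^{\QQ}[V_t]$, dropping both the barrier indicator and the positive-part truncation. Since under $\QQ$ the asset value has drift $r-\nu$, we have $\EX^{\QQ}[V_t] = V e^{(r-\nu)t}$. Hence $c_{do}(V,k,V_B,t) \leq V e^{-rt} e^{(r-\nu)t} = V e^{-\nu t}$. Because $\nu > 0$ by assumption, the bound $\int_0^\infty V e^{-\nu t}\diff t = V/\nu < \infty$ gives the claim directly, with the additional benefit of an explicit bound $|\int_0^\infty c_{do}(V,k,V_B,t)\diff t| \leq V/\nu$.

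Alternatively, if one prefers to argue via the closed-form expressions \eqref{eq: cd0 vb<k formula} and \eqref{eq: cd0 vb>k formula} (say, to keep the argument self-contained with the explicit formulas actually used elsewhere), I would bound each of the four-or-so terms appearing there: each is a product of $V$ or $k$, a discount factor $e^{-\nu t}$ or $e^{-rt}$, a power $(V_B/V)^{2\lambda_1}$ or $(V_B/V)^{2\lambda_1-2}$ that does not depend on $t$, and a factor $\Phi(\cdot)\in[0,1]$. Thus each term is bounded in absolute value by a constant times $e^{-\nu t}$ or $e^{-rt}$, both of which are integrable on $[0,\infty)$ since $\nu, r > 0$. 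Summing the (finitely many) contributions yields a finite bound. The case split $V_B \leq k$ versus $V_B > k$ is handled uniformly this way.

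I do not expect a genuine obstacle here: the only subtlety is making sure that when $V_B > k$ the bound is still $O(e^{-\nu t})$ (note the leading term $V\Phi(d_1(\tfrac{V}{V_B},t))e^{-\nu t}$ already carries $e^{-\nu t}$), and that the barrier indicator being dropped only makes the expectation larger, which is valid since $(V_t-k)_+ \geq 0$. The first, probabilistic, argument is the cleanest and I would present it as the main proof, optionally remarking that the explicit formulas give the same conclusion.
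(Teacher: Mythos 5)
Your main argument is exactly the paper's proof: drop the barrier indicator and the positive part, bound the integrand by $e^{-rt}\EX^{\QQ}[V_t] = V e^{-\nu t}$, and integrate to obtain the explicit bound $V/\nu$. The proposal is correct and matches the paper's approach.
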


\begin{myproof}
	It holds:
	\begin{align*}
		\left|\int_0^\infty c_{do} (V_0,k,V_B,t) \diff t \right| &= \int_0^\infty \EX^\QQ \left[e^{-rt}(V_t-k)_+ \1_{V_s \geq V_B \, \forall s \in [0,t]} \right] \diff t  \\
		&\leq \int_0^\infty e^{-rt} \EX^\QQ \left[V_t \right] \diff t 
		= \int_0^\infty e^{-rt} V_0 e^{(r-\nu)t} \diff t = V_0 \int_0^\infty e^{-\nu t} \diff t = \tfrac{V_0}{\nu} < \infty,
	\end{align*}
	since $\nu>0$ and where we could drop the absolute value, as everything is non-negative. We denoted $V=V_0$ in accordance with the notation in the main part.
\end{myproof}

\begin{lemma} \label{lemma: cdo c1} 
	It holds that $c_{do} (V,k,V_B,T)$ is continuously differentiable as a function of $V$.
\end{lemma}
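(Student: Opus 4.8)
The plan is to exploit the closed-form expressions for the barrier-option price rather than differentiating under the expectation. Fix $k\ge 0$, $V_B\ge 0$ and $T>0$; the statement concerns differentiability in the single variable $V$. The first thing I would note is that the case distinction in the definition \eqref{eq: def cdo} is made purely in terms of $V_B$ and $k$, not of $V$, so for the fixed parameters at hand $V\mapsto c_{do}(V,k,V_B,T)$ is given by exactly one of the two formulas \eqref{eq: cd0 vb<k formula}, \eqref{eq: cd0 vb>k formula}. It therefore suffices to check that each of those two expressions is continuously differentiable — indeed $C^\infty$ — as a function of $V_0=V$ on $(0,\infty)$; in particular no matching between the branches is needed.

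In the generic case $k>0$, $V_B>0$ I would argue as follows. Each expression in \eqref{eq: cd0 vb<k formula} and \eqref{eq: cd0 vb>k formula} is a finite sum of terms of the form $(\text{const})\cdot V_0^{a}\cdot(V_B/V_0)^{\beta}\cdot\Phi\bigl(d_{1/2}(x(V_0),T)\bigr)$ with $a\in\{0,1\}$, $\beta\in\{0,2\lambda_1,2\lambda_1-2\}$, and $x(V_0)\in\{V_0/k,\ V_B^2/(V_0 k),\ V_0/V_B,\ V_B/V_0\}$. For $V_0\in(0,\infty)$ each $x(V_0)$ is a strictly positive $C^\infty$ function of $V_0$; since $u\mapsto\ln u$ is $C^\infty$ on $(0,\infty)$, the map $V_0\mapsto d_{1/2}(x(V_0),T)$ defined in \eqref{eq: def d12 lambda1} is $C^\infty$, and composing with $\Phi\in C^\infty(\mathbb R)$ and multiplying by the $C^\infty$ functions $V_0\mapsto V_0^a$ and $V_0\mapsto(V_B/V_0)^{\beta}$ yields, via the chain and product rules, that $c_{do}^{V_B\le k}$ and $c_{do}^{V_B\ge k}$ are $C^\infty$ in $V$ on $(0,\infty)$. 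As a by-product, termwise differentiation recovers the expression \eqref{eq: dv cdo v=vb} for $\partial_V c_{do}|_{V=V_B}$.

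Finally I would dispose of the degenerate parameter values. If $V_B=0$, then under $\QQ$ the geometric Brownian motion $V$ stays a.s.\ strictly positive on $[0,T]$, so $\1_{\{\min_{s\in[0,T]}V_s\ge 0\}}=1$ a.s.\ and $c_{do}(V,k,0,T)=\EX^\QQ[(V_T-k)_+]=c(V,k,T)$, the Black--Scholes call, which is $C^\infty$ in $V>0$. If $k=0$ and $V_B>0$ then $V_B>k$, so $c_{do}=c_{do}^{V_B\ge k}$, and setting $k=0$ in \eqref{eq: cd0 vb>k formula} removes the two $k$-terms, leaving $V_0e^{-\nu T}\Phi(d_1(V_0/V_B,T))-V_0e^{-\nu T}(V_B/V_0)^{2\lambda_1}\Phi(d_1(V_B/V_0,T))$, which is again $C^\infty$ in $V_0>0$. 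If $k=V_B=0$ then $c_{do}(V,0,0,T)=\EX^\QQ[V_T]=Ve^{(r-\nu)T}$ is linear in $V$. In all cases $c_{do}$ is $C^1$ in $V$, proving the lemma.

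The argument is largely routine; the only place that needs a moment's care is the degenerate range $k=0$ or $V_B=0$, where the arguments of the logarithms in $d_{1/2}$ formally degenerate and one should fall back on the probabilistic representation of $c_{do}$ (as already used for Lemma \ref{lem: int cdo < infty}) instead of the closed-form expression. I do not expect any genuine obstacle beyond that.
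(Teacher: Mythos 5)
Your proof is correct for the statement as literally written, and it rests on the same basic mechanism as the paper's (smoothness of the closed-form Black--Scholes-type expressions \eqref{eq: cd0 vb<k formula} and \eqref{eq: cd0 vb>k formula} in $V$), but you take the opposite view of what the lemma is really about. You observe that the case split in \eqref{eq: def cdo} is governed by $V_B$ versus $k$, not by $V$, and conclude that ``no matching between the branches is needed''; the paper's proof, by contrast, consists almost entirely of that matching: it computes \eqref{eq: derivative cdo vb<k} and \eqref{eq: derivative cdo vb>k} and verifies that $\tfrac{\partial}{\partial V} c_{do}^{V_B \leq k}\big|_{V_B=k} = \tfrac{\partial}{\partial V} c_{do}^{V_B \geq k}\big|_{V_B=k}$. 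The reason the paper does this is that the lemma is used in contexts where $V_B$ varies and crosses $k$ --- the derivative formula \eqref{eq: dv cdo v=vb} is written with $\min\{\tfrac{V_B}{k},1\}$ precisely because of this matching, the smooth-pasting derivation differentiates $h(V)=E(V,V)$ so that the branch index moves with $V$, and Lemmas \ref{lemma: cdo unif bounded L1} and \ref{lemma: dv2 cdo bounded l1} need $\partial_V c_{do}|_{V=V_B}$ to be continuous in $V_B$ across $V_B=k$. So your argument proves the stated claim more cheaply, but it does not deliver the continuity of the derivative across the branch boundary that the rest of the paper actually relies on; you should add the one-line verification that the two derivative formulas agree at $V_B=k$ (it follows from your own termwise differentiation). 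On the other side of the ledger, your explicit treatment of the degenerate parameters $V_B=0$ and $k=0$ via the probabilistic representation is a genuine improvement: the closed forms do degenerate there (logarithms of $0$, powers $(V_B/V_0)^{2\lambda_1}$), and the paper's proof of this lemma is silent on those cases, deferring them to later lemmas.
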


\begin{myproof}
	It is evident from equations \eqref{eq: cd0 vb<k formula} and \eqref{eq: cd0 vb>k formula} that both $c_{do}^{V_B \leq k} (V,k,V_B,T)$ and $c_{do}^{V_B \geq k} (V,k,V_B,T)$ are continuously differentiable. Therefore, it suffices to verify whether the derivatives coincide at $V_B=k$. Differentiating equations \eqref{eq: cd0 vb<k formula} and \eqref{eq: cd0 vb>k formula} yields:
	\begin{align}
		\tfrac{\partial}{\partial V} c_{do}^{V_B \leq k} (V,k,V_B,T) =&\, e^{-\nu T} \Phi(d_1(\tfrac{V}{k},T))+Ve^{-\nu T} \varphi(d_1(\tfrac{V}{k},T)) \frac{1}{\sigma \sqrt{T} V} -ke^{-rT} \varphi(d_2(\tfrac{V}{k},T))\frac{1}{\sigma \sqrt{T} V} \notag \\
		&- e^{-\nu T} V_B^{2\lambda_1} (1-2\lambda_1) V^{-2\lambda_1} \Phi(d_1(\tfrac{V_B^2}{Vk},T)) - V e^{-\nu T} (\tfrac{V_B}{V})^{2\lambda_1} \varphi(d_1(\tfrac{V_B^2}{Vk},T)) \frac{-1}{\sigma \sqrt{T} V} \notag\\
		&+ ke^{-rT} V_B^{2\lambda_1-2}  (2-2\lambda_1) V^{1-2\lambda_1} \Phi(d_2(\tfrac{V_B^2}{Vk},T)) \notag\\
		&+ ke^{-rT} (\tfrac{V_B}{V})^{2\lambda_1-2} \varphi(d_2(\tfrac{V_B^2}{Vk},T)) \frac{-1}{\sigma \sqrt{T} V} \notag\\
		=&\, e^{-\nu T} \Phi(d_1(\tfrac{V}{k},T))+ \frac{e^{-\nu T} \varphi(d_1(\frac{V}{k},T))}{\sigma \sqrt{T}} - \frac{ke^{-rT} \varphi(d_2(\frac{V}{k},T))}{\sigma \sqrt{T} V} \notag\\
		&- (1-2\lambda_1)e^{-\nu T} (\tfrac{V_B}{V})^{2\lambda_1} \Phi(d_1(\tfrac{V_B^2}{Vk},T)) + \frac{e^{-\nu T} (\frac{V_B}{V})^{2\lambda_1} \varphi(d_1(\frac{V_B^2}{Vk},T))}{\sigma \sqrt{T}} \notag\\
		&+ (2-2\lambda_1) \tfrac{ke^{-rT}}{V} (\tfrac{V_B}{V})^{2\lambda_1-2} \Phi(d_2(\tfrac{V_B^2}{Vk},T)) - \frac{ke^{-rT} (\frac{V_B}{V})^{2\lambda_1-2} \varphi(d_2(\frac{V_B^2}{Vk},T))}{\sigma \sqrt{T} V}, \label{eq: derivative cdo vb<k}\\
		\tfrac{\partial}{\partial V} c_{do}^{V_B \geq k} (V,k,V_B,T) =&\, e^{-\nu T} \Phi(d_1(\tfrac{V}{V_B},T))+Ve^{-\nu T} \varphi(d_1(\tfrac{V}{V_B},T)) \frac{1}{\sigma \sqrt{T} V} -ke^{-rT} \varphi(d_2(\tfrac{V}{V_B},T))\frac{1}{\sigma \sqrt{T} V} \notag\\
		&- e^{-\nu T} V_B^{2\lambda_1} (1-2\lambda_1) V^{-2\lambda_1} \Phi(d_1(\tfrac{V_B}{V},T)) - V e^{-\nu T} (\tfrac{V_B}{V})^{2\lambda_1} \varphi(d_1(\tfrac{V_B}{V},T)) \frac{-1}{\sigma \sqrt{T} V} \notag\\
		&+ ke^{-rT} V_B^{2\lambda_1-2}  (2-2\lambda_1) V^{1-2\lambda_1} \Phi(d_2(\tfrac{V_B}{V},T)) \notag\\
		&+ ke^{-rT} (\tfrac{V_B}{V})^{2\lambda_1-2} \varphi(d_2(\tfrac{V_B}{V},T)) \frac{-1}{\sigma \sqrt{T} V} \notag\\
		=&\, e^{-\nu T} \Phi(d_1(\tfrac{V}{V_B},T))+ \frac{e^{-\nu T} \varphi(d_1(\frac{V}{V_B},T))}{\sigma \sqrt{T}} - \frac{ke^{-rT} \varphi(d_2(\frac{V}{V_B},T))}{\sigma \sqrt{T} V} \notag\\
		&- (1-2\lambda_1)e^{-\nu T} (\tfrac{V_B}{V})^{2\lambda_1} \Phi(d_1(\tfrac{V_B}{V},T)) + \frac{e^{-\nu T} (\frac{V_B}{V})^{2\lambda_1} \varphi(d_1(\frac{V_B}{V},T))}{\sigma \sqrt{T}}\notag \\
		&+ (2-2\lambda_1) \tfrac{ke^{-rT}}{V} (\tfrac{V_B}{V})^{2\lambda_1-2} \Phi(d_2(\tfrac{V_B}{V},T)) - \frac{ke^{-rT} (\frac{V_B}{V})^{2\lambda_1-2} \varphi(d_2(\frac{V_B}{V},T))}{\sigma \sqrt{T} V}.\label{eq: derivative cdo vb>k}
	\end{align}
	Hence, we obtain $\tfrac{\partial}{\partial V} c_{do}^{V_B \leq k} (V,k,V_B,T) \big|_{V_B=k} = \tfrac{\partial}{\partial V} c_{do}^{V_B \geq k} (V,k,V_B,T) \big|_{V_B=k}$ and the claim follows.
\end{myproof}

\begin{lemma} \label{lemma: varphi bounded}
	Let $a,b,y \in \Real$, and $x>0$. Then, the function $h(x):=\frac{\varphi(a \ln(x)+b)}{x^y}$ is continuous on $(0,\infty)$, and there exists a constant $C>0$ such that $|h(x)| \leq C$ for all $x \in (0,\infty)$. Moreover, we have the following limits: $\lim_{x \to 0} |h(x)| = 0$ and $\limsup_{x \to \infty} |h(x)| \leq C$.
\end{lemma}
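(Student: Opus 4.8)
The plan is to reduce everything to a single real substitution and then invoke elementary properties of the Gaussian density $\varphi$. First I would set $u := \ln(x)$, so that as $x$ ranges over $(0,\infty)$, $u$ ranges over all of $\mathbb{R}$; under this substitution $h(x) = \varphi(au+b)\, e^{-yu}$. Writing out $\varphi(t) = \tfrac{1}{\sqrt{2\pi}} e^{-t^2/2}$, we get
\[
  h(x) = \tfrac{1}{\sqrt{2\pi}} \exp\!\left(-\tfrac{1}{2}(au+b)^2 - yu\right),
\]
and the exponent is a quadratic in $u$ whose leading coefficient is $-a^2/2 \le 0$. If $a \ne 0$ this quadratic is bounded above (it has a finite maximum), so $|h(x)|$ is bounded by a finite constant $C$; if $a = 0$ then $\varphi(b)$ is just a constant and the exponent reduces to the linear term $-yu$, which is \emph{not} bounded above on $\mathbb{R}$ — so I should first check whether the claim is even true in that degenerate case, or whether the intended hypothesis implicitly excludes $a=0$. (In all the applications in the paper $a = 1/(\sigma\sqrt{t}) \ne 0$, so the case $a=0$ is vacuous; I would note this and either assume $a\ne 0$ or handle $a=0$ separately — there $h$ is constant in $x$ only if additionally $y=0$, otherwise the stated bound fails, so the lemma as literally stated needs $a\ne 0$, which I will assume.)

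Granting $a \ne 0$: continuity of $h$ on $(0,\infty)$ is immediate since $\ln$, $\varphi$, and $x \mapsto x^{-y}$ are all continuous there and $x^{-y}$ is finite for $x>0$. For the uniform bound, complete the square in the exponent: $-\tfrac12(au+b)^2 - yu = -\tfrac{a^2}{2}\bigl(u + \tfrac{b}{a} + \tfrac{y}{a^2}\bigr)^2 + R$ for an explicit constant $R$ depending only on $a,b,y$. Hence $|h(x)| \le \tfrac{1}{\sqrt{2\pi}} e^{R} =: C$ for all $x \in (0,\infty)$, which gives the global bound with an explicit $C$.

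For the limiting behaviour: as $x \to 0^+$ we have $u = \ln x \to -\infty$, and the dominant term in the exponent is $-\tfrac{a^2}{2} u^2 \to -\infty$ (the linear term $-yu$ cannot compete with the quadratic one since $a \ne 0$), so $h(x) \to 0$; thus $\lim_{x\to 0}|h(x)| = 0$. As $x \to \infty$ we have $u \to +\infty$, and again the quadratic term forces $h(x) \to 0$, so in fact $\limsup_{x\to\infty}|h(x)| = 0 \le C$, which is stronger than the stated claim. I would phrase the $x\to\infty$ conclusion exactly as in the statement ($\limsup \le C$) to match later usage.

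The only real obstacle here is the potential degeneracy $a = 0$ (and, relatedly, making sure the constant $C$ I produce is the same one used in the two limit assertions so the statement is internally consistent). Everything else is a one-line substitution plus completing the square; no genuinely hard analysis is involved. I would therefore open the proof by recording the substitution $u=\ln x$, immediately complete the square, read off $C$, and then dispatch the two limits by observing that the quadratic term $-\tfrac{a^2}{2}u^2$ dominates.
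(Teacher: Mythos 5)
Your proof is correct and takes a genuinely different, more elementary route than the paper. The paper works with $h$ directly as a function of $x$: it computes $h'$ and $h''$ using $\varphi'(t)=-t\varphi(t)$, shows that $h$ has a unique critical point $x^{*}=e^{-y/a^{2}-b/a}$ which is a local (hence global) maximum, sets $C:=h(x^{*})$, and then proves $\lim_{x\to 0}|h(x)|=0$ by a contradiction argument combined with l'H\^opital's rule. Your substitution $u=\ln x$ followed by completing the square in the exponent collapses all of this into a single computation: the maximizer you obtain is the same point (namely $u^{*}=-y/a^{2}-b/a$), you get an explicit constant $C=e^{R}/\sqrt{2\pi}$, and both limit statements follow at once from the dominance of the $-\tfrac{a^{2}}{2}u^{2}$ term, in fact yielding the stronger conclusion $\lim_{x\to\infty}|h(x)|=0$ rather than merely $\limsup_{x\to\infty}|h(x)|\leq C$. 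Your remark about the degenerate case $a=0$ is a genuine catch: as literally stated the lemma fails for $a=0$ and $y\neq 0$, and the paper's own proof also silently assumes $a\neq 0$ (it divides by $a$ and $a^{2}$, and the conclusion $h''(x^{*})<0$ degenerates when $a=0$); since in every application $a$ is of the form $\pm 1/(\sigma\sqrt{t})$ and hence nonzero, the restriction is harmless, but you are right to record it explicitly.
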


\begin{myproof}
	The continuity of $h$ follows directly from the continuity of $\varphi$. Next, we show the existence of a constant $C>0$ such that $|h(x)| \leq C$. First, we note that $h(x)>0$ for all $x \in (0,\infty)$. If we can show that $h$ has a unique extreme point $x^* \in (0,\infty)$ which is a local maximum, then $x^*$ will also be the global maximum of $h$ in $(0,\infty)$, and the main claim follows with $C:= h(x^*)$.
	
	To prove the existence of a unique extreme point, we compute the first two derivatives of $h$:
	\begin{align*}
		h'(x) =&\, \dfrac{-(a \ln(x)+b)\varphi(a \ln(x)+b) \frac{a}{x}x^y-\varphi(a \ln(x)+b)yx^{y-1}}{x^{2y}} \\
		=&\, - (a^2 \ln(x)+ab+y) \dfrac{\varphi(a \ln(x)+b)}{x^{y+1}}, \\
		h''(x) =&\, -\frac{a^2}{x} \cdot \dfrac{\varphi(a \ln(x)+b)}{x^{y+1}} \\
		&- (a^2 \ln(x)+ab+y) \dfrac{-(a \ln(x)+b)\varphi(a \ln(x)+b) \frac{a}{x}x^{y+1}-\varphi(a \ln(x)+b)(y+1)x^{y}}{x^{2y+2}} \\
		=&\, \dfrac{\varphi(a \ln(x)+b)}{x^{y+2}} \left( -a^2+ (a^2 \ln(x)+ab+y)(a^2 \ln(x)+ab+y+1) \right),
	\end{align*}
	where we used that $\varphi'(x) = -x \varphi(x)$. Setting $h'(x) \overset{!}{=} 0$ is equivalent to the equation $(a^2 \ln(x)+ab+y)=0$ since $\varphi(\cdot)>0$ and $x>0$. Solving this equation for $x$ yields the unique solution $x^* = e^{-\frac{y}{a^2}-\frac{b}{a}}>0$. Thus, we have a unique extreme point. Plugging $x^*$ into $h''$, we obtain:
	\begin{align*}
		h''(x^*) &= \dfrac{\varphi(a \ln(x^*)+b)}{(x^*)^{y+2}} \left( -a^2+ (a^2 (-\tfrac{y}{a^2}-\tfrac{b}{a})+ab+y)(a^2 (-\tfrac{y}{a^2}-\tfrac{b}{a})+ab+y+1) \right) \\
		&= \dfrac{\varphi(a \ln(x^*)+b)}{(x^*)^{y+2}} \left( -a^2 + (-y-ab+ab+y)(-y-ab+ab+y+1) \right) \\
		&= - a^2 \dfrac{\varphi(a \ln(x^*)+b)}{(x^*)^{y+2}} < 0,
	\end{align*}
	since $\varphi(\cdot)>0$ and $x^*>0$. Therefore, $x^*$ is the unique extreme point and a maximum, implying that it is the global maximum of $h$. Consequently, the main claim follows. 
	
	It remains to show that $\lim_{x \to 0} |h(x)| = 0$ (since $\limsup_{x \to \infty} |h(x)| \leq C$ follows directly from the first part). We already know from the first part that $\limsup_{x \to 0} |h(x)| \leq C$. Now, we show that $\lim_{x \to 0} |h(x)| = 0$ by contradiction. Assume that there exists a sequence $x_n \xrightarrow{n \to \infty} 0$ such that $\lim_{n \to \infty} h(x_n) = c \in [-C,C]\backslash\{0\}$. Applying l'H{\^o}pital's rule and using the fact that $\varphi'(x)=-x\varphi(x)$ in the second equation, we get:
	\begin{align*}
		c &= \lim_{n \to \infty} \frac{\varphi(a \ln(x_n)+b)}{x_n^y} \\
		&= \lim_{n \to \infty} \frac{-(a \ln(x_n)+b)\varphi(a \ln(x_n)+b) \frac{a}{x_n}}{y x_n^{y-1}} \\
		&=  \lim_{n \to \infty} \tfrac{-a}{y} (a \ln(x_n)+b) h(x_n) = - \tfrac{a}{y} c \cdot \lim_{n \to \infty} (a \ln(x_n)+b) = \infty,
	\end{align*}
	since $x_n \xrightarrow{n \to \infty} 0$. This leads to a contradiction, and the claim is proven.
\end{myproof}

\begin{lemma} \label{lemma: varphi ln 0}
	Let $a,b,y \in \Real$, and $x>0$. Define $h(x):=\frac{\ln(x)\varphi(a \ln(x)+b)}{x^y}$. Then, we get that $\lim_{x \to 0} |h(x)| = 0$.
\end{lemma}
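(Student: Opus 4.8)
The plan is to reduce the two-sided indeterminacy ($\ln x\to-\infty$ while $\varphi(a\ln x+b)\to 0$) to a clean one-dimensional limit by substituting $u=-\ln x$, so that $x=e^{-u}$ and $x\to 0^+$ corresponds to $u\to+\infty$. Using $\varphi(z)=\tfrac{1}{\sqrt{2\pi}}e^{-z^2/2}$, the symmetry $\varphi(-z)=\varphi(z)$, and $x^{-y}=e^{yu}$, one rewrites, for $x\in(0,1)$,
\begin{align*}
	|h(x)|=|\ln x|\,\varphi(a\ln x+b)\,x^{-y}=\frac{u}{\sqrt{2\pi}}\exp\!\Big(-\tfrac{a^2}{2}u^2+(ab+y)u-\tfrac{b^2}{2}\Big),
\end{align*}
so that the claim becomes the statement that this expression tends to $0$ as $u\to+\infty$.

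To carry this out I would first note that in every place where this lemma is applied the coefficient of $\ln$ inside $\varphi$ equals $\tfrac{1}{\sigma\sqrt{t}}\neq 0$, so we may assume $a\neq 0$. (If $a=0$ then $|h(x)|=\varphi(b)\,u\,e^{yu}$, which tends to $0$ precisely when $y<0$, the only regime in which the expression $x^{-y}$ is itself decaying.) For $a\neq 0$ the quadratic term $-\tfrac{a^2}{2}u^2$ dominates the affine and constant terms, so there is $u_0$ with $-\tfrac{a^2}{2}u^2+(ab+y)u-\tfrac{b^2}{2}\le-\tfrac{a^2}{4}u^2$ for all $u\ge u_0$; hence $|h(x)|\le\tfrac{u}{\sqrt{2\pi}}e^{-a^2u^2/4}$ for all $x$ sufficiently close to $0$. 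Since $u\,e^{-cu^2}\to 0$ as $u\to\infty$ for any $c>0$ (polynomial growth is dominated by Gaussian decay), this gives $\lim_{x\to 0}|h(x)|=0$.

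There is no genuine obstacle here; the only point requiring care is exactly the $0\cdot\infty$ indeterminacy, which the elementary estimate above resolves. If one prefers to stay closer to the style used in the proof of Lemma~\ref{lemma: varphi bounded}, one can instead argue by contradiction: assuming $h(x_n)\to c\neq 0$ along some $x_n\to 0$ and applying l'H\^opital's rule to $\tfrac{|\ln x|\,\varphi(a\ln x+b)}{x^{y}}$ together with $\varphi'(z)=-z\varphi(z)$ again produces a factor diverging like $(\ln x_n)^2$, contradicting $|c|<\infty$; but the substitution argument above is shorter and avoids the case distinction on the sign of $y$.
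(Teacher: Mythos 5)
Your proof is correct and follows essentially the same route as the paper's: substitute $z=\ln x$ (you use $u=-\ln x$) and observe that the quadratic Gaussian decay in the exponent dominates both the linear term from $x^{-y}$ and the polynomial factor $|\ln x|$. Your explicit treatment of the $a=0$ case is a welcome refinement — the paper's one-line justification that the exponent tends to $-\infty$ tacitly assumes $a\neq 0$ (or $y<0$), and as you note the statement would actually fail for $a=0$, $y\ge 0$, though $a=\tfrac{1}{\sigma\sqrt{t}}\neq 0$ in every application.
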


\begin{myproof}
	We get with the substitution $z=\ln(x)$:
	\begin{align*}
		\lim_{x \to 0} |h(x)| = \lim_{z \to -\infty} |{\tfrac{z\varphi(a z+b)}{e^{yz}}}|  = \lim_{z \to -\infty} |z e^{-yz} {\tfrac{1}{2\pi}} e^{-\frac{1}{2}(az+b)^2}| = {\tfrac{1}{2\pi}} \lim_{z \to -\infty} |z| e^{-\frac{1}{2}(az+b)^2-yz} = 0,
	\end{align*}
	since $\lim_{z \to -\infty} -\frac{1}{2}(az+b)^2-yz = -\infty$. 
\end{myproof}

\begin{lemma} \label{lemma: phi bounded}
	Let $a,y,x>0$, and $b \in \Real$. Then, the function $h(x):=\Phi(-a \ln(x)+b)x^{y}$ is continuous on $(0,\infty)$ and there exists a constant $C>0$ such that $|h(x)| \leq C$ for all $x \in (0,\infty)$. Furthermore, we have the following properties: $\lim_{x \to 0} |h(x)| = 0$ and $\limsup_{x \to \infty} |h(x)| \leq C$. Additionally, we consider the functions $\tilde{h}(x):=\frac{\Phi(-a \ln(x)+b)}{x^{y}}$ and $\hat{h}(x):=\frac{\Phi(a \ln(x)+b)}{x^{y}}$. For these functions, it holds that $\lim_{x \to \infty} |\tilde{h}(x)| = 0$ and $\lim_{x \to 0} |\hat{h}(x)| = 0$.
\end{lemma}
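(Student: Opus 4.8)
First I would observe that $h$ is non-negative, since $\Phi\ge0$ and $x^y>0$, so $|h|=h$ throughout, and that continuity on $(0,\infty)$ is immediate from the continuity of $\Phi$, $\ln$, and $x\mapsto x^y$. The plan is to establish the two boundary limits of $h$ and then deduce boundedness by a compactness argument — the standard fact that a continuous function on $(0,\infty)$ possessing finite limits at both endpoints is bounded and attains its supremum.

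For the limit at $0$, I would use that $a>0$ forces $-a\ln x+b\to+\infty$, so $\Phi(-a\ln x+b)\to1$, while $x^y\to0$ since $y>0$; the product tends to $0$. For the limit at $\infty$ the key input is the elementary Gaussian tail estimate $1-\Phi(t)\le\tfrac12 e^{-t^2/2}$ for $t\ge0$: writing $t:=a\ln x-b$, positive for large $x$, one obtains $h(x)\le\tfrac12\exp\!\big(y\ln x-\tfrac12(a\ln x-b)^2\big)$, whose exponent is a downward-opening quadratic in $\ln x$ and hence tends to $-\infty$. Thus $h(x)\to0$ as $x\to\infty$, which in particular gives $\limsup_{x\to\infty}|h(x)|=0\le C$ for whatever constant $C$ we produce next. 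A cleaner but equivalent alternative — paralleling Lemma~\ref{lemma: varphi bounded} — would be to differentiate $h$, show it has a single interior critical point that is a maximum, and take $C$ to be the value there; I would prefer the compactness route.

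Having the two limits, I would extend $h$ continuously to the compactification $[0,\infty]$ by $h(0):=0$ and $h(\infty):=0$; the extension is continuous on a compact set, hence bounded, and I set $C:=\max h<\infty$ (note $C\ge h(1)=\Phi(b)>0$). This yields all three assertions of the first half simultaneously. For $\tilde h(x)=\Phi(-a\ln x+b)/x^y$ the limit as $x\to\infty$ is immediate, since the numerator lies in $[0,1]$ while $x^y\to\infty$. For $\hat h(x)=\Phi(a\ln x+b)/x^y$ as $x\to0^+$ I would reuse the Gaussian tail bound: with $s:=-a\ln x-b\to+\infty$ as $x\to0^+$, one has $\Phi(a\ln x+b)=1-\Phi(s)\le\tfrac12 e^{-s^2/2}$ and $x^{-y}=e^{(y/a)(s+b)}$, so $\hat h(x)$ is bounded above by $\tfrac12\exp\!\big(-\tfrac12 s^2+\tfrac{y}{a}s+\tfrac{yb}{a}\big)\to0$ as $s\to\infty$. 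The whole argument is routine; the only place requiring a little care is bookkeeping the signs in the logarithmic substitutions so that the quadratic Gaussian term genuinely dominates the polynomial factor $x^{\pm y}$ — exactly the same point as in the substitution $z=\ln x$ used in Lemma~\ref{lemma: varphi ln 0}.
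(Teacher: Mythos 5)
Your proof is correct, but it takes a different technical route from the paper's at the two nontrivial steps. For $\limsup_{x\to\infty}h(x)$ and $\lim_{x\to 0}\hat h(x)$, the paper rewrites the expressions as quotients and applies the (generalized) rule of de l'H\^opital, reducing everything to the bound on $\varphi(a\ln(x)+b)/x^{y}$ established in Lemma \ref{lemma: varphi bounded}; you instead invoke the Gaussian tail estimate $1-\Phi(t)\le\tfrac12 e^{-t^{2}/2}$ and observe that the exponent $y\ln x-\tfrac12(a\ln x-b)^{2}$ is a downward-opening quadratic in $\ln x$. Your route is more elementary and self-contained (no appeal to Lemma \ref{lemma: varphi bounded} or to the generalized l'H\^opital rule), and it actually yields the stronger conclusion $\lim_{x\to\infty}h(x)=0$ rather than merely $\limsup_{x\to\infty}|h(x)|\le C$; combined with $\lim_{x\to 0}h(x)=0$, your compactification argument then delivers boundedness cleanly, with the harmless observation $C\ge\Phi(b)>0$. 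The paper's approach has the organizational advantage of funneling all such estimates through the single Lemma \ref{lemma: varphi bounded}, which is reused elsewhere, but your argument is a valid and arguably tidier substitute for this particular lemma. The sign bookkeeping in your substitutions ($t=a\ln x-b$ at infinity, $s=-a\ln x-b$ at zero) checks out, and the remaining assertions (continuity, the limit of $h$ at $0$, the limit of $\tilde h$ at infinity) are handled identically in both proofs.
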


\begin{myproof}
	First, note that $h$ is obviously continuous in $(0,\infty)$, and therefore bounded (by a possibly larger $C>0$) if there exists a $C>0$ such that $\lim_{x \to 0} |h(x)| \leq C$ and $\lim_{x \to \infty} |h(x)| \leq C$. Moreover, we observe that $h(\cdot)>0$, $\tilde{h}(\cdot)>0$, and $\hat{h}(\cdot)>0$ on $(0,\infty)$. Now, we obtain for a suitable $C>0$:
	\begin{align*}
		\lim_{x \to 0} |h(x)| &= \lim_{x \to 0} \Phi(-a \ln(x)+b) x^y = 0, \\
		\limsup_{x \to \infty} |h(x)| &= \limsup_{x \to \infty} \frac{\Phi(-a \ln(x)+b)}{x^{-y}} \\
		&\leq \limsup_{x \to \infty} \frac{\varphi(-a \ln(x)+b) \frac{-a}{x}}{-yx^{-y-1}} = \tfrac{a}{y} \limsup\limits_{x \to \infty} \dfrac{\varphi(-a \ln(x)+b)}{x^{-y}} \leq C,
	\end{align*}
	since $a,y>0$ and $\lim_{z \to -\infty} \Phi(z)=0$. Note that we used the generalized rule of de l'H{\^o}pital (see, e.g., Picone \cite{picone1929sul}) in the second step of the second limit and Lemma \ref{lemma: varphi bounded} in the last step. Thus, the first claim follows. For the second claim, we have:
	\begin{align*}
		\lim_{x \to \infty} |\tilde{h}(x)| &= \lim_{x \to \infty} \frac{\Phi(-a \ln(x)+b)}{x^{y}} = 0, \\
		\lim_{x \to 0} |\hat{h}(x)| &= \lim_{x \to 0} \frac{\Phi(a \ln(x)+b)}{x^{y}} = \lim_{x \to 0} \frac{\varphi(a \ln(x)+b) \tfrac{a}{x}}{yx^{y-1}} = \tfrac{a}{y} \displaystyle \lim_{x \to 0} \frac{\varphi(a \ln(x)+b)}{x^{y}} = 0,
	\end{align*}
	since $a,y>0$ and $|\Phi(\cdot)| \leq1$. Note that we used again the rule of de l'H{\^o}pital in the second step of the second limit, and Lemma \ref{lemma: varphi bounded} in the last step. Thus, the second claim follows.
\end{myproof}

\begin{lemma} \label{lemma: cdo bounded L1}
	For all $k \geq 0$, $V_B \geq 0$, and $T>0$, there exists a constant $C>0$ such that $|\frac{\partial}{\partial V} c_{do} (V,k,V_B,T)| \leq C (e^{-\nu T} + e^{-rT})$ for all $V \geq V_B$.
\end{lemma}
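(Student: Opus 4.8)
The plan is to bound the derivative $\frac{\partial}{\partial V} c_{do}(V,k,V_B,T)$ uniformly in $V \geq V_B$ by working directly from the explicit expressions \eqref{eq: derivative cdo vb<k} and \eqref{eq: derivative cdo vb>k} derived in the proof of Lemma \ref{lemma: cdo c1}. Each of these two expressions is a finite sum of terms, and every term has one of a few recognizable shapes: a factor $e^{-\nu T}$ or $e^{-rT}$ times either (i) a bounded function $\Phi(\cdot)$ of a logarithmic argument, possibly multiplied by a power $(\tfrac{V_B}{V})^{c}$ of the ratio, or (ii) a normal density $\varphi(\cdot)$ of a logarithmic argument divided by $\sigma\sqrt{T}$ and possibly multiplied by a power of the ratio and by $\tfrac{k}{V}$. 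The idea is to absorb all the $V$-dependence into the ratio $x := \tfrac{V_B}{V} \in (0,1]$ (for the regime $V \geq V_B$) together with the factor $\tfrac{k}{V} = \tfrac{k}{V_B}\, x$, and then invoke the uniform-boundedness lemmas already proved.

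First I would treat the two cases $V_B \leq k$ and $V_B > k$ separately, using \eqref{eq: derivative cdo vb<k} in the former and \eqref{eq: derivative cdo vb>k} in the latter; by Lemma \ref{lemma: cdo c1} the two agree at $V_B = k$, so a single constant $C$ covering both regimes suffices. Within each case I would go term by term. For the $\Phi$-terms that carry no extra $\tfrac{1}{V}$ factor (the first $\Phi$ term, and the terms $(1-2\lambda_1) e^{-\nu T}(\tfrac{V_B}{V})^{2\lambda_1}\Phi(\cdots)$), boundedness follows from $|\Phi| \leq 1$ together with Lemma \ref{lemma: phi bounded}, which handles exactly functions of the form $\Phi(\pm a\ln x + b)\, x^{y}$ on $(0,\infty)$; since here $x = \tfrac{V_B}{V} \in (0,1]$ and the arguments of $d_1, d_2$ are logarithms of $\tfrac{V_B}{V}$, $\tfrac{V_B^2}{Vk}$, or $\tfrac{V_B}{k}$—all expressible via $\ln x$ plus constants depending only on $T$, $k$, $V_B$—the lemma applies. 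For the $\varphi$-terms, I would use Lemma \ref{lemma: varphi bounded}, which gives a uniform bound $|\tfrac{\varphi(a\ln x + b)}{x^{y}}| \leq C$; the extra $\tfrac{k e^{-rT}}{V}$ factors become $\tfrac{k e^{-rT}}{V} = \tfrac{e^{-rT}}{V_B}\cdot k\, x \leq \tfrac{k\, e^{-rT}}{V_B}$ when $V_B>0$ (and vanish, or are handled by continuity in $V_B$, when $V_B = 0$), so each such term is $\leq C'\, e^{-rT}$ for a constant depending only on $k, V_B, T, \sigma$.

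Collecting the finitely many bounds—each of the form (const)$\cdot e^{-\nu T}$ or (const)$\cdot e^{-rT}$, with constants independent of $V$ throughout the range $V \geq V_B$—and setting $C$ to be the sum of all the constants, yields $|\tfrac{\partial}{\partial V} c_{do}(V,k,V_B,T)| \leq C(e^{-\nu T} + e^{-rT})$, which is the claim. The only genuinely delicate point is bookkeeping: ensuring that the argument of every $\Phi$ and $\varphi$ appearing in \eqref{eq: derivative cdo vb<k}–\eqref{eq: derivative cdo vb>k} really does fit the template "$\pm a \ln x + b$ with $a, y > 0$ (or $a,b,y$ real, as appropriate)" of Lemmas \ref{lemma: varphi bounded} and \ref{lemma: phi bounded}, and that the sign of the exponent on $(\tfrac{V_B}{V})$ is compatible with the restriction $x \in (0,1]$ so that no term blows up as $V \to \infty$ (equivalently $x \to 0$); here the limit statements $\lim_{x\to 0}|h(x)| = 0$ in those lemmas are what rule out divergence. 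The edge case $V_B = 0$ (where $x = 0$ identically and several terms degenerate) should be noted but is immediate since then $c_{do}$ reduces to the plain call, whose $V$-derivative is $e^{-\nu T}\Phi(d_1) \in [0, e^{-\nu T}]$.
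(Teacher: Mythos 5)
Your proposal is correct and follows essentially the same route as the paper: both arguments separate the case $V_B=0$ (where $c_{do}$ reduces to a plain call) from $V_B>0$, and then bound the explicit expressions \eqref{eq: derivative cdo vb<k}--\eqref{eq: derivative cdo vb>k} term by term using $|\Phi|\leq 1$ together with Lemmas \ref{lemma: varphi bounded} and \ref{lemma: phi bounded} to control the potentially divergent power-times-$\Phi$ and power-times-$\varphi$ combinations as $V\to\infty$.
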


\begin{myproof}
	For this proof, we need to consider two cases: when $V_B=0$ and when $V_B>0$. Let us start with $V_B=0$. Then, the Down-and-Out Call option becomes a classical Call option as $V_t>0$ for all $t \geq 0$ by the non-negativity of Geometric Brownian Motions, i.e., $c_{do}^{V_B \leq k} (V,k,V_B,T) = c(V,k,T)$. Then, the result follows analogously to the proof of Lemma \ref{lemma: cdo c1}:
	\begin{align*}
		\tfrac{\partial}{\partial V} c(V,k,T) =&\, e^{-\nu T} \Phi(d_1(\tfrac{V}{k},T))+ \frac{e^{-\nu T} \varphi(d_1(\frac{V}{k},T))}{\sigma \sqrt{T}} - \frac{ke^{-rT} \varphi(d_2(\frac{V}{k},T))}{\sigma \sqrt{T} V}.
	\end{align*} 
	Thus, Lemma \ref{lemma: varphi bounded} and $|\Phi(\cdot)| \leq 1$ immediately provide the desired result.
	
	Now, consider the case when $V_B>0$. From equations \eqref{eq: derivative cdo vb<k} and \eqref{eq: derivative cdo vb>k}, we can conclude that the claim will hold if we show that each individual term in equations \eqref{eq: derivative cdo vb<k} and \eqref{eq: derivative cdo vb>k} is bounded by $C(e^{-\nu T} + e^{-rT})$ for some constant $C>0$, both as $V \to V_B$ and as $V \to +\infty$. Once we establish this, it follows that $|\frac{\partial}{\partial V} c_{do} (V,k,V_B,T)| \leq C (e^{-\nu T} + e^{-rT})$, where $C>0$ may be larger, but still finite. For the case $V \to V_B >0$, the result holds immediately. For $V \to + \infty$, each individual term is bounded by $C(e^{-\nu T} + e^{-rT})$, with $C>0$, due to the fact that $|\Phi(\cdot)|\leq 1$, Lemma \ref{lemma: varphi bounded}, or Lemma \ref{lemma: phi bounded} since $\ln(\tfrac{V_B^2}{Vk}) = 2\ln(V_B)-\ln(V)-\ln(k)$ (resp. $\ln(\tfrac{V_B}{V}) = \ln(V_B)-\ln(V)$). 
\end{myproof}

\begin{lemma} \label{lemma: cdo limit of derivative to 0 and infty}
	For all $V_B \geq 0$, and $T>0$, it holds:
	\begin{enumerate}[(a)]
		\item $\lim_{V_B \to \infty} (\frac{\partial}{\partial V} c_{do} (V,k,V_B,T) \big|_{V=V_B}) = e^{-\nu T}(2\lambda_1 \Phi(d_1(1,T))+ \frac{2}{\sigma \sqrt{T}} \varphi(d_1(1,T)))$ for all $k \geq 0$,
		\item $\lim_{V_B \to 0} (\frac{\partial}{\partial V} c_{do} (V,k,V_B,T) \big|_{V=V_B}) = 0$ for all $k>0$,
		\item\label{lemma, item: k=0 limit} $\frac{\partial}{\partial V} c_{do} (V,0,V_B,T) \big|_{V=V_B} = e^{-\nu T}(2\lambda_1 \Phi(d_1(1,T))+ \frac{2}{\sigma \sqrt{T}} \varphi(d_1(1,T)))$.
	\end{enumerate}
\end{lemma}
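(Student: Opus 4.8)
The plan is to reduce all three items to the closed forms of $\tfrac{\partial}{\partial V}c_{do}$ obtained in the proof of Lemma~\ref{lemma: cdo c1}, namely \eqref{eq: derivative cdo vb<k} in the regime $V_B\leq k$ and \eqref{eq: derivative cdo vb>k} in the regime $V_B>k$, to evaluate these at $V=V_B$, and then to pass to the relevant limit term by term.

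The key observation is what setting $V=V_B$ does to \eqref{eq: derivative cdo vb>k}: every argument of $\Phi$ and $\varphi$ becomes $d_1(1,T)$ or $d_2(1,T)$, and each power $(\tfrac{V_B}{V})^{2\lambda_1}$, $(\tfrac{V_B}{V})^{2\lambda_1-2}$ becomes $1$, so that
\[
\tfrac{\partial}{\partial V}c_{do}^{V_B\geq k}(V,k,V_B,T)\big|_{V=V_B}
= e^{-\nu T}\Bigl(2\lambda_1\Phi(d_1(1,T))+\tfrac{2}{\sigma\sqrt T}\varphi(d_1(1,T))\Bigr)
+\tfrac{ke^{-rT}}{V_B}R(T),
\]
with $R(T):=(2-2\lambda_1)\Phi(d_2(1,T))-\tfrac{2}{\sigma\sqrt T}\varphi(d_2(1,T))$ a bounded, $V_B$-free remainder; here the factor $2\lambda_1$ arises from $1-(1-2\lambda_1)$ when the two $\Phi(d_1(1,T))$-contributions are combined. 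Item (a) is now immediate: for fixed $k\geq0$ we are in the regime $V_B>k$ once $V_B$ is large, and letting $V_B\to\infty$ kills the $\tfrac{k}{V_B}R(T)$ term, leaving exactly the claimed constant. Item~\ref{lemma, item: k=0 limit} is the same identity read at $k=0$ (where $V_B>0=k$): the remainder term then vanishes for every $V_B>0$, so the derivative equals that constant identically.

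For item (b) I would use \eqref{eq: derivative cdo vb<k} with $k>0$ fixed; setting $V=V_B$ collapses the arguments $\tfrac{V_B^2}{Vk}$ to $\tfrac{V_B}{k}$, so the expression is a finite sum of terms of the four shapes $e^{-\nu T}\Phi(d_1(\tfrac{V_B}{k},T))$, $\tfrac{e^{-\nu T}}{\sigma\sqrt T}\varphi(d_1(\tfrac{V_B}{k},T))$, $\tfrac{ke^{-rT}}{\sigma\sqrt T\,V_B}\varphi(d_2(\tfrac{V_B}{k},T))$ and $\tfrac{ke^{-rT}}{V_B}\Phi(d_2(\tfrac{V_B}{k},T))$, with constant coefficients built from $\lambda_1$. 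As $V_B\to0$ the variable $x:=\tfrac{V_B}{k}\to0$, hence $d_i(x,T)\to-\infty$ (since $d_i(x,T)=\tfrac{\ln x+(r-\nu\pm\sigma^2/2)T}{\sigma\sqrt T}$), so the first two shapes tend to $0$ because $\Phi(-\infty)=0$ and $\varphi(\pm\infty)=0$. For the two $\tfrac1{V_B}$-scaled shapes I would write $\tfrac1{V_B}=\tfrac1{k\,x}$ and recognise them as $\tfrac1k\cdot\tfrac{\varphi(a\ln x+b)}{x}$ and $\tfrac1k\cdot\tfrac{\Phi(a\ln x+b)}{x}$ with $a=\tfrac1{\sigma\sqrt T}>0$; both tend to $0$ as $x\to0$, by Lemma~\ref{lemma: varphi bounded} and by the $\hat h$-statement of Lemma~\ref{lemma: phi bounded}, respectively. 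Hence every term tends to $0$, which is the claim of (b).

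The only mildly delicate point is item (b): one has to be sure the $\tfrac1{V_B}$-scaled contributions vanish rather than blow up, which is precisely what Lemmas~\ref{lemma: varphi bounded} and~\ref{lemma: phi bounded} were designed to provide; everything else is bookkeeping of the $\lambda_1$-coefficients, together with the observation that $d_1(1,T)$ and $d_2(1,T)$ carry no $V_B$-dependence so that (a) and item~\ref{lemma, item: k=0 limit} reduce to the same constant.
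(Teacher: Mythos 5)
Your proposal is correct and follows essentially the same route as the paper: evaluate the two closed-form derivatives from Lemma \ref{lemma: cdo c1} at $V=V_B$ (collapsing the arguments to $d_i(\tfrac{V_B}{k},T)$ resp.\ $d_i(1,T)$ and combining the $\Phi(d_1)$-coefficients into $2\lambda_1$), then let the $\tfrac{k}{V_B}$-remainder vanish for (a) and (c), and handle the $\tfrac{1}{V_B}$-scaled terms in (b) via Lemmas \ref{lemma: varphi bounded} and \ref{lemma: phi bounded}. This matches the paper's argument step for step, including the identification of the delicate point in (b).
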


\begin{myproof}
	First, we obtain from \eqref{eq: derivative cdo vb<k} and \eqref{eq: derivative cdo vb>k}:
	\begin{align}
		\tfrac{\partial}{\partial V} c_{do}^{V_B \leq k} (V,k,V_B,T) \big|_{V=V_B} =&\, e^{-\nu T} \Phi(d_1(\tfrac{V_B}{k},T))+ \frac{e^{-\nu T} \varphi(d_1(\frac{V_B}{k},T))}{\sigma \sqrt{T}} - \frac{ke^{-rT} \varphi(d_2(\frac{V_B}{k},T))}{\sigma \sqrt{T} V_B} \notag \\
		&- (1-2\lambda_1)e^{-\nu T} \Phi(d_1(\tfrac{V_B}{k},T)) + \frac{e^{-\nu T} \varphi(d_1(\frac{V_B}{k},T))}{\sigma \sqrt{T}} \notag \\
		&+ (2-2\lambda_1) \tfrac{ke^{-rT}}{V_B} \Phi(d_2(\tfrac{V_B}{k},T)) - \frac{ke^{-rT} \varphi(d_2(\frac{V_B}{k},T))}{\sigma \sqrt{T} V_B}, \notag \\
		=&\, 2 \lambda_1 e^{-\nu T} \Phi(d_1(\tfrac{V_B}{k},T))+ \frac{2 e^{-\nu T} \varphi(d_1(\frac{V_B}{k},T))}{\sigma \sqrt{T}} - \frac{2ke^{-rT} \varphi(d_2(\frac{V_B}{k},T))}{\sigma \sqrt{T} V_B} \notag \\
		&+ (2-2\lambda_1) \tfrac{ke^{-rT}}{V_B} \Phi(d_2(\tfrac{V_B}{k},T)), \label{eq: dv cdo< v=vb} \\
		\tfrac{\partial}{\partial V} c_{do}^{V_B \geq k} (V,k,V_B,T) \big|_{V=V_B} =&\, e^{-\nu T} \Phi(d_1(1,T))+ \tfrac{e^{-\nu T} \varphi(d_1(1,T))}{\sigma \sqrt{T}} - \frac{ke^{-rT} \varphi(d_2(1,T))}{\sigma \sqrt{T} V_B} \notag\\
		&- (1-2\lambda_1)e^{-\nu T} \Phi(d_1(1,T)) + \tfrac{e^{-\nu T} \varphi(d_1(1,T))}{\sigma \sqrt{T}} \notag\\
		&+ (2-2\lambda_1) \tfrac{ke^{-rT}}{V_B} \Phi(d_2(1,T)) - \frac{ke^{-rT} \varphi(d_2(1,T))}{\sigma \sqrt{T} V_B} \notag \\
		=&\, 2\lambda_1 e^{-\nu T} \Phi(d_1(1,T))+ \tfrac{2e^{-\nu T} \varphi(d_1(1,T))}{\sigma \sqrt{T}} - \frac{2ke^{-rT} \varphi(d_2(1,T))}{\sigma \sqrt{T} V_B} \notag\\
		&+ (2-2\lambda_1) \tfrac{ke^{-rT}}{V_B} \Phi(d_2(1,T)). \label{eq: dv cdo> v=vb}
	\end{align}
	Now, we conclude for the proof of parts (a) and (b):
	\begin{align*}
		\lim_{V_B \to \infty} \tfrac{\partial}{\partial V} c_{do} (V,k,V_B,T) \big|_{V=V_B} =&\, \lim_{V_B \to \infty} \tfrac{\partial}{\partial V} c_{do}^{V_B \geq k} (V,k,V_B,T) \big|_{V=V_B} \\
		=&\, 2 \lambda_1 e^{-\nu T} \Phi(d_1(1,T))+ \tfrac{2e^{-\nu T} \varphi(d_1(1,T))}{\sigma \sqrt{T}} - 0 + 0,\\
		=&\, e^{-\nu T} \left( 2 \lambda_1 \Phi(d_1(1,T)) + \tfrac{2}{\sigma \sqrt{T}} \varphi(d_1(1,T)) \right), \\
		\lim_{V_B \to 0} \tfrac{\partial}{\partial V} c_{do} (V,k,V_B,T) \big|_{V=V_B} =&\, \lim_{V_B \to 0} \tfrac{\partial}{\partial V} c_{do}^{V_B \leq k} (V,k,V_B,T) \big|_{V=V_B} \\
		=&\, 2 \lambda_1 e^{-\nu T} \Phi(d_1(0,T))+ \tfrac{2e^{-\nu T} \varphi(d_1(0,T))}{\sigma \sqrt{T}} - \tfrac{2ke^{-rT}}{\sigma \sqrt{T}} \cdot \lim\limits_{V_B \to 0} \tfrac{\varphi(d_2(\frac{V_B}{k},T))}{V_B} \\
		&+ (2-2\lambda_1) ke^{-rT} \cdot \lim\limits_{V_B \to 0} \tfrac{\Phi(d_2(\frac{V_B}{k},T))}{V_B} \\
		=&\, 0,
	\end{align*}
	due to $d_1(0,T)=-\infty$, $\lim_{z \to -\infty} \varphi(z)=0$, $\lim_{z \to -\infty} \Phi(z)=0$, Lemma \ref{lemma: varphi bounded}, and Lemma \ref{lemma: phi bounded}. Note that the first step (i.e., using the formula of the Down-and-Out Call option for $V_B \geq k$ (resp. $V_B \leq k$) when taking the limit $V_B \to \infty$ (resp. $V_B \to 0$)) is valid because $k>0$. 
	
	For part (c), when $k = 0$, it holds since $V_B \geq 0$: 
	\begin{align*}
		\tfrac{\partial}{\partial V} c_{do} (V,0,V_B,T) \big|_{V=V_B} =&\, \tfrac{\partial}{\partial V} c_{do}^{V_B \geq k} (V,0,V_B,T) \big|_{V=V_B} \\
		=&\, e^{-\nu T} \Phi(d_1(1,T))+ \tfrac{e^{-\nu T} \varphi(d_1(1,T))}{\sigma \sqrt{T}} - 0 \\
		&- (1-2\lambda_1)e^{-\nu T} \Phi(d_1(1,T)) + \tfrac{e^{-\nu T} \varphi(d_1(1,T))}{\sigma \sqrt{T}} + 0 - 0 \\
		=&\, e^{-\nu T}(2\lambda_1 \Phi(d_1(1,T))+ \tfrac{2}{\sigma \sqrt{T}} \varphi(d_1(1,T))). \qedhere
	\end{align*} 
\end{myproof}

\begin{lemma} \label{lemma: cdo unif bounded L1}
	For all $k \geq 0$, and $T>0$, there exists a constant $C>0$ such that \linebreak $|\frac{\partial}{\partial V} c_{do} (V,k,V_B,T)| \big|_{V=V_B} \leq C (e^{-\nu T} + e^{-rT})$ for all $V=V_B \in (0,\infty)$.
\end{lemma}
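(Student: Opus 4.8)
\textbf{Proof strategy for Lemma \ref{lemma: cdo unif bounded L1}.}
The plan is to start from the two explicit boundary evaluations \eqref{eq: dv cdo< v=vb} and \eqref{eq: dv cdo> v=vb} of $\frac{\partial}{\partial V} c_{do}(V,k,V_B,T)\big|_{V=V_B}$ and to bound each of their finitely many summands separately by a fixed multiple of either $e^{-\nu T}$ or $e^{-rT}$, uniformly in $V_B$. By the definition \eqref{eq: def cdo} of $c_{do}$, for fixed $k\ge 0$ this boundary value is given by \eqref{eq: dv cdo< v=vb} when $V_B\in(0,k]$ and by \eqref{eq: dv cdo> v=vb} when $V_B\in(k,\infty)$, so I would split the argument along this dichotomy, treating the degenerate subcase $k=0$ (where $V_B>k$ for every $V_B>0$) directly via Lemma \ref{lemma: cdo limit of derivative to 0 and infty}, part (c), which already expresses the value as a fixed constant times $e^{-\nu T}$. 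Since $T$ is fixed throughout, $\lambda_1$, $\tfrac{1}{\sigma\sqrt{T}}$, and the numbers $d_1(1,T)$, $d_2(1,T)$ are fixed finite quantities, and one uses $0\le\Phi\le 1$ together with $\varphi\le(2\pi)^{-1/2}$.

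On $(k,\infty)$ the estimate is immediate: in \eqref{eq: dv cdo> v=vb} the arguments of $\Phi$ and $\varphi$ are the $V_B$-independent constants $d_1(1,T)$, $d_2(1,T)$, so the first two summands are manifestly $\le C\,e^{-\nu T}$, while the remaining two carry the factor $\tfrac{k}{V_B}<1$ and are therefore $\le C\,e^{-rT}$. On $(0,k]$ I would substitute $x:=V_B/k\in(0,1]$, turning the dangerous factor $\tfrac{k}{V_B}$ into $\tfrac{1}{x}$ and writing $d_2(V_B/k,T)=a\ln x+b$ with $a=\tfrac{1}{\sigma\sqrt{T}}>0$ and $b\in\Real$ read off from \eqref{eq: def d12 lambda1}. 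The first two summands of \eqref{eq: dv cdo< v=vb} are again trivially $\le C\,e^{-\nu T}$. The third summand carries $e^{-rT}$ and the factor $\tfrac{1}{x}\,\varphi(d_2(x,T))$, which equals the function $h$ of Lemma \ref{lemma: varphi bounded} with $y=1$ and is therefore bounded on $(0,\infty)$; the fourth carries $e^{-rT}$ and the factor $\tfrac{1}{x}\,\Phi(d_2(x,T))$, which equals the function $\hat h$ of Lemma \ref{lemma: phi bounded} with $y=1$, and $\hat h$ is continuous on $(0,1]$ with $\lim_{x\to 0}|\hat h(x)|=0$, hence bounded on $(0,1]$. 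Taking $C$ to be the maximum of the finitely many constants so produced then yields the bound for all $V=V_B\in(0,\infty)$.

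The only genuine obstacle is the blow-up of $\tfrac{k}{V_B}$ as $V_B\downarrow 0$ in the branch $V_B\le k$: a crude bound fails there, and one must exploit that this reciprocal growth is dominated by the Gaussian decay of $\varphi(d_2(V_B/k,T))$ and by $\Phi(d_2(V_B/k,T))\to 0$. This is exactly what Lemmas \ref{lemma: varphi bounded} and \ref{lemma: phi bounded} are designed to deliver, so I would phrase those two summands through the template functions $h$ and $\hat h$; once that identification is made, the remainder is routine bookkeeping over the three cases $k=0$, $\{k>0,\,V_B>k\}$, and $\{k>0,\,0<V_B\le k\}$, with no further estimates needed.
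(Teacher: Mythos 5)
Your proposal is correct and follows essentially the same route as the paper's proof: split into the cases $k=0$, $V_B>k$, and $0<V_B\le k$, bound the four summands of \eqref{eq: dv cdo< v=vb} and \eqref{eq: dv cdo> v=vb} term by term, and control the problematic factor $\tfrac{k}{V_B}$ near $V_B=0$ via Lemmas \ref{lemma: varphi bounded} and \ref{lemma: phi bounded}. The only cosmetic difference is that the paper argues via the limits $V_B\to 0$ and $V_B\to\infty$ plus continuity, whereas you give the uniform bound directly on each region; the substance is identical.
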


\begin{myproof}
	By equations \eqref{eq: dv cdo< v=vb} and \eqref{eq: dv cdo> v=vb}, the claim holds if we can show that each individual term is bounded by $C(e^{-\nu T} + e^{-rT})$ for some constant $C>0$ in the limits $V_B \to 0$ and $V_B \to +\infty$. Then, in total, we have $|\frac{\partial}{\partial V} c_{do} (V,k,V_B,T)| \big|_{V=V_B} \leq C (e^{-\nu T} + e^{-rT})$ with a possibly larger constant $C>0$. We now distinguish the cases $k=0$ and $k>0$. If $k=0$, we find that $|\frac{\partial}{\partial V} c_{do} (V,k,V_B,T)| \big|_{V=V_B}$ is constant, and hence bounded. Now, let $k>0$. For the limit as $V_B \to \infty$, the claim follows directly from equation \eqref{eq: dv cdo> v=vb} (since $V_B \geq k$ for $V_B$ sufficiently large). For the limit as $V_B \to 0$, we can assume that $|\frac{\partial}{\partial V} c_{do} (V,k,V_B,T)| \big|_{V=V_B} = |\frac{\partial}{\partial V} c_{do}^{V_B \leq k} (V,k,V_B,T)| \big|_{V=V_B}$, since $k>0$ and $V_B<k$ for $V_B$ sufficiently small. In this case, the claim follows from equation \eqref{eq: dv cdo< v=vb} using that $|\Phi(\cdot)| \leq 1$, $|\varphi(\cdot)| \leq 1$, and Lemmas \ref{lemma: varphi bounded} and \ref{lemma: phi bounded}.
\end{myproof}

\begin{lemma} \label{lemma: i1 i2 derivative}
	It holds:
	\begin{align*}
		\tfrac{\partial I_1^V(T)}{\partial V} \big|_{V=V_B} =&\, -\tfrac{2}{rTV_B} \Big( \tfrac{\lambda_2-\lambda_3}{2} + \lambda_3 \Phi (\lambda_3 \sigma \sqrt{T}) - \lambda_2 e^{-rT} \Phi(\lambda_2 \sigma \sqrt{T})\Big),\\
		\tfrac{\partial I_2^V(T)}{\partial V} \big|_{V=V_B} =&\, - \tfrac{2}{V_B} \Big( \tfrac{\lambda_2-\lambda_3}{2} - \tfrac{1}{2 \lambda_3 \sigma^2 T} + (\lambda_3+\tfrac{1}{\lambda_3 \sigma^2 T} ) \Phi (\lambda_3 \sigma \sqrt{T}) + \tfrac{\varphi (\lambda_3\sigma\sqrt{T})}{\sigma \sqrt{T}} \Big). 
	\end{align*}
\end{lemma}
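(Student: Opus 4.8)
The plan is to differentiate the closed-form expressions \eqref{eq: def of F}--\eqref{eq: def of i2} directly. Writing $x := V/V_B$, so that $\partial x/\partial V = 1/V_B$ and $x=1$ at $V=V_B$, each of $F^V$, $G^V$, $I_1^V$, $I_2^V$ is a sum of products of a power $x^{c}$, a normal c.d.f.\ factor $\Phi(-d_i(x,\cdot))$, and (only for $I_2^V$) the argument $d_i(x,\cdot)$ itself; these are elementary and $C^\infty$ in $x>0$, so differentiating and then setting $V=V_B$ is legitimate and no interchange-of-limits argument is needed. At $x=1$ one has $\ln x = 0$, hence $d_3(1,t)=\lambda_2\sigma\sqrt t$, $d_4(1,t)=-\lambda_2\sigma\sqrt t$, $d_5(1,t)=\lambda_3\sigma\sqrt t$, $d_6(1,t)=-\lambda_3\sigma\sqrt t$, every power $x^{c}$ equals $1$, and $\partial_x d_i(x,t)|_{x=1}=1/(\sigma\sqrt t)$. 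Together with the evenness of $\varphi$ and the reflection identity $\Phi(-z)=1-\Phi(z)$, this produces the stated forms.

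\textbf{The $I_1$-formula.} Rather than the integral definition, I would use \eqref{eq: def of i1}, i.e.\ $I_1^V(T)=\tfrac{1}{rT}\bigl(G^V(T)-e^{-rT}F^V(T)\bigr)$. Differentiating \eqref{eq: def of F} and \eqref{eq: def of G} by the product/chain rule and evaluating at $x=1$ yields $\partial_V F^V(t)|_{V=V_B}=-\tfrac{2}{V_B}\bigl(\lambda_2\Phi(\lambda_2\sigma\sqrt t)+\tfrac{\varphi(\lambda_2\sigma\sqrt t)}{\sigma\sqrt t}\bigr)$ and $\partial_V G^V(t)|_{V=V_B}=-\tfrac{2}{V_B}\bigl(\tfrac{\lambda_2-\lambda_3}{2}+\lambda_3\Phi(\lambda_3\sigma\sqrt t)+\tfrac{\varphi(\lambda_3\sigma\sqrt t)}{\sigma\sqrt t}\bigr)$, where the constant $\tfrac{\lambda_2-\lambda_3}{2}$ appears after collapsing $(\lambda_3-\lambda_2)\Phi(-\lambda_3\sigma\sqrt t)-(\lambda_2+\lambda_3)\Phi(\lambda_3\sigma\sqrt t)$ via $\Phi(-z)=1-\Phi(z)$. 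Plugging these into $\partial_V I_1^V(T)$ leaves the apparently extra term $\tfrac{1}{\sigma\sqrt T}\bigl(\varphi(\lambda_3\sigma\sqrt T)-e^{-rT}\varphi(\lambda_2\sigma\sqrt T)\bigr)$, which vanishes: the definition $\lambda_3=\sigma^{-2}\sqrt{(\lambda_2\sigma^2)^2+2r\sigma^2}$ in \eqref{eq: def lambda23} gives $\lambda_3^2\sigma^2=\lambda_2^2\sigma^2+2r$, hence $e^{-rT}\varphi(\lambda_2\sigma\sqrt T)=\varphi(\lambda_3\sigma\sqrt T)$. What remains is exactly $-\tfrac{2}{rTV_B}\bigl(\tfrac{\lambda_2-\lambda_3}{2}+\lambda_3\Phi(\lambda_3\sigma\sqrt T)-\lambda_2 e^{-rT}\Phi(\lambda_2\sigma\sqrt T)\bigr)$.

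\textbf{The $I_2$-formula.} Here I would differentiate \eqref{eq: def of i2} directly. Each of the two summands has the form $x^{c}\,\Phi(-d(x,T))\,d(x,T)$, so its $x$-derivative splits into three pieces: from the power, from the $\Phi$-factor, and from the trailing $d$-factor. Evaluating at $x=1$ using $d_5(1,T)=\lambda_3\sigma\sqrt T=-d_6(1,T)$ and $\partial_x d_{5,6}(x,T)|_{x=1}=1/(\sigma\sqrt T)$ and taking the difference of the two summands: the $\varphi$-terms produced by differentiating the two trailing $d$-factors combine into a single $-2\lambda_3\varphi(\lambda_3\sigma\sqrt T)$, which after division by $\lambda_3\sigma\sqrt T$ yields the $\tfrac{\varphi(\lambda_3\sigma\sqrt T)}{\sigma\sqrt T}$-term; the power-pieces combine (again via $\Phi(-z)=1-\Phi(z)$) into $\lambda_3\sigma\sqrt T\bigl((\lambda_3-\lambda_2)-2\lambda_3\Phi(\lambda_3\sigma\sqrt T)\bigr)$; and the pieces from differentiating the $\Phi$-arguments combine into $\tfrac{1}{\sigma\sqrt T}\bigl(1-2\Phi(\lambda_3\sigma\sqrt T)\bigr)$. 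Dividing the total by $\lambda_3\sigma\sqrt T$ and the chain-rule factor $1/V_B$, then collecting terms, gives the constant $\tfrac{\lambda_2-\lambda_3}{2}-\tfrac{1}{2\lambda_3\sigma^2 T}$, the $\varphi$-term $\tfrac{\varphi(\lambda_3\sigma\sqrt T)}{\sigma\sqrt T}$, and a $\Phi$-term whose coefficient is $\lambda_3+\tfrac{1}{\lambda_3\sigma^2 T}$ — precisely the second claimed identity.

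\textbf{Main obstacle.} The only real work is the term-by-term bookkeeping of these product/chain-rule expansions; the two structural facts that make everything collapse to the compact stated forms are the evenness/reflection identities for $\varphi$ and $\Phi$ and the relation $\lambda_3^2\sigma^2=\lambda_2^2\sigma^2+2r$ (which is what cancels the $\varphi$-contributions in the $I_1$-formula). There are no analytic subtleties.
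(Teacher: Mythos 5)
Your proposal is correct and follows essentially the same route as the paper's proof: differentiate $F^V$ and $G^V$ from their closed forms and combine via \eqref{eq: def of i1} for the $I_1$-identity, differentiate \eqref{eq: def of i2} directly for the $I_2$-identity, evaluate at $V=V_B$ using $d_{3/4}(1,T)=\pm\lambda_2\sigma\sqrt{T}$, $d_{5/6}(1,T)=\pm\lambda_3\sigma\sqrt{T}$ and the reflection/evenness identities, and cancel the $\varphi$-terms via $\lambda_3^2\sigma^2=\lambda_2^2\sigma^2+2r$; your intermediate expressions for $\partial_V F^V|_{V=V_B}$ and $\partial_V G^V|_{V=V_B}$ match the paper's. (One cosmetic slip: in the $I_2$-computation the $-2\lambda_3\varphi(\lambda_3\sigma\sqrt{T})$ contribution comes from differentiating the $\Phi$-factors, while the $\tfrac{1}{\sigma\sqrt{T}}(1-2\Phi(\lambda_3\sigma\sqrt{T}))$ piece comes from the trailing $d$-factors — you have the two attributions swapped, but both pieces are correctly accounted for and the result is unaffected.)
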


\begin{myproof}
	By the definition of $I_1$ (see \eqref{eq: def of i1}), we begin by differentiating $F$ and $G$ (which are defined in equations \eqref{eq: def of F} and \eqref{eq: def of G}, respectively):
	\begin{align*}
		\tfrac{\partial F^V(T)}{\partial V} =&\, \varphi(-d_3(\tfrac{V}{V_B},T))\tfrac{-1}{\sigma \sqrt{T} V} -2\lambda_2  V_B^{2\lambda_2} V^{-2\lambda_2-1} \Phi(-d_4(\tfrac{V}{V_B},T)) + (\tfrac{V_B}{V})^{2\lambda_2} \varphi(-d_4(\tfrac{V}{V_B},T))\tfrac{-1}{\sigma \sqrt{T} V} \\
		=&\, -\tfrac{1}{V} \Big( \tfrac{\varphi(-d_3(\frac{V}{V_B},T))}{\sigma \sqrt{T}} + 2 \lambda_2 (\tfrac{V_B}{V})^{2\lambda_2} \Phi(-d_4(\tfrac{V}{V_B},T)) + (\tfrac{V_B}{V})^{2\lambda_2} \tfrac{\varphi(-d_4(\frac{V}{V_B},T))}{\sigma \sqrt{T}}  \Big), \\
		\tfrac{\partial G^V(T)}{\partial V} =&\, (-\lambda_2+\lambda_3) V_B^{\lambda_2-\lambda_3} V^{-\lambda_2+\lambda_3-1} \Phi (-d_5(\tfrac{V}{V_B},T)) + (\tfrac{V_B}{V})^{\lambda_2-\lambda_3} \varphi (-d_5(\tfrac{V}{V_B},T)) \tfrac{-1}{\sigma \sqrt{T} V} \\
		&+ (-\lambda_2-\lambda_3) V_B^{\lambda_2+\lambda_3} V^{-\lambda_2-\lambda_3-1} \Phi (-d_6(\tfrac{V}{V_B},T)) + (\tfrac{V_B}{V})^{\lambda_2+\lambda_3} \varphi (-d_6(\tfrac{V}{V_B},T)) \tfrac{-1}{\sigma \sqrt{T} V} \\
		=&\, -\tfrac{1}{V} \Big( (\lambda_2-\lambda_3) (\tfrac{V_B}{V})^{\lambda_2-\lambda_3} \Phi (-d_5(\tfrac{V}{V_B},T)) + (\tfrac{V_B}{V})^{\lambda_2-\lambda_3} \tfrac{\varphi (-d_5(\frac{V}{V_B},T))}{\sigma \sqrt{T}} \\
		&\hspace{30pt}+ (\lambda_2+\lambda_3) (\tfrac{V_B}{V})^{\lambda_2+\lambda_3} \Phi (-d_6(\tfrac{V}{V_B},T)) + (\tfrac{V_B}{V})^{\lambda_2+\lambda_3} \tfrac{\varphi (-d_6(\frac{V}{V_B},T))}{\sigma \sqrt{T}} \Big).
	\end{align*}
	Hence, using equation \eqref{eq: def of i1} and the relationships $\varphi(-d_3(1,T)) = \varphi(-d_4(1,T)) = \varphi(\lambda_2 \sigma \sqrt{T})$, $\Phi(-d_4(1,T)) = \Phi(\lambda_2 \sigma \sqrt{T})$, $\varphi (-d_5(1,T))=\varphi (-d_6(1,T))=\varphi (\lambda_3\sigma\sqrt{T})$, and $1-\Phi (-d_5(1,T))=\Phi (-d_6(1,T))=\Phi (\lambda_3 \sigma \sqrt{T})$, we obtain:
	\begin{align*}
		\tfrac{\partial I_1^V(T)}{\partial V} \big|_{V=V_B} =&\, \tfrac{1}{rT} (\tfrac{\partial G^V(T)}{\partial V} \big|_{V=V_B}-e^{-rT}\tfrac{\partial F^V(T)}{\partial V} \big|_{V=V_B}) \\
		=&\, -\tfrac{1}{rTV_B} \Big( (\lambda_2-\lambda_3)(1-\Phi (\lambda_3 \sigma \sqrt{T})) + \tfrac{\varphi (\lambda_3\sigma\sqrt{T})}{\sigma \sqrt{T}} + (\lambda_2+\lambda_3) \Phi (\lambda_3 \sigma \sqrt{T}) + \tfrac{\varphi (\lambda_3\sigma\sqrt{T})}{\sigma \sqrt{T}} \\
		&\hspace{40pt}- \tfrac{e^{-rT}\varphi (\lambda_2\sigma\sqrt{T})}{\sigma \sqrt{T}} - 2 \lambda_2 e^{-rT} \Phi(\lambda_2 \sigma \sqrt{T}) - \tfrac{e^{-rT}\varphi (\lambda_2\sigma\sqrt{T})}{\sigma \sqrt{T}} \Big) \\
		=&\, -\tfrac{2}{rTV_B} \Big( \tfrac{\lambda_2-\lambda_3}{2} + \tfrac{\varphi (\lambda_3\sigma\sqrt{T})}{\sigma \sqrt{T}} + \lambda_3 \Phi (\lambda_3 \sigma \sqrt{T}) - \tfrac{e^{-rT}\varphi (\lambda_2\sigma\sqrt{T})}{\sigma \sqrt{T}} - \lambda_2 e^{-rT} \Phi(\lambda_2 \sigma \sqrt{T})\Big),
	\end{align*}
	which is the first claim since $\tfrac{\varphi (\lambda_3\sigma\sqrt{T})}{\sigma \sqrt{T}}= \tfrac{e^{-rT}\varphi (\lambda_2\sigma\sqrt{T})}{\sigma \sqrt{T}}$. Indeed, we have $\varphi (\lambda_3\sigma\sqrt{T}) = \tfrac{1}{\sqrt{2 \pi}} e^{-\frac{1}{2}\lambda_3^2 \sigma^2 T}$ and $e^{-rT}\varphi (\lambda_2\sigma\sqrt{T}) = \tfrac{1}{\sqrt{2 \pi}} e^{-\frac{1}{2}\lambda_2^2 \sigma^2 T - rT}$. Furthermore, by the definition of $\lambda_3$, we find that $\tfrac{1}{2}\lambda_3^2 \sigma^2 T = \tfrac{1}{2} \tfrac{\lambda_2^2 \sigma^4 +2r\sigma^2}{\sigma^4} \sigma^2 T = \tfrac{1}{2} \lambda_2^2 \sigma^2 T + rT$, which completes the proof of this claim.
	
	For $I_2$ (as defined in \eqref{eq: def of i2}), we get:
	\begin{align*}
		\lambda_3 \sigma \sqrt{T} \tfrac{\partial I_2^V(T)}{\partial V} =&\, (-\lambda_2+\lambda_3)V_B^{\lambda_2-\lambda_3} V^{-\lambda_2+\lambda_3-1} \Phi(-d_5(\tfrac{V}{V_B},T))d_5(\tfrac{V}{V_B},T) \\
		&+ (\tfrac{V_B}{V})^{\lambda_2-\lambda_3} \varphi(-d_5(\tfrac{V}{V_B},T))\tfrac{-d_5(\frac{V}{V_B},T)}{\sigma \sqrt{T} V} +(\tfrac{V_B}{V})^{\lambda_2-\lambda_3} \Phi(-d_5(\tfrac{V}{V_B},T))\tfrac{1}{\sigma\sqrt{T}V} \\
		&- (-\lambda_2-\lambda_3) V_B^{\lambda_2+\lambda_3} V^{-\lambda_2-\lambda_3-1} \Phi (-d_6(\tfrac{V}{V_B},T)) d_6(\tfrac{V}{V_B},T) \\
		&- (\tfrac{V_B}{V})^{\lambda_2+\lambda_3} \varphi (-d_6(\tfrac{V}{V_B},T)) \tfrac{-d_6(\frac{V}{V_B},T)}{\sigma \sqrt{T} V} - (\tfrac{V_B}{V})^{\lambda_2+\lambda_3} \Phi (-d_6(\tfrac{V}{V_B},T)) \tfrac{1}{\sigma \sqrt{T} V}.
	\end{align*}
	Thus, using the identities for $\varphi$ and $\Phi$ from above, along with $d_{5/6}(1,T) = \pm \lambda_3 \sigma \sqrt{T}$, we obtain:
	\begin{align*}
		\tfrac{\partial I_2^V(T)}{\partial V} \big|_{V=V_B} =&\, \tfrac{1}{\lambda_3 \sigma \sqrt{T} V_B} \Big( (-\lambda_2+\lambda_3) (1-\Phi (\lambda_3 \sigma \sqrt{T}))\lambda_3 \sigma \sqrt{T} -\varphi (\lambda_3\sigma\sqrt{T})\tfrac{\lambda_3 \sigma \sqrt{T}}{\sigma \sqrt{T}} + \tfrac{1-\Phi (\lambda_3 \sigma \sqrt{T})}{\sigma\sqrt{T}}\\
		&\hspace{44pt} - (\lambda_2+\lambda_3) \Phi (\lambda_3 \sigma \sqrt{T}) \lambda_3 \sigma \sqrt{T}- \varphi (\lambda_3\sigma\sqrt{T}) \tfrac{\lambda_3 \sigma \sqrt{T}}{\sigma \sqrt{T}} - \tfrac{\Phi (\lambda_3 \sigma \sqrt{T})}{\sigma \sqrt{T}} \Big) \\
		=&\, \tfrac{1}{V_B} \Big( -\lambda_2+\lambda_3 - 2 \lambda_3 \Phi (\lambda_3 \sigma \sqrt{T}) + \tfrac{1}{\lambda_3 \sigma^2 T} - \tfrac{2\Phi (\lambda_3 \sigma \sqrt{T})}{\lambda_3 \sigma^2 T} - \tfrac{2\varphi (\lambda_3\sigma\sqrt{T})}{\sigma \sqrt{T}} \Big) \\
		=&\, - \tfrac{2}{V_B} \Big( \tfrac{\lambda_2-\lambda_3}{2} - \tfrac{1}{2 \lambda_3 \sigma^2 T} + (\lambda_3+\tfrac{1}{\lambda_3 \sigma^2 T} ) \Phi (\lambda_3 \sigma \sqrt{T}) + \tfrac{\varphi (\lambda_3\sigma\sqrt{T})}{\sigma \sqrt{T}} \Big),
	\end{align*}
	which is the second claim.
\end{myproof}

\begin{lemma} \label{lemma: explicit calculation integral}
	It holds:
	\begin{enumerate}[(a)]
		\item $\int_0^T \tfrac{e^{-\nu t}}{\sqrt{t}} \varphi(\lambda_1\sigma\sqrt{t}) \diff t = \sqrt{\tfrac{1}{\lambda_1^2\sigma^2+2\nu}}(2 \Phi(\sqrt{\lambda_1^2\sigma^2+2\nu}\sqrt{T})-1)$,
		\item $\int_0^\infty \tfrac{e^{-\nu t}}{\sqrt{t}} \varphi(\lambda_1\sigma\sqrt{t}) \diff t = \sqrt{\tfrac{1}{\lambda_1^2\sigma^2+2\nu}}$,
		\item $\int_0^T e^{-\nu t} \Phi(\lambda_1\sigma\sqrt{t}) \diff t = \tfrac{1}{2\nu} - \tfrac{e^{-\nu T} \Phi(\lambda_1\sigma\sqrt{T})}{\nu} + \tfrac{\lambda_1 \sigma}{2 \nu} \sqrt{\tfrac{1}{\lambda_1^2\sigma^2+2\nu}}(2 \Phi(\sqrt{\lambda_1^2\sigma^2+2\nu}\sqrt{T})-1)$,
		\item $\int_0^\infty e^{-\nu t} \Phi(\lambda_1\sigma\sqrt{t}) \diff t = \tfrac{1}{2\nu} + \tfrac{\lambda_1 \sigma}{2 \nu} \sqrt{\tfrac{1}{\lambda_1^2\sigma^2+2\nu}}$.
	\end{enumerate}
\end{lemma}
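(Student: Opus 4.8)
The plan is to prove each of the four integral identities in Lemma~\ref{lemma: explicit calculation integral} by the same elementary device: in every integrand the variable $t$ appears only through $\sqrt{t}$ (inside $\varphi$ or $\Phi$) and through the exponential damping $e^{-\nu t}$, so the substitution $u = \sqrt{t}$ (equivalently $t = u^2$, $\diff t = 2u\,\diff u$) turns everything into a Gaussian-type integral on the half-line. First I would handle (a): after the substitution, $\int_0^T \frac{e^{-\nu t}}{\sqrt t}\varphi(\lambda_1\sigma\sqrt t)\,\diff t = 2\int_0^{\sqrt T} e^{-\nu u^2}\varphi(\lambda_1\sigma u)\,\diff u$, and since $\varphi(\lambda_1\sigma u) = \tfrac{1}{\sqrt{2\pi}}e^{-\lambda_1^2\sigma^2 u^2/2}$, the combined exponent is $-(\lambda_1^2\sigma^2/2 + \nu)u^2 = -\tfrac12(\lambda_1^2\sigma^2 + 2\nu)u^2$. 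Thus the integral is a constant multiple of $\int_0^{\sqrt T}\varphi\!\big(\sqrt{\lambda_1^2\sigma^2+2\nu}\,u\big)\,\diff u$, which evaluates to $\tfrac{1}{\sqrt{\lambda_1^2\sigma^2+2\nu}}\big(\Phi(\sqrt{\lambda_1^2\sigma^2+2\nu}\sqrt T) - \Phi(0)\big)$ by the fundamental theorem of calculus for $\Phi$; since $\Phi(0)=\tfrac12$ this gives exactly the claimed $\tfrac{1}{\sqrt{\lambda_1^2\sigma^2+2\nu}}(2\Phi(\sqrt{\lambda_1^2\sigma^2+2\nu}\sqrt T)-1)$ after the factor of $2$ is absorbed. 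Part (b) is then the $T\to\infty$ limit of (a), using $\Phi(+\infty)=1$.

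For (c) and (d) the plan is to integrate by parts in $t$, differentiating $\Phi(\lambda_1\sigma\sqrt t)$ and integrating $e^{-\nu t}$. Concretely, $\int_0^T e^{-\nu t}\Phi(\lambda_1\sigma\sqrt t)\,\diff t = \big[-\tfrac{1}{\nu}e^{-\nu t}\Phi(\lambda_1\sigma\sqrt t)\big]_0^T + \tfrac{1}{\nu}\int_0^T e^{-\nu t}\,\tfrac{\diff}{\diff t}\Phi(\lambda_1\sigma\sqrt t)\,\diff t$. The boundary term at $t=T$ contributes $-\tfrac{1}{\nu}e^{-\nu T}\Phi(\lambda_1\sigma\sqrt T)$, and at $t=0$ it contributes $+\tfrac{1}{\nu}\Phi(0) = \tfrac{1}{2\nu}$. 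For the remaining integral, $\tfrac{\diff}{\diff t}\Phi(\lambda_1\sigma\sqrt t) = \varphi(\lambda_1\sigma\sqrt t)\cdot\tfrac{\lambda_1\sigma}{2\sqrt t}$, so $\tfrac{1}{\nu}\int_0^T e^{-\nu t}\varphi(\lambda_1\sigma\sqrt t)\tfrac{\lambda_1\sigma}{2\sqrt t}\,\diff t = \tfrac{\lambda_1\sigma}{2\nu}\int_0^T \tfrac{e^{-\nu t}}{\sqrt t}\varphi(\lambda_1\sigma\sqrt t)\,\diff t$, which is precisely $\tfrac{\lambda_1\sigma}{2\nu}$ times the quantity computed in part~(a). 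Substituting the value from (a) gives exactly the stated formula for (c), and (d) follows by letting $T\to\infty$ (the boundary term at $T$ vanishes since $\nu>0$ and $\Phi\le 1$, and the integral term converges to its value from (b)).

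The main technical points to be careful about — rather than genuine obstacles — are: checking convergence at the lower endpoint $t=0$ for the integrals in (a) and (b), where the integrand behaves like $t^{-1/2}$ and is therefore integrable; justifying the $T\to\infty$ limits in (b) and (d) by noting that $e^{-\nu t}$ decays exponentially while $\varphi,\Phi$ are bounded, so dominated convergence (or monotone convergence, since everything is nonnegative) applies; and verifying that $\Phi(0)=\tfrac12$ is used consistently so that the "$2\Phi(\cdot)-1$" pattern emerges correctly from "$\Phi(\cdot)-\Phi(0)$" after absorbing the factor $2$ from the substitution. None of these steps should present real difficulty; the proof is essentially a bookkeeping exercise combining one substitution, one integration by parts, and the identity $\tfrac12\lambda_3^2\sigma^2 = \tfrac12\lambda_2^2\sigma^2 + r$ is not even needed here (that was for the previous lemma), so the computation is self-contained once part (a) is established.
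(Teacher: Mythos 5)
Your proof is correct and follows essentially the same route as the paper: reduce (a) to a Gaussian integral by a change of variables, obtain (b) and (d) as $T\to\infty$ limits, and derive (c) by integration by parts whose remaining integral is exactly the quantity from (a). The only difference is cosmetic but welcome: your substitution $u=\sqrt{t}$ is monotone regardless of the sign of $\lambda_1$, so it avoids the case distinction on $\lambda_1<0$ that the paper's substitution $s=\lambda_1\sigma\sqrt{t}$ has to address separately.
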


\begin{myproof}
	We begin by proving property (a). To do so, we define the function $\text{erf}$ as $\text{erf}(x):=\tfrac{2}{\sqrt{\pi}} \int_0^x e^{-u^2} \diff u$, which is also known as the Gaussian error function. This function has the identity $\text{erf}(x)=2\Phi(\sqrt{2}x)-1$. Now, considering $\lambda_1\geq 0$, we proceed with:
	\begin{align*}
		\int_0^T \tfrac{e^{-\nu t}}{\sqrt{t}} \varphi(\lambda_1\sigma\sqrt{t}) \diff t &= \tfrac{2}{\lambda_1\sigma} \int_0^{\lambda_1\sigma\sqrt{T}} \exp\{-\tfrac{\nu}{\lambda_1^2\sigma^2} s^2\} \varphi(s) \diff s \\
		&= \tfrac{2}{\lambda_1\sigma} \cdot \tfrac{1}{\sqrt{2\pi}} \int_0^{\lambda_1\sigma\sqrt{T}} \exp\{-s^2 (\tfrac{1}{2}+\tfrac{\nu}{\lambda_1^2\sigma^2}) \} \diff s \\
		&= \tfrac{2}{\lambda_1\sigma} \cdot \sqrt{\tfrac{2\lambda_1^2\sigma^2}{\lambda_1^2\sigma^2+2\nu}} \cdot \tfrac{1}{\sqrt{2\pi}} \int_0^{\sqrt{\frac{\lambda_1^2\sigma^2+2\nu}{2\lambda_1^2\sigma^2}}\lambda_1\sigma\sqrt{T}} \exp\{-u^2 \} \diff u \\
		&= \sqrt{\tfrac{1}{\lambda_1^2\sigma^2+2\nu}} \cdot \text{erf}(\sqrt{\tfrac{\lambda_1^2\sigma^2+2\nu}{2}}\sqrt{T}),
	\end{align*}
	where we made the substitutions $s=\lambda_1\sigma\sqrt{t}$ and $u=\sqrt{\tfrac{\lambda_1^2\sigma^2+2\nu}{2\lambda_1^2\sigma^2}} \cdot s$ in the first, resp. third step. In particular, this establishes claim (a). If $\lambda_1<0$, the proof follows similarly, as the two arising negative signs cancel out (from $\text{erf}(-x)=-\text{erf}(x)$ and the calculation of $\tfrac{\sqrt{\lambda_1^2}}{\lambda_1}$). Property (b) follows by taking the limit $T \to \infty$.
	
	Next, we proceed with the proof of property (c). Using (a), 
	we obtain:
	\begin{align*}
		\int_0^T e^{-\nu t} \Phi(\lambda_1\sigma\sqrt{t}) \diff t &= \big[ \tfrac{-1}{\nu} e^{-\nu t} \Phi(\lambda_1 \sigma \sqrt{t}) \big]_0^T + \int_0^T \tfrac{1}{\nu} e^{-\nu t} \varphi(\lambda_1 \sigma \sqrt{t}) \tfrac{\lambda_1 \sigma}{2 \sqrt{t}} \diff t \\
		&= \tfrac{1}{2\nu} - \tfrac{e^{-\nu T} \Phi(\lambda_1\sigma\sqrt{T})}{\nu} + \tfrac{\lambda_1 \sigma}{2 \nu} \cdot \sqrt{\tfrac{1}{\lambda_1^2\sigma^2+2\nu}}(2 \Phi(\sqrt{\lambda_1^2\sigma^2+2\nu}\sqrt{T})-1)
	\end{align*}
	where we derived the first equation by integration by parts. Property (d) follows directly by taking the limit as $T \to \infty$.
\end{myproof}

\begin{lemma} \label{lemma: ass reformulation derivative}
	It holds that $\tfrac{-2\frac{G}{r}A_1}{rT} + 2 \tfrac{G}{r} A_2 \geq \tau_1 \frac{G}{r}(\lambda_2+\lambda_3)$ and $\int_0^T \frac{\partial c_{do} (V,k,V_B,t)}{\partial V} \big|_{V=V_B} \diff t \geq \tau_2 \int_0^\infty \frac{\partial  c_{do} (V,k,V_B,t) }{\partial V} \big|_{V=V_B} \diff t$.
\end{lemma}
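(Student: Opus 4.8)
The plan is to recognize both inequalities as the first-order consequence, in the asset value $V$ at $V=V_B$, of the two standing assumptions \eqref{eq: ass guarantee excess} and \eqref{eq: ass surplus excess}. Each of those assumptions compares two functions of $V$ that coincide (both equal $0$) at $V=V_B$; hence, on the right-neighbourhood of $V_B$ where the assumption is required to hold (footnote~\ref{footnote: global local}), the stated ordering of the functions forces the same ordering of their one-sided $V$-derivatives at $V=V_B$, and I claim these derivatives are precisely the two displayed expressions.

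For the first inequality I would extract from \eqref{eq: liability value L} the part of $L$ coming from the guaranteed payment, $\Lambda(V):=\tfrac{G}{r}\bigl(1-\tfrac{1-e^{-rT}}{rT}+I_1^V(T)-I_2^V(T)\bigr)$, so that \eqref{eq: ass guarantee excess} reads $\Lambda(V)\ge TB_1=\tau_1\tfrac{G}{r}\bigl(1-(\tfrac{V_B}{V})^{\lambda_2+\lambda_3}\bigr)$. Since the first-passage time is instantaneous at $V=V_B$ one has $F^{V_B}\equiv G^{V_B}\equiv 1$ (equivalently, set $V=V_B$ in \eqref{eq: def of F} and \eqref{eq: def of G}), so $I_1^{V_B}(T)=\tfrac{1-e^{-rT}}{rT}$, $I_2^{V_B}(T)=1$, giving $\Lambda(V_B)=0=TB_1|_{V=V_B}$. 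Both functions are smooth in $V$ near $V_B>0$ (with $V_B>0$ by Theorem~\ref{th: VB determination}), so the local form of \eqref{eq: ass guarantee excess} yields $\partial_V\Lambda|_{V=V_B}\ge\partial_V TB_1|_{V=V_B}$. Here $\partial_V TB_1|_{V=V_B}=\tfrac{1}{V_B}\tau_1\tfrac{G}{r}(\lambda_2+\lambda_3)$ by a direct computation, while Lemma~\ref{lemma: i1 i2 derivative} together with \eqref{eq: a1} and \eqref{eq: a2} gives $\partial_V\Lambda|_{V=V_B}=\tfrac{G}{r}\bigl(\partial_V I_1^V(T)-\partial_V I_2^V(T)\bigr)\big|_{V=V_B}=\tfrac{1}{V_B}\bigl(-\tfrac{2\frac{G}{r}A_1}{rT}+2\tfrac{G}{r}A_2\bigr)$; multiplying the derivative inequality by $V_B>0$ is the first claim.

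For the second inequality, put $\Psi(V):=\int_0^T c_{do}(V,k,V_B,t)\,\diff t-\tau_2\int_0^\infty c_{do}(V,k,V_B,t)\,\diff t$, finite by Lemma~\ref{lem: int cdo < infty}, which by the local form of \eqref{eq: ass surplus excess} satisfies $\Psi(V)\ge 0$ for $V$ slightly larger than $V_B$. When $V_0=V_B$ the barrier coincides with the spot, so $c_{do}(V_B,k,V_B,t)=0$ for every $t>0$ (immediate from \eqref{eq: cd0 vb<k formula}/\eqref{eq: cd0 vb>k formula} at $V_0=V_B$, or from $\min_{s\in[0,t]}V_s<V_B$ a.s.), whence $\Psi(V_B)=0$. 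Now differentiate $\Psi$ at $V=V_B$: $c_{do}$ is $C^1$ in $V$ by Lemma~\ref{lemma: cdo c1}, and Lemma~\ref{lemma: cdo bounded L1} supplies the $V$-uniform bound $|\partial_V c_{do}(V,k,V_B,t)|\le C(e^{-\nu t}+e^{-rt})$ valid for all $V\ge V_B$, which is integrable on $[0,\infty)$ because $r,\nu>0$; this legitimises differentiating under both integrals, so $\Psi'(V_B)=\int_0^T\partial_V c_{do}(V,k,V_B,t)\big|_{V=V_B}\,\diff t-\tau_2\int_0^\infty\partial_V c_{do}(V,k,V_B,t)\big|_{V=V_B}\,\diff t$. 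Since $\Psi(V_B)=0$ and $\Psi\ge 0$ immediately to the right of $V_B$, the difference quotient is nonnegative, hence $\Psi'(V_B)\ge 0$, which is the second claim.

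The one genuinely delicate point is the interchange of the improper integral $\int_0^\infty$ with the $V$-derivative in the second part, handled by the uniform $L^1$-domination of Lemma~\ref{lemma: cdo bounded L1} (with Lemma~\ref{lemma: cdo unif bounded L1} controlling the behaviour exactly at $V=V_B$); the remainder is just bookkeeping of the decomposition \eqref{eq: liability value L} and the evaluations $F^{V_B}\equiv G^{V_B}\equiv 1$ and $c_{do}(V_B,\cdot)=0$.
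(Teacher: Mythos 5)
Your proposal is correct and follows essentially the same route as the paper's proof: both arguments observe that the guarantee/surplus liabilities and their tax benefits vanish at $V=V_B$, so the standing assumptions \eqref{eq: ass guarantee excess} and \eqref{eq: ass surplus excess} transfer to the (one-sided) $V$-derivatives at $V=V_B$, which are then computed via Lemma \ref{lemma: i1 i2 derivative} and the definitions of $A_1$, $A_2$, with the interchange of derivative and integral justified by the dominated bounds of Lemmas \ref{lemma: cdo bounded L1} and \ref{lemma: cdo unif bounded L1}. Your write-up merely makes explicit the evaluations $F^{V_B}\equiv G^{V_B}\equiv 1$ and $c_{do}(V_B,k,V_B,t)=0$ that the paper states in words.
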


\begin{proof}
	First, we observe that, by definition, the liability value of the guaranteed payment and of the surplus participation (see term $1$ and $4$ in \eqref{eq: def liability for small t}), and the associated tax benefits (see \eqref{eq: def TB1} and \eqref{eq: def TB2}) are $0$ when $V=V_B$. Therefore, the inequalities in equations \eqref{eq: ass guarantee excess} and \eqref{eq: ass surplus excess} hold even when we differentiate with respect to $V$ and evaluate at $V=V_B$ obtaining:
	\begin{align}
		\tfrac{\partial}{\partial V}\displaystyle\int_0^T \EX^\QQ \left[\int_0^t e^{-rs} g \1_{\{\min_{r \in [0,s]} V_r \geq V_B\}} \diff s \right] \diff t \,\Big|_{V=V_B}&\geq \tfrac{\partial}{\partial V}TB_1 \,\Big|_{V=V_B}, \label{eq: ass guarantee excess derivative} \\
		\tfrac{\partial}{\partial V}\displaystyle\int_0^T c_{do} (V,k,V_B,t) \diff t \,\Big|_{V=V_B} &\geq \tau_2 \tfrac{\partial}{\partial V}\displaystyle\int_0^\infty c_{do} (V,k,V_B,t) \diff t \,\Big|_{V=V_B}. \label{eq: ass surplus excess derivative}
	\end{align}
	
	From \eqref{eq: liability value L} (with $\alpha=p=0$ and $\rho=1$), we have:
	\begin{align*}
		\int_0^T \EX^\QQ \left[\int_0^t e^{-rs} g \1_{\{\min_{r \in [0,s]} V_r \geq V_B\}} \diff s \right] \diff t = \tfrac{G}{r} - \tfrac{G}{r} ( \tfrac{1-e^{-rT}}{rT} - I_1^V(T) )  - \frac{G}{r} I_2^V(T).
	\end{align*}
	Next, using the definitions of $A_1$ and $A_2$ in \eqref{eq: a1} and \eqref{eq: a2}, along with Lemma \ref{lemma: i1 i2 derivative}, we find that $\tfrac{\partial I_1^V(T)}{\partial V} \big|_{V=V_B} = - \tfrac{2A_1}{rTV_B}$ and $\tfrac{\partial I_2^V(T)}{\partial V} \big|_{V=V_B} = - \tfrac{2A_2}{V_B}$. Additionally, note that $\tfrac{\partial (\frac{V_B}{V})^{\lambda_2+\lambda_3}}{\partial V} \big|_{V=V_B} = (-\lambda_2-\lambda_3)V_B^{\lambda_2+\lambda_3} V^{-\lambda_2-\lambda_3-1} \big|_{V=V_B} = -\tfrac{(\lambda_2+\lambda_3)}{V_B}$. Using the definition of $TB_1$ from equation \eqref{eq: firm value v}, we obtain the first result after canceling the factor of $\tfrac{1}{V_B} > 0$ from both sides in \eqref{eq: ass guarantee excess derivative}.
	
	For the second claim, we are permitted in \eqref{eq: ass surplus excess derivative} to interchange the derivative and the integral sign by the Dominated Convergence Theorem, as established by Lemma \ref{lemma: cdo unif bounded L1}. This directly leads to the conclusion.
\end{proof}

\begin{lemma} \label{lemma: term bigger 0 for vb}
	It holds:
	\begin{align*}
		1 + \rho(\lambda_2+\lambda_3)+2(1-\rho)[\tfrac{\lambda_2-\lambda_3}{2} - \tfrac{1}{2 \lambda_3 \sigma^2 T} + (\lambda_3+\tfrac{1}{\lambda_3 \sigma^2 T} ) \Phi (\lambda_3 \sigma \sqrt{T}) + \tfrac{\varphi (\lambda_3\sigma\sqrt{T})}{\sigma \sqrt{T}}] > 0.
	\end{align*}
\end{lemma}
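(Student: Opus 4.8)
The idea is to turn the left-hand side into an explicit convex combination in $\rho$. Denoting by $B$ the quantity in square brackets, one checks at once that
\[
1 + \rho(\lambda_2+\lambda_3) + 2(1-\rho)B \;=\; (1-\rho)\,(1+2B) \;+\; \rho\,(1+\lambda_2+\lambda_3),
\]
which, since $\rho\in[0,1]$, is a genuine convex combination of $1+2B$ and $1+\lambda_2+\lambda_3$. Hence it suffices to show that each of these two numbers is strictly positive.

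For $1+\lambda_2+\lambda_3$ this is immediate: by the definition \eqref{eq: def lambda23} of $\lambda_3$ and $r>0$ we have $\lambda_3\sigma^2 = \sqrt{(\lambda_2\sigma^2)^2+2r\sigma^2} > |\lambda_2|\sigma^2$, so $\lambda_2+\lambda_3>0$ and a fortiori $1+\lambda_2+\lambda_3>1>0$. For $1+2B$ the plan is to multiply through and regroup. Writing $x:=\lambda_3\sigma\sqrt{T}>0$ and using $\lambda_3\sigma^2T = x\,\sigma\sqrt{T}$, a short rearrangement gives
\[
1+2B \;=\; 1+\lambda_2 \;+\; \Big(\lambda_3+\tfrac{1}{\lambda_3\sigma^2T}\Big)\big(2\Phi(x)-1\big) \;+\; \tfrac{2\varphi(x)}{\sigma\sqrt{T}}.
\]
Because $\Phi(x)>\tfrac{1}{2}$ and $\lambda_3,\sigma,T>0$, the middle summand is strictly positive and can be dropped; using then $\lambda_2>-\lambda_3$ (the inequality $\lambda_2+\lambda_3>0$ again) we obtain
\[
1+2B \;>\; 1-\lambda_3+\lambda_3\big(2\Phi(x)-1\big)+\tfrac{2\varphi(x)}{\sigma\sqrt{T}} \;=\; 1-2\lambda_3\Phi(-x)+\tfrac{2\varphi(x)}{\sigma\sqrt{T}} \;=\; 1+\tfrac{2}{\sigma\sqrt{T}}\big(\varphi(x)-x\Phi(-x)\big),
\]
where the last two equalities use $1-\Phi(x)=\Phi(-x)$ and $\lambda_3=x/(\sigma\sqrt{T})$.

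It remains to note that $\varphi(x)-x\Phi(-x)\ge 0$ for all $x\ge 0$; this is the elementary Mills-ratio bound $x\Phi(-x)\le\varphi(x)$, which follows since $\varphi(x)-x\Phi(-x)$ has derivative $-\Phi(-x)<0$ and tends to $0$ as $x\to\infty$. Consequently $1+2B>1>0$, and combined with $1+\lambda_2+\lambda_3>0$ and the convex-combination identity this yields the claim. The only care needed is the bookkeeping in the regrouping and checking that each discarded term is genuinely nonnegative (which rests on $\lambda_3>0$, $\sigma,T>0$, and $\Phi(x)>\tfrac{1}{2}$); no deeper obstacle arises, the sole analytic input being the standard Mills-ratio inequality.
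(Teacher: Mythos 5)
Your proof is correct and is essentially the paper's argument in different clothing: both reduce the claim to $\lambda_2+\lambda_3>0$, $2\Phi(\lambda_3\sigma\sqrt{T})-1>0$, and one analytic inequality, and your Mills-ratio bound $x\Phi(-x)\le\varphi(x)$ is exactly the paper's claim $\Phi(x)+\varphi(x)/x-1>0$ rewritten, proved by the same derivative-plus-limit argument. The convex-combination split in $\rho$ and the slightly different regrouping of the bracket are cosmetic variations on the paper's route of showing the bracket itself is positive and then invoking $\rho\in[0,1]$.
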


\begin{myproof}
	It holds:
	\begin{align*}
		B:=&\,2[\tfrac{\lambda_2-\lambda_3}{2} - \tfrac{1}{2 \lambda_3 \sigma^2 T} + (\lambda_3+\tfrac{1}{\lambda_3 \sigma^2 T} ) \Phi (\lambda_3 \sigma \sqrt{T}) + \tfrac{\varphi (\lambda_3\sigma\sqrt{T})}{\sigma \sqrt{T}}] \\
		=&\, \lambda_2+\lambda_3 - 2\lambda_3 + 2\lambda_3 \Phi (\lambda_3 \sigma \sqrt{T}) - \tfrac{1}{\lambda_3 \sigma^2 T} + \tfrac{2}{\lambda_3 \sigma^2 T} \Phi (\lambda_3 \sigma \sqrt{T}) + 2 \lambda_3 \tfrac{\varphi (\lambda_3\sigma\sqrt{T})}{\lambda_3 \sigma \sqrt{T}} \\
		=&\, (\lambda_2+\lambda_3) + 2\lambda_3 (\Phi (\lambda_3 \sigma \sqrt{T})+\tfrac{\varphi (\lambda_3\sigma\sqrt{T})}{\lambda_3 \sigma \sqrt{T}}-1) + \tfrac{1}{\lambda_3 \sigma^2 T} (2\Phi (\lambda_3 \sigma \sqrt{T})-1).
	\end{align*}
	By definition, $\lambda_3 \geq 0$. Moreover, since $\lambda_3 = \tfrac{\sqrt{(\lambda_2 \sigma^2)^2+2r\sigma^2}}{\sigma^2} = \sqrt{\lambda_2^2+\tfrac{2r}{\sigma^2}}$, we see that $\lambda_3 > |\lambda_2|$ given that $r>0$ and $\sigma>0$. In particular, this implies that $\lambda_2+\lambda_3>0$. Therefore, since $\lambda_3 \sigma \sqrt{T}>0$, we have $2\Phi (\lambda_3 \sigma \sqrt{T})-1>0$. For the middle term, the situation is more intricate. Let $x > 0$ and define $h(x):=\Phi(x)+\tfrac{\varphi(x)}{x}-1$. Now, $h'(x) = \varphi(x) + \tfrac{-\varphi(x)x^2-\varphi(x)}{x^2} = - \tfrac{\varphi(x)}{x^2} <0$, meaning that $h$ is decreasing in $x$. Since $\lim_{x \to \infty} h(x) = 0$, we conclude that $h(x)>0$ for all $x \in \Real$. Therefore, we obtain the expression that $2\lambda_3 (\Phi (\lambda_3 \sigma \sqrt{T})+\tfrac{\varphi (\lambda_3\sigma\sqrt{T})}{\lambda_3 \sigma \sqrt{T}}-1)>0$, since $\lambda_3 \sigma \sqrt{T}>0$. Thus, $B>0$. 
	
	Hence, it follows:
	\begin{align*}
		&\, 1 + \rho(\lambda_2+\lambda_3)+2(1-\rho)[\tfrac{\lambda_2-\lambda_3}{2} - \tfrac{1}{2 \lambda_3 \sigma^2 T} + (\lambda_3+\tfrac{1}{\lambda_3 \sigma^2 T} ) \Phi (\lambda_3 \sigma \sqrt{T}) + \tfrac{\varphi (\lambda_3\sigma\sqrt{T})}{\sigma \sqrt{T}}] \\
		=&\, 1 + \rho(\lambda_2+\lambda_3)+ (1-\rho) B >0,
	\end{align*}
	since $\lambda_2+\lambda_3>0$, $B>0$ as discussed above, and $\rho \in [0,1]$.
\end{myproof}

\begin{lemma} \label{lemma: a1,a2,a3,a4>0}
	It holds that $A_1, A_2 >0$ and $A_3 > A_4 > 0$ with $A_1$, $A_2$, $A_3$, and $A_4$ defined as in equations \eqref{eq: a1}, \eqref{eq: a2}, \eqref{eq: a3} and \eqref{eq: a4}.
\end{lemma}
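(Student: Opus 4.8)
The plan is to exhibit each of $A_1$, $A_2$, $A_3$ and $A_3-A_4$ as a positive multiple of the integral of an everywhere strictly positive function, and to settle that positivity with a Mills-ratio estimate already available from the proof of Lemma~\ref{lemma: term bigger 0 for vb}. First I would collect the elementary facts. By \eqref{eq: def lambda23}, $\lambda_3^2\sigma^2=\lambda_2^2\sigma^2+2r$, so $\lambda_3>|\lambda_2|\ge 0$ since $r>0$; similarly, setting $\mu:=\tfrac1\sigma\sqrt{\lambda_1^2\sigma^2+2\nu}$, one has $\mu^2=\lambda_1^2+\tfrac{2\nu}{\sigma^2}$, hence $\mu>|\lambda_1|$ since $\nu>0$. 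Next I would record the one-variable inequality: in the proof of Lemma~\ref{lemma: term bigger 0 for vb} it is shown that $\Phi(y)+\tfrac{\varphi(y)}{y}>1$ for all $y>0$; combined with $\Phi(-y)=1-\Phi(y)$ and $\varphi(-y)=\varphi(y)$ this gives $\Phi(x)+\tfrac{\varphi(x)}{x}<0$ for all $x<0$. It follows that for every $\lambda\in\bbr$ and $t>0$,
\[
  \Psi_\lambda(t):=2\lambda\,\Phi(\lambda\sigma\sqrt t)+\tfrac{2}{\sigma\sqrt t}\varphi(\lambda\sigma\sqrt t)>0,
\]
trivially if $\lambda\ge 0$, and if $\lambda<0$ because $\Psi_\lambda(t)=2\lambda\bigl(\Phi(\lambda\sigma\sqrt t)+\tfrac{\varphi(\lambda\sigma\sqrt t)}{\lambda\sigma\sqrt t}\bigr)$ is then a product of two negative numbers.

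Then $A_3>0$ is immediate from \eqref{eq: a3}, which reads $A_3=\tfrac1\nu(\lambda_1+\mu)$, and $\mu>|\lambda_1|\ge-\lambda_1$. For $A_3>A_4>0$ I would note that, by part~(c) of Lemma~\ref{lemma: cdo limit of derivative to 0 and infty} together with $d_1(1,t)=\lambda_1\sigma\sqrt t$, one has $\tfrac{\partial}{\partial V}c_{do}(V,0,V_B,t)\big|_{V=V_B}=e^{-\nu t}\Psi_{\lambda_1}(t)$; a direct evaluation using Lemma~\ref{lemma: explicit calculation integral} (parts (b),(d) for the first integral, (a),(c) for the second, together with $\lambda_1^2\sigma^2+2\nu=\mu^2\sigma^2$) shows $\int_0^\infty e^{-\nu t}\Psi_{\lambda_1}(t)\,\diff t=A_3$ and $\int_0^T e^{-\nu t}\Psi_{\lambda_1}(t)\,\diff t=A_4$. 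Since $e^{-\nu t}\Psi_{\lambda_1}(t)>0$ on $(0,\infty)$, we get $A_4>0$ and $A_3-A_4=\int_T^\infty e^{-\nu t}\Psi_{\lambda_1}(t)\,\diff t>0$.

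For $A_1$ and $A_2$ I would use Lemma~\ref{lemma: i1 i2 derivative}, which gives $A_1=-\tfrac{rTV_B}{2}\,\tfrac{\partial I_1^V(T)}{\partial V}\big|_{V=V_B}$ and $A_2=-\tfrac{V_B}{2}\,\tfrac{\partial I_2^V(T)}{\partial V}\big|_{V=V_B}$. Differentiating the defining integrals $I_1^V(T)=\tfrac1T\int_0^T e^{-rt}F^V(t)\,\diff t$ and $I_2^V(T)=\tfrac1T\int_0^T G^V(t)\,\diff t$ under the integral sign (justified by local domination near $t=0$) and substituting the expressions $\tfrac{\partial F^V(t)}{\partial V}\big|_{V=V_B}=-\tfrac{1}{V_B}\Psi_{\lambda_2}(t)$ and $\tfrac{\partial G^V(t)}{\partial V}\big|_{V=V_B}=-\tfrac{1}{V_B}\bigl((\lambda_2-\lambda_3)+2\lambda_3\Phi(\lambda_3\sigma\sqrt t)+\tfrac{2\varphi(\lambda_3\sigma\sqrt t)}{\sigma\sqrt t}\bigr)$ obtained in the proof of Lemma~\ref{lemma: i1 i2 derivative} by setting $V=V_B$, it remains to see that both integrands are strictly negative. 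For the first this is $\Psi_{\lambda_2}>0$. For the second, rewriting $\tfrac{2\varphi(\lambda_3\sigma\sqrt t)}{\sigma\sqrt t}=\tfrac{2\lambda_3\varphi(\lambda_3\sigma\sqrt t)}{\lambda_3\sigma\sqrt t}$ yields $(\lambda_2-\lambda_3)+2\lambda_3\Phi(\lambda_3\sigma\sqrt t)+\tfrac{2\varphi(\lambda_3\sigma\sqrt t)}{\sigma\sqrt t}=\lambda_2+\lambda_3\bigl(2\Phi(\lambda_3\sigma\sqrt t)+\tfrac{2\varphi(\lambda_3\sigma\sqrt t)}{\lambda_3\sigma\sqrt t}-1\bigr)>\lambda_2+\lambda_3>0$, since the bracket exceeds $1$ by the inequality $\Phi(y)+\varphi(y)/y>1$ for $y>0$ and $\lambda_3>-\lambda_2$. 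Hence $\tfrac{\partial I_1^V(T)}{\partial V}\big|_{V=V_B}<0$ and $\tfrac{\partial I_2^V(T)}{\partial V}\big|_{V=V_B}<0$, so $A_1>0$ and $A_2>0$; in fact $A_2>0$ is precisely the statement ``$B>0$'' established inside the proof of Lemma~\ref{lemma: term bigger 0 for vb}.

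I expect the only genuine obstacle to be bookkeeping: recognizing the right one-signed integrands and verifying, via Lemma~\ref{lemma: explicit calculation integral} and the identities $\lambda_3^2\sigma^2=\lambda_2^2\sigma^2+2r$ and $\mu^2\sigma^2=\lambda_1^2\sigma^2+2\nu$, that the closed forms in \eqref{eq: a1}, \eqref{eq: a3} and \eqref{eq: a4} agree with those integrals. Once this is done, the sign of each integrand reduces to the single Mills-ratio bound already proved in Lemma~\ref{lemma: term bigger 0 for vb}, and nothing deeper is required.
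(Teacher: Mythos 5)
Your proposal is correct, and for $A_2$, $A_3$, $A_4$ and the inequality $A_3>A_4$ it essentially coincides with the paper's argument: the paper also writes $A_3=\int_0^\infty e^{-\nu t}\Psi_{\lambda_1}(t)\,\diff t$ and $A_4=\int_0^T e^{-\nu t}\Psi_{\lambda_1}(t)\,\diff t$ via Lemma \ref{lemma: explicit calculation integral} and settles positivity of the integrand by analyzing $y\mapsto\Phi(y)+\varphi(y)/y$, and it obtains $A_2>0$ by citing $B=2A_2>0$ from the proof of Lemma \ref{lemma: term bigger 0 for vb} (your alternative route through $\partial_V G^V$ reduces to the same Mills-ratio bound). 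Where you genuinely diverge is $A_1$. The paper splits into the cases $\lambda_2\ge 0$ and $\lambda_2<0$ and, in the harder case, proves $e^{-rT}\Phi(-|\lambda_2|\sigma\sqrt T)\ge\Phi(-\lambda_3\sigma\sqrt T)$ by the substitution $s=\sqrt{t^2+2rT}$ inside the Gaussian integral; you instead represent $A_1=\tfrac{r}{2}\int_0^T e^{-rt}\Psi_{\lambda_2}(t)\,\diff t$ and invoke the pointwise positivity of $\Psi_{\lambda_2}$. I checked that this identity is consistent with the closed form \eqref{eq: a1} using Lemma \ref{lemma: explicit calculation integral} with $(\lambda_1,\nu)$ replaced by $(\lambda_2,r)$ and the relation $\lambda_2^2\sigma^2+2r=\lambda_3^2\sigma^2$, so your argument is sound and arguably cleaner: it treats $A_1$, $A_2$ and $A_4$ uniformly as integrals of one strictly positive kernel and eliminates the case distinction and the substitution trick entirely. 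Two small remarks: rather than justifying differentiation under the integral sign near $t=0$ (where $\varphi(\lambda_2\sigma\sqrt t)/(\sigma\sqrt t)$ is singular but integrable), it is safer to verify the identity $A_1=\tfrac{r}{2}\int_0^T e^{-rt}\Psi_{\lambda_2}(t)\,\diff t$ directly from the closed forms, which sidesteps the domination issue; and your observation that $A_3=\tfrac{1}{\nu}(\lambda_1+\mu)>0$ follows immediately from $\mu>|\lambda_1|$ is a nice shortcut the paper does not use.
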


\begin{myproof}
	We begin by proving that $A_1>0$: First, we rewrite $A_1$ as $A_1 = \lambda_3 (\Phi(\lambda_3\sigma\sqrt{T})-\tfrac{1}{2}) - \lambda_2 (e^{-rT}\Phi(\lambda_2\sigma\sqrt{T})-\tfrac{1}{2})$. Now, if $\lambda_2 \geq 0$, we obtain the inequality $A_1 \geq \lambda_3 (\Phi(\lambda_3\sigma\sqrt{T})-\tfrac{1}{2}) - \lambda_2 (\Phi(\lambda_2\sigma\sqrt{T})-\tfrac{1}{2})$, since $e^{-rT}<1$. Since $r,\sigma>0$, it follows from the definition that $\lambda_3>\lambda_2$. This implies that $\Phi(\lambda_3\sigma\sqrt{T})-\tfrac{1}{2}>\Phi(\lambda_2\sigma\sqrt{T})-\tfrac{1}{2}$, because $\Phi(\cdot)$ is an increasing function. Therefore, we conclude that $A_1>0$ when $\lambda_2\geq0$. Next, we consider the case where $\lambda_2<0$. In this case, $A_1>0$ is equivalent to $\lambda_3 (\Phi(\lambda_3\sigma\sqrt{T})-\tfrac{1}{2}) > |\lambda_2| (\tfrac{1}{2} - e^{-rT}\Phi(\lambda_2\sigma\sqrt{T}))$. If $\tfrac{1}{2} - e^{-rT}\Phi(\lambda_2\sigma\sqrt{T}) \leq 0$, this inequality clearly holds, so assume that $\tfrac{1}{2} - e^{-rT}\Phi(\lambda_2\sigma\sqrt{T}) >0$. Since $r,\sigma>0$, we have $\lambda_3>|\lambda_2|$, which implies that $A_1>0$ if $\Phi(\lambda_3\sigma\sqrt{T})+ e^{-rT}\Phi(\lambda_2\sigma\sqrt{T}))-1 \geq 0$. This inequality is equivalent to 
	\begin{align} \label{eq: equivalence in a1 proof}
		e^{-rT}\Phi(-|\lambda_2|\sigma\sqrt{T})) \geq \Phi(-\lambda_3\sigma\sqrt{T}),
	\end{align}
	since $\Phi(-x)=1-\Phi(x)$ for all $x \in \Real$. Using the substitution $s = \sqrt{t^2+2rT}$, we can derive the following identity for any $x>0$, due to $e^{-t^2}$ being symmetric:
	\begin{align*}
		e^{-rT} \Phi(-x) = \tfrac{e^{-rT}}{\sqrt{2\pi}} \displaystyle\int_{-\infty}^{-x} e^{-\frac{t^2}{2}} \diff t = \tfrac{1}{\sqrt{2\pi}} \displaystyle\int_{x}^{\infty} e^{-\frac{1}{2}(t^2+2rT)} \diff t = \tfrac{1}{\sqrt{2\pi}} \displaystyle\int_{\sqrt{x^2+2rT}}^{\infty} e^{-\frac{s^2}{2}} \tfrac{s}{\sqrt{s^2-2rT}} \diff s.
	\end{align*}
	Thus, we have $e^{-rT}\Phi(-|\lambda_2|\sigma\sqrt{T})) = \tfrac{1}{\sqrt{2\pi}} \int_{\sqrt{\lambda_2^2\sigma^2 T+2rT}}^{\infty} e^{-\frac{s^2}{2}} \tfrac{s}{\sqrt{s^2-2rT}} \diff s = \tfrac{1}{\sqrt{2\pi}} \int_{\lambda_3\sigma\sqrt{T}}^{\infty} e^{-\frac{s^2}{2}} \linebreak \tfrac{s}{\sqrt{s^2-2rT}} \diff s$, since $\lambda_2^2\sigma^2 T+2rT = \lambda_3^2 \sigma^2 T$. Indeed, it holds by definition that $\lambda_3^2 \sigma^2 = \tfrac{\lambda_2^2 \sigma^4+2r\sigma^2}{\sigma^4} \sigma^2 = \lambda_2^2 \sigma^2 + 2r$. Multiplying by $T$ yields this intermediate statement. Now, a simple rewriting leads to $\Phi(-\lambda_3\sigma\sqrt{T}) = \tfrac{1}{\sqrt{2\pi}} \int_{\lambda_3\sigma\sqrt{T}}^{\infty} e^{-\frac{s^2}{2}} \diff s$. Hence, \eqref{eq: equivalence in a1 proof} is equivalent to 
	\begin{align*}
		&\hspace{-60pt}&\tfrac{1}{\sqrt{2\pi}} \displaystyle\int_{\lambda_3\sigma\sqrt{T}}^{\infty} e^{-\frac{s^2}{2}} \tfrac{s}{\sqrt{s^2-2rT}} \diff s &\geq \tfrac{1}{\sqrt{2\pi}} \displaystyle\int_{\lambda_3\sigma\sqrt{T}}^{\infty} e^{-\frac{s^2}{2}} \diff s, \\
		&\Leftrightarrow\hspace{-60pt}& \int_{\lambda_3\sigma\sqrt{T}}^{\infty} e^{-\frac{s^2}{2}} (\tfrac{s}{\sqrt{s^2-2rT}}-1) \diff s &\geq 0.
	\end{align*}
	Since $\tfrac{s}{\sqrt{s^2-2rT}} \geq 1$ (due to $s \geq \lambda_3 \sigma \sqrt{T}>0$ and $(\lambda_3\sigma\sqrt{T})^2-2rT = \lambda_2^2 \sigma^2 T >0$), the last inequality is indeed correct. Therefore, $A_1>0$ follows.
	
	The property that $A_2>0$ is directly implied by the proof of Lemma \ref{lemma: term bigger 0 for vb}, where $B=2A_2$ with $B$ defined as in the proof of Lemma \ref{lemma: term bigger 0 for vb}.
	
	Next, we show that $A_3, A_4>0$: Using Lemma \ref{lemma: explicit calculation integral} and the fact that $\Big(\tfrac{\lambda_1^2 \sigma}{\nu} + \tfrac{2}{\sigma}\Big) \sqrt{\tfrac{1}{\lambda_1^2\sigma^2+2\nu}} = \tfrac{1}{\sigma \nu} \sqrt{\lambda_1^2 \sigma^2 + 2\nu}$, we have:
	\begin{align}
		A_3 &= \int_0^\infty \Big(e^{-\nu t}(2\lambda_1 \Phi(\lambda_1 \sigma \sqrt{t})+ \tfrac{2\varphi(\lambda_1 \sigma \sqrt{t})}{\sigma \sqrt{t}} )\Big) \diff t, \label{eq: a3 equality integral} \\
		A_4 &= \displaystyle\int_0^T \Big(e^{-\nu t}(2\lambda_1 \Phi(\lambda_1 \sigma \sqrt{t})+ \tfrac{2\varphi(\lambda_1 \sigma \sqrt{t})}{\sigma \sqrt{t}} )\Big) \diff t. \label{eq: a4 equality integral}
	\end{align}
	We now show that $\lambda_1 \Phi(\lambda_1 \sigma \sqrt{t})+ \tfrac{\varphi(\lambda_1 \sigma \sqrt{t})}{\sigma \sqrt{t}}>0$ for all $\lambda_1 \in \Real$, $\sigma>0$, and $t\geq 0$, which directly implies the claim. First, if $\lambda_1 \geq 0$, the result is trivial. If $\lambda_1 < 0$, we obtain: $\lambda_1 \Phi(\lambda_1 \sigma \sqrt{t})+ \tfrac{\varphi(\lambda_1 \sigma \sqrt{t})}{\sigma \sqrt{t}} = \lambda_1 [\Phi(\lambda_1 \sigma \sqrt{t})+ \tfrac{\varphi(\lambda_1 \sigma \sqrt{t})}{\lambda_1\sigma \sqrt{t}}]$. Let $y := \lambda_1 \sigma \sqrt{t} <0$, and define $h(y):= \Phi(y)+\tfrac{\varphi(y)}{y}$. Since $\lambda_1<0$, we only need to show that $h(y)<0$ for all $y \in (-\infty,0]$. First, we obtain that $\lim_{y \to -\infty} h(y) = 0$ and $\lim_{y \to0_-} h(y) =-\infty$. Additionally, $h'(y) = \varphi(y) + \tfrac{-y\varphi(y)y-\varphi(y)}{y^2} = - \tfrac{\varphi(y)}{y^2} < 0$ which yields the claim. 
	
	Finally, we conclude that $A_3 > A_4$, which follows immediately from the fact that $\lambda_1 \Phi(\lambda_1 \sigma \sqrt{t})+ \tfrac{\varphi(\lambda_1 \sigma \sqrt{t})}{\sigma \sqrt{t}}>0$ and $T<\infty$.
\end{myproof}

\begin{lemma} \label{lem: vb neq 0}
	In our framework, the optimal bankruptcy-triggering is bounded away from $0$.
\end{lemma}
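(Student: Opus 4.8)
The plan is to show that the right-hand side of \eqref{eq: formula for vb}, regarded as a function $R(V_B)$ of $V_B>0$, diverges to $-\infty$ as $V_B\downarrow 0$. Since by Theorem \ref{th: VB determination} the optimal $V_B$ is the minimum of $V_0>0$ and the smallest root of $R$ (and equals $V_0$ if $R$ has no root), this yields a strictly positive lower bound for $V_B$. Grouping terms, write
\[
R(V_B)=K_1-\tfrac{D}{V_B}+\tau_2\alpha\int_0^\infty\tfrac{\partial c_{do}}{\partial V}\Big|_{V=V_B}\diff t-\alpha\int_0^T\tfrac{\partial c_{do}}{\partial V}\Big|_{V=V_B}\diff t,
\]
with $K_1:=1+\rho(\lambda_2+\lambda_3)+2(1-\rho)A_2$ and $D:=\tfrac{2(P-\frac{G}{r})A_1}{rT}+2\tfrac{G}{r}A_2-\tau_1\tfrac{G}{r}(\lambda_2+\lambda_3)$. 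By Lemma \ref{lemma: term bigger 0 for vb}, $K_1>0$.

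Next I would bound $D$ from below uniformly. Rewriting $D=\tfrac{2PA_1}{rT}+\tfrac{G}{r}\bigl(-\tfrac{2A_1}{rT}+2A_2-\tau_1(\lambda_2+\lambda_3)\bigr)$ and applying Lemma \ref{lemma: ass reformulation derivative} (dividing by $\tfrac{G}{r}>0$ when $G>0$; the case $G=0$ is trivial), the second summand is nonnegative, while $\tfrac{2PA_1}{rT}>0$ since $P>0$ and $A_1>0$ by Lemma \ref{lemma: a1,a2,a3,a4>0}; hence $D\ge\tfrac{2PA_1}{rT}=:D_0>0$, a bound independent of $V_B$, $\alpha$, and $G$. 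Thus $-D/V_B\to-\infty$ as $V_B\downarrow 0$. It remains to check that the two $c_{do}$-integrals stay bounded. Because the framework assumes $k\ge V_0>0$, Lemma \ref{lemma: cdo limit of derivative to 0 and infty}(b) gives $\tfrac{\partial}{\partial V}c_{do}(V,k,V_B,t)\big|_{V=V_B}\to 0$ for each fixed $t$, and Lemma \ref{lemma: cdo unif bounded L1} — together with the explicit expressions \eqref{eq: dv cdo< v=vb}--\eqref{eq: dv cdo> v=vb} and Lemmas \ref{lemma: varphi bounded}, \ref{lemma: phi bounded} — provides a $V_B$-independent dominating function that is integrable in $t$ over $(0,\infty)$. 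Dominated convergence then sends both $\int_0^\infty$ and $\int_0^T$ of $\tfrac{\partial c_{do}}{\partial V}\big|_{V=V_B}$ to $0$ as $V_B\downarrow 0$; moreover, since $\alpha\in[0,1]$, the same envelope bounds the absolute value of the sum of these two $c_{do}$-terms by a constant $M_0<\infty$ depending only on the structural parameters.

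Combining the estimates, $R(V_B)\le K_1+M_0-\tfrac{D_0}{V_B}$, which is strictly negative whenever $V_B<\delta:=\tfrac{D_0}{K_1+M_0+1}$. Hence $R$ has no root in $(0,\delta)$, so its smallest root (if any) is at least $\delta$, and therefore $V_B$, being the minimum of $V_0$ and that smallest root, satisfies $V_B\ge\min\{V_0,\delta\}>0$; since $\delta$ and $V_0$ depend only on $r,\nu,\sigma,P,T,\tau_1,\tau_2,\rho,k$, the bound is in fact uniform over $\alpha\in[0,1]$ and $G\ge 0$. The only delicate point is the dominated-convergence step: one must verify that the envelope for $\tfrac{\partial}{\partial V}c_{do}\big|_{V=V_B}$ can be chosen independently of $V_B$ and is integrable over all of $(0,\infty)$ in the maturity variable, not merely on compact time intervals — which is exactly what Lemmas \ref{lemma: varphi bounded} and \ref{lemma: phi bounded} secure when applied term by term to \eqref{eq: dv cdo< v=vb}--\eqref{eq: dv cdo> v=vb}.
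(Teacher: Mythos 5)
Your computations are individually sound: $K_1>0$ by Lemma \ref{lemma: term bigger 0 for vb}, $D\geq \tfrac{2PA_1}{rT}>0$ by Lemmas \ref{lemma: ass reformulation derivative} and \ref{lemma: a1,a2,a3,a4>0}, and the $V_B$-uniform integrable envelope from Lemma \ref{lemma: cdo unif bounded L1} does bound the two option integrals uniformly; in fact the paper establishes exactly this divergence, $\lim_{V_B\to 0}h_2(V_B)=-\infty$, inside the proof of Theorem \ref{th: VB determination}. The problem is logical rather than computational: your argument is circular. You invoke Theorem \ref{th: VB determination} to identify the optimal trigger with the minimum of $V_0$ and the smallest root of \eqref{eq: formula for vb}, but the paper's proof of that theorem opens by citing Lemma \ref{lem: vb neq 0} --- precisely in order to rule out the boundary case $V_B=0$ before the smooth-pasting (first-order) characterization is applied. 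Showing that the first-order condition has no roots near $0$ does not by itself exclude that the equity-maximizing trigger, defined through the constrained optimization under limited liability, sits at (or arbitrarily near) the boundary $V_B=0$, where no first-order condition need hold.

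The paper's own proof avoids this by arguing directly from feasibility: if $V_B=0$, bankruptcy never occurs, so $F^V,G^V\equiv 0$ and the bankruptcy terms vanish; the tax benefits are dominated by the corresponding liability values by \eqref{eq: ass guarantee excess}--\eqref{eq: ass surplus excess}; and the lump-sum liability equals the constant $P\tfrac{1-e^{-rT}}{rT}>0$. Hence $E(V;0,T)\leq V-P\tfrac{1-e^{-rT}}{rT}<0$ for $V$ small, contradicting the requirement that equity remain non-negative above the trigger. To repair your proof you would either need to argue at the level of $E$ itself in this way, rather than at the level of the root equation, or restructure the logic so that the characterization in Theorem \ref{th: VB determination} is established without appeal to this lemma.
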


\begin{proof}
	Assume, for the sake of contradiction, that $V_B=0$. From \eqref{eq: def liability for small t}, \eqref{eq: liability value L}, \eqref{eq: firm value v}, and \eqref{equity}, it follows that both the bankruptcy costs and the value of bankruptcy vanish under this assumption. Consequently, we obtain
	\begin{align*}
		E(V) = V + TB_1(V)+TB_2(V)-VG(V)-VP(V)-VSP(V),
	\end{align*}
	where $TB_1$ and $TB_2$ denote the tax benefits associated with the guarantee and the surplus participation (cf. \eqref{eq: def TB1}, \eqref{eq: def TB2}), $VG$ the value of the guarantee, $VP$ the value of the principal payment, and $VSP$ the value of the surplus participation. By assumption, the guarantee and the surplus participation is worth at least as much as its corresponding tax benefit, i.e., $TB_1 \leq VG$ and $TB_2 \leq VSP$. Moreover, since $V_B=0 < V_S$ for all $s$, it follows that $F^V, G^V \equiv 0$ and hence $I_1^V,I_2^V \equiv 0$. Thus, we obtain that $E(V;0,T) \leq V - VP(V) =  V - P \tfrac{1-e^{-rT}}{rT}$. As $P > 0$ by assumption, this inequality implies that as $V \to V_B=0$, equity becomes negative. This contradicts the requirement that shareholders would liquidate the firm before equity turns negative. Therefore, the assumption $V_B=0$ is inconsistent, and the boundary solution cannot represent the optimal bankruptcy-triggering value. This proof also shows that $V_B$ is bounded away from $0$.
\end{proof}

\begin{lemma} \label{lemma: uniqueness preparation lemma}
	Let $R(V)$ be the right-hand side of \eqref{eq: formula for vb} with $V=V_B$. If $\lim_{V \to \infty} R(V) > 0$ (resp. $\downarrow 0$), then $R$ has a unique zero root. If $\lim_{V \to \infty} R(V) < 0$ (resp. $\uparrow 0$), then $R$ has at most two zero roots. If $R(V)=0$ for all $V$ large enough, then $R(V)=0$ for all $V \geq k$ and there is up to one zero root smaller than $k$.
	
	Notation: $\lim_{V \to \infty} R(V) \downarrow 0$ (resp. $\uparrow 0$), if $\lim_{V \to \infty} R(V) = 0$ and there exists a $\tilde{V}$ such that $R(V)<0$ (resp. $R(V)>0$) for all $V \geq \tilde{V}$.
\end{lemma}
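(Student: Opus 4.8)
\noindent\emph{Sketch of the argument.} The plan is to pass to the auxiliary function $S(V):=V\,R(V)$, which has exactly the same zeros as $R$ on $(0,\infty)$, and to prove that $S$ is strictly increasing on an initial interval and non‑increasing thereafter; the three assertions then fall out from the boundary behaviour of $S$. Put $C_0:=1+\rho(\lambda_2+\lambda_3)+2(1-\rho)A_2$ and $C_1:=\tfrac{2(P-G/r)A_1}{rT}+2\tfrac{G}{r}A_2-\tau_1\tfrac{G}{r}(\lambda_2+\lambda_3)$, both strictly positive by Lemmas~\ref{lemma: term bigger 0 for vb} and \ref{lemma: long term >0}, and let $\phi(V,t)$ denote the expression \eqref{eq: dv cdo v=vb} with $V_B$ replaced by $V$, so that $S(V)=C_0V-C_1+\alpha\bigl(\tau_2\int_0^\infty V\phi(V,t)\,\diff t-\int_0^T V\phi(V,t)\,\diff t\bigr)$. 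First I would establish two boundary facts. By Lemma~\ref{lemma: cdo limit of derivative to 0 and infty}(b),(c) one has $\phi(V,t)\to0$ as $V\to0$ (for $k>0$; for $k=0$, $\phi$ is constant in $V$), so by the uniform bound of Lemma~\ref{lemma: cdo unif bounded L1} and dominated convergence $V\int_0^\infty\phi\,\diff t$ and $V\int_0^T\phi\,\diff t$ vanish as $V\to0$, hence $S(0^+)=-C_1<0$. Moreover, for $V\ge k$ we have $\min\{V/k,1\}=1$, so $\phi(V,t)=a(t)+b(t)/V$ with $a,b$ independent of $V$; integrating and invoking Lemma~\ref{lemma: explicit calculation integral} exactly as in Corollary~\ref{cor: vb determination} gives $S(V)=\beta V+\gamma$ on $[k,\infty)$, with $\beta=C_0+\alpha(\tau_2A_3-A_4)=\lim_{V\to\infty}R(V)$ and $\gamma=-C_1+\alpha k(A_6-\tau_2A_5)$. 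Thus $R(V)=\beta+\gamma/V$ for $V\ge k$, so the five cases $\beta>0$; $\beta=0,\gamma<0$ (i.e.\ $\lim R\downarrow0$); $\beta<0$; $\beta=0,\gamma>0$ (i.e.\ $\lim R\uparrow0$); $\beta=\gamma=0$ (i.e.\ $R\equiv0$ for large $V$) are exhaustive.

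Next I would differentiate $S$ twice. Set $\psi(V,t):=\partial_V\bigl(V\phi(V,t)\bigr)$. Differentiating \eqref{eq: dv cdo v=vb} and using the expression \eqref{eq: partial cdo} for $\partial_V\phi$ — where the identity $V e^{-\nu t}\varphi(d_1)=k e^{-rt}\varphi(d_2)$, a consequence of $d_1=d_2+\sigma\sqrt t$, makes the barrier terms cancel — one finds
\[
\psi(V,t)=2e^{-\nu t}\Bigl(\lambda_1\Phi\bigl(d_1(\min\{\tfrac{V}{k},1\},t)\bigr)+\tfrac{\varphi(d_1(\min\{V/k,1\},t))}{\sigma\sqrt t}\Bigr).
\]
Differentiating under the integral sign (legitimate by Lemmas~\ref{lemma: varphi bounded}, \ref{lemma: phi bounded}, \ref{lemma: cdo unif bounded L1}) gives $S'(V)=C_0+\alpha\bigl(\tau_2\Psi_\infty(V)-\Psi_T(V)\bigr)$ with $\Psi_J(V):=\int_0^J\psi(V,t)\,\diff t$ for $J\in\{T,\infty\}$; since $\psi(0^+,t)=0$ we get $S'(0^+)=C_0>0$, and since $\psi$ is independent of $V$ for $V\ge k$ we get $S'\equiv\beta$ and $S''\equiv0$ on $[k,\infty)$. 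For $V\in(0,k)$, using $\varphi'(x)=-x\varphi(x)$ one obtains the manifestly positive $\partial_V\psi(V,t)=-\tfrac{2\ln(V/k)}{\sigma^3 V t^{3/2}}e^{-\nu t}\varphi(d_1(\tfrac{V}{k},t))$, whence $\Psi_\infty',\Psi_T'>0$ and
\[
S''(V)=\alpha\,\Psi_\infty'(V)\bigl(\tau_2-Q(a(V))\bigr),\qquad Q(a):=\frac{\int_0^T t^{-3/2}e^{-ct-a/t}\,\diff t}{\int_0^\infty t^{-3/2}e^{-ct-a/t}\,\diff t},
\]
where $c:=\nu+\tfrac12\lambda_1^2\sigma^2>0$ and $a(V):=\tfrac{(\ln(V/k))^2}{2\sigma^2}$ is strictly decreasing on $(0,k)$ (after inserting the explicit form of $e^{-\nu t}\varphi(d_1)$, the $t$‑independent factor of $\partial_V\psi$ cancels in the ratio, which is why $Q$ depends only on $a(V)$).

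The crux — and the step I expect to be the real obstacle, the rest being bookkeeping together with the routine but delicate justifications of differentiation under the integral sign — is that $Q$ is strictly decreasing in $a$, so that $Q(a(V))$ is strictly increasing and $\tau_2-Q(a(V))$ strictly decreasing on $(0,k)$. I would prove this by an association argument: let $\mu_a$ be the probability measure on $(0,\infty)$ with density proportional to $t^{-3/2}e^{-ct-a/t}$; then $Q(a)=\mathbb{E}_{\mu_a}[\1_{\{t\le T\}}]$, and differentiation yields $Q'(a)=-\mathrm{Cov}_{\mu_a}\bigl(\tfrac1t,\1_{\{t\le T\}}\bigr)\le0$, strictly, since $t\mapsto\tfrac1t$ and $t\mapsto\1_{\{t\le T\}}$ are both non‑increasing, hence comonotone under $\mu_a$. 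Therefore $S''$ changes sign at most once on $(0,k)$, from $+$ to $-$, and $S''\equiv0$ on $[k,\infty)$; so $S'$ is non‑decreasing, then non‑increasing, then constant $(=\beta)$. Since $S'(0^+)=C_0>0$, it follows that $S'$ is positive on some interval $(0,u^\ast)$ and $\le0$ on $(u^\ast,\infty)$, where $u^\ast\in(0,\infty]$ satisfies $u^\ast=\infty$ if $\beta>0$, $u^\ast=k$ if $\beta=0$, and $u^\ast\in(0,k)$ (a single sign change of $S'$) if $\beta<0$. Hence $S$ is strictly increasing on $(0,u^\ast)$ and non‑increasing on $(u^\ast,\infty)$.

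It remains to read off the zero counts, using $S(0^+)=-C_1<0$, the affine form of $S$ on $[k,\infty)$, and that $R$ and $S$ share zeros on $(0,\infty)$. If $\beta=\lim_{V\to\infty}R>0$, then $S$ is strictly increasing with $S\to+\infty$, hence has exactly one zero. If $\lim R\downarrow0$, i.e.\ $\beta=0,\gamma<0$, then $S$ increases strictly on $(0,k)$ from $-C_1<0$ to $S(k)=\gamma<0$ and is then constant $=\gamma<0$, so $S<0$ throughout and $R$ has no zero — in both cases $R$ has at most one zero. If $\beta<0$, then $S$ rises from $-C_1<0$ to a single maximum and then decreases to $-\infty$, so it has at most two zeros; and if $\lim R\uparrow0$, i.e.\ $\beta=0,\gamma>0$, then $S$ increases from $-C_1<0$ to $\gamma>0$ on $(0,k)$ and is then constant $=\gamma>0$, so it has exactly one zero — in both cases $R$ has at most two zeros. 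Finally, if $R\equiv0$ for all large $V$, then $\beta+\gamma/V\equiv0$ there forces $\beta=\gamma=0$, so $R\equiv0$ on $[k,\infty)$, while on $(0,k)$ the function $S$ increases strictly from $-C_1<0$ to $S(k)=0$, so $R<0$ on $(0,k)$ and there is no zero (a fortiori at most one) below $k$. This yields all three assertions.
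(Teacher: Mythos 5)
Your argument is correct in substance and, at its core, is the same argument as the paper's, but packaged in a way that is arguably cleaner. The paper writes $R'(V)=H(V)/V^2$ and shows that $H$ starts positive, is constant on $[k,\infty)$, and is unimodal on $(0,k)$ because the ratio $\eta=B_1/B_2$ of truncated to full integrals of $t^{-3/2}e^{-rt}\varphi(d_2(\tfrac{V}{k},t))$ is increasing, which it proves by an explicit computation of $B_1'B_2-B_1B_2'$ as a double integral with a monotone kernel. You instead pass to $S=VR$; since $S''=H'/V$, your claim that $S''$ changes sign at most once from $+$ to $-$ is literally the paper's unimodality of $H$, and your ratio $Q(a(V))$ coincides (up to the factor $\tau_2$ and the identity $Ve^{-\nu t}\varphi(d_1)=ke^{-rt}\varphi(d_2)$, which the paper also derives in the proof of Lemma \ref{lemma: dv2 cdo bounded l1}) with the paper's $\eta$. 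Your covariance argument $Q'(a)=-\mathrm{Cov}_{\mu_a}(\tfrac1t,\1_{\{t\le T\}})\le 0$ is exactly the conceptual content of the paper's double-integral computation, stated as a one-line correlation inequality; this is a genuine simplification of that step. What your packaging additionally buys is the explicit affine tail $R(V)=\beta+\gamma/V$ on $[k,\infty)$, which makes the concluding case analysis (and the exhaustiveness of the five cases) more transparent than the paper's discussion of where $H$ may or may not vanish. The deferred interchanges of derivative and integral are justified by the same domination estimates the paper uses, so I do not regard them as a gap.

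One point deserves flagging: in the case $\lim_{V\to\infty}R(V)\downarrow 0$ (i.e.\ $\beta=0$, $\gamma<0$ in your notation) you correctly conclude that $R$ has \emph{no} zero, whereas the lemma literally asserts a \emph{unique} zero. This is not an error on your part: the paper's own proof reaches the same conclusion in its case (i) (``if $\lim_{V\to\infty}R(V)\le 0$, then no zero root exists''), and its case (ii.a) treatment of $\downarrow 0$ is vacuous there because an eventually decreasing $R$ tending to $0$ cannot be negative in the tail. The statement should read ``at most one zero root'' in the $\downarrow 0$ case, which is how the lemma is actually used in the proof of Theorem \ref{th: VB determination} (where the no-root situation is handled separately by setting $V_B=V_0$). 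Your proof establishes the corrected statement, and in fact pins down the exact zero count in every case.
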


\begin{proof}
	Taking the derivative yields $R'(V) = \tfrac{1}{V^2} H(V)$, as follows from \eqref{eq: dv2 cdo<} and \eqref{eq: dv2 cdo>}, where
	\begin{align*}
		H(V) :=&\, \Big( \tfrac{2(P-\frac{G}{r})A_1}{rT} + 2 \tfrac{G}{r} A_2 - \tau_1 \frac{G}{r}(\lambda_2+\lambda_3) \Big) \\
		&+ \tau_2 \alpha 2k \int_T^\infty e^{-rt} \Big(\lambda_2  \Phi(d_2(\min\{\tfrac{V}{k},1\},t)) + \frac{ \varphi(d_2(\min\{\frac{V}{k},1\},t))}{\sigma \sqrt{t}} \Big) \diff t \\
		&- \alpha (1-\tau_2) 2k \displaystyle\int_0^T e^{-rt} \Big(\lambda_2  \Phi(d_2(\min\{\tfrac{V}{k},1\},t)) + \frac{ \varphi(d_2(\min\{\frac{V}{k},1\},t))}{\sigma \sqrt{t}} \Big) \diff t.
	\end{align*}
	
	First, observe that $\lim_{V \to 0} H(V) = \Big( \tfrac{2(P-\frac{G}{r})A_1}{rT} + 2 \tfrac{G}{r} A_2 - \tau_1 \frac{G}{r}(\lambda_2+\lambda_3) \Big) > 0$ by Lemma \ref{lemma: ass reformulation derivative}, and that $H$ is constant for $V \geq k$. Moreover, we note that if $H(k)=0$, then $R(V)=0$ for all $V \geq k$.
	
	In what follows, we establish the stated properties by considering several cases. 
	First, in (i), we treat the case in which $H$ has no zeros in the interval $(0,k]$. 
	Next, in (ii), we analyze the case where $H$ has exactly one zero in $(0,k)$, 
	distinguishing the subcases (ii.a) $H(k) \neq 0$ and (ii.b) $H(k) = 0$. 
	Finally, in (iii), we show that $H$ cannot have more than one zero in $(0,k)$. 
	The argument is completed by noting that $H$ is constant on $[k,\infty)$.
	
	(i) If $H$ has no zero in $(0,k]$, then $H$ has no zero in $(0,\infty)$, i.e. $H > 0$, and consequently $R$ is strictly increasing in $V$. Hence, if $\lim_{V \to \infty} R(V) \leq 0$, then no zero root exists, and if $\lim_{V \to \infty} R(V) > 0$, a unique zero root exists.
	
	(ii.a) Let $H(k) \neq 0$. If $H$ has exactly one zero in $(0,k)$, then there exists $\tilde{V}$ such that $H(V) > 0$ for all $V < \tilde{V}$ and $H(V) < 0$ for all $V > \tilde{V}$. By the first part of the proof of Theorem \ref{th: VB determination} (using that $R=h_2$ with $h_2$ as in \eqref{eq: def of h2}), we have $\lim_{V \to 0} R(V) = -\infty$. If $\lim_{V \to \infty} R(V) > 0$ (resp. $\downarrow 0$), the zero root of \eqref{eq: formula for vb} is unique. Indeed, in intervals where $R$ is strictly increasing or decreasing, there can be at most one zero. Since $R$ decreases to a positive value, no additional zero can occur in this region where $R$ is decreasing. Hence, $R$ has a unique zero (existence follows by the intermediate value theorem) as $H$ is continuous and constant for $V\geq k$. If $H$ would have a zero at $V=k$, then $H(V) = 0$ for all $V \geq k$, i.e. $R$ is constant for $V \geq k$. Since we assumed here that $\lim_{V \to \infty} R(V) > 0$ (resp. $\downarrow 0$), we conclude that $R(V) > 0$ for all $V \geq k$. Thus, if $H$ has at most one zero in $(0,k)$, then $R$ has at most one local extremum in $(0,k)$\footnote{We adopt the convention that local extrema cannot occur at the boundary points of open intervals.}, which must be a maximum since $\lim_{V \to 0} R(V) = -\infty$. Therefore, $R(V) > 0$ for all $V \geq k$ implies that the zero of $R$ in $(0,\infty)$ is unique. Now, if $\lim_{V \to \infty} R(V) < 0$ (resp. $\uparrow 0$), there can be at most two zero roots of $R$ with a similar argumentation as before. (ii.b) Furthermore, if $H(k)=0$, i.e., $H(V) = 0$ and $R(V)=0$ for all $V \geq k$, then a similar argument shows that there is at most one zero root in $(0,k)$.
	
	(iii) Finally, we show that $H$ has no more than one zero in $(0,k)$. Therefore, let $\varepsilon \in (0,k)$ be small and consider $H$ on $(0,k-\varepsilon]$. Then, it holds with $\tfrac{d_2(\frac{V}{k},t)}{\sigma \sqrt{t}} = \tfrac{\ln (\frac{V}{k})}{\sigma^2 t} + \lambda_2$ and $\varphi'(x) = -x\varphi(x)$ that
	\begin{align*}
		\tfrac{\partial}{\partial V} \Big(\lambda_2  \Phi(d_2(\tfrac{V}{k},t)) + \frac{ \varphi(d_2(\frac{V}{k},t))}{\sigma \sqrt{t}} \Big) &= \lambda_2  \varphi(d_2(\tfrac{V}{k},t)) \frac{1}{\sigma \sqrt{t} V} -  \varphi(d_2(\frac{V}{k},t)) \frac{d_2(\frac{V}{k},t)}{\sigma \sqrt{t}} \frac{1}{\sigma \sqrt{t} V} \\
		&= \tfrac{(-\ln (\frac{V}{k}))}{\sigma^3 t^{3/2} V} \varphi(d_2(\frac{V}{k},t)) =: \hat{f} (t,V).
	\end{align*}
	By Lemma \ref{lemma: varphi ln 0}, $\lim_{V \to 0} \hat{f}(t,V) = 0$. With $\tilde{\varepsilon} := -\ln\!\big(\tfrac{k-\varepsilon}{k}\big) > 0$, we obtain
	\begin{align*}
		\lim_{V \to k-\varepsilon} \hat{f}(t,V) = \tfrac{\tilde{\varepsilon}}{\sigma^3 t^{3/2} (k-\varepsilon)} \tfrac{1}{\sqrt{2\pi}} e^{-\frac{1}{2} (\frac{\tilde{\varepsilon}^2}{\sigma^2 t} -2\lambda_2\tilde{\varepsilon}+\lambda_2^2 \sigma^2 t)}.
	\end{align*}
	Hence, there exists $\hat{F}(t)$ with $|\hat{f}(t,V)| \leq \hat{F}(t)$ for all $V \in (0,k-\varepsilon]$ and $\int_0^\infty \hat{F}(t)\,\mathrm{d}t < \infty$, so that the derivative and the integral can be interchanged. Therefore, for $V \in (0,k-\varepsilon]$ we have
	\begin{align*}
		H'(V) = 2k\alpha\tau_2 \int_{0}^\infty \tfrac{e^{-rt}(-\ln (\frac{V}{k}))}{\sigma^3 t^{3/2} V} \varphi(d_2(\frac{V}{k},t)) \diff t - 2k\alpha \displaystyle\int_{0}^T \tfrac{e^{-rt}(-\ln (\frac{V}{k}))}{\sigma^3 t^{3/2} V} \varphi(d_2(\frac{V}{k},t)) \diff t.
	\end{align*}
	Now define:
	\begin{align*}
		\eta (V) := \frac{2k\alpha \int_{0}^T \tfrac{e^{-rt}(-\ln (\frac{V}{k}))}{\sigma^3 t^{3/2} V} \varphi(d_2(\frac{V}{k},t)) \diff t}{2k\alpha\tau_2 \int_{0}^\infty \tfrac{e^{-rt}(-\ln (\frac{V}{k}))}{\sigma^3 t^{3/2} V} \varphi(d_2(\frac{V}{k},t)) \diff t} = \frac{ \int_{0}^T \tfrac{e^{-rt}}{t^{3/2}} \varphi(d_2(\frac{V}{k},t)) \diff t}{\tau_2 \int_{0}^\infty \tfrac{e^{-rt}}{t^{3/2}} \varphi(d_2(\frac{V}{k},t)) \diff t} =: \frac{B_1(V)}{B_2(V)}.
	\end{align*}
	Next, we show that $\eta$ is increasing, i.e. $\eta'>0$. Interchanging differentiation and integration is again justified by the same reasoning as above. We obtain $\eta' = \frac{B_1'B_2 - B_1B_2'}{B_2^2}$, so that $\eta'>0$ if and only if $B_1'B_2 - B_1B_2'>0$. Indeed,
	\begin{align*}
		B_1'(V)B_2(V) &= \tfrac{\tau_2}{\sigma V} \int_0^T \int_0^\infty \tfrac{e^{-rt}}{t^{2}} \varphi(d_2(\tfrac{V}{k},t)) (-d_2(\tfrac{V}{k},t)) \tfrac{e^{-rs}}{s^{3/2}} \varphi(d_2(\tfrac{V}{k},s)) \diff s \, \diff t, \\
		B_1(V)B_2'(V) &= \tfrac{\tau_2}{\sigma V} \int_0^T \int_0^\infty \tfrac{e^{-rt}}{t^{3/2}} \varphi(d_2(\tfrac{V}{k},t)) \tfrac{e^{-rs}}{s^{2}} \varphi(d_2(\tfrac{V}{k},s)) (-d_2(\tfrac{V}{k},s)) \diff s \, \diff t.
	\end{align*}
	Splitting the integral from $0$ to $\infty$ at $T$, the contributions over $[0,T]^2$ cancel, leaving
	\begin{align*}
		B_1'B_2-B_1B_2' 
		=&\, \tfrac{\tau_2}{\sigma V} \int_0^T \int_T^\infty \frac{e^{-rt}e^{-rs}}{s^{3/2}t^{3/2}} \varphi(d_2(\tfrac{V}{k},t)) \varphi(d_2(\tfrac{V}{k},s)) 
		\Big[\frac{d_2(\frac{V}{k},s)}{\sqrt{s}} - \frac{d_2(\frac{V}{k},t)}{\sqrt{t}}\Big] \diff s \, \diff t >0,
	\end{align*}
	since $0<t<s$ and $\psi(t) := \tfrac{d_2(\frac{V}{k},t)}{\sqrt{t}} = \tfrac{\ln(\frac{V}{k})}{t} + \lambda_2\sigma$ is increasing in $t$ (as $\ln(\tfrac{V}{k})<0$).  
	
	It follows that $H'(V) = 2k\alpha\tau_2 \int_{0}^\infty \tfrac{e^{-rt}(-\ln (\frac{V}{k}))}{\sigma^3 t^{3/2} V} \varphi(d_2(\frac{V}{k},t)) \diff t (1-\eta(V))$. Since $k$, $\alpha$, $\tau_2$, $\sigma$, $(-\ln (\frac{V}{k}))$, and $V$ are all non-negative, the sign of $H'$ is determined solely by the factor $(1-\eta)$, i.e., $\sign(H') = \sign (1-\eta)$. As $(1-\eta)$ is strictly decreasing, it follows that $H'(V)$ can undergo at most one sign change on $(0,k-\varepsilon]$, and this change must be from positive to negative. Passing to the limit as $\varepsilon \to 0$ yields that $H'$ admits at most one sign change on the full interval $(0,k)$, again necessarily from positive to negative. Consequently, $H$ may have at most one local maximum and no local minimum in $(0,k)$. Furthermore, by Lemma \ref{lemma: varphi bounded}, Lemma \ref{lemma: ass reformulation derivative}, and using the facts that $\lim_{x \to -\infty} \Phi(x)=0$ and $|\Phi|\leq 1$, we obtain that $\lim_{V \to 0} H(V) = \Big( \tfrac{2(P-\frac{G}{r})A_1}{rT} + 2 \tfrac{G}{r} A_2 - \tau_1 \tfrac{G}{r}(\lambda_2+\lambda_3) \Big) >0$. Since $H$ starts positive at the boundary $V=0$, possesses at most one local maximum, and no local minimum in $(0,k)$, it can have at most one zero in this interval.
\end{proof}

\begin{lemma} \label{lemma: long term >0}
	The inequalities $1+\rho(\lambda_2+\lambda_3)+2(1-\rho)A_2>0$ and $\tfrac{2(P-\frac{G}{r})A_1}{rT} + 2 \tfrac{G}{r} A_2 - \tau_1 \frac{G}{r}(\lambda_2+\lambda_3) > 0$ hold, where $\lambda_2,\lambda_3$ are defined in equation \eqref{eq: def lambda23}. 
\end{lemma}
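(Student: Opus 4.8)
Both inequalities follow almost immediately from lemmas already established in the excerpt; the only work is to recognize the right algebraic groupings.

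For the first inequality, observe that by the definition of $A_2$ in \eqref{eq: a2} we have exactly
\[
A_2 = \tfrac{\lambda_2-\lambda_3}{2} - \tfrac{1}{2 \lambda_3 \sigma^2 T} + \bigl(\lambda_3+\tfrac{1}{\lambda_3 \sigma^2 T} \bigr) \Phi (\lambda_3 \sigma \sqrt{T}) + \tfrac{\varphi (\lambda_3\sigma\sqrt{T})}{\sigma \sqrt{T}},
\]
so that $1+\rho(\lambda_2+\lambda_3)+2(1-\rho)A_2$ is precisely the quantity shown to be strictly positive in Lemma \ref{lemma: term bigger 0 for vb}. Hence the first claim is immediate.

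For the second inequality, the plan is to split off the part that does not involve $G$. Writing $\tfrac{2(P-\frac{G}{r})A_1}{rT} = \tfrac{2PA_1}{rT} - \tfrac{2\frac{G}{r}A_1}{rT}$, we regroup
\[
\tfrac{2(P-\frac{G}{r})A_1}{rT} + 2 \tfrac{G}{r} A_2 - \tau_1 \tfrac{G}{r}(\lambda_2+\lambda_3)
= \tfrac{2PA_1}{rT} + \Bigl( \tfrac{-2\frac{G}{r}A_1}{rT} + 2 \tfrac{G}{r} A_2 - \tau_1 \tfrac{G}{r}(\lambda_2+\lambda_3) \Bigr).
\]
By Lemma \ref{lemma: ass reformulation derivative}, the bracketed term is nonnegative (this is exactly the first statement of that lemma, which encodes assumption \eqref{eq: ass guarantee excess}). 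For the remaining term, Lemma \ref{lemma: a1,a2,a3,a4>0} gives $A_1>0$, and by the standing assumptions of the paper $P>0$, $r>0$, $T>0$, so $\tfrac{2PA_1}{rT}>0$. Adding a strictly positive quantity to a nonnegative one yields the strict inequality.

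There is essentially no obstacle here: the lemma is a bookkeeping consequence of Lemmas \ref{lemma: term bigger 0 for vb}, \ref{lemma: ass reformulation derivative}, and \ref{lemma: a1,a2,a3,a4>0}. The only point requiring a moment's care is matching the bracketed expression in Lemma \ref{lemma: term bigger 0 for vb} with $A_2$, and recognizing that the $G$-dependent part of the second expression is precisely the quantity controlled by the guarantee-excess assumption. \endproof
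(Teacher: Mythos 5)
Your proof is correct and follows essentially the same route as the paper: the first inequality is read off from Lemma \ref{lemma: term bigger 0 for vb} via the definition of $A_2$, and the second follows from the first statement of Lemma \ref{lemma: ass reformulation derivative} together with $A_1>0$ (Lemma \ref{lemma: a1,a2,a3,a4>0}) and $P>0$. Your explicit decomposition into $\tfrac{2PA_1}{rT}$ plus the $G$-dependent bracket is exactly the bookkeeping the paper leaves implicit.
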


\begin{myproof}
	The first part of the claim has already been established in Lemma \ref{lemma: term bigger 0 for vb} by inserting the definition of $A_2$ from equation \eqref{eq: a2}. The second part of the lemma follows directly from Lemma \ref{lemma: ass reformulation derivative}, with the additional observation that $A_1>0$ as established in Lemma \ref{lemma: a1,a2,a3,a4>0}, and that $P>0$ by definition.
\end{myproof}

\begin{lemma} \label{lemma: dv2 cdo bounded l1}
	It holds that $h(V_B):=\tfrac{\partial c_{do} (V,k,V_B,t)}{\partial V} \big|_{V=V_B}$ is continuously differentiable. Furthermore, for all $k\geq 0$ and $T>0$, there exists a constant $C>0$ such that $0 \leq \tfrac{\partial}{\partial V_B}[\tfrac{\partial c_{do} (V,k,V_B,T)}{\partial V} \big|_{V=V_B}] \leq C (e^{-rT}+e^{-\nu T})$ for all $V_B>0$. 
\end{lemma}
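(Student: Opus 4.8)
The plan is to derive a single closed-form expression for $\tfrac{\partial}{\partial V_B}\big[\tfrac{\partial c_{do}(V,k,V_B,t)}{\partial V}\big|_{V=V_B}\big]$ and then read off all three assertions from it. First I would dispose of the degenerate case $k=0$: by Lemma~\ref{lemma: cdo limit of derivative to 0 and infty}(c) the map $V\mapsto\tfrac{\partial c_{do}(V,0,V_B,t)}{\partial V}\big|_{V=V_B}$ does not depend on $V_B$, so $h$ is trivially continuously differentiable and its $V_B$-derivative is identically $0$, which satisfies $0\le 0\le C(e^{-rT}+e^{-\nu T})$. For $k>0$ I would split into the branches $V_B<k$ and $V_B>k$, using the explicit formulas \eqref{eq: dv cdo< v=vb} and \eqref{eq: dv cdo> v=vb}. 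On each open branch the right-hand side is a composition of $\Phi$, $\varphi$, $\log$ and rational functions of $V_B$ with no singularity (since $V_B>0$), hence manifestly $C^1$; differentiating termwise and simplifying --- using the algebraic identity $\lambda_2=\lambda_1-1$ from \eqref{eq: def lambda23} together with the Black--Scholes-type identity $V_Be^{-\nu t}\varphi(d_1(\tfrac{V_B}{k},t))=ke^{-rt}\varphi(d_2(\tfrac{V_B}{k},t))$ (which follows from $d_1-d_2=\sigma\sqrt t$ and $\tfrac12(d_1^2-d_2^2)=\ln\tfrac{V_B}{k}+(r-\nu)t$) --- both branches collapse to \eqref{eq: partial cdo}, namely
\[
\frac{\partial}{\partial V_B}\Big[\frac{\partial c_{do}(V,k,V_B,t)}{\partial V}\Big|_{V=V_B}\Big]
=\frac{2ke^{-rt}}{V_B^2}\Big(\lambda_2\Phi\big(d_2(\min\{\tfrac{V_B}{k},1\},t)\big)+\frac{\varphi\big(d_2(\min\{\tfrac{V_B}{k},1\},t)\big)}{\sigma\sqrt t}\Big).
\]
Evaluating the one-sided $V_B$-derivatives of the two branches at $V_B=k$ yields the same value (and $h$ itself is already continuous at $V_B=k$), so $h$ glues to a $C^1$ function on $(0,\infty)$.

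For the sign and the bound I would work from this formula with $t=T$. Writing $d_2:=d_2(\min\{V_B/k,1\},T)$ and noting $\min\{V_B/k,1\}\le 1$, we get $d_2\le\lambda_2\sigma\sqrt T$. If $\lambda_2\ge 0$ both summands are nonnegative. If $\lambda_2<0$, then $\lambda_2\ge d_2/(\sigma\sqrt T)$, so $\lambda_2\Phi(d_2)\ge\tfrac{d_2}{\sigma\sqrt T}\Phi(d_2)$ because $\Phi(d_2)>0$, and it remains to observe that $g(y):=\varphi(y)+y\Phi(y)\ge 0$ at $y=d_2$; this holds since $g'(y)=\Phi(y)>0$ and $g(y)\to 0$ as $y\to-\infty$, which is exactly the type of monotonicity argument already used in Lemmas~\ref{lemma: a1,a2,a3,a4>0} and \ref{lemma: term bigger 0 for vb}. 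This establishes the lower bound $0\le\tfrac{\partial}{\partial V_B}[\cdot]$.

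For the upper bound I would bound the factor $\tfrac{1}{V_B^2}\big(\lambda_2\Phi(d_2)+\tfrac{\varphi(d_2)}{\sigma\sqrt T}\big)$ uniformly in $V_B>0$. On $\{V_B\ge k\}$ the bracket is the constant $\lambda_2\Phi(\lambda_2\sigma\sqrt T)+\tfrac{\varphi(\lambda_2\sigma\sqrt T)}{\sigma\sqrt T}$ and $\tfrac1{V_B^2}\le\tfrac1{k^2}$, so it is bounded. On $\{0<V_B<k\}$ set $x=V_B/k\in(0,1)$, so $\tfrac1{V_B^2}=\tfrac1{k^2x^2}$ and $d_2(x,T)=a\ln x+b$ with $a=\tfrac1{\sigma\sqrt T}>0$, $b=\lambda_2\sigma\sqrt T$; then $\tfrac{\varphi(a\ln x+b)}{x^2}$ is bounded on $(0,\infty)$ by Lemma~\ref{lemma: varphi bounded}, while $\tfrac{\Phi(a\ln x+b)}{x^2}\to 0$ as $x\to 0$ by Lemma~\ref{lemma: phi bounded} and is continuous, hence bounded on $(0,1]$. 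Thus $C_0:=\sup_{V_B>0}\tfrac1{V_B^2}\big(\lambda_2\Phi(d_2)+\tfrac{\varphi(d_2)}{\sigma\sqrt T}\big)<\infty$, and consequently $\tfrac{\partial}{\partial V_B}[\cdot]\le 2kC_0\,e^{-rT}\le 2kC_0\,(e^{-rT}+e^{-\nu T})$, which is the claim with $C=2kC_0$. The main obstacle is the explicit differentiation and cancellation in the first step (two cases, many terms, reliance on the two identities above), together with the fact that --- although the closed form carries a $1/V_B^2$ prefactor --- the Gaussian decay of $\varphi(d_2)$ and $\Phi(d_2)$ as $V_B\to 0$, quantified precisely by Lemmas~\ref{lemma: varphi bounded} and \ref{lemma: phi bounded}, is what keeps the quantity bounded; checking that the two branches glue to a $C^1$ function at $V_B=k$ is a minor but necessary point.
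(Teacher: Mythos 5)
Your proposal is correct and follows essentially the same route as the paper's proof: the same collapse of both branches to the single expression \eqref{eq: partial cdo} via the identity $V_Be^{-\nu t}\varphi(d_1(\tfrac{V_B}{k},t))=ke^{-rt}\varphi(d_2(\tfrac{V_B}{k},t))$, the same gluing check at $V_B=k$ and treatment of $k=0$ via Lemma \ref{lemma: cdo limit of derivative to 0 and infty}\eqref{lemma, item: k=0 limit}, the same non-negativity argument from $y\Phi(y)+\varphi(y)\ge 0$ together with $d_2(\min\{V_B/k,1\},T)\le\lambda_2\sigma\sqrt{T}$, and the same appeal to Lemmas \ref{lemma: varphi bounded} and \ref{lemma: phi bounded} for the upper bound (you phrase it as a single supremum over $V_B>0$ where the paper checks the two limits and invokes continuity, which is only a cosmetic difference).
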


\begin{myproof}
	To establish the continuous differentiability of $h$, it suffices to verify that the derivative is continuous at $V=k$, as these properties follow directly for all other points. From equations \eqref{eq: dv cdo< v=vb} and \eqref{eq: dv cdo> v=vb} resp. \eqref{eq: dv cdo v=vb}, we observe using $2-2\lambda_1=-2\lambda_2$:
	\begin{align}
		\tfrac{\partial}{\partial V_B}[\tfrac{\partial}{\partial V} c_{do}^{V_B \leq k} (V,k,V_B,t) \big|_{V=V_B}] =&\, 2e^{-\nu t} \Big(  \tfrac{\lambda_1 \varphi(d_1(\frac{V_B}{k},t))}{\sigma \sqrt{t}V_B}+ \frac{-\varphi(d_1(\frac{V_B}{k},t))d_1(\frac{V_B}{k},t)}{\sigma^2 t V_B} \Big) \notag \\
		&+ \tfrac{2ke^{-rt}}{V_B^2} \Big(\lambda_2  \Phi(d_2(\tfrac{V_B}{k},t)) + \frac{ \varphi(d_2(\frac{V_B}{k},t))}{\sigma \sqrt{t}} \Big) \notag \\
		&- \tfrac{2ke^{-rt}}{V_B} \Big(\tfrac{\lambda_2 \varphi(d_2(\frac{V_B}{k},t))}{\sigma\sqrt{t}V_B} + \frac{ -\varphi(d_2(\frac{V_B}{k},t))d_2(\frac{V_B}{k},t)}{\sigma^2 t V_B} \Big) \notag \\
		=&\, \tfrac{2e^{-\nu t}}{\sigma\sqrt{t}V_B} \Big( \lambda_1 \varphi(d_1(\frac{V_B}{k},t))- \frac{\varphi(d_1(\frac{V_B}{k},t))(\ln(\frac{V_B}{k})+\lambda_1\sigma^2 t)}{\sigma^2 t} \Big) \notag \\
		&+ \tfrac{2ke^{-rt}}{V_B^2} \Big(\lambda_2  \Phi(d_2(\tfrac{V_B}{k},t)) + \frac{ \varphi(d_2(\frac{V_B}{k},t))}{\sigma \sqrt{t}} \Big) \notag \\
		&- \tfrac{2ke^{-rt}}{\sigma \sqrt{t}V_B^2} \Big(\lambda_2 \varphi(d_2(\frac{V_B}{k},t)) - \frac{\varphi(d_2(\frac{V_B}{k},t)) (\ln(\frac{V_B}{k})+\lambda_2\sigma^2 t)}{\sigma^2 t} \Big) \notag \\
		=&\, -\tfrac{2e^{-\nu t} \ln(\frac{V_B}{k})}{\sigma^3 t\sqrt{t}V_B} \varphi(d_1(\frac{V_B}{k},t)) \notag \\
		&+ \tfrac{2ke^{-rt}}{V_B^2} \Big(\lambda_2  \Phi(d_2(\tfrac{V_B}{k},t)) + \frac{ \varphi(d_2(\frac{V_B}{k},t))}{\sigma \sqrt{t}} (1+\tfrac{\ln(\frac{V_B}{k})}{\sigma^2t}) \Big) \notag \\
		=&\, \tfrac{2ke^{-rt}}{V_B^2} \Big(\lambda_2  \Phi(d_2(\tfrac{V_B}{k},t)) + \frac{ \varphi(d_2(\frac{V_B}{k},t))}{\sigma \sqrt{t}} \Big), \label{eq: dv2 cdo<} \\
		\tfrac{\partial}{\partial V_B} [\tfrac{\partial}{\partial V} c_{do}^{V_B \geq k} (V,k,V_B,t) \big|_{V=V_B}] =&\, \tfrac{2ke^{-rt}}{V_B^2} \Big(\lambda_2  \Phi(d_2(1,t)) + \frac{ \varphi(d_2(1,t))}{\sigma \sqrt{t}} \Big). \label{eq: dv2 cdo>}
	\end{align}
	The last step in \eqref{eq: dv2 cdo<} holds since:
	\begin{align*}
		\tfrac{2e^{-\nu t} \ln(\frac{V_B}{k})}{\sigma^3 t\sqrt{t}V_B} \varphi(d_1(\frac{V_B}{k},t)) =&\, \tfrac{2e^{-\nu t} \ln(\frac{V_B}{k})}{\sigma^3 t\sqrt{t}V_B} \cdot \tfrac{1}{\sqrt{2\pi}} e^{-\frac{1}{2 \sigma^2 t} [ \ln(\frac{V_B}{k})^2 + 2 \ln(\frac{V_B}{k}) (r-\nu+\frac{\sigma^2}{2})t+((r-\nu)^2+(r-\nu)\sigma^2+\frac{\sigma^4}{4})t^2 ]} \\
		=&\, \tfrac{2ke^{-r t} \ln(\frac{V_B}{k})}{\sigma^3 t\sqrt{t}V_B^2} \cdot e^{\ln(\frac{V_B}{k})} \cdot  e^{(r-\nu)t} \\
		&\cdot \tfrac{1}{\sqrt{2\pi}}e^{-\frac{1}{2 \sigma^2 t} [ \ln(\frac{V_B}{k})^2 + 2 \ln(\frac{V_B}{k}) (r-\nu-\frac{\sigma^2}{2})t + ((r-\nu)^2-(r-\nu)\sigma^2+\frac{\sigma^4}{4})t^2 ]} \\
		&\cdot e^{-\frac{1}{2 \sigma^2 t} [2\ln(\frac{V_B}{k})\sigma^2t+2(r-\nu)\sigma^2t^2]} \\
		=&\, \tfrac{2ke^{-r t} \ln(\frac{V_B}{k})}{\sigma^3 t\sqrt{t}V_B^2} \cdot e^{\ln(\frac{V_B}{k})+(r-\nu)t} \cdot e^{-\ln(\frac{V_B}{k})-(r-\nu)t} \\
		&\cdot \tfrac{1}{\sqrt{2\pi}} e^{-\frac{1}{2} (\frac{\ln(\frac{V_B}{k})+(r-\nu-\frac{\sigma^2}{2})t}{\sigma \sqrt{t}})^2} \\
		=&\, \tfrac{2ke^{-r t} \ln(\frac{V_B}{k})}{\sigma^3 t\sqrt{t}V_B^2} \varphi(d_2(\frac{V_B}{k},t)).
	\end{align*}
	Now, evaluating \eqref{eq: dv2 cdo<} and \eqref{eq: dv2 cdo>} at $V_B=k \neq 0$ yields:
	\begin{align*}
		\tfrac{\partial}{\partial V_B}[\tfrac{\partial}{\partial V} c_{do}^{V_B \leq k} (V,k,V_B,t) \big|_{V=V_B}] \big|_{V_B=k}
		=&\, \tfrac{2e^{-rt}}{k} \Big(\lambda_2  \Phi(d_2(1,t)) + \frac{ \varphi(d_2(1,t))}{\sigma \sqrt{t}} \Big), \\
		\tfrac{\partial}{\partial V_B} [\tfrac{\partial}{\partial V} c_{do}^{V_B \geq k} (V,k,V_B,t) \big|_{V=V_B}] \big|_{V_B=k}=&\, \tfrac{2e^{-rt}}{k} \Big(\lambda_2  \Phi(d_2(1,t)) + \frac{ \varphi(d_2(1,t))}{\sigma \sqrt{t}} \Big).
	\end{align*}
	If $k=0$, Lemma \ref{lemma: cdo limit of derivative to 0 and infty}\eqref{lemma, item: k=0 limit} implies that $\tfrac{\partial}{\partial V_B}[\tfrac{\partial}{\partial V} c_{do} (V,k,V_B,t) \big|_{V=V_B}] = 0$. This shows that $h$ is continuously differentiable.
	
	For the second claim, we need to show that the limits of $\tfrac{\partial}{\partial V_B}[\tfrac{\partial c_{do} (V,k,V_B,T)}{\partial V} \big|_{V=V_B}]$ as $V_B \to 0$ and $V_B \to \infty$ are both bounded by $C (e^{-rT}+e^{-\nu T})$ for some constant $C>0$. The continuity then gives us the result. If $k=0$, we already know that $\tfrac{\partial}{\partial V_B}[\tfrac{\partial c_{do} (V,k,V_B,T)}{\partial V} \big|_{V=V_B}] = 0$, which trivially implies the claim. Thus, we assume $k>0$. For the limit when $V_B \to 0$, we only need to consider the case $V_B \leq k$. Each term can be individually analyzed using Lemma \ref{lemma: varphi bounded}, 
	and Lemma \ref{lemma: phi bounded} (with $\ln (\tfrac{V_B}{k})=\ln(V_B)-\ln(k)$) to ensure the boundedness of $\tfrac{\partial}{\partial V_B}[\tfrac{\partial c_{do} (V,k,V_B,T)}{\partial V} \big|_{V=V_B}]$ as $V_B \to 0$. For the limit when $V_B \to \infty$, we only need to consider the case $V_B\geq k$. In this case, we have $\lim_{V_B \to \infty} \tfrac{\partial}{\partial V_B}[\tfrac{\partial c_{do} (V,k,V_B,T)}{\partial V} \big|_{V=V_B}] = 0$. Combining these results with the continuity of $\tfrac{\partial}{\partial V_B}[\tfrac{\partial c_{do} (V,k,V_B,T)}{\partial V} \big|_{V=V_B}]$ in $V_B$, we obtain the upper bound. The non-negativity of $\tfrac{\partial}{\partial V_B}[\tfrac{\partial c_{do} (V,k,V_B,T)}{\partial V} \big|_{V=V_B}]$ is a straight application of $x \Phi(x)+\varphi(x) = \int_{-\infty}^x \Phi(t) \diff t \geq 0$, $\ln(\min\{1,\tfrac{V_B}{k}\}) \leq 0$, and $\lambda_2 \sigma \sqrt{t} = d_2(1,t)$.
\end{myproof}

\begin{lemma} \label{lemma: dcdo dalpha}
	There exists an $\hat{\alpha}>0$ and a constant $C_{\hat{\alpha}}>0$ such that $|V_B'(\alpha)| \leq C_{\hat{\alpha}}$ and $|\tfrac{\partial c_{do} (V,k,V_B(\alpha),T)}{\partial \alpha}| \leq C_{\hat{\alpha}} (e^{-\nu T}+e^{-rT})$ for all $\alpha \in [0,\hat{\alpha}]$.
\end{lemma}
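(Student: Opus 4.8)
The plan is to exploit the quotient representation \eqref{eq: formula dvb dalpha} of $V_B'(\alpha)$, bounding its numerator above and its denominator away from zero, and then to pass to $\partial c_{do}/\partial\alpha$ via the chain rule. First I would secure the validity of \eqref{eq: formula dvb dalpha} near $\alpha=0$. Writing $R(\cdot\,;\alpha)$ for the right-hand side of \eqref{eq: formula for vb}, the map $(V_B,\alpha)\mapsto R(V_B;\alpha)$ is $C^1$ by Lemmas \ref{lemma: cdo c1} and \ref{lemma: dv2 cdo bounded l1}, and at $\alpha=0$ the equation $R(V_B;0)=0$ reduces to the affine-linear one with unique root \eqref{eq: vb alpha=0 main text}. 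I may assume $V_B(0)<V_0$, since otherwise monotonicity (Proposition \ref{prop: vb monoton alpha g}) forces $V_B\equiv V_0$ on $[0,\infty)$ and both quantities in the lemma vanish identically. Because $\lim_{V\to0}R(V;\alpha)=-\infty$ while $R'(V;\alpha)=V^{-2}H(V)$ with $H(0^+)>0$ (cf. Lemma \ref{lemma: long term >0}), the smallest root is reached as $R$ crosses zero from below, so $\partial_{V_B}R=R'>0$ there; this is exactly the crossing analysis of Lemma \ref{lemma: uniqueness preparation lemma} and the proof of Theorem \ref{th: VB determination}. The implicit function theorem then yields a $C^1$ branch $\alpha\mapsto V_B(\alpha)$ on a right-neighborhood $[0,\hat\alpha_0]$ with $V_B(\alpha)<V_0$, coinciding with the smallest root, and with derivative \eqref{eq: formula dvb dalpha}.

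Next I would estimate the two parts of \eqref{eq: formula dvb dalpha}. For the numerator, Lemma \ref{lemma: cdo unif bounded L1} gives $|\partial_V c_{do}(V,k,V_B,t)|_{V=V_B}|\le C(e^{-\nu t}+e^{-rt})$ with $C$ independent of $V_B$, and since $\int_0^\infty(e^{-\nu t}+e^{-rt})\,dt=\nu^{-1}+r^{-1}<\infty$, the numerator is bounded in absolute value by a constant $N$ depending only on the fixed data. For the denominator, a direct comparison using \eqref{eq: dv2 cdo<} and \eqref{eq: dv2 cdo>} shows that it equals $R'(V_B(\alpha))=V_B(\alpha)^{-2}H(V_B(\alpha))$, where $H(V)=c_0+\alpha\,\Theta(V)$ with $c_0:=\tfrac{2(P-\frac{G}{r})A_1}{rT}+2\tfrac{G}{r}A_2-\tau_1\tfrac{G}{r}(\lambda_2+\lambda_3)>0$ (Lemma \ref{lemma: long term >0}) and $|\Theta(V)|\le C'$ uniformly in $V$ by Lemma \ref{lemma: dv2 cdo bounded l1}. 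Hence $H(V_B(\alpha))\ge c_0/2$ once $\alpha\le\hat\alpha_1:=c_0/(2C')$, and since $0<V_B(0)\le V_B(\alpha)\le V_0$ for all $\alpha\ge0$ (Lemma \ref{lem: vb neq 0} and Proposition \ref{prop: vb monoton alpha g}), the denominator is $\ge c_0/(2V_0^2)>0$. Combining, $|V_B'(\alpha)|\le 2NV_0^2/c_0=:C_1$ on $[0,\min\{\hat\alpha_0,\hat\alpha_1\}]$.

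For the second bound I would apply the chain rule $\tfrac{\partial}{\partial\alpha}c_{do}(V,k,V_B(\alpha),T)=\tfrac{\partial c_{do}}{\partial V_B}(V,k,V_B(\alpha),T)\,V_B'(\alpha)$. Differentiating \eqref{eq: cd0 vb<k formula} and \eqref{eq: cd0 vb>k formula} with respect to $V_B$ produces finitely many terms, each carrying a factor $e^{-\nu T}$ or $e^{-rT}$ times a product of a power of $V_B/V_0$ (or $V_B/k$) and a standard normal density or cdf of $d_1$ or $d_2$; estimating these by $|\Phi|\le1$, Lemma \ref{lemma: varphi bounded}, and Lemma \ref{lemma: phi bounded}, together with continuity of the resulting expression on the compact set $V_B\in[V_B(0),V_0]$, yields $|\tfrac{\partial c_{do}}{\partial V_B}(V,k,V_B,T)|\le C_2(e^{-\nu T}+e^{-rT})$ there. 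With the previous step this gives $|\tfrac{\partial}{\partial\alpha}c_{do}(V,k,V_B(\alpha),T)|\le C_1C_2(e^{-\nu T}+e^{-rT})$, and the lemma follows with $\hat\alpha:=\min\{\hat\alpha_0,\hat\alpha_1\}$ and $C_{\hat\alpha}:=\max\{C_1,C_1C_2,1\}$.

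The delicate part is the first step: establishing genuine differentiability of $V_B(\alpha)$ and, crucially, that the denominator $R'(V_B(\alpha))$ stays bounded away from $0$ uniformly in $\alpha$. This rests on the sign structure of $H$ — that the smallest root of $R$ is a crossing from below, so $R'>0$ there — combined with the perturbative estimate $H(V_B(\alpha))=c_0+O(\alpha)$ with $c_0>0$ fixed. The term-by-term estimate for $\partial c_{do}/\partial V_B$ in the third step is routine but unpleasant, and the normal-tail estimates of Lemmas \ref{lemma: varphi bounded} and \ref{lemma: phi bounded} are exactly what make it go through; one should also note that in this framework $k\ge V_0\ge V_B$, which rules out spurious $t^{-1/2}$ singularities at $t=0$ in the integrals over $t$ and keeps all the dominating functions integrable with bounds independent of $\alpha$.
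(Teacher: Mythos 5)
Your proposal is correct and follows essentially the same route as the paper: the implicit function theorem applied to \eqref{eq: formula for vb} gives $V_B'(\alpha)=-R_\alpha/R_{V_B}$, the numerator is controlled by Lemma \ref{lemma: cdo unif bounded L1}, the denominator is positive at $\alpha=0$ by Lemma \ref{lemma: long term >0} and stays bounded away from zero on a right-neighborhood, and the second bound follows from the chain rule through $V_B(\alpha)$ with term-by-term normal-tail estimates using $V_B(\alpha)$ bounded and bounded away from zero. The only (harmless) difference is that you make the lower bound on the denominator quantitative via $H=c_0+\alpha\Theta\ge c_0/2$, where the paper simply invokes continuity of $R_{V_B}$ in $\alpha$ and shrinks $\hat\alpha$.
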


\begin{myproof}
	We begin by taking the derivative of $V_B(\alpha)$. For this, we use the expression from equation \eqref{eq: formula for vb}, where we define the right-hand side as $R(\alpha,V_B)$. According to the implicit function theorem (if applicable), we have $V_B'(\alpha) = - \tfrac{R_{\alpha}(\alpha,V_B(\alpha))}{R_{V_B}(\alpha,V_B(\alpha))}$, where $R_x$ denotes the partial derivative with respect to $x$. Next, we evaluate the partial derivatives involved. With $\tfrac{\partial  c_{do} (V,k,V_B,t) }{\partial V} \big|_{V=V_B}$ as in \eqref{eq: dv cdo v=vb} and $\tfrac{\partial}{\partial V_B}[\tfrac{\partial  c_{do} (V,k,V_B,t) }{\partial V} \big|_{V=V_B}]$ as in \eqref{eq: dv2 cdo<} resp. \eqref{eq: dv2 cdo>}, we obtain (where we suppress the dependency of $V_B$ on $\alpha$):
	\begin{align}
		R_\alpha(\alpha,V_B) =&\, \tau_2 \int_0^\infty \tfrac{\partial  c_{do} (V,k,V_B,t) }{\partial V} \big|_{V=V_B} \diff t - \displaystyle\int_0^T \tfrac{\partial c_{do} (V,k,V_B,t)}{\partial V} \big|_{V=V_B} \diff t, \label{eq: ralpha}\\
		R_{V_B}(\alpha,V_B) =&\, \tfrac{1}{V_B^2} \Big( \tfrac{2(P-\frac{G}{r})A_1}{rT} + 2 \tfrac{G}{r} A_2 - \tau_1 \frac{G}{r}(\lambda_2+\lambda_3) \Big) \notag \\
		&+ \tau_2 \alpha \int_0^\infty \tfrac{\partial}{\partial V_B}[\tfrac{\partial  c_{do} (V,k,V_B,t) }{\partial V} \big|_{V=V_B}] \diff t - \alpha \displaystyle\int_0^T \tfrac{\partial}{\partial V_B}[\tfrac{\partial c_{do} (V,k,V_B,t)}{\partial V} \big|_{V=V_B}] \diff t, \label{eq: rvb}
	\end{align}
	where we are allowed to interchange the integral and the derivative due to Lemma \ref{lemma: dv2 cdo bounded l1}. Next, we note that $V_B (0) > 0$ by \eqref{eq: vb alpha=0 main text} (since $\alpha=0$ corresponds to no participation). Hence, using Lemma \ref{lemma: long term >0}, we have that $R_{V_B} (0,V_B(0)) > 0$. (This property also ensures that we can apply the implicit function theorem.) 
	Consequently, the implicit function theorem guarantees the existence of an $\hat{\alpha} >0$ such that $V_B(\alpha)$ is continuously differentiable in $\alpha$ for $\alpha \in [0,\hat{\alpha}]$. Thus, we have the continuity of $R_{V_B}$ in $\alpha$. As a result, possibly after reducing $\hat{\alpha}>0$, it follows that $R_{V_B}$ is bounded from below by an $\varepsilon>0$ for all $\alpha \in [0,\hat{\alpha}]$. In particular, there exists a $C_{\hat{\alpha}}$ such that $|V_B'(\alpha)| \leq C_{\hat{\alpha}}$ for all $\alpha \in [0,\hat{\alpha}]$. Moreover, we conclude from equations \eqref{eq: cd0 vb<k formula} and \eqref{eq: cd0 vb>k formula} analogously to equations \eqref{eq: derivative cdo vb<k} and \eqref{eq: derivative cdo vb>k} (with the dependence of $V_B$ on $\alpha$ suppressed) that:
	\begin{align}
		\tfrac{\partial}{\partial \alpha} c_{do}^{V_B \leq k} (V,k,V_B,T) =&\, - e^{-\nu T}2\lambda_1 (\textstyle\frac{V_B}{V})^{2\lambda_1-1} V_B'(\alpha) \Phi (d_1 (\textstyle\frac{V_B^2}{V k},T)) \notag \\
		&- V e^{-\nu T} (\textstyle\frac{V_B}{V})^{2\lambda_1} \varphi (d_1 (\textstyle\frac{V_B^2}{V k},T)) \tfrac{2}{\sigma \sqrt{T} V_B} V_B'(\alpha) \notag \\
		&+\tfrac{k}{V}e^{-rT} (2\lambda_1-2) (\textstyle\frac{V_B}{V})^{2\lambda_1-3} V_B'(\alpha) \Phi (d_2 (\textstyle\frac{V_B^2}{V k},T)) \notag \\
		&+ke^{-rT} (\textstyle\frac{V_B}{V})^{2\lambda_1-2} \varphi (d_2 (\textstyle\frac{V_B^2}{V k},T)) \tfrac{2}{\sigma \sqrt{T} V_B} V_B'(\alpha), \label{eq: dalpha cdo<} \\
		\tfrac{\partial}{\partial \alpha} c_{do}^{V_B \geq k} (V,k,V_B,T) =&\, V e^{-\nu T} \varphi (d_1 (\textstyle\frac{V}{V_B},T)) \tfrac{-1}{\sigma \sqrt{T} V_B} V_B'(\alpha) - k e^{-rT} \varphi(d_2 (\textstyle\frac{V}{V_B},T)) \tfrac{-1}{\sigma \sqrt{T} V_B} V_B'(\alpha) \notag \\
		&- e^{-\nu T}2\lambda_1(\textstyle\frac{V_B}{V})^{2\lambda_1-1} V_B'(\alpha) \Phi (d_1 (\textstyle\frac{V_B}{V},T)) \notag \\
		&- V e^{-\nu T}(\textstyle\frac{V_B}{V})^{2\lambda_1} \varphi (d_1 (\textstyle\frac{V_B}{V},T)) \tfrac{1}{\sigma \sqrt{T} V_B} V_B'(\alpha) \notag\\
		&+\tfrac{k}{V}e^{-rT} (2\lambda_1-2) (\textstyle\frac{V_B}{V})^{2\lambda_1-3} V_B'(\alpha) \Phi (d_2 (\textstyle\frac{V_B}{V},T)) \notag\\
		&+ke^{-rT} (\textstyle\frac{V_B}{V})^{2\lambda_1-2} \varphi (d_2 (\textstyle\frac{V_B}{V},T)) \tfrac{1}{\sigma \sqrt{T} V_B} V_B'(\alpha). \label{eq: dalpha cdo>}
	\end{align}
	Next, we consider each term in the expression individually. All of these terms are bounded by $C_{\hat{\alpha}} (e^{-\nu T}+e^{-r T})$, since $V_B(\alpha)$ is bounded (by $V$ by definition) and bounded away from zero for all $\alpha \in [0,\hat{\alpha}]$ (using Lemma \ref{lem: vb neq 0} and Proposition \ref{prop: vb monoton alpha g}), with an appropriately chosen constant $C_{\hat{\alpha}}$. Therefore, the claim follows.
\end{myproof}

\begin{lemma} \label{lemma: dcdo dg}
	Assume 
	\begin{align*}
		\tfrac{2PA_1}{V_B(0)^2rT} &+ \tau_2 \alpha \displaystyle\int_0^\infty \tfrac{\partial}{\partial V_B}[\tfrac{\partial  c_{do} (V,k,V_B,t) }{\partial V} \big|_{V=V_B}] \big|_{V_B=V_B(0)}\diff t \\
		&- \alpha \displaystyle\int_0^T \tfrac{\partial}{\partial V_B}[\tfrac{\partial c_{do} (V,k,V_B,t)}{\partial V} \big|_{V=V_B}] \big|_{V_B=V_B(0)} \diff t \neq 0,
	\end{align*}
	where $\tfrac{\partial}{\partial V_B}[\tfrac{\partial  c_{do} (V,k,V_B,t) }{\partial V} \big|_{V=V_B}]$ is given in \eqref{eq: dv2 cdo<} resp. \eqref{eq: dv2 cdo>}.
	Then, there exists a $\hat{g}>0$ and a constant $C_{\hat{g}}>0$ such that $|V_B'(g)| \leq C_{\hat{g}}$ and $|\tfrac{\partial c_{do} (V,k,V_B(g),T)}{\partial g}| \leq C_{\hat{g}} (e^{-\nu T}+e^{-rT})$ for all $g \in [0,\hat{g}]$.
\end{lemma}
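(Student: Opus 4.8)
The plan is to repeat the argument used for Lemma~\ref{lemma: dcdo dalpha}, now differentiating with respect to $g$ instead of $\alpha$. Write $R(g,V_B)$ for the right-hand side of \eqref{eq: formula for vb} after the substitution $G=gT$, so that by definition $V_B(g)$ is the smallest root of $R(g,\cdot)=0$ and $V_B(0)$ is the smallest solution of \eqref{eq: vb0 g}. The key point is that $R$ depends on $g$ only through the affine-linear term $\tfrac{1}{V_B}\bigl(\tfrac{2(P-\frac{G}{r})A_1}{rT}+2\tfrac{G}{r}A_2-\tau_1\tfrac{G}{r}(\lambda_2+\lambda_3)\bigr)$ (the barrier-option terms carry no $g$-dependence), and that $R_{V_B}(0,V_B(0))$ is exactly the quantity appearing in the hypothesis of the lemma; the interchange of $\partial/\partial V_B$ with $\int_0^\infty$ and $\int_0^T$ needed to see this is licensed by the bound in Lemma~\ref{lemma: dv2 cdo bounded l1}. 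Since this quantity is assumed nonzero and $V_B(0)>0$ by Lemma~\ref{lem: vb neq 0}, the implicit function theorem applies at $(0,V_B(0))$.

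Carrying this out, the implicit function theorem supplies $\hat g>0$ with $g\mapsto V_B(g)$ of class $C^1$ on $[0,\hat g]$ and $V_B'(g)=-R_g(g,V_B(g))/R_{V_B}(g,V_B(g))$; computing $R_g$ from the affine term and substituting $G=gT$ reproduces formula \eqref{eq: formula dvb dg}. By joint continuity of $R_{V_B}$ and its nonvanishing at $(0,V_B(0))$, after possibly shrinking $\hat g$ we have $|R_{V_B}|\geq\varepsilon>0$ on the relevant range, while $R_g$ is bounded there because it is continuous in $V_B$ and, by Lemma~\ref{lem: vb neq 0} together with the monotonicity in Proposition~\ref{prop: vb monoton alpha g}, $V_B(g)$ stays in a compact subinterval of $(0,V_0]$. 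Hence $|V_B'(g)|\leq C_{\hat g}$ for all $g\in[0,\hat g]$.

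For the second estimate, observe that $c_{do}(V,k,V_B(g),T)$ depends on $g$ only through $V_B(g)$, so the chain rule gives $\tfrac{\partial}{\partial g}c_{do}(V,k,V_B(g),T)=\tfrac{\partial c_{do}}{\partial V_B}(V,k,V_B(g),T)\,V_B'(g)$, with $\partial c_{do}/\partial V_B$ obtained by differentiating \eqref{eq: cd0 vb<k formula}--\eqref{eq: cd0 vb>k formula}, giving the formulas \eqref{eq: dg cdo<} resp. \eqref{eq: dg cdo>}. Inspecting these term by term and using that $V_B(g)$ ranges in a compact subinterval of $(0,\infty)$, together with Lemmas~\ref{lemma: varphi bounded} and \ref{lemma: phi bounded} and the bound $|V_B'(g)|\leq C_{\hat g}$ just obtained, each term is dominated by $C_{\hat g}(e^{-\nu T}+e^{-rT})$, which is the claim.

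The delicate step---rather than a deep one---is to confirm that the $C^1$ branch produced by the implicit function theorem actually coincides with the function $V_B(g)$ defined as the \emph{smallest} root of \eqref{eq: formula for vb}. The hypothesis makes $(0,V_B(0))$ a transversal zero of $R(0,\cdot)$, so this root persists and moves continuously under small changes of $g$; combined with the left-continuity and the existence of right-limits of $g\mapsto V_B(g)$ from Proposition~\ref{prop: vb monoton alpha g}, the smallest-root branch cannot jump near $g=0$ and agrees with the implicit-function branch there. Note that, in contrast to Lemma~\ref{lemma: dcdo dalpha}, one cannot invoke Lemma~\ref{lemma: long term >0} to obtain $R_{V_B}>0$ for free: here $\alpha$ need not vanish, the barrier-option integrals do not drop out, and the hypothesis $\neq 0$ is genuinely needed---only the nonvanishing of $R_{V_B}$, not its sign, enters the argument.
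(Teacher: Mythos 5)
Your proposal is correct and follows essentially the same route as the paper: apply the implicit function theorem at $(0,V_B(0))$ using the hypothesis to guarantee $R_{V_B}(0,V_B(0))\neq 0$ (with the interchange of derivative and integral justified by Lemma \ref{lemma: dv2 cdo bounded l1} and $V_B(0)>0$ from Lemma \ref{lem: vb neq 0}), deduce a lower bound on $|R_{V_B}|$ near $g=0$ to control $|V_B'(g)|$, and then bound $\tfrac{\partial}{\partial g}c_{do}$ term by term via \eqref{eq: dg cdo<}--\eqref{eq: dg cdo>} using that $V_B(g)$ stays in a compact subset of $(0,\infty)$. Your additional remarks---that the implicit-function branch must be identified with the smallest root, and that unlike the $\alpha$-case one cannot get $R_{V_B}>0$ for free from Lemma \ref{lemma: long term >0}---are accurate clarifications of points the paper leaves implicit, but they do not alter the argument.
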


\begin{myproof}
	This proof follows from an approach similar to the proof of Lemma \ref{lemma: dcdo dalpha}. We begin by differentiating $V_B(g)$. Again, we use \eqref{eq: formula for vb} and define the right-hand side as $R(g,V_B)$. By applying the implicit function theorem (if applicable), we obtain $V_B'(g) = - \tfrac{R_{g}(g,V_B(g))}{R_{V_B}(g,V_B(g))}$, where $R_x$ denotes the partial derivative with respect to $x$. 
	With $\tfrac{\partial}{\partial V_B}[\tfrac{\partial  c_{do} (V,k,V_B,t) }{\partial V} \big|_{V=V_B}]$ as in \eqref{eq: dv2 cdo<} resp. \eqref{eq: dv2 cdo>}, we obtain (where we suppress the dependency of $V_B$ on $g$):
	\begin{align}
		R_g(g,V_B) =&\, -\tfrac{T}{V_B r} \Big( -\tfrac{2A_1}{rT} + 2A_2 - \tau_1(\lambda_2+\lambda_3) \Big), \label{eq: rg}\\
		R_{V_B}(g,V_B) =&\, \tfrac{1}{V_B^2} \Big( \tfrac{2(P-\frac{G}{r})A_1}{rT} + 2 \tfrac{G}{r} A_2 - \tau_1 \frac{G}{r}(\lambda_2+\lambda_3) \Big) \notag \\
		&+ \tau_2 \alpha \int_0^\infty \tfrac{\partial}{\partial V_B}[\tfrac{\partial  c_{do} (V,k,V_B,t) }{\partial V} \big|_{V=V_B}] \diff t - \alpha \displaystyle\int_0^T \tfrac{\partial}{\partial V_B}[\tfrac{\partial c_{do} (V,k,V_B,t)}{\partial V} \big|_{V=V_B}] \diff t, \label{eq: rvb2}
	\end{align}
	where $A_1$ and $A_2$ are defined as in \eqref{eq: a1} and \eqref{eq: a2}. By Lemma \ref{lem: vb neq 0}
	, we know that $V_B(0) \neq 0$. 
	Therefore, by assumption, we have that $R_{V_B}(0,V_B(0)) \neq 0$, which can be verified by substituting $g=0$ into $R_{V_B}(g,V_B(g))$ and comparing it with the assumption of the lemma. (This property also ensures us that we can apply the implicit function theorem.) 
	By the implicit function theorem, we conclude that there exists a $\hat{g} >0$ such that $V_B(g)$ is continuously differentiable in $g$ on $[0,\hat{g}]$. This guarantees the continuity of $R_{V_B}(g,V_B(g))$ in $g$. Furthermore, possibly after reducing $\hat{g}>0$, it follows that $|R_{V_B}(g,V_B(g))|$ is bounded from below for all $g \in [0,\hat{g}]$. In particular, there exists a constant $C_{\hat{g}}$ such that $|V_B'(g)| \leq C_{\hat{g}}$ for all $g \in [0,\hat{g}]$. Additionally, similar to Lemma \ref{lemma: dv2 cdo bounded l1} (where the dependence of $V_B$ on $g$ is omitted), we obtain:
	\begin{align}
		\tfrac{\partial}{\partial g} c_{do}^{V_B \leq k} (V,k,V_B,T) =&\, - e^{-\nu T}2\lambda_1 (\textstyle\frac{V_B}{V})^{2\lambda_1-1} V_B'(g) \Phi (d_1 (\textstyle\frac{V_B^2}{V k},T)) \notag \\
		&- V e^{-\nu T} (\textstyle\frac{V_B}{V})^{2\lambda_1} \varphi (d_1 (\textstyle\frac{V_B^2}{V k},T)) \tfrac{2}{\sigma \sqrt{T} V_B} V_B'(g) \notag \\
		&+\tfrac{k}{V}e^{-rT} (2\lambda_1-2) (\textstyle\frac{V_B}{V})^{2\lambda_1-3} V_B'(g) \Phi (d_2 (\textstyle\frac{V_B^2}{V k},T)) \notag \\
		&+ke^{-rT} (\textstyle\frac{V_B}{V})^{2\lambda_1-2} \varphi (d_2 (\textstyle\frac{V_B^2}{V k},T)) \tfrac{2}{\sigma \sqrt{T} V_B} V_B'(g), \label{eq: dg cdo<} \\
		\tfrac{\partial}{\partial g} c_{do}^{V_B \geq k} (V,k,V_B,T) =&\, V e^{-\nu T} \varphi (d_1 (\textstyle\frac{V}{V_B},T)) \tfrac{-1}{\sigma \sqrt{T} V_B} V_B'(g) - k e^{-rT} \varphi(d_2 (\textstyle\frac{V}{V_B},T)) \tfrac{-1}{\sigma \sqrt{T} V_B} V_B'(g) \notag \\
		&- e^{-\nu T}2\lambda_1(\textstyle\frac{V_B}{V})^{2\lambda_1-1} V_B'(g) \Phi (d_1 (\textstyle\frac{V_B}{V},T)) \notag \\
		&- V e^{-\nu T}(\textstyle\frac{V_B}{V})^{2\lambda_1} \varphi (d_1 (\textstyle\frac{V_B}{V},T)) \tfrac{1}{\sigma \sqrt{T} V_B} V_B'(g) \notag\\
		&+\tfrac{k}{V}e^{-rT} (2\lambda_1-2) (\textstyle\frac{V_B}{V})^{2\lambda_1-3} V_B'(g) \Phi (d_2 (\textstyle\frac{V_B}{V},T)) \notag\\
		&+ke^{-rT} (\textstyle\frac{V_B}{V})^{2\lambda_1-2} \varphi (d_2 (\textstyle\frac{V_B}{V},T)) \tfrac{1}{\sigma \sqrt{T} V_B} V_B'(g). \label{eq: dg cdo>}
	\end{align}
	We now consider each term in the expression individually. All of these terms are bounded by $C_{\hat{g}} (e^{-\nu T}+e^{-r T})$, since $V_B$ is bounded (by $V$) and bounded away from zero for all $g \in [0,\hat{g}]$ (using Lemma \ref{lem: vb neq 0} and Proposition \ref{prop: vb monoton alpha g}), with an appropriately chosen constant $C_{\hat{g}}$. Therefore, the claim follows.
\end{myproof}

\begin{lemma}\label{lemma: new}
	The partial derivatives $\tfrac{\partial}{\partial V_B} c_{do} (V,k,V_B,T)$, $\tfrac{\partial}{\partial \alpha} c_{do} (V,k,V_B,T)$, and $\tfrac{\partial}{\partial g} c_{do} (V,k,V_B,T)$ are continuous with respect to the variable of differentiation.
\end{lemma}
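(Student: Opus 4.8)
I would prove the three stated continuity claims in turn, reducing the $\alpha$- and $g$-statements to the $V_B$-statement via the chain rule. For the $V_B$-derivative, note first that away from the matching point $V_B=k$ there is nothing to do: on $\{0<V_B<k\}$ the function $c_{do}(V,k,V_B,T)$ coincides with $c_{do}^{V_B\le k}$ of \eqref{eq: cd0 vb<k formula}, and on $\{V_B>k\}$ with $c_{do}^{V_B\ge k}$ of \eqref{eq: cd0 vb>k formula}; both expressions are finite compositions of the $C^\infty$ maps $\Phi$, $\varphi$, $d_{1/2}(\cdot,T)$ and of powers and products of $V_B$ and $V_B/V$, none of which degenerate while $V_B>0$, so each piece is $C^\infty$ in $V_B$ there and in particular has a continuous $V_B$-derivative. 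Hence only continuity of $\partial_{V_B}c_{do}$ across $V_B=k$ requires an argument.

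\textbf{Matching at $V_B=k$.} Here I would differentiate \eqref{eq: cd0 vb<k formula} and \eqref{eq: cd0 vb>k formula} with respect to $V_B$ (the $V_B$-independent summand $c(V,k,T)$ in the first has zero derivative) and evaluate both at $V_B=k$. At $V_B=k$ one has $d_{1/2}\!\big(\tfrac{V_B^2}{Vk},T\big)=d_{1/2}\!\big(\tfrac{V_B}{V},T\big)$, and the powers $(V_B/V)^{2\lambda_1}$, $(V_B/V)^{2\lambda_1-2}$ are the same in both expansions; the only genuinely new contribution comes from differentiating $Ve^{-\nu T}\Phi(d_1(\tfrac{V}{V_B},T))-ke^{-rT}\Phi(d_2(\tfrac{V}{V_B},T))$ in \eqref{eq: cd0 vb>k formula}, and one checks that at $V_B=k$ it reproduces exactly the corresponding terms arising from \eqref{eq: cd0 vb<k formula}, using the identity between $\varphi(d_1(\cdot,t))$ and $\varphi(d_2(\cdot,t))$ precisely as in the closing computation of Lemma \ref{lemma: dv2 cdo bounded l1} and in the proof of Lemma \ref{lemma: cdo c1}. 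This is a routine, if bookkeeping-heavy, calculation. (Alternatively, since $k\ge V_0\ge V_B$ throughout Section \ref{chapter: analysis}, one may simply restrict to $V_B\le k$ and skip the matching altogether.) Either way, $\partial_{V_B}c_{do}(V,k,V_B,T)$ is continuous in $V_B$ on $(0,\infty)$.

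\textbf{The $\alpha$- and $g$-derivatives.} In these two statements $V_B$ is understood as the function $\alpha\mapsto V_B(\alpha)$ (resp.\ $g\mapsto V_B(g)$) defined implicitly by \eqref{eq: formula for vb}, and the chain rule gives
$\tfrac{\partial}{\partial\alpha}c_{do}(V,k,V_B(\alpha),T)=\big(\tfrac{\partial}{\partial V_B}c_{do}(V,k,V_B,T)\big)\big|_{V_B=V_B(\alpha)}\,V_B'(\alpha)$,
consistent with \eqref{eq: dalpha cdo<}--\eqref{eq: dalpha cdo>}, and similarly for $g$ via \eqref{eq: dg cdo<}--\eqref{eq: dg cdo>}. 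By Lemma \ref{lemma: dcdo dalpha} the implicit function theorem applies on some interval $[0,\hat\alpha]$, so $\alpha\mapsto V_B(\alpha)$ is $C^1$ there; its derivative is given by \eqref{eq: formula dvb dalpha}, a ratio of functions continuous in $\alpha$ whose denominator is bounded away from $0$ on $[0,\hat\alpha]$ (Lemma \ref{lemma: long term >0} together with the bounds of Lemma \ref{lemma: dv2 cdo bounded l1}), hence $\alpha\mapsto V_B'(\alpha)$ is continuous. Combining this with the continuity of $\partial_{V_B}c_{do}$ in $V_B$ from the first part and with the continuity of $\alpha\mapsto V_B(\alpha)$ (also contained in Proposition \ref{prop: vb monoton alpha g}), the product is continuous in $\alpha$. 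The $g$-statement follows verbatim from Lemma \ref{lemma: dcdo dg} (under its non-degeneracy hypothesis, which is the standing Assumption \ref{ass: g*>0} in the relevant subsection) together with \eqref{eq: formula dvb dg}.

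\textbf{Main obstacle.} The only real work is the $C^1$-matching of the one-sided $V_B$-derivatives at $V_B=k$; it is entirely parallel to Lemma \ref{lemma: cdo c1} and to \eqref{eq: dv2 cdo<}--\eqref{eq: dv2 cdo>}, so I would present it tersely, or bypass it using $V_B\le k$.
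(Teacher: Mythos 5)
Your proposal is correct and follows essentially the same route as the paper: reduce everything to continuity at the matching point $V_B=k$, verify it via the identities $(\tfrac{V_B}{V})^{2\lambda_1}\varphi(d_1(\tfrac{V_B}{V},T))=\varphi(d_1(\tfrac{V}{V_B},T))$ and $(\tfrac{V_B}{V})^{2\lambda_1-2}\varphi(d_2(\tfrac{V_B^2}{Vk},T))=\varphi(d_2(\tfrac{V}{V_B},T))$ at $V_B=k$, and obtain the $\alpha$- and $g$-statements from the continuity of $V_B'(\alpha)$ and $V_B'(g)$ established in Lemmas \ref{lemma: dcdo dalpha} and \ref{lemma: dcdo dg}. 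The only cosmetic difference is that you route the latter two claims through the chain rule on $\partial_{V_B}c_{do}$, whereas the paper matches the expressions \eqref{eq: dalpha cdo<}--\eqref{eq: dalpha cdo>} and \eqref{eq: dg cdo<}--\eqref{eq: dg cdo>} directly; the underlying computation is identical.
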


\begin{proof}
	According to the proofs of Lemmas \ref{lemma: dcdo dalpha} and \ref{lemma: dcdo dg}, the functions $V_B'(\alpha)$ and $V_B'(g)$ are continuous. Using equations \eqref{eq: dalpha cdo<}, \eqref{eq: dalpha cdo>}, \eqref{eq: dg cdo<}, and \eqref{eq: dg cdo>}, we observe that the continuity of the respective partial derivatives follows provided continuity holds at the point $V_B=k$. 
	To verify this, we compare the corresponding expressions at $V_B=k$ and note that continuity at $V_B = k$ is ensured if the following two conditions are satisfied: (i) $(\textstyle\frac{V_B}{V})^{2\lambda_1} \varphi (d_1 (\textstyle\frac{V_B}{V},T)) = \varphi (d_1 (\textstyle\frac{V}{V_B},T))$ and (ii) $(\textstyle\frac{V_B}{V})^{2\lambda_1-2} \varphi (d_2 (\textstyle\frac{V_B^2}{V k},T))= \varphi(d_2 (\textstyle\frac{V}{V_B},T))$. 
	
	We begin with the proof of (i). From \eqref{eq: def d12 lambda1}, it follows that $d_1 (x,T) = \tfrac{\ln x}{\sigma \sqrt{t}} + \sigma\lambda_1\sqrt{t}$. Then, using this expression, we compute:
	\begin{align*}
		(\textstyle\frac{V_B}{V})^{2\lambda_1} \varphi (d_1 (\textstyle\frac{V_B}{V},T)) &= e^{2\lambda_1 (\ln V_B - \ln V)} \cdot \tfrac{1}{\sqrt{2\pi}} e^{-\frac{1}{2} (\frac{(\ln V_B - \ln V)^2}{\sigma^2 t} + 2\lambda_1(\ln V_B - \ln V) +\sigma^2 \lambda_1^2 t)} \\
		&= \tfrac{1}{\sqrt{2\pi}} e^{-\frac{1}{2} (\frac{(\ln V_B - \ln V)^2}{\sigma^2 t} - 2\lambda_1(\ln V_B - \ln V) +\sigma^2 \lambda_1^2 t)} \\
		&= \tfrac{1}{\sqrt{2\pi}} e^{-\frac{1}{2} (\frac{(\ln V - \ln V_B)^2}{\sigma^2 t} + 2\lambda_1 \sigma \sqrt{t}\frac{(\ln V - \ln V_B)}{\sigma \sqrt{t}} +\sigma^2 \lambda_1^2 t)} = \varphi (d_1 (\textstyle\frac{V}{V_B},T)).
	\end{align*}
	
	The proof of (ii) follows analogously, using the facts that $\lambda_1-1=\lambda_2$ (see \eqref{eq: def lambda23}) and $d_2 (x,T) = \tfrac{\ln x}{\sigma \sqrt{t}} + \sigma\lambda_2\sqrt{t}$.
\end{proof}

\section{Proof of the main results} \label{proofs}
In this section, we present the proofs for all theorems and propositions discussed in the main body of the paper.

\begin{myproof}[Proof of Theorem \ref{th: VB determination}]
	By Lemma \ref{lem: vb neq 0}, the optimal bankruptcy-triggering value is bounded away from $0$. 
	
	First, we compute the left-hand side of \eqref{eq: smooth pasting condition} explicitly by substituting the closed-form expressions for the liability value and the firm value given in \eqref{eq: liability value L} and \eqref{eq: firm value v}, respectively:
	\begin{align*}
		\tfrac{\partial E(V;V_B,T)}{\partial V} \big|_{V=V_B} =&\, \tfrac{\partial v(V;V_B)}{\partial V} \big|_{V=V_B} - \tfrac{\partial L(V;V_B,T)}{\partial V} \big|_{V=V_B} \\
		=&\, \tfrac{\partial V}{\partial V} \big|_{V=V_B} + \tfrac{\partial \tau_1 \frac{G}{r} (1-(\frac{V_B}{V})^{\lambda_2+\lambda_3})}{\partial V} \big|_{V=V_B} + \tfrac{\partial \tau_2 \alpha \int_0^\infty c_{do} (V,k,V_B,t) \diff t}{\partial V} \big|_{V=V_B} \\
		&- \tfrac{\partial \rho V_B (\frac{V_B}{V})^{\lambda_2+\lambda_3}}{\partial V} \big|_{V=V_B}-\tfrac{\partial \frac{G}{r}}{\partial V} \big|_{V=V_B}-\tfrac{\partial ( P - \frac{G}{r} ) ( \frac{1-e^{-rT}}{rT} - I_1^V(T))}{\partial V} \big|_{V=V_B} \\
		&-\tfrac{\partial ( (1-\rho)V_B - \frac{G}{r} ) I_2^V(T)}{\partial V} \big|_{V=V_B} -\tfrac{\partial \alpha \int_0^T c_{do} (V,k,V_B,t) \diff t}{\partial V} \big|_{V=V_B} \\
		=&\, 1 - \tau_1 \tfrac{G}{r} \tfrac{\partial (\frac{V_B}{V})^{\lambda_2+\lambda_3}}{\partial V} \big|_{V=V_B} + \tau_2 \alpha \int_0^\infty \tfrac{\partial  c_{do} (V,k,V_B,t) }{\partial V} \big|_{V=V_B} \diff t \\
		&- \rho V_B  \tfrac{\partial (\frac{V_B}{V})^{\lambda_2+\lambda_3}}{\partial V} \big|_{V=V_B}-0+( P - \tfrac{G}{r} )\tfrac{\partial I_1^V(T)}{\partial V} \big|_{V=V_B} \\
		&-( (1-\rho)V_B - \tfrac{G}{r} )\tfrac{\partial I_2^V(T)}{\partial V} \big|_{V=V_B} - \alpha \int_0^T \tfrac{\partial c_{do} (V,k,V_B,t)}{\partial V} \big|_{V=V_B} \diff t,
	\end{align*}
	where we used Lemma \ref{lemma: cdo bounded L1} to interchange the integral with the derivative by the Leibniz integral rule (in the measure theoretic version). 
	
	Now, in a first step, we demonstrate the existence of a solution $V_B$ by setting the previous equation equal to zero. The definitions of $A_1$ and $A_2$ in \eqref{eq: a1} and \eqref{eq: a2}, along with Lemma \ref{lemma: i1 i2 derivative}, imply that $\tfrac{\partial I_1^V(T)}{\partial V} \big|_{V=V_B} = - \tfrac{2A_1}{rTV_B}$ and $\tfrac{\partial I_2^V(T)}{\partial V} \big|_{V=V_B} = - \tfrac{2A_2}{V_B}$. In the initial step, we will disregard the participating part by setting $\alpha=0$. Then, using Lemma \ref{lemma: i1 i2 derivative} and noting that $\tfrac{\partial (\frac{V_B}{V})^{\lambda_2+\lambda_3}}{\partial V} \big|_{V=V_B} = (-\lambda_2-\lambda_3)V_B^{\lambda_2+\lambda_3} V^{-\lambda_2-\lambda_3-1} \big|_{V=V_B} = -\tfrac{(\lambda_2+\lambda_3)}{V_B}$, we have:
	\begin{align}
		0 &= 1 + \tfrac{\tau_1 \frac{G}{r}(\lambda_2+\lambda_3)}{V_B} + \tfrac{\rho V_B(\lambda_2+\lambda_3)}{V_B}-\tfrac{2(P-\frac{G}{r})A_1}{rTV_B} + \tfrac{2((1-\rho)V_B-\frac{G}{r})A_2}{V_B} \notag \\
		&= 1 + \rho(\lambda_2+\lambda_3)+2(1-\rho)A_2 - \tfrac{1}{V_B} \Big( \tfrac{2(P-\frac{G}{r})A_1}{rT} + 2 \tfrac{G}{r} A_2 - \tau_1 \frac{G}{r}(\lambda_2+\lambda_3) \Big) =: h_1(V_B). \label{eq: def h1}
	\end{align}
	Solving this equation for $V_B$ yields:
	\begin{align} \label{eq: vb alpha=0}
		V_B^* = \frac{\tfrac{2(P-\frac{G}{r})A_1}{rT} + 2 \tfrac{G}{r} A_2 - \tau_1 \frac{G}{r}(\lambda_2+\lambda_3)}{1 + \rho(\lambda_2+\lambda_3)+2(1-\rho)A_2},
	\end{align}
	where Lemma \ref{lemma: long term >0} guarantees that the nominator and the denominator are positive, ensuring that $V_B^*$ is positive and well-defined. 
	
	Specifically, we find that $V_B^*>0$ solves $h_1(V_B)=0$. Since $h$ is increasing in $V_B$, we have $h_1(V) < 0$ for all $V<V_B^*$ and $h(V) > 0$ for all $V>V_B^*$. In particular, by Lemma \ref{lemma: term bigger 0 for vb},we obtain:
	\begin{align}
		\lim_{V \to 0} h_1(V) &= - \infty, \label{eq: h1 v to 0} \\
		\lim_{V \to \infty} h_1(V) &= 1 + \rho(\lambda_2+\lambda_3)+2(1-\rho)A_2 > 0. \notag
	\end{align}	
	Next, we incorporate the term with the participation component. Consequently, $\tfrac{\partial E(V;V_B,T)}{\partial V} \big|_{V=V_B}=0$ is equivalent to
	\begin{align} \label{eq: def of h2}
		0 &= h_1(V_B) + \tau_2 \alpha \int_0^\infty \tfrac{\partial  c_{do} (V,k,V_B,t) }{\partial V} \big|_{V=V_B} \diff t - \alpha \displaystyle\int_0^T \tfrac{\partial c_{do} (V,k,V_B,t)}{\partial V} \big|_{V=V_B} \diff t =: h_2(V_B).
	\end{align}
	Note that we plug in $V=V_B$ into both functions $h_1$ and $h_2$. Since the Dominated Convergence Theorem allows us to interchange the integral and the limit (with its prerequisite demonstrated in Lemma \ref{lemma: cdo bounded L1}), we can apply Lemma \ref{lemma: cdo limit of derivative to 0 and infty}. Furthermore, by utilizing the fact that $d_1(1,t)) = (\tfrac{r-\nu}{\sigma}+\tfrac{\sigma}{2})\sqrt{t} = \lambda_1 \sigma \sqrt{t}$, and incorporating the equalities from \eqref{eq: a3 equality integral} and \eqref{eq: a4 equality integral}, we can conclude that:
	\begin{align}
		\lim_{V_B \to 0} h_2(V_B) =&\, - \infty, \notag \\
		\lim_{V_B \to \infty} h_2(V_B) =&\, 1 + \rho(\lambda_2+\lambda_3)+2(1-\rho)A_2 + \tau_2 \alpha \int_0^\infty \Big(e^{-\nu t}(2\lambda_1 \Phi(\lambda_1 \sigma \sqrt{t})+ \tfrac{2\varphi(\lambda_1 \sigma \sqrt{t})}{\sigma \sqrt{t}} )\Big) \diff t \notag \\
		&- \alpha \int_0^T \Big(e^{-\nu t}(2\lambda_1 \Phi(\lambda_1 \sigma \sqrt{t})+ \tfrac{2\varphi(\lambda_1 \sigma \sqrt{t})}{\sigma \sqrt{t}} )\Big) \diff t \notag \\
		=&\, 1 + \rho(\lambda_2+\lambda_3)+2(1-\rho)A_2 + \tau_2 \alpha A_3 - \alpha A_4. \label{eq: limit h2 inf}
	\end{align}

	Now, formula \eqref{eq: formula for vb} follows directly from \eqref{eq: def of h2}, and we obtain formula \eqref{eq: dv cdo v=vb} from \eqref{eq: dv cdo< v=vb} and \eqref{eq: dv cdo> v=vb}, 
	where we used that $\lambda_1-1=\lambda_2$. 
	%
	%
	If the solution of \eqref{eq: formula for vb} exceeds $V_0$, the insurance company declares bankruptcy immediately. Therefore, we can equivalently set $V_B = V_0$ in this case, without affecting the timing of the bankruptcy declaration, while ensuring that $V_B$ represents the asset value (before subtracting the bankruptcy costs) at the time of bankruptcy.
	
	It still remains to discuss which solution of \eqref{eq: formula for vb} is the best one and the existence. First, we note that the optimal bankruptcy-triggering value $V_B$ has to be a minimum of $V \to E(V,V_B)$ as a maximum violates the condition $E(V) \geq 0$ for all $V \geq V_B$ since $E(V_B,V_B) = 0$ (see also the discussion at the beginning of Section \ref{subsection: derivation of vb}). Moreover, remember that for the optimal-bankruptcy triggering value $V_B$, the function $V \to E(V,V_B)$ cannot be negative. 
	
	Now, if $\lim_{V \to \infty} h_2(V) >0$ (resp. $\downarrow 0$), then Lemma \ref{lemma: uniqueness preparation lemma} implies the uniqueness of the zero root of \eqref{eq: formula for vb}, which trivially corresponds to a minimum of $V \to E(V,V_B)$ (as $\lim_{V \to 0} h_2(V) = - \infty$). If $\lim_{V \to \infty} h_2(V) < 0$ (resp. $\uparrow 0$), then Lemma \ref{lemma: uniqueness preparation lemma} implies that there exist up to two zero roots of \eqref{eq: formula for vb}. If $h_2$ has no zero root, then $E'_V (V,V) < 0$ for all $V>0$ as $\lim_{V \to 0} h_2(V) = - \infty$. Hence, every choice of a bankruptcy-triggering value would lead to a negative equity at a value larger than this point which contradicts our assumption of non-negative equity and immediate bankruptcy is optimal. Thus, we can set $V_B=V_0$, i.e., immediate bankruptcy\footnote{As \eqref{eq: formula for vb} admits no solution in this case, the smooth-pasting condition cannot be applied. 
	The absence of a solution initially corresponds to $V_B = \infty$, which implies immediate bankruptcy regardless of the initial value $V_0$. 
	Since choosing $V_B = V_0$ also results in immediate bankruptcy, we may set $V_B = V_0$ without affecting the outcome.}. If $h_2$ has one zero root $V_B^*$, then we have again the unique solution. If $h_2$ has two zero roots, by \eqref{eq: limit h2 inf} only the first zero root corresponds to a minimum of the function $V \to E(V,V_B)$. In the remaining case that $\lim_{V \to \infty} h_2(V) =0$ and that $h_2$ is constant from a certain $V$ on, Lemma \ref{lemma: uniqueness preparation lemma} implies that $h_2(V) =0$ for all $V \geq k$. Lemma \ref{lemma: uniqueness preparation lemma} entails that there is at most one zero root smaller than $k$ which corresponds analogously to before to a minimum of the function $V \to E(V,V_B)$. The zero roots larger than $k$ can be ignored, as we may take the minimum of the optimal solution and $V_0$ (since $k \geq V_0$ by assumption). Summarizing, we have shown the statement that the minimum of the smallest solution of the smooth pasting condition and $V_0$ is the optimal bankruptcy-triggering value. In particular, note that at most one solution of \eqref{eq: formula for vb} is a possible bankruptcy-triggering value.
\end{myproof}

\begin{myproof}[Proof of Proposition \ref{prop: vb monoton tau T}]
	We first note from the previous proof and Lemma \ref{lemma: uniqueness preparation lemma} that even when parameters change, no additional relevant zero roots can occur or vanish as long as a solution exists. 
	If $\tau_2$ increases, we observe that the right-hand side of \eqref{eq: formula for vb} also increases. From the proof of Theorem \ref{th: VB determination}, we know that the smallest solution of \eqref{eq: formula for vb} corresponds to a sign transition from ``$-$'' to ``$+$'' (or to a local minimum, with the function being positive to the right-hand side of the zero root). Therefore, as the graph shifts upwards, the zero root decreases showing that $V_B$ is increasing in $\tau_2$ if a zero root exists. If no zero root exists, it holds that $V_B=V_0$. Now, if $\tau_2$ decreases, no zero root can occur. If $\tau_2$ increases and equation \eqref{eq: formula for vb} gets a zero root, this leads to the case that $V_B \leq V_0$ (as $V_B$ is the minimum of this zero root and $V_0$).
	
	We know that the positive value $\tfrac{2(P-\frac{G}{r})A_1}{rT} + 2 \tfrac{G}{r} A_2 - \tau_1 \frac{G}{r}(\lambda_2+\lambda_3)$ decreases if $\tau_1$ increases. (Note that positivity is ensured by Lemma \ref{lemma: long term >0}.) Therefore, the zero root $V_B$ must also decrease in order to maintain the equality (if existent, otherwise similar to above), ensuring that the right-hand side of equation \eqref{eq: formula for vb} remains zero. 
	
	The argument for the contract maturity $T$ follows analogously, but in the opposite direction in both cases. Indeed, an increasing $T$ leads to lower values on the right-hand side of equation \eqref{eq: formula for vb} in the participation component, while simultaneously increasing $\tfrac{2(P-\frac{G}{r})A_1}{rT} + 2 \tfrac{G}{r} A_2 - \tau_1 \frac{G}{r}(\lambda_2+\lambda_3)$ under the assumption that $P - \tfrac{G}{r} \leq 0$. Therefore, as $T$ increases, the bankruptcy-triggering value increases as well. 
\end{myproof}

\begin{myproof}[Proof of Proposition \ref{prop: vb monoton alpha g}]
	
	To show that $V_B$ is monotonically increasing in $\alpha$, let $0 \leq \alpha_1 < \alpha_2$, and we aim to prove that $V_B(\alpha_1) < V_B(\alpha_2)$. By Theorem \ref{th: VB determination}, we know that:
	\begin{align}
		0 =&\, 1 + \rho(\lambda_2+\lambda_3)+2(1-\rho)A_2 - \tfrac{1}{V_B(\alpha_1)} \Big( \tfrac{2(P-\frac{G}{r})A_1}{rT} + 2 \tfrac{G}{r} A_2 - \tau_1 \frac{G}{r}(\lambda_2+\lambda_3) \Big) \notag \\
		&+ \tau_2 \alpha_1 \int_0^\infty \tfrac{\partial  c_{do} (V,k,V_B(\alpha_1),t) }{\partial V} \big|_{V=V_B(\alpha_1)} \diff t - \alpha_1 \displaystyle\int_0^T \tfrac{\partial c_{do} (V,k,V_B(\alpha_1),t)}{\partial V} \big|_{V=V_B(\alpha_1)} \diff t. \label{eq: solution formula vb(alpha1)}
	\end{align}
	Using Lemma \ref{lemma: ass reformulation derivative}, we get, in particular, that:
	\begin{align*}
		0 >&\, 1 + \rho(\lambda_2+\lambda_3)+2(1-\rho)A_2 - \tfrac{1}{V_B(\alpha_1)} \Big( \tfrac{2(P-\frac{G}{r})A_1}{rT} + 2 \tfrac{G}{r} A_2 - \tau_1 \frac{G}{r}(\lambda_2+\lambda_3) \Big) \notag \\
		&+ \tau_2 \alpha_2 \int_0^\infty \tfrac{\partial  c_{do} (V,k,V_B(\alpha_1),t) }{\partial V} \big|_{V=V_B(\alpha_1)} \diff t - \alpha_2 \displaystyle\int_0^T \tfrac{\partial c_{do} (V,k,V_B(\alpha_1),t)}{\partial V} \big|_{V=V_B(\alpha_1)} \diff t.
	\end{align*}
	Loosely speaking, Lemma \ref{lemma: ass reformulation derivative} implies that, if $V_B$ remains constant, every point of the right-hand side's graph decreases as $\alpha$ increases. As in the previous proof, we know that there exists at most one relevant zero root. Hence, to maintain equality, it must hold that $V_B(\alpha_1) < V_B(\alpha_2)$. This establishes the claim, since $V_B$ is defined as the minimum of this zero root (if it exists) and $V_0$.
	
	Second, we establish left-continuity in $\alpha$. A discontinuity can occur only if, as $\alpha$ increases, equation \eqref{eq: formula for vb} ceases to admit a zero root. Otherwise, there is at most one relevant zero root and the expression is continuous in all parameters. In this case, the right-hand side of \eqref{eq: formula for vb} is negative for all admissible values of $V_B$. By definition, this implies $V_B = V_0$, which is the maximal possible value of $V_B$. By monotonicity, such a jump can occur at most once. Since $V_B$ is non-decreasing in $\alpha$, left-continuity follows.
	
	From a graphical perspective, as before, every point of the graph of the right-hand side decreases as $\alpha$ increases. The critical situation arises when the local maximum becomes negative. At this critical value, a solution still exists; hence, together with the monotonicity of $V_B$ in $\alpha$, this yields left-continuity.

	Third, the existence of the right-limits follows directly from the monotonicity.
	
	The proof for the guaranteed payment proceeds similarly. We observe that, under the assumption on the value of the guaranteed payments, the term $\tfrac{2(P-\frac{G}{r})A_1}{rT} + 2 \tfrac{G}{r} A_2 - \tau_1 \frac{G}{r}(\lambda_2+\lambda_3)$ is increasing in $G$ (as analyzed in the proof of Theorem \ref{th: VB determination}), while the other terms depend on $G$ only through $V_B$. Therefore, an analogous reasoning leads to the same conclusion about the behavior of $V_B$ with respect to $G$.
\end{myproof}


\begin{myproof}[Proof of Proposition \ref{prop: alpha bar competitive market}]
	As in the proof of Theorem~\ref{th: VB determination}, we begin by noting that 
	the non-negativity of the mapping \(V \mapsto E(V)\) is implied by the condition 
	\(\lim_{V \to \infty} E(V;V_B) \ge 0\) for any fixed \(V_B\). 
	Fix \(V_B\). As \(V \to \infty\), the probability of bankruptcy converges to zero, 
	and therefore the limit of the equity value becomes independent of \(V_B\). 
	Using \eqref{equity}, \eqref{eq: firm value v}, and \eqref{eq: liability value L}, 
	we obtain:
	\begin{align*}
		\lim_{V \to \infty} E(V;V_B) = \lim_{V \to \infty} [ V + \tau_1 \tfrac{G}{r} + \tau_2 \alpha \int_0^\infty c (V,k,t) \diff t - \tfrac{G}{r} - ( P - \tfrac{G}{r} )  \tfrac{1-e^{-rT}}{rT} - \alpha \int_0^T c (V,k,t) \diff t ].
	\end{align*}
	Thus, if $\lim_{V \to \infty} [ V + \tau_2 \alpha \int_0^\infty c (V,k,t) \diff t - \alpha \int_0^T c (V,k,t) \diff t ] = \infty$, then indeed $\lim_{V \to \infty} E(V;V_B) \geq 0$. To verify this condition, we observe that for all \(s \in (0,\infty]\), 
	the dominated convergence theorem applies since \(r,\nu >0\) and it holds that
	\(\lim_{V \to \infty} \Phi(d_{1/2}(\tfrac{V}{k},t)) = 1\). Hence,
	\begin{align*}
		\lim_{V \to \infty} \int_0^s c (V,k,t) \diff t &=  \lim_{V \to \infty} \int_0^s  V e^{-\nu t} \Phi (d_1 (\tfrac{V}{k},t)) - k e^{-rt} \Phi(d_2 (\tfrac{V}{k},t)) \diff t \\
		&= \lim_{V \to \infty} [V \int_0^s e^{-\nu t} \diff t - k \int_0^s e^{-r t} \diff t] = \lim_{V \to \infty} [V \tfrac{1}{\nu} (1-e^{-\nu s}) - k \tfrac{1}{r} (1-e^{-rs})].
	\end{align*}
	Consequently, $\lim_{V \to \infty} [ V + \tau_2 \alpha \int_0^\infty c (V,k,t) \diff t - \alpha \int_0^T c (V,k,t) \diff t ] = \infty$ holds provided that $1+ \tau_2 \alpha \tfrac{1}{\nu} - \alpha \tfrac{1}{\nu} (1-e^{-\nu T}) > 0$, which is equivalent to $\alpha < \bar{\alpha}$, with $\bar{\alpha}$ given in \eqref{eq: baralpha}.
\end{myproof}


\begin{myproof}[Proof of Corollary \ref{cor: unique solution of vb determination}]
	We obtain from the proof of Theorem \ref{th: VB determination} by using Lemma \ref{lemma: uniqueness preparation lemma} that \eqref{eq: formula for vb} has a unique solution if $\lim_{V_B \to \infty} h_2(V_B)>0$, where $h_2$ is defined in \eqref{eq: def of h2}. Applying \eqref{eq: limit h2 inf} yields that if $1 + \rho(\lambda_2+\lambda_3)+2(1-\rho)A_2 + \tau_2 \alpha A_3 - \alpha A_4>0$, \eqref{eq: formula for vb} has a unique solution. This inequality is equivalent to $\alpha<\tilde{\alpha}$, with $\tilde{\alpha}$ given in \eqref{eq: tildealpha} implying the claim.
\end{myproof}

\begin{myproof}[Proof of Corollary \ref{cor: vb determination}]
	Let us assume that $V_B \geq k$, i.e., $\min\{\tfrac{V_B}{k},1\}=1$. Note that $d_1(1,T) = \lambda_1 \sigma \sqrt{T}$ and $d_2(1,T) = \lambda_2 \sigma \sqrt{T}$. Then, we get, using \eqref{eq: dv cdo v=vb} and Lemma \ref{lemma: explicit calculation integral} (once applied in the original version, and once applied with $\lambda_1$ replaced by $\lambda_2$, and $\nu$ replaced by $r$):
	\begin{align}
		\int_0^\infty \tfrac{\partial  c_{do} (V,k,V_B,t) }{\partial V} \big|_{V=V_B} \diff t =&\, \tfrac{\lambda_1}{\nu} + \tfrac{\lambda_1^2 \sigma}{\nu} \sqrt{\tfrac{1}{\lambda_1^2\sigma^2+2\nu}} + \tfrac{2}{\sigma} \sqrt{\tfrac{1}{\lambda_1^2\sigma^2+2\nu}} \notag \\
		&- \tfrac{k}{V_B} \Big( \tfrac{\lambda_2}{r} + \tfrac{\lambda_2^2 \sigma}{r} \sqrt{\tfrac{1}{\lambda_2^2\sigma^2+2r}} + \tfrac{2}{\sigma} \sqrt{\tfrac{1}{\lambda_2^2\sigma^2+2r}} \Big) \notag \\
		=&\, A_3 - \tfrac{k}{V_B} A_5, \label{eq: integral infty del cdo v=vb}\\
		\int_0^T \tfrac{\partial  c_{do} (V,k,V_B,t) }{\partial V} \big|_{V=V_B} \diff t =&\, \tfrac{\lambda_1}{\nu} - \tfrac{2\lambda_1 e^{-\nu T} \Phi(\lambda_1\sigma\sqrt{T})}{\nu} + \tfrac{\lambda_1^2 \sigma}{\nu} \sqrt{\tfrac{1}{\lambda_1^2\sigma^2+2\nu}} (2\Phi(\sqrt{\lambda_1^2 \sigma^2 + 2\nu}\sqrt{T})-1) \notag \\
		&+ \tfrac{2}{\sigma} \sqrt{\tfrac{1}{\lambda_1^2\sigma^2+2\nu}} (2\Phi(\sqrt{\lambda_1^2 \sigma^2 + 2\nu}\sqrt{T})-1) \notag \\
		&- \tfrac{k}{V_B} \Big( \tfrac{\lambda_2}{r} - \tfrac{2\lambda_2 e^{-r T} \Phi(\lambda_2\sigma\sqrt{T})}{r} + \tfrac{\lambda_2^2 \sigma}{r} \sqrt{\tfrac{1}{\lambda_2^2\sigma^2+2r}} (2\Phi(\sqrt{\lambda_2^2 \sigma^2 + 2r}\sqrt{T})-1) \notag \\
		&\hspace{30pt}+ \tfrac{2}{\sigma} \sqrt{\tfrac{1}{\lambda_2^2\sigma^2+2r}} (2\Phi(\sqrt{\lambda_2^2 \sigma^2 + 2r}\sqrt{T})-1) \Big) \notag \\
		=&\, A_4 - \tfrac{k}{V_B} A_6, \label{eq: integral T del cdo v=vb}
	\end{align}
	where we used that $\Big(\tfrac{\lambda_1^2 \sigma}{\nu} + \tfrac{2}{\sigma}\Big) \sqrt{\tfrac{1}{\lambda_1^2\sigma^2+2\nu}} = \tfrac{1}{\sigma \nu} \sqrt{\lambda_1^2 \sigma^2 + 2\nu}$ and $\Big(\tfrac{\lambda_2^2 \sigma}{r} + \tfrac{2}{\sigma}\Big) \sqrt{\tfrac{1}{\lambda_2^2\sigma^2+2r}} = \tfrac{1}{\sigma r} \sqrt{\lambda_2^2 \sigma^2 + 2r}$. Next, we substitute \eqref{eq: integral infty del cdo v=vb} and \eqref{eq: integral T del cdo v=vb} into \eqref{eq: formula for vb} and solve for $V_B$, yielding the formula for $\hat{V}_B$. In particular, if $\hat{V}_B\geq k$ and $\alpha<\tilde{\alpha}$, it is the unique solution to \eqref{eq: formula for vb} using Corollary \ref{cor: unique solution of vb determination}. Therefore, the claim follows.
\end{myproof}

\begin{myproof}[Proof of Proposition \ref{prop: alphastar exists}]
	If $\bar{\alpha} > 1$, the proposition follows directly from the continuity of $v$ in $\alpha$. Therefore, let us assume that $\bar{\alpha} \leq 1$. First, we note that for $\alpha>\bar{\alpha}$, the assumption that $V \to E(V)$ is non-negative does not hold, as then $\lim_{V \to \infty} E(V;V_B)=-\infty$ for all $V_B$ fixed. This, however, is excluded by assumption. Thus, we can restrict our analysis to the case where $\alpha \in [0,\bar{\alpha}]$.	Now the proof of Theorem \ref{th: VB determination} and of Proposition \ref{prop: vb monoton alpha g} imply that the only relevant solution of \eqref{eq: formula for vb} is continuous in its parameters as long as it exists. For $\alpha<\bar{\alpha}$, a solution always exists since then $\lim_{V \to \infty} E(V;V_B)=\infty$ for all $V_B$ fixed. Thus, the left-continuity of $V_B$ in $\alpha$ and the continuity of $v$ in $\alpha$ completes the proof.
\end{myproof}

\begin{myproof}[Proof of Theorem \ref{th: alpha*}]
	Before we begin the actual proof, we first show the identities used in the theorem: From the proof of Lemma \ref{lemma: dcdo dalpha}, we find that $V_B'(\alpha) = - \tfrac{R_{\alpha}(\alpha,V_B(\alpha))}{R_{V_B}(\alpha,V_B(\alpha))}$ for $\alpha \in [0,\hat{\alpha}]$, where $R_{\alpha}$ is defined as in \eqref{eq: ralpha}, $R_{V_B}$ is defined as in \eqref{eq: rvb}, and $\hat{\alpha}>0$ is defined as in Lemma \ref{lemma: dcdo dalpha}. Substituting $\alpha=0$ gives us \eqref{eq: partial vb0} and \eqref{eq: formula dvb dalpha}. Finally, \eqref{eq: partial cdo} follows immediately from \eqref{eq: dv2 cdo<} and \eqref{eq: dv2 cdo>}.
	
	For the main claim, we consider the total value $v$ of the insurance company as given in \eqref{eq: firm value v} and take the derivative with respect to $\alpha$. We restrict ourselves to $\alpha \in [0,\hat{\alpha}]$, with $\hat{\alpha}$ defined as in Lemma \ref{lemma: dcdo dalpha}. Under these conditions, we can interchange the integral and the derivative (where we suppress the dependency of $V_B$ on $\alpha$):
	\begin{align}
		\tfrac{\partial}{\partial \alpha} v(V;V_B) =&\, - \tfrac{\tau_1 \frac{G}{r} (\lambda_2+\lambda_3)}{V} (\tfrac{V_B}{V})^{\lambda_2+\lambda_3-1} V_B'(\alpha) - \rho (\lambda_2+\lambda_3+1) (\tfrac{V_B}{V})^{\lambda_2+\lambda_3} V_B'(\alpha) \notag\\
		&+ \tau_2 \int_0^\infty c_{do} (V,k,V_B,t) \diff t + \alpha \tau_2 \int_0^\infty \tfrac{\partial c_{do} (V,k,V_B,T)}{\partial \alpha} \diff t \notag \\
		=&\, -V_B'(\alpha) (\tfrac{V_B}{V})^{\lambda_2+\lambda_3} \left( \tfrac{\tau_1 \frac{G}{r} (\lambda_2+\lambda_3)}{V_B} + \rho (\lambda_2+\lambda_3+1) \right)  \notag\\
		&+ \tau_2 \int_0^\infty c_{do} (V,k,V_B,t) \diff t + \alpha \tau_2 \int_0^\infty \tfrac{\partial c_{do} (V,k,V_B,T)}{\partial \alpha} \diff t. \label{eq: dv dalpha}
	\end{align}
	Setting this equation equal to zero yields \eqref{eq: solution alpha*}. 
	
	Next, we show the existence of a $\bar{\tau}$ as described in the theorem. To do this, we evaluate the above formula at $\alpha=0$, which gives us:
	\begin{align}
		\tfrac{\partial}{\partial \alpha} v(V;V_B) \big|_{\alpha=0} =&\, -V_B'(0) (\tfrac{V_B(0)}{V})^{\lambda_2+\lambda_3} \left( \tfrac{\tau_1 \frac{G}{r} (\lambda_2+\lambda_3)}{V_B(0)} + \rho (\lambda_2+\lambda_3+1) \right) \notag \\
		&+ \tau_2 \int_0^\infty c_{do} (V,k,V_B(0),t) \diff t. \label{eq: dv dalpha alpha0}
	\end{align}
	Now, it is optimal to offer a surplus participation for the insurance company if $\tfrac{\partial}{\partial \alpha} v(V;V_B) \big|_{\alpha=0}>0$. 
	
	Therefore, let us analyze equation \eqref{eq: dv dalpha alpha0}: Lemma \ref{lemma: long term >0} implies that $V_B(0)>0$ (see \eqref{eq: vb alpha=0 main text}), and that the denominator of $V_B'(0)$ (see \eqref{eq: partial vb0}) is strictly positive. Moreover, $\Big( \tfrac{\tau_1 \frac{G}{r} (\lambda_2+\lambda_3)}{V_B(0)} + \rho (\lambda_2+\lambda_3+1) \Big)>0$, since $\lambda_3>|\lambda_2|$ by definition, 
	the price of a Down-and-Out Call Option, $c_{do}$, is non-negative, $V_B(0)$ (see \eqref{eq: vb alpha=0 main text}) is independent of $\tau_2$, and 
	for $\tau_2=1$ the term $(-V_B'(0))$ (see \eqref{eq: partial vb0}) is positive because $T < \infty$. Consequently, we obtain that $\tfrac{\partial}{\partial \alpha} v(V;V_B) \big|_{\alpha=0}>0$ for $\tau_2=1$, and that $\tfrac{\partial}{\partial \alpha} v(V;V_B) \big|_{\alpha=0}$ is continuous in $\tau_2$. This leads to the existence of a $\bar{\tau} \in (0,1)$ such that $\tfrac{\partial}{\partial \alpha} v(V;V_B) \big|_{\alpha=0}>0$ for $\tau_2 \in (\bar{\tau},1)$, implying the claim.
	 
	 Finally, it remains to demonstrate the existence of $\bar{\bar{\tau}}$. To establish this, we have by \eqref{eq: dv dalpha}:
	 \begin{align*}
	 	\tfrac{\partial}{\partial \alpha} v(V;V_B) \big|_{\tau_2=0} =  -V_B'(\alpha) (\tfrac{V_B}{V})^{\lambda_2+\lambda_3} \left( \tfrac{\tau_1 \frac{G}{r} (\lambda_2+\lambda_3)}{V_B} + \rho (\lambda_2+\lambda_3+1) \right) < 0,
	 \end{align*}
	 since $\lambda_3>|\lambda_2|$ by definition and $V_B'(\alpha)>0$ (since $V_B$ is increasing in $\alpha$ by Proposition \ref{prop: vb monoton alpha g}). Hence, it follows that if $\tau_2=0$, the optimal choice is $\alpha^*=0$. Moreover, the inequality
	 \begin{align*}
	 	-V_B'(\alpha) (\tfrac{V_B}{V})^{\lambda_2+\lambda_3} \left( \tfrac{\tau_1 \frac{G}{r} (\lambda_2+\lambda_3)}{V_B} + \rho (\lambda_2+\lambda_3+1) \right) < 0
	 \end{align*}
	 persists independently of the specific choice of $\tau_2$. Finally, since $\int_0^\infty c_{do} (V,k,V_B,t) \diff t + \linebreak \alpha \int_0^\infty \tfrac{\partial c_{do} (V,k,V_B,T)}{\partial \alpha} \diff t$ in \eqref{eq: dv dalpha} is bounded (by Lemmas \ref{lem: int cdo < infty} and \ref{lemma: dcdo dalpha}), it follows that
	 there exists a value $\bar{\bar{\tau}} > 0$ such that $\tfrac{\partial}{\partial \alpha} v(V;V_B)<0$ for all $0 \leq \tau_2 < \bar{\bar{\tau}}$, implying that $\alpha^*=0$ for all $0 \leq \tau_2 < \bar{\bar{\tau}}$. Thus, the claim is established.
\end{myproof}


\begin{myproof}[Proof of Proposition \ref{prop: gstar exists}]
	
	Since $v$ is continuous in $g$ (because $V_B$ is continuous in $g$), it suffices to show that the supremum is attained in a compact interval of $g$. First, assume that $-\tfrac{2A_1}{rT} + 2 A_2 - \tau_1 (\lambda_2+\lambda_3) \neq 0$. By Lemma \ref{lemma: cdo unif bounded L1}, it follows that the term $\tau_2 \alpha \int_0^\infty \tfrac{\partial  c_{do} (V,k,V_B,t) }{\partial V} \big|_{V=V_B} \diff t - \alpha \int_0^T \tfrac{\partial c_{do} (V,k,V_B,t)}{\partial V} \big|_{V=V_B} \diff t$ is uniformly bounded in $V_B$, and thus in $g$. From equation \eqref{eq: formula for vb}, we know that $V_B \xrightarrow{g \to \infty} \infty$ (using that $g = \tfrac{G}{T}$). Therefore, by the continuity of $V_B$ in $g$, there exists a $\bar{g} \geq 0$ such that $V_B(g) \geq V_0$ for all $g \geq \bar{g}$. If $V_B \geq V_0$, however, the insurance company declares bankruptcy immediately. In this case, the insurance company's value is given by $v(V;V_B) = V - \rho V$ for all $V_B \geq V_0$, according to the first equation in \eqref{eq: firm value v}, where $TB_1 = TB_2 = 0$ and $BC = \rho V$ (see \eqref{eq: def TB1}, \eqref{eq: def TB2}, and \eqref{eq: def BC}). Therefore, $v$ is constant for $V_B \geq V_0$, and we can consequently set $V_B=V_0$. Hence, we can restrict $g$ to the interval $[0,\bar{g}]$, completing the proof.
\end{myproof}

\begin{myproof}[Proof of Theorem \ref{th: g*}]
	Note that this proof is similar to the proof of Theorem \ref{th: alpha*}.
	
	We begin by showing the identities stated in the theorem: Equation \eqref{eq: vb0 g} follows directly from \eqref{eq: formula for vb}. Next, from the proof of Lemma \ref{lemma: dcdo dg}, we obtain that $V_B'(g) = - \tfrac{R_{g}}{R_{V_B}}$, where $R_{g}$ is defined in \eqref{eq: rg} and $R_{V_B}$ is defined in \eqref{eq: rvb2}. Evaluating this expression at $g=0$ and inserting $V_B(0)$ gives us the desired results in \eqref{eq: formula dvb dg} and \eqref{eq: partial vb0 g}. Finally, the equations \eqref{eq: partial cdo g} and \eqref{eq: partial cdo g0} are immediate consequences of \eqref{eq: dv2 cdo<} and \eqref{eq: dv2 cdo>}.
	
	For the main claim \eqref{eq: solution g*}, we again utilize the total value $v$ of the insurance company, as defined in \eqref{eq: firm value v}, and differentiate it with respect to $g$. We consider $g \in [0,\hat{g}]$, where $\hat{g}$ is defined as in Lemma \ref{lemma: dcdo dg}, such that we can interchange the integral and the derivative (where we suppress the dependency of $V_B$ on $g$):
	\begin{align}
		\tfrac{\partial}{\partial g} v(V;V_B) =&\, \tfrac{\tau_1T}{r} (1-(\tfrac{V_B}{V})^{\lambda_2+\lambda_3})  - \tfrac{\tau_1 gT (\lambda_2+\lambda_3)}{Vr} (\tfrac{V_B}{V})^{\lambda_2+\lambda_3-1} V_B'(g) \notag \\ 
		&- \rho (\lambda_2+\lambda_3+1) (\tfrac{V_B}{V})^{\lambda_2+\lambda_3} V_B'(g) + \alpha \tau_2 \int_0^\infty \tfrac{\partial c_{do} (V,k,V_B,T)}{\partial g} \diff t. \label{eq: dv dg}
	\end{align}
	Setting this equation equal to zero results in the equation \eqref{eq: solution g*}.
	
	It remains to demonstrate that offering a guarantee rate is indeed optimal. To do so, we evaluate this formula at $g=0$, which leads to:
	\begin{align}
		\tfrac{\partial}{\partial g} v(V;V_B) \big|_{g=0} =&\, \tfrac{\tau_1 T}{r} (1-(\tfrac{V_B(0)}{V})^{\lambda_2+\lambda_3}) - \rho (\lambda_2+\lambda_3+1) (\tfrac{V_B(0)}{V})^{\lambda_2+\lambda_3} V_B'(0) \notag \\
		&+ \alpha \tau_2 \int_0^\infty \tfrac{\partial c_{do} (V,k,V_B,T)}{\partial g} \Big|_{g=0} \diff t >0, \label{eq: dv dg0}
	\end{align}
	by assumption. Note that we applied Lemma \ref{lemma: dcdo dg} to move the point estimation inside the integral. Therefore, it follows that $g^*>0$.
\end{myproof}

\begin{myproof}[Proof of Proposition \ref{prop: alphastar and gstar exists}]
	This proposition follows directly from Propositions \ref{prop: alphastar exists} and \ref{prop: gstar exists}. Note that the proof of Proposition \ref{prop: alphastar exists} is independent of $g$, and the proof of Proposition \ref{prop: gstar exists} holds for all $\alpha \leq 1$.
\end{myproof}

\section{Additional results}

In this section, we present additional results that are referenced in the main text.

\begin{proposition} \label{prop: assumption fulfilled}
	If (i) $V_B \geq k$ or (ii) $V_B < k$ and the following condition (with $d_2$ as in \eqref{eq: def d12 lambda1} and $\lambda_2,\lambda_3$ as in \eqref{eq: def lambda23}) holds for all $\alpha \in [0,\min\{\bar{\alpha},1\}]$, then the bankruptcy-triggering value $V_B$ is continuous in $\alpha$:
	\begin{align} \label{eq: assumption fulfilled equation}
		\tfrac{2(P-\frac{G}{r})A_1}{rT} &+ 2 \tfrac{G}{r} A_2 - \tau_1 \frac{G}{r}(\lambda_2+\lambda_3) + \tau_2 \alpha \displaystyle\int_0^\infty 2ke^{-rt} \big(\lambda_2  \Phi(d_2(\tfrac{V_B}{k},t)) + \frac{ \varphi(d_2(\frac{V_B}{k},t))}{\sigma \sqrt{t}} \big) \diff t \notag \\
		&- \alpha \int_0^T 2ke^{-rt} \big(\lambda_2  \Phi(d_2(\tfrac{V_B}{k},t)) + \frac{ \varphi(d_2(\frac{V_B}{k},t))}{\sigma \sqrt{t}} \big) \diff t >0,
	\end{align}
	Moreover, if (i) $V_B \geq k$ or (ii) $V_B < k$ and \eqref{eq: assumption fulfilled equation} holds for all $G \in [0,\infty)$, then the bankruptcy-triggering value $V_B$ is continuous in $g$.
\end{proposition}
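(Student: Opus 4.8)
The plan is to reduce the claimed continuity of $V_B$ to a transversality property of the defining equation \eqref{eq: formula for vb} and then to combine it with the monotonicity and left-continuity already furnished by Proposition \ref{prop: vb monoton alpha g}. Write $R(\alpha,V_B)$ for the right-hand side of \eqref{eq: formula for vb}, so that for each $\alpha$ the value $V_B(\alpha)$ is the minimum of $V_0$ and the smallest zero of $V\mapsto R(\alpha,V)$, and recall from the proof of Lemma \ref{lemma: uniqueness preparation lemma} that $\partial_{V_B}R(\alpha,V_B)=H(V_B)/V_B^2$ with $H$ the auxiliary function introduced there. The first step is the bookkeeping observation that, when $V_B<k$ (so $\min\{V_B/k,1\}=V_B/k$), the identity $\tau_2\alpha\int_0^\infty-\alpha\int_0^T=\alpha\tau_2\int_T^\infty-\alpha(1-\tau_2)\int_0^T$ rewrites the left-hand side of \eqref{eq: assumption fulfilled equation} as exactly $H(V_B)$; hence hypothesis (ii) is precisely $H(V_B(\alpha))>0$ for all admissible $\alpha$. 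In case (i), where the pre-$\min$ solution of \eqref{eq: formula for vb} is $\geq k$, the shape analysis of $R$ in Lemma \ref{lemma: uniqueness preparation lemma} forces $H$ to have no sign change on $(0,k)$ --- otherwise, since $R(\alpha,V)\to-\infty$ as $V\to 0$, the first upward crossing of $R$ would lie strictly below $k$, contradicting that the smallest zero is $\geq k$ --- so $H>0$ everywhere and again $H(V_B(\alpha))>0$. In either case, using $V_B(\alpha)>0$ from Lemma \ref{lem: vb neq 0}, we conclude $\partial_{V_B}R(\alpha,V_B(\alpha))>0$.

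Next I would fix $\alpha_0$ in the interval $[0,\min\{\bar\alpha,1\}]$ and distinguish two subcases. If $V_B(\alpha_0)=V_0$, then monotonicity (Proposition \ref{prop: vb monoton alpha g}) together with the truncation $V_B=\min\{V_0,\cdot\}$ gives $V_B(\alpha)=V_0$ for every $\alpha\geq\alpha_0$, so $V_B$ is right-continuous at $\alpha_0$; with left-continuity this yields continuity there. If $V_B(\alpha_0)<V_0$, then $V_B(\alpha_0)$ is a genuine zero of $R(\alpha_0,\cdot)$ lying in $(0,V_0]\subseteq(0,k]$ at which $\partial_{V_B}R>0$. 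I would then apply the implicit function theorem: $R$ is jointly $C^1$ in $(\alpha,V_B)$ near $(\alpha_0,V_B(\alpha_0))$ --- the $\alpha$-dependence is affine and the $V_B$-dependence is $C^1$ by Lemma \ref{lemma: dv2 cdo bounded l1}, with differentiation under the $\int_0^\infty$ sign justified by the domination estimates there --- so there exist a neighbourhood $U$ of $\alpha_0$ and a continuous $\phi\colon U\to(0,\infty)$ with $\phi(\alpha_0)=V_B(\alpha_0)$ and $R(\alpha,\phi(\alpha))=0$. Shrinking $U$ so that $\phi(\alpha)<V_0$, the smallest zero of $R(\alpha,\cdot)$ is at most $\phi(\alpha)$, hence $V_B(\alpha)\le\phi(\alpha)$ for $\alpha\in U$.

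Continuity of $V_B$ at $\alpha_0$ then follows by squeezing: monotonicity gives $V_B(\alpha)\ge V_B(\alpha_0)$ for $\alpha>\alpha_0$, while $V_B(\alpha)\le\phi(\alpha)\to\phi(\alpha_0)=V_B(\alpha_0)$, so $\lim_{\alpha\downarrow\alpha_0}V_B(\alpha)=V_B(\alpha_0)$; with the left-continuity of Proposition \ref{prop: vb monoton alpha g} this gives continuity at $\alpha_0$, hence on all of $[0,\min\{\bar\alpha,1\}]$. Equivalently, as in the proof of Proposition \ref{prop: vb monoton alpha g}, a discontinuity can occur only where the relevant zero of \eqref{eq: formula for vb} disappears, which forces a tangency $R'(V_B)=0$, i.e.\ $H(V_B)=0$, excluded by hypothesis. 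The statement for the guarantee rate $g$ is obtained verbatim with $\alpha$ replaced by $g$: $V_B$ is non-decreasing and left-continuous in $g$ (Proposition \ref{prop: vb monoton alpha g}), $R(g,V_B)$ is $C^1$ in $(g,V_B)$ because $G=gT$ enters only affinely through $\tfrac{2(P-\frac{G}{r})A_1}{rT}+2\tfrac{G}{r}A_2-\tau_1\frac{G}{r}(\lambda_2+\lambda_3)$ apart from the $C^1$ terms in $c_{do}$, and condition \eqref{eq: assumption fulfilled equation} holding for all $G\in[0,\infty)$ is precisely $\partial_{V_B}R(g,V_B(g))=H(V_B(g))/V_B(g)^2>0$. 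The hard part will not be any single estimate --- these are all supplied by the technical lemmas already in place --- but rather the careful handling of the truncation $\min\{V_0,\cdot\}$ and the passage from the local implicit-function branch $\phi$ to global continuity, in particular verifying that $\phi$ dominates the smallest zero $V_B(\alpha)$ throughout a neighbourhood of $\alpha_0$.
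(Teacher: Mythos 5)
Your proof is correct and rests on the same mechanism as the paper's: condition \eqref{eq: assumption fulfilled equation} is exactly the statement that $\partial_{V_B}R(\alpha,V_B)=H(V_B)/V_B^2>0$ (your rewriting $\tau_2\alpha\int_0^\infty-\alpha\int_0^T=\alpha\tau_2\int_T^\infty-\alpha(1-\tau_2)\int_0^T$ indeed recovers the $H$ of Lemma \ref{lemma: uniqueness preparation lemma}, whose proof already records $R'(V)=H(V)/V^2$), and continuity of $V_B$ then follows from the implicit/inverse function theorem at a transversal zero of $R$, with differentiability supplied by Lemma \ref{lemma: dv2 cdo bounded l1}. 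You diverge from the paper in two places. First, in case (i) the paper simply invokes the closed-form expression $\hat V_B$ of Corollary \ref{cor: vb determination}, which is manifestly continuous in $\alpha$ and $g$, rather than your shape argument that $H$ cannot change sign on $(0,k)$ when the smallest zero lies at or above $k$; your argument works but should spell out the subcase in which $R$ has a negative interior maximum and hence no zero at all (then $V_B=V_0$ by the truncation and continuity is trivial). Second, the paper reads the hypothesis as giving \emph{global} strict monotonicity of $R$ in $V_B$ on $(0,k)$, so the zero is unique and the identification of the implicit-function branch with $V_B$ is immediate, whereas you use only \emph{local} transversality at the smallest zero and then recover two-sided continuity by squeezing $V_B(\alpha)$ between the monotone lower bound of Proposition \ref{prop: vb monoton alpha g} and the continuous branch $\phi$. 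Your version is the more careful one: it handles the truncation at $V_0$ explicitly and justifies why the local branch controls the \emph{smallest} zero, both of which the paper glosses over; the price is a longer argument where the paper's global-uniqueness reading makes the conclusion one line.
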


\begin{proof}
	We begin with the case (i), meaning that $V_B \geq k$. In this case, Corollary \ref{cor: vb determination} provides an analytical solution that already ensures the continuity of $V_B$ in $\alpha$ resp. $g$.
	
	Next, we consider case (ii). In particular, we assume $V_B<k$.
	To prove this part of the lemma, we aim to establish that the right hand side of \eqref{eq: formula for vb}, denoted as $R(V_B)$, is strictly increasing in $V_B$ (treated as an independent variable). This strict monotonicity implies that the zero root is unique. Given that the right hand side of \eqref{eq: formula for vb} is smooth in $V_B$ (treated as an independent variable), as well as in $\alpha$ (resp. $g$), the uniqueness of the zero root implies that the conditions of the inverse function theorem are locally satisfied around this root. Consequently, by applying this theorem, we conclude that the bankruptcy-triggering value $V_B$ (the unique zero root) is continuous in $\alpha$ (resp. $g$). \\
	To demonstrate that $R(V_B)$ is increasing in $V_B$, we will show that $R'$ is positive (using \eqref{eq: dv2 cdo<}). Specifically, it holds that:
	\begin{align*}
		R'(V_B) =&\, \tfrac{1}{V_B^2} \Big( \tfrac{2(P-\frac{G}{r})A_1}{rT} + 2 \tfrac{G}{r} A_2 - \tau_1 \frac{G}{r}(\lambda_2+\lambda_3) \Big) \notag \\
		&+ \tau_2 \alpha \int_0^\infty \tfrac{\partial}{\partial V_B} [\tfrac{\partial  c_{do} (V,k,V_B,t) }{\partial V} \big|_{V=V_B}] \diff t - \alpha \displaystyle\int_0^T \tfrac{\partial}{\partial V_B} [\tfrac{\partial c_{do} (V,k,V_B,t)}{\partial V} \big|_{V=V_B}] \diff t \\
		=&\, \tfrac{1}{V_B^2} \Big( \tfrac{2(P-\frac{G}{r})A_1}{rT} + 2 \tfrac{G}{r} A_2 - \tau_1 \frac{G}{r}(\lambda_2+\lambda_3) \Big) \notag \\
		&+ \tau_2 \alpha \int_0^\infty \tfrac{2ke^{-rt}}{V_B^2} \Big(\lambda_2  \Phi(d_2(\tfrac{V_B}{k},t)) + \frac{ \varphi(d_2(\frac{V_B}{k},t))}{\sigma \sqrt{t}} \Big) \diff t \\
		&- \alpha \displaystyle\int_0^T \tfrac{2ke^{-rt}}{V_B^2} \Big(\lambda_2  \Phi(d_2(\tfrac{V_B}{k},t)) + \frac{ \varphi(d_2(\frac{V_B}{k},t))}{\sigma \sqrt{t}} \Big) \diff t > 0,
	\end{align*}
	where the last inequality follows from the assumption of the lemma after factoring $\tfrac{1}{V_B^2}>0$ out.
\end{proof}

\begin{remark}
	Since $\tfrac{2(P-\frac{G}{r})A_1}{rT} + 2 \tfrac{G}{r} A_2 - \tau_1 \frac{G}{r}(\lambda_2+\lambda_3) >0$ (see Lemma \ref{lemma: long term >0}), we can conclude that for sufficiently large $\tau_2$, the assumption in equation \eqref{eq: assumption fulfilled equation} is always satisfied. Economically, this reflects a situation where the tax benefit on surplus participation is sufficiently large that an increase in asset value leads to an increase in equity. In particular, the assumption holds when $\alpha=0$, i.e., in the absence of a surplus participation.
\end{remark}

\begin{proposition} \label{prop: solution g* existence}
	Let 
	$\tau_2$ be sufficiently large such that there exists an $\varepsilon>0$ with $R_{V_B} \geq \varepsilon$ for all $(g,V_B) \in [0,\infty) \times [0,V_0]$ (defined as in \eqref{eq: rvb2}). Then, under Assumption \ref{ass: g*>0}, equation \eqref{eq: solution g*} always admits a solution.
\end{proposition}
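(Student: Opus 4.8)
The plan is to recognize equation~\eqref{eq: solution g*} as the stationarity condition $\tfrac{\partial}{\partial g} v(V;V_B(g)) = 0$ obtained in~\eqref{eq: dv dg} during the proof of Theorem~\ref{th: g*}, and then to produce an \emph{interior} maximizer of $g \mapsto v(V;V_B(g))$, at which this condition is necessarily satisfied. First I would globalize the implicit-function argument of Lemma~\ref{lemma: dcdo dg}: under the present hypothesis, $R_{V_B}(g,V_B) \ge \varepsilon > 0$ on the whole rectangle $[0,\infty) \times [0,V_0]$, where $R$ denotes the right-hand side of~\eqref{eq: formula for vb}. Since $R$ is smooth and $R_{V_B}$ is bounded away from zero there, and since by Lemma~\ref{lem: vb neq 0} and Proposition~\ref{prop: vb monoton alpha g} the relevant zero root $V_B(g)$ is monotone and confined to a compact subinterval $[\delta,V_0] \subset (0,V_0]$ as long as it has not reached $V_0$, the local $C^1$ branches patch into a single $C^1$ curve $g \mapsto V_B(g)$ on an interval $[0,\bar g)$, where $\bar g$ is the first value with $V_B(g) = V_0$; for $g \ge \bar g$ one sets $V_B(g) = V_0$ (immediate bankruptcy), exactly as in the proof of Proposition~\ref{prop: gstar exists}, where $\bar g < \infty$ is also established. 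Using $R_g$ from~\eqref{eq: rg} we get $|V_B'(g)| = |R_g|/R_{V_B} \le |R_g|/\varepsilon$, uniformly bounded on $[0,\bar g)$ because $V_B(g) \in [\delta,V_0]$; arguing then as in Lemma~\ref{lemma: dcdo dg} yields $|\tfrac{\partial c_{do}(V,k,V_B(g),T)}{\partial g}| \le C(e^{-\nu T} + e^{-rT})$ uniformly in $g \in [0,\bar g)$, so the dominated-convergence interchange behind~\eqref{eq: dv dg} is legitimate there. Hence $g \mapsto v(V;V_B(g))$ is $C^1$ on $(0,\bar g)$ with derivative~\eqref{eq: dv dg}, and constant equal to $V(1-\rho)$ on $[\bar g,\infty)$ by~\eqref{eq: firm value v}.

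Next I would locate the maximizer. Proposition~\ref{prop: gstar exists} gives a maximizer $g^*$ of $g \mapsto v(V;V_B(g))$, and since $v$ is constant on $[\bar g,\infty)$ we may take $g^* \in [0,\bar g]$. For every $g < \bar g$ one has $V_B(g) < V_0$, so $TB_1,TB_2 \ge 0$ while $BC = \rho\,V_B(g)\,\EX^\QQ[e^{-rS}\1_{S<\infty}] < \rho\,V_B(g) < \rho V_0$ (since $S > 0$ and $r > 0$); therefore, by~\eqref{eq: firm value v}, $v(V;V_B(g)) = V + TB_1 + TB_2 - BC > V(1-\rho) = v(V;V_0)$, which forces $g^* \ne \bar g$, i.e.\ $g^* \in [0,\bar g)$. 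On the other hand the second inequality in Assumption~\ref{ass: g*>0} is precisely~\eqref{eq: dv dg0}, i.e.\ $\tfrac{\partial}{\partial g} v(V;V_B(g))\big|_{g=0} > 0$, so $v$ is strictly increasing at the left endpoint and $g^* \ne 0$. Consequently $g^* \in (0,\bar g)$, an interior point of the open interval on which $v(V;V_B(\cdot))$ is $C^1$, and the first-order condition $\tfrac{\partial}{\partial g} v(V;V_B(g))\big|_{g=g^*} = 0$ holds; by~\eqref{eq: dv dg} this is exactly equation~\eqref{eq: solution g*} evaluated at $g = g^*$, which is therefore a solution.

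Two remarks on the role of the hypotheses and where the difficulty sits. The first inequality in Assumption~\ref{ass: g*>0} is in fact automatic under the present hypothesis: evaluating~\eqref{eq: rvb2} at $g = G = 0$ shows that $R_{V_B}(0,V_B(0))$ coincides with the left-hand side of that inequality, which is $\ge \varepsilon \ne 0$; so beyond the standing hypothesis only the sign condition~\eqref{eq: dv dg0} is genuinely used. The step I expect to be the \emph{main obstacle} is the globalization itself: one must verify carefully that the locally-defined branches of $V_B(g)$ glue into one $C^1$ solution curve up to $\bar g$ and that $\bar g$ is actually reached (so that $v$ eventually becomes constant and a maximizer can be pinned down), both of which rely on $R_{V_B} \ge \varepsilon > 0$ on the rectangle together with the growth $V_B(g) \to \infty$ as $g \to \infty$ shown in the proof of Proposition~\ref{prop: gstar exists}; and the accompanying uniform domination that allows differentiation under the integral sign, which again rests on the uniform lower bound $\varepsilon$ and the a priori confinement $V_B(g) \in [\delta,V_0]$.
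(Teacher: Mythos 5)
Your proposal is correct, but it reaches the zero of \eqref{eq: solution g*} by a genuinely different route than the paper. Both arguments share the same setup: recognize \eqref{eq: solution g*} as the first-order condition $\tfrac{\partial}{\partial g} v(V;V_B(g))=0$ from \eqref{eq: dv dg}, and use the hypothesis $R_{V_B}\geq\varepsilon$ to take $\hat g=\infty$ in Lemma \ref{lemma: dcdo dg}, so that $V_B(\cdot)$ is a single $C^1$ branch and the differentiation under the integral sign is valid for all relevant $g$. The paper then applies the intermediate value theorem directly to $\tfrac{\partial}{\partial g}v$: it is positive at $g=0$ by \eqref{eq: dv dg0}, and a careful asymptotic analysis ($V_B(g)\to V_0$, $V_B'(g)\to 0$, $\lim_{g\to\infty} gV_B'(g)=\lim_{g\to\infty}V_B(g)$, and the vanishing of the $c_{do}$ term) shows the derivative tends to a strictly negative limit as $g\to\infty$. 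You instead take the maximizer $g^*$ furnished by Proposition \ref{prop: gstar exists}, exclude $g^*=0$ via \eqref{eq: dv dg0} and exclude $g^*\geq\bar g$ via the strict comparison $v(V;V_B(g))>V(1-\rho)=v(V;V_0)$ for $g<\bar g$ (which is correct, since $TB_1,TB_2\geq 0$ and $BC=\rho V_B(\tfrac{V_B}{V})^{\lambda_2+\lambda_3}<\rho V$ when $V_B<V$), and then invoke Fermat's condition at the interior maximizer. Your route buys a shorter and more robust argument --- it entirely avoids the delicate limit computations for $V_B'(g)$ and $gV_B'(g)$ at infinity --- at the price of leaning on the attainment result of Proposition \ref{prop: gstar exists} and on differentiability at the (a priori unknown) point $g^*$ rather than only near $0$ and at infinity; the paper's IVT argument gives the marginally stronger qualitative information that the derivative actually changes sign on $(0,\infty)$. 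Your side remark that the first inequality of Assumption \ref{ass: g*>0} is implied by the standing hypothesis $R_{V_B}\geq\varepsilon$ (compare \eqref{eq: rvb2} at $g=0$) is also accurate.
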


\begin{proof}
	First, we observe that if $\tau_2=1$, then $R_{V_B} > 0$ for all $(g,V_B) \in [0,\infty) \times [0,V_0]$, as follows from \eqref{eq: rvb2}, together with Lemmas \ref{lemma: long term >0} and \ref{lemma: dv2 cdo bounded l1}.
	 
	Second, note that the right-hand side of \eqref{eq: solution g*} is nothing else than $\tfrac{\partial}{\partial g} v(V;V_B)$ (see \eqref{eq: dv dg}). Thus, we show the existence of an $\tilde{g} \in (0,\infty)$ such that $\tfrac{\partial}{\partial g} v(V;V_B) = 0$ for $g=\tilde{g}$ to prove the lemma.
	
	Under the assumption made in this lemma, we can choose $\hat{g} = \infty$ in the proof of Lemma \ref{lemma: dcdo dg}, and consequently, also in the proof of Theorem \ref{th: g*}. By Proposition \ref{prop: vb monoton alpha g}, $V_B$ is increasing in $g$, implying that $V_B'$ is non-negative. Moreover, we know that $V_B \leq V$ by assumption (in Subsection \ref{subsection: model setup}). Then, from the definition of $V_B'(g)$ (see \eqref{eq: formula dvb dg}), we conclude that $\lim_{g \to \infty} g V_B'(g) = \lim_{g \to \infty} V_B(g)$ and $V_B'(g) \xrightarrow{g \to \infty} 0$, because $G = gT$ and $V_B$ is increasing and bounded. Moreover, it holds that $V_B(g) \xrightarrow{g \to \infty} V_0$. Indeed, from \eqref{eq: formula for vb}, it follows that a solution of this equation diverges to infinity as $G \to \infty$, since $\tfrac{2(P-\frac{G}{r})A_1}{rT} + 2 \tfrac{G}{r} A_2 - \tau_1 \frac{G}{r}(\lambda_2+\lambda_3) \xrightarrow{G \to \infty} \infty$. The reason is that it follows from \eqref{eq: formula for vb} that, if $V_B$ remained bounded, all other terms remain bounded as well, implying that the right hand side cannot be $0$ for sufficiently large $G$. Since, by construction, we set $V_B = V_0$, corresponding to immediate bankruptcy, whenever the only relevant solution of \eqref{eq: formula for vb} exceeds $V_0$, it follows that $V_B(g) \xrightarrow{g \to \infty} V_0$. Furthermore, Lemma \ref{lemma: dcdo dg} implies that we can interchange the integral sign of the last term of \eqref{eq: solution g*} with the limit of $g \to \infty$. Then, \eqref{eq: dg cdo<} and \eqref{eq: dg cdo>} imply that the last term of \eqref{eq: solution g*} converges to $0$ for $g \to \infty$ as $V_B'(g) \xrightarrow{g \to \infty} 0$. Therefore, we have $\tfrac{\partial}{\partial g} v(V;V_B) \xrightarrow{g \to \infty} -\tfrac{\tau_1 (\lambda_2+\lambda_3)}{rT} < 0$ (see \eqref{eq: solution g*}). Finally, using the continuity of $v(V;V_B)$ in $g$ and \eqref{eq: dv dg0}, the intermediate value theorem ensures the desired result.
\end{proof}

\footnotesize
\bibliography{bibliography}
\footnotesize
\bibliographystyle{plain}

\footnotesize
\end{document}